\newcommand{\RHP}{RH problem}
\newcommand{\bigo}{\mathcal O}
\newcommand{\defeq}{\vcentcolon=}
\newtheorem{theorem}{Theorem}
\newtheorem{lemma}{Lemma}[section]
\newtheorem{prop}[lemma]{Proposition}
\theoremstyle{definition}
\newtheorem{definition}[lemma]{Definition}
\newtheorem{remark}[lemma]{Remark}
\newtheorem{rhp}{RH Problem}
\newcommand{\rhref}[1]{RH Problem~\ref{#1}}
\let\Re=\undefined\DeclareMathOperator{\Re}{Re}
\let\Im=\undefined\DeclareMathOperator{\Im}{Im}
\DeclareMathOperator{\diag}{diag}
\DeclareMathOperator{\sign}{sign}
\DeclareMathOperator{\sech}{sech}
\newcommand{\half}{\frac{1}{2}}
\newcommand{\goto}{\rightarrow}
\let\llldots=\ldots
\def\ldots{\llldots{}}
\DeclareSymbolFont{yhlargesymbols}{OMX}{yhex}{m}{n}
\DeclareMathAccent{\wideparen}{\mathord}{yhlargesymbols}{"F3}	
\newcommand{\oarc}[1]{\left(#1 \right)_{\text{arc}}}	
\newcommand{\carc}[1]{\left[#1 \right]_{\text{arc}}}
\title[Numerical inverse scattering for the Toda lattice]{Numerical inverse scattering for the Toda lattice}
\author[D.~Bilman]{Deniz Bilman}
\address{Deniz Bilman\\
     Department of Mathematics\\
     University of Michigan\\
     530 Church Street\\
     Ann Arbor, MI 48109
     }
\email{bilman@umich.edu}
\author[T.~Trogdon]{Thomas Trogdon}
\address{Thomas Trogdon\\
     University of California, Irvine\\
     Rowland Hall\\
     Irvine, CA 92697
   }
\email{ttrogdon@uci.edu}
\thanks{The authors wish to thank Percy Deift, Peter Miller, and Irina Nenciu for useful discussions and suggestions. The authors also thank the anonymous referees for their suggestions that greatly improved the readability of our paper. DB gratefully acknowledges the hospitality of Courant Institute of Mathematical Sciences, where the majority of this work was done.  The authors acknowledge the partial support of the National Science Foundation through the NSF grants DMS-1150427 (DB) and DMS-1303018 (TT). Any opinions, findings, and conclusions or recommendations expressed in this material are those of the authors and do not necessarily reflect the views of the funding sources.}
\begin{document}
\begin{abstract}
We present a method to compute the inverse scattering transform (IST) for the famed Toda lattice by solving the associated Riemann--Hilbert (RH) problem numerically. Deformations for the RH problem are incorporated so that the IST can be evaluated in $\mathcal O(1)$ operations for arbitrary points in the $(n,t)$-domain, including short- and long-time regimes. No time-stepping is required to compute the solution because $(n,t)$ appear as parameters in the associated RH problem. The solution of the Toda lattice is computed in long-time asymptotic regions where the asymptotics are not known rigorously.
\end{abstract}
\maketitle

\section{Introduction}\label{S:intro}
We consider the numerical solution of the Cauchy initial value problem for the doubly-infinite Toda lattice
\begin{align}\label{E:eom_ab}
\begin{cases}
\partial_t{a}_n(t) = a_n(t)\big(b_{n+1}(t) - b_n(t)\big)\\
\partial_t{b}_n(t) = 2\big(a_n(t)^2 - a_{n-1}(t)^2\big)\\
a_n(0) = a_n^0 > 0,\text{ and } b_n(0)=b_n^0,
\end{cases}
\end{align}
for $(n,t)\in \mathbb{Z} \times \mathbb{R}$ with solutions\footnote{We omit subscripts to refer to the functions defined on $\mathbb Z$.} $(a(t),b(t)) \in \ell(\mathbb{Z})\times \ell(\mathbb{Z})$ satisfying
\begin{equation}\label{E:decay}
\sum_{n\in\mathbb{Z}} \sigma(n) \left(\left |a_n(t) - \tfrac{1}{2} \right | + \left | b_n(t) \right | \right)<\infty,\text{ for all }t\in\mathbb{R}, ~~ \sigma(n) = e^{\delta |n|},
\end{equation}
for some\footnote{{  Many results for the Toda lattice hold with less restrictive choices for $\sigma(n)$ \cite{Tes}.  For example, the inverse scattering transform method described below can be applied for data in the so-called Marchenko class (\emph{i.e.}, $\sigma(n) = 1 + |n|$). We impose exponential decay for the convenience of the numerical implementation.} } $\delta > 0$.
\begin{figure}[!h]
\begin{overpic}[width=0.15\textwidth]{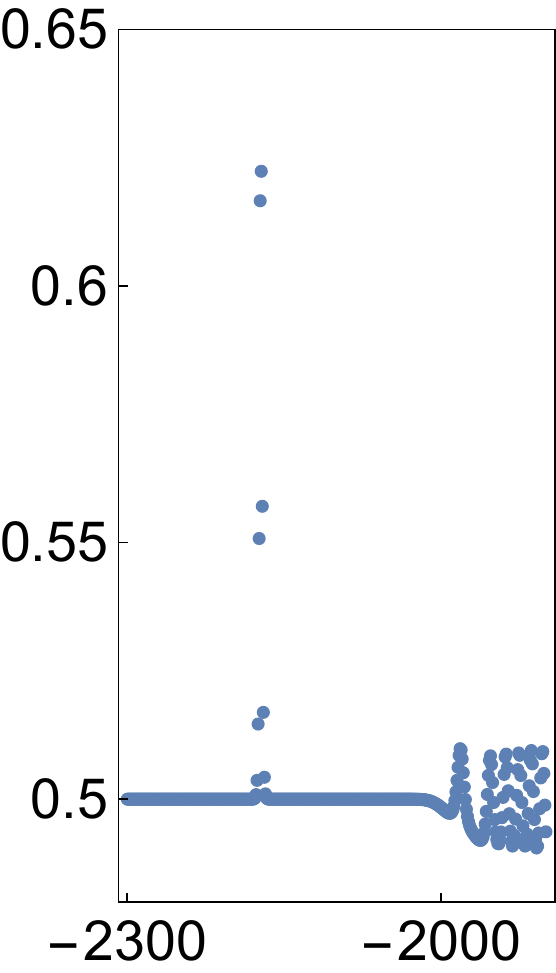}
  \put(-9,40){\rotatebox{90}{$a_n(2000)$}}
\end{overpic}
\begin{overpic}[width=0.62\textwidth]{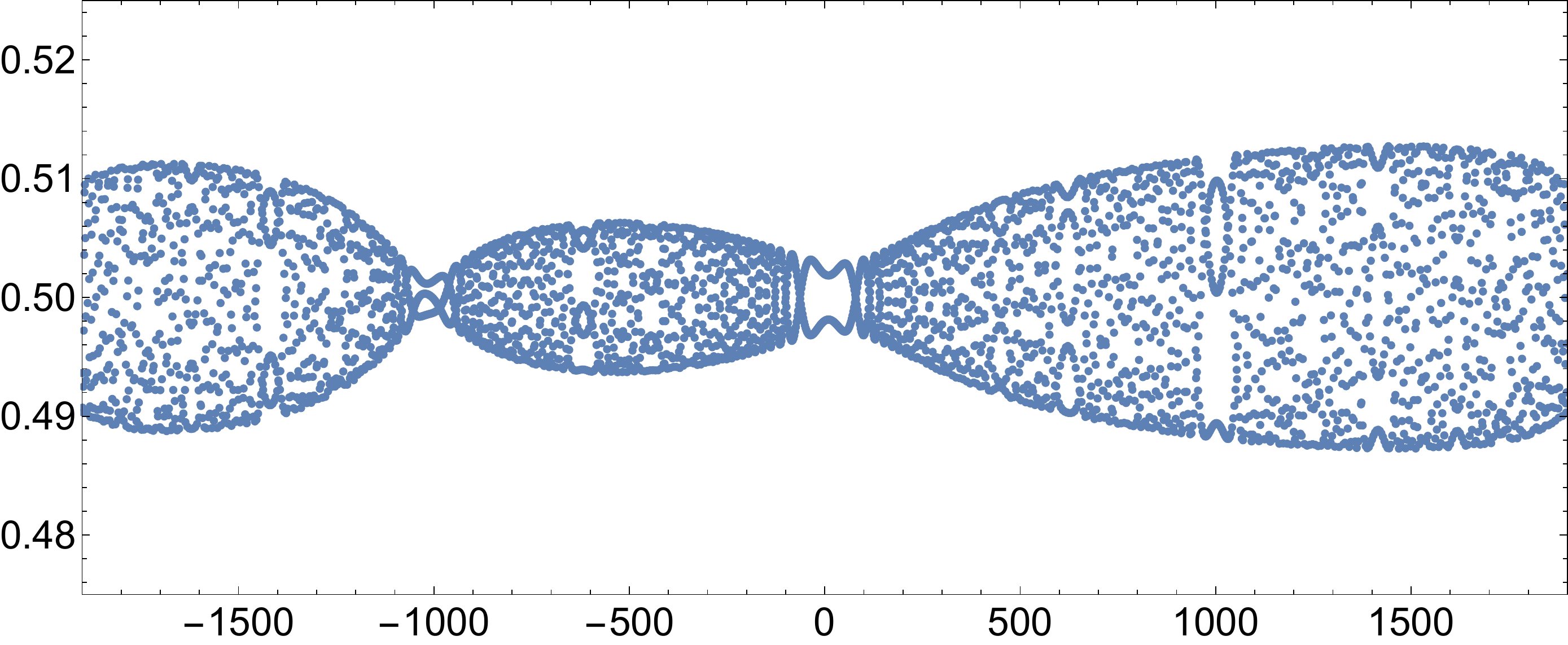}
  \put(52,-2){$n$}
\end{overpic}
\hspace{.05in}
\includegraphics[width=.15\textwidth]{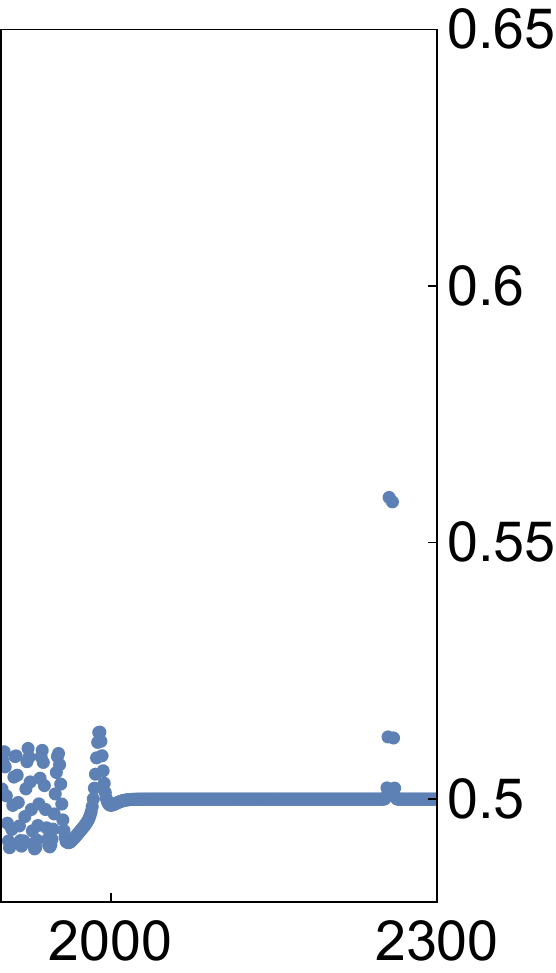}\\

\caption{An example solution of the Toda lattice computed at $t = 2000$ with the method presented here. The initial data is given by $a_n(0) = |1/2 - n e^{-n^2 + n}|$ and $b_n(0) = n \sech n$.  This initial data produces dispersive radiation (center panel) and four solitons, two traveling each direction (left and right panels).  }\label{quad}
\end{figure}

The Toda lattice was introduced by Morikazu Toda in \cite{Toda-orig} (see also \cite{Toda}). The Toda lattice is a completely integrable model for a one-dimensional crystal. The form \eqref{E:eom_ab} we use in this paper is the Toda lattice written in Flaschka's variables \cite{Fla1} (see also the work of S.~V.~Manakov \cite{Man}). This system has been studied in great detail because it is the prototypical discrete-space, continuous-time infinite-dimensional integrable system.

A consequence of the complete integrability of the Toda lattice is an associated inverse scattering transform method (ISTM). The ISTM first maps the initial data to a spectral plane where its time evolution is simple via a transformation called \emph{direct scattering}, see Section~\ref{S:scattering_data}. Then, at a given time $t$, the evolved spectral data is mapped back to the physical plane to find the solution values $a_n(t)$ and $b_n(t)$ by a transformation called \emph{inverse scattering}. This inverse problem is solved by considering an associated oscillatory Riemann--Hilbert (RH) problem. {\RHP}s are boundary-value problems in the complex plane for sectionally analytic functions. General references are \cite{AF,CG,Deift,SIE,thesis,mybook,Zhou}. From an analytical point of view, the benefit of studying the {\RHP} is that asymptotics can be extracted by the method of nonlinear steepest descent. Broadly, the method works by deforming the contours of the {\RHP} {as in the classical scalar method of steepest descent to turn oscillatory terms to exponentially decaying terms}. See \cite{Deift,DZ_RHP,DZ_PII} for some implementations of this method.

As for specific applications of the method of nonlinear steepest descent to the Toda lattice, we refer the reader to the work of Kamvissis \cite{Kam} and  Kr\"uger and Teschl \cite{KT_sol,KT_rev}. These works give explicit long-time asymptotics for the solution of the Toda lattice in the soliton and dispersive regions that we define in Section~\ref{S:regions}. The work of Kamvissis gives the asymptotic behavior in the Painlev\'e region (as defined in Section~\ref{S:regions}) for non-generic initial data.

We approach the ISTM from a numerical perspective.  See Figure~\ref{quad} for a sample solution computed with our method. Given initial data {with sufficient decay (see \eqref{E:decay})} we are able to compute the solution at a given point $(n,t)$, to a given accuracy, in a bounded number of operations by solving the {\RHP} numerically and incorporating the deformations used in the method of nonlinear steepest descent. Stated another way, for any given $(n,t)$ we give an $\mathcal O(1)$ algorithm to compute the two values $\{a_n(t),b_n(t)\}$. No time-stepping is required to compute these values. This methodology has been previously applied to the KdV and mKdV equations \cite{TOD_KdV}, the focusing and defocusing NLS equations \cite{TO_NLS}, the Painlev\'e II equation \cite{olver-PII,OT_RMT} and orthogonal polynomials on the line \cite{OT_RMT}. We compute the solution of the Toda lattice for arbitrarily large values of $t$. The complexity and the accuracy of the methodology is discussed in \cite{OT_RHP}. { The code we have developed is available at \cite{ISTPackage}.} With the current state of the art, the use of deformations appears to be necessary as the associated {\RHP} is increasingly oscillatory as $|n|$ or $t$ increase. Deformations can be avoided in some cases with oscillatory integral techniques \cite{GMRES} but unfortunately these methods are not currently general enough to use in the case of the Toda lattice.

The {\RHP} associated with the Toda lattice has fundamental differences from problems previously solved numerically. First, the fundamental domain is the unit circle as opposed to the real axis as in the cases of the KdV, mKdV and NLS equations. Second, the {\RHP}s in the previously solved cases have had their deformations worked out in detail. In this paper, we develop deformations for the Toda lattice in regions of the $(n,t)$-plane where deformations do not exist in the literature including the determination of the so-called $g$-function, see Appendix~\ref{A:g}. We believe our deformations will be required for the future asymptotic analysis of the Toda lattice.  Importantly, the asymptotic regions we define, and the deformations performed therein, cover the entire $(n,t)$ plane. This is something, to our knowledge, that has not been performed previously in the literature. Finally, we encounter some interesting technical challenges in computing the functions used in the deformations, see Appendices \ref{A:sing} and \ref{g-func-comp}.  Also, in light of the work in \cite{BN} we believe this numerical method will be useful in studying non-integrable, Hamiltonian perturbations of the Toda lattice.

In this paper we do not consider large amplitude data.  Large amplitude data induces singular behavior in the solution of the Toda lattice akin to the behavior in the small dispersion limit for the KdV equation, for example.  This affects numerical methods in a critical way.
Although, we do not consider large amplitude data, throughout the manuscript we include footnotes that highlight the complications that arise for larger initial data.  We also do not treat the case where poles in the Riemann--Hilbert problem are very close to the unit circle.  The methodology described here can be used to handle this case accurately with some extra work and deformations.

The paper is organized as follows. In Section~\ref{S:back} we give the background material on the ISTM for the Toda lattice. The direct scattering and inverse scattering maps are discussed along with a discussion of the (asymptotic) regions (soliton, dispersive, Painlev\'e, collisionless shock, and transition) of the Toda lattice. {We also describe the fundamental deformations, which are performed in all of these regions. Section~\ref{S:manual} provides a step by step guide outlining the computational procedure for obtaining the solution of the Toda lattice.} The majority of the paper is devoted to {Section~\ref{S:deformations}} where we discuss, and explicitly derive, the deformations of the {\RHP} in each region. {In Section~\ref{S:inverse} we discuss the numerical solution of {\RHP}s.} Finally, in Section~\ref{S:numerics} we give some numerical results including an error analysis. We include five appendices. Appendix~\ref{A:sing} discusses the numerical solution of singular, but diagonal, {\RHP}s. Appendix~\ref{A:1} gives a deeper discussion of computing the eigenvalues of Jacobi operators. Appendix~\ref{A:g} details the $g$-function that is used in both the collisionless shock and transition regions. Appendix~\ref{A:solve} contains the vanishing lemma and a discussion of the unique solvability of {\RHP}s considered in this work. Lastly, Appendix~\ref{A:genericity} gives a proof that Jacobi matrices whose reflection coefficient attains the value $-1$ at the edges of its continuous spectrum forms an open dense subset of the Marchenko class { (c.f., $\sigma(n) = 1 + |n|$ in \eqref{E:decay})} of Jacobi matrices.  This implies that for an open dense set of initial data, the long-time behavior of the solution of the Toda lattice exhibits a collisionless shock region: see Section~\ref{S:regions} below (see also \cite{AS-KdV,DVZ}).  If the reflection coefficient does not attain the value $-1$ at the edge of the continuous spectrum then the collisionless shock region is absent\footnote{This also implies that the transition region, defined in Section~\ref{S:regions}, is also absent.}.

\section{Background material}\label{S:back}
We use this section to cover theoretical background and fix notation.
\subsection{Integrability and Lax pairs}
The complete integrability of the Toda lattice was proved by H.~Flaschka in 1974 in a sequence of papers \cite{Fla1} and \cite{Fla2}, and independently by S.~V.~Manakov in \cite{Man}. Introduce the second-order linear difference operators $L$ and $P$ defined on $\ell^2(\mathbb{Z})$ by
\begin{align}
(Lf)_n &= a_{n-1}f_{n-1}+ b_nf_n + a_nf_{n+1}, \label{E:L}\\
(Pf)_n &= -a_{n-1}f_{n-1} + a_nf_{n+1}\,.\label{E:P}
\end{align}
and note that in the standard basis $L=L\big(\{a_n\}_{n\in\mathbb{Z}},\{b_n\}_{n\in\mathbb{Z}}\big)$ is a Jacobi matrix (symmetric, tridiagonal with \underline{positive} off-diagonal entries) and $P$ is a skew-symmetric matrix, i.e., $P^T = -P$:
\begin{equation*}
L=
\begin{pmatrix}
 \ddots  &\ddots & 	\ddots	 & 		  \\
  \ddots & b_{n-1} & a_{n-1} & 0      \\
 \ddots     & a_{n-1} & b_n  & a_n   & \ddots    	 \\
	  & 0     & a_n  & b_{n+1} & \ddots    	 \\
		  &		 & \ddots   &\ddots & \ddots     \\
\end{pmatrix}
\quad\text{and}\quad
P=
\begin{pmatrix}
 \ddots  &\ddots & 	\ddots	 & 		  \\
  \ddots & 0 & a_{n-1} & 0      \\
 \ddots     & -a_{n-1} & 0  & a_n   & \ddots    	 \\
	  & 0     & -a_n  &0 & \ddots    	 \\
		  &		 & \ddots   &\ddots & \ddots     \\
\end{pmatrix}.
\end{equation*}
The system of equations given in \eqref{E:eom_ab} is equivalent to
\begin{equation*}
\partial_t L(t) = [P(t), L(t)] = P(t)L(t) - L(t)P(t),
\end{equation*}
and $(P,L)$ is called a Lax pair. Its existence shows the complete integrability of the Toda lattice.  A consequence of complete integrability (or of the Lax pair) is the existence of an inverse scattering transform for the Toda lattice.
\subsection{Direct scattering: definition of the scattering data}\label{S:scattering_data}
Since $L$ is a bounded self-adjoint operator the spectrum $\sigma(L) \subset \mathbb{R}$. Furthermore, \eqref{E:decay} implies that the spectrum of $L$ consists of a purely absolutely continuous (a.c.) part
\begin{equation*}
\sigma_{\text{ac}}(L) = [-1, 1],
\end{equation*}
and a finite simple pure point part
\begin{equation*}
\sigma_{\text{pp}}(L) = \lbrace \lambda_j \colon j =1,2,\dots,N \rbrace \subset (-\infty, -1) \cup (1, +\infty)\,.
\end{equation*}
For convenience we map the spectrum via the Joukowski transformation:
\begin{equation*}
\lambda = \tfrac{1}{2}\left(z + z^{-1}\right), \phantom{x} z = \lambda - \sqrt{ \lambda^2 - 1 }, \phantom{x} \lambda \in \mathbb{C}, \phantom{x} |z| \leq 1\,.
\end{equation*}
Here the square root $\sqrt{\lambda^{2}-1}$ is defined to be positive for $\lambda>1$ and $\sigma_{\text{ac}}(L)$ is the branch cut. Under this transformation, the a.c.-spectrum, $[-1,1]$, is mapped to the unit circle $\mathbb{T}$ and the eigenvalues $\lambda_j$ are mapped to $\zeta^{\pm 1}_j$, with $\zeta_j \in (-1,0) \cup (0,1)$ via
\begin{equation}\label{E:zeta}
\lambda_j = \tfrac{1}{2}\left(\zeta_j + \zeta_j^{-1} \right),
\end{equation}
for $j=1,2,\dots, N$. For any $z$ with $0 < |z| \leq 1$, $z \neq \pm 1$ the equation
\begin{equation}\label{E:evalprob}
L \varphi = \frac{z + z^{-1}}{2} \varphi
\end{equation}
has two unique solutions, $\varphi_+$ and $\varphi_-$, normalized such that
\begin{equation}\label{E:normalize}
\lim_{n\to\pm\infty}z^{\mp n} \varphi_{\pm}(z; n) = 1.
\end{equation}
{ For fixed $n$, $\varphi_\pm(z;n)$ are analytic functions of $z$, $0  < |z| < 1$.  With the assumption \eqref{E:decay} of exponential decay in the initial data, the functions $\varphi_\pm(z;n)$ extend analytically to $0 < |z| < 1 + r(\delta)$ where $r(\delta)>0$ depends on the decay rate of $\sigma(n)$ in \eqref{E:decay}. } It follows from Green's formula that the Wronskians $W\left(\varphi_\pm(z; \cdot), \varphi_\pm(z^{-1}; \cdot)\right)$ are independent of $n$, and evaluating them at $\pm\infty$ we observe that $\lbrace \varphi_\pm(z; \cdot), \varphi_\pm(z^{-1}; \cdot) \rbrace$ are two sets of linearly independent solutions. We define the \emph{transmission coefficient} $T(z)$ and the \emph{reflection coefficients} $R_{\pm}(z)$ by the scattering relations for $|z| = 1$
\begin{equation}\label{E:scatrel}
\begin{aligned}
T(z)\varphi_+(z,n) &= R_{-}(z) \varphi_-(z;n) + \varphi_-\left(z^{-1}; n\right),\\
T(z)\varphi_-(z,n) &= R_{+}(z) \varphi_+(z;n) + \varphi_+\left(z^{-1}; n\right).
\end{aligned}
\end{equation}
  Also, for general data the transmission coefficient has a meromorphic extension inside the unit disk $|z|\leq 1$, with finitely many simple poles {at} $\zeta_j$, $|\zeta_j|<1$, $j=1,2,\dots,N$. The residues of $T(z)$ are given by:
\begin{equation}\label{E:proportionality}
\underset{z=\zeta_j}{\text{Res}}\,T(z) = -\zeta_j \gamma_{+, j}\mu_j^{- 1} =-\zeta_j \gamma_{-, j}\mu_j ,
\end{equation}
where
\begin{equation}\label{E:gamma}
\gamma_{\pm,j} = \frac{1}{\left\| \varphi_{\pm}(\zeta_j; \cdot)\right\|^2_{\ell^2(\mathbb{Z})}}
\end{equation}
are the $\emph{norming constants}$ and $\mu_j$ is the associated proportionality constant: $\varphi_{-}(\zeta_j,\cdot)=\mu_j \varphi_{+}(\zeta_j, \cdot)$. {Due to the assumption on the initial data \eqref{E:decay}, the relations \eqref{E:scatrel} remain valid in an annulus containing the unit circle and therefore $R_\pm(z)$ and $T(z)$ are meromorphic in this annulus. }

One reflection coefficient, one set of norming constants, and the set of eigenvalues is sufficient for reconstructing $L$ via the inverse scattering transform for Jacobi matrices whose coefficients decay sufficiently fast \cite{Tes}. Define
\begin{equation*}
R(z) = R_{+}(z)\quad\text{and}\quad \gamma_j = \gamma_{+,j},\phantom{x}j=1,2,\dots,N,
\end{equation*}
and the set
\begin{equation*}
{\bf S}(L) = \left\lbrace R(z), \left\lbrace \zeta_j\right\rbrace_{j=1}^{N}, \left\lbrace\gamma_j\right\rbrace_{j=1}^{N} \right\rbrace
\end{equation*}
to be the scattering data for the Lax operator $L$. For a more detailed account of {the} scattering theory for Jacobi matrices, see \cite{Tes} or \cite{KT_rev}.

\subsection{Inverse scattering: the Riemann--Hilbert problem}

We phrase the inverse problem in terms of a sectionally meromorphic {\RHP}. In what follows, plus ($+$) and minus ($-$) sides of a contour correspond to the left and right sides by orientation, respectively. {And} $m^{\pm}(z) {= m_\pm(z)}$ denote the boundary values of a function $m(z)$ as $z$ tends to the relevant contour from the $\pm$ side.

\begin{rhp}\label{rhp:m}
Let the unit circle $\mathbb{T}$ have counterclockwise orientation. As in \cite{KT_rev}, we seek a function $m \colon \mathbb C \setminus \mathbb{T} \to \mathbb{C}^{1\times 2}$ that is sectionally meromorphic, continuous\footnote{Throughout this paper, unless we specify otherwise, we look for solutions of the Riemann--Hilbert problems that are continuous up to their jump contours.} up to $\mathbb T$, with simple poles at $\zeta_j^{\pm 1}$, $j=1\dots,N$, and satisfies:
\begin{itemize}
\item \emph{the jump condition:}
\begin{equation*}
m^{+}(z;n,t) = m^{-}(z;n,t) J(z;n,t),\phantom{x} z\in\mathbb{T},\quad J(z;n,t) = \begin{pmatrix} 1 - |R(z)|^2 & -\overline{R(z)} e^{-\theta(z;n,t)} \\ R(z) e^{\theta(z;n,t)} & 1 \end{pmatrix},
\end{equation*}
\item \emph{the residue conditions:}
\begin{equation*}
\begin{aligned}
\underset{z=\zeta_j}{\text{Res}}\,m(z;n,t) &= \lim_{z\to\zeta_j} m(z;n,t)\begin{pmatrix}
0 & 0 \\
-\zeta_j \gamma_j e^{\theta(\zeta_j;n,t)} & 0
\end{pmatrix},\quad j=1,2, \dots, N,\\
\underset{z=\zeta_j^{-1}}{\text{Res}}\,m(z;n,t) &= \lim_{z\to\zeta_j^{-1}} m(z;n,t)\begin{pmatrix}
0 & \zeta_j^{-1} \gamma_j e^{\theta(\zeta_j;n,t)} \\
0 & 0
\end{pmatrix}, \quad j=1,2, \dots, N,
\end{aligned}
\end{equation*}
\item \emph{the symmetry condition:}
\begin{equation}\label{E:symm}
m\left(z^{-1}; n, t\right) = m(z;n,t)\begin{pmatrix} 0& 1 \\ 1 & 0 \end{pmatrix},
\end{equation}
\item \emph{the normalization condition:}
\begin{equation}\label{E:vector_normal}
\lim_{z\to \infty} m(z;n,t) = \begin{pmatrix} m_1 & m_2 \end{pmatrix},\text{ with } m_1 \cdot m_2 = 1 \text{ and } m_1 > 0\,.
\end{equation}
\end{itemize}
\end{rhp}
Here the exponent $\theta(z;n,t)$ in the jump matrix $J$ is given by:
\begin{equation}\label{E:theta}
\theta(z;n,t) = t\left(z - z^{-1}\right) + 2 n \log (z),
\end{equation}
and it can be shown that $\overline{R(z)} = R(z^{-1})$ \cite{Tes}. The symmetry condition ensures that \rhref{rhp:m} has a unique solution for all values of $(n,t)$ (see Section 3 in \cite{KT_sol}).

\begin{remark}\label{r:nosol}
{The associated matrix {\RHP} for \rhref{rhp:m} (the {\RHP} with a $2 \times 2$ unknown function satisfying the same jump condition, normalized to the identity matrix at infinity and no symmetry condition, see Definition~\ref{D:assoc} below) may not have a solution for some exceptional $(n,t)$ values.  Indeed, these exceptional values are guaranteed to exist when $N \neq 0$, see \cite[Lemma~2.6]{KT_sol} and the preceding discussion, for example.  But as illustrated by this example, such an exceptional value of $(n,t)$ occurs when $|\gamma_j e^{\theta(\zeta_j;n,t)}| \approx |\zeta^j - \zeta_j^{-1}|$, \emph{i.e.} near the peak of a soliton.  This phenomenon is a consideration in the numerical method developed in this work, see Remark~\ref{r:singular}.}
\end{remark}

We have the following well-known and important fact:
\begin{prop}\label{p:m-analytic}
{For generic initial data $R(\pm 1) = -1$ and hence $m^+_1(-1) = m_2^-(-1) = 0$.  If the potentials $(a^0-1/2,b^0)$ tend to zero exponentially as $|n| \to \infty$, $R(z)$ is analytic in a neighborhood of $\mathbb T$. Moreover, $m^\pm(z)$ have analytic extensions across $\mathbb T$ and hence $m_1^{+}(z)$ and  $m_2^{-}(z)$ have a zero of at least first order at $z=- 1$.}
\end{prop}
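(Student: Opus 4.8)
The plan is to verify the three assertions of the proposition in turn.

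\textbf{Step 1: $R(\pm1)=-1$ for generic data.} I read ``generic'' as the absence of a band-edge resonance, i.e., linear independence of the Jost solutions $\varphi_+(\pm1;\cdot)$ and $\varphi_-(\pm1;\cdot)$ of \eqref{E:evalprob}; this is the open, dense condition studied in Appendix~\ref{A:genericity}, and the usual genericity hypothesis in the scattering theory of Jacobi operators \cite{Tes}. Using \eqref{E:normalize} and $a_n^0\to\tfrac12$, the $n$-independent Wronskians $W\big(\varphi_+(z;\cdot),\varphi_+(z^{-1};\cdot)\big)$ and $W\big(\varphi_-(z^{-1};\cdot),\varphi_-(z;\cdot)\big)$ both equal $\tfrac12(z^{-1}-z)$, so pairing the first relation of \eqref{E:scatrel} with $\varphi_-(z;\cdot)$ gives $T(z)=\tfrac{z^{-1}-z}{2\,W(\varphi_+(z;\cdot),\varphi_-(z;\cdot))}$. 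The numerator vanishes at $z=\pm1$ while, by genericity, the denominator does not, so $T(\pm1)=0$. Since $z\mapsto z^{-1}$ fixes $z=\pm1$, the two relations of \eqref{E:scatrel} evaluated there read $T(\pm1)\varphi_+(\pm1;\cdot)=(R_-(\pm1)+1)\varphi_-(\pm1;\cdot)$ and $T(\pm1)\varphi_-(\pm1;\cdot)=(R_+(\pm1)+1)\varphi_+(\pm1;\cdot)$; because the Jost solutions are nonzero in view of \eqref{E:normalize}, we get $R_\pm(\pm1)=-1$, and in particular $R(\pm1)=R_+(\pm1)=-1$.

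\textbf{Step 2: $m_1^+(-1)=m_2^-(-1)=0$.} This is a direct computation at $z=-1$ using the data of \rhref{rhp:m}. From \eqref{E:theta}, $z-z^{-1}=0$ and $e^{2n\log z}=z^{2n}$ at $z=-1$ give $e^{\theta(-1;n,t)}=(-1)^{2n}=1$; combined with $R(-1)=-1$ this makes the jump matrix $J(-1)=\left(\begin{smallmatrix}0&1\\-1&1\end{smallmatrix}\right)$, so the jump condition forces $m_1^+(-1)=-m_2^-(-1)$. On the other hand $-1$ is a fixed point of $z\mapsto z^{-1}$, and since $\mathbb T$ is counterclockwise the ``$+$'' side is the interior of the disk; letting $z\to-1$ with $|z|>1$ in the symmetry relation \eqref{E:symm} (so that $z^{-1}\to-1$ with $|z^{-1}|<1$) yields $m_1^+(-1)=m_2^-(-1)$. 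The two identities are compatible only if $m_2^-(-1)=0$, whence also $m_1^+(-1)=0$.

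\textbf{Step 3: analytic continuation across $\mathbb T$ and the zero at $z=-1$.} Exponential decay of $(a^0-\tfrac12,b^0)$ makes $\varphi_\pm(z;\cdot)$ analytic on an annulus $A=\{(1+r)^{-1}<|z|<1+r\}$, and $R=R_+$ admits there a representation as a ratio of Wronskians of the continued Jost solutions whose denominator is $W(\varphi_+(z;\cdot),\varphi_-(z;\cdot))$; the zeros of this denominator in $A$ are exactly the eigenvalue images $\zeta_j^{\pm1}$, which stay off $\mathbb T$ since $\sigma_{\mathrm{pp}}(L)\subset(-\infty,-1)\cup(1,\infty)$, so $R$ is analytic in some neighborhood $U\subset A$ of $\mathbb T$ (cf.\ \cite{Tes} and the discussion after \eqref{E:scatrel}). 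Using $\overline{R(z)}=R(z^{-1})$ to continue its entries, the jump matrix $J$ extends analytically to $U$ with $\det J\equiv 1$, hence invertibly. Shrinking $U$ to avoid the $\zeta_j^{\pm1}$, define $\hat m\defeq m$ on $U\cap\{|z|>1\}$ and $\hat m\defeq mJ^{-1}$ on $U\cap\{|z|<1\}$; the jump condition makes $\hat m$ continuous across $\mathbb T$, hence analytic on all of $U$ by Morera, and then $m\big|_{|z|<1}=\hat m J$ and $m\big|_{|z|>1}=\hat m$ are analytic on $U$ and provide the analytic continuations of $m^+$ and $m^-$ across $\mathbb T$. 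In particular $m_1^+$ and $m_2^-$ are analytic in a full neighborhood of $z=-1$ and vanish there by Step~2, so each has a zero of order at least one at $z=-1$.

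\textbf{Main obstacle.} The manipulations in Steps~2 and~3 are routine; the one place that needs genuine scattering-theoretic input is Step~1 --- identifying ``generic'' with band-edge non-resonance and extracting $T(\pm1)=0$ together with the regularity of $R$ there --- and I would cite \cite{Tes,KT_rev} (and Appendix~\ref{A:genericity} for the open--density statement) rather than reprove these facts, taking care that the scattering relations \eqref{E:scatrel}, a priori posited on $\mathbb T$, extend through $z=\pm1$ by the analyticity of the Jost solutions granted by \eqref{E:decay}.
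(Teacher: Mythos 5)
Your proposal is correct, and its central step is exactly the paper's argument: at $z=-1$ the jump matrix becomes $\left(\begin{smallmatrix}0&1\\-1&1\end{smallmatrix}\right)$, whose first column gives $m_1^+(-1)=-m_2^-(-1)$, while the symmetry condition \eqref{E:symm} gives $m_1^+(-1)=m_2^-(-1)$, so both vanish. You differ from the paper at the two ends. For $R(\pm1)=-1$ the paper simply points to Appendix~\ref{A:genericity}; you additionally derive it from band-edge non-resonance via the Wronskian identity $T(z)=\tfrac{1}{2}(z^{-1}-z)/W(\varphi_+(z;\cdot),\varphi_-(z;\cdot))$, which makes explicit what ``generic'' is buying — a useful supplement, though (as you note) the open--dense statement itself still rests on the appendix's finite-rank-perturbation argument. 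For the analyticity, the paper returns to direct scattering and invokes the Volterra summation equations \eqref{e:volt}, which give analyticity of the Jost solutions (hence of $m^\pm$) in the annulus where \eqref{E:decay} guarantees convergence; you instead argue entirely at the level of the {\RHP}, observing that $J$ extends analytically with $\det J\equiv1$ and that the function equal to $m$ outside $\mathbb{T}$ and to $mJ^{-1}$ inside is continuous across $\mathbb{T}$, hence analytic by Morera, which continues $m^\pm$ across the circle. Both routes are valid: yours is self-contained given only the {\RHP} data and the analyticity of $R$, while the paper's is shorter because the Volterra representation is already available from the direct-scattering construction.
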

\begin{proof}
First, that generically $R(\pm 1) = -1$ is shown in Appendix~\ref{A:genericity}.  Let $m(z)$ be the solution of \rhref{rhp:m}. Then at $z = -1$
\begin{align*}
  	{m}^+(-1) = {m}^-(-1) \begin{pmatrix} 0 & 1 \\ -1 & 1  \end{pmatrix},
\end{align*}
because $R(-1) = -1$. The first component of this equation gives
\begin{align*}
	{m}_1^+(-1) = -{m}_2^-(-1).
	\end{align*}
But the symmetry condition \eqref{E:symm} gives that ${m}_1^+(-1) = {m}_2^-(-1)$ so that ${m}_1^+(-1) = {m}_2^-(-1) = 0$.  Analyticity follows from considering the Volterra summation equations \eqref{e:volt} which forces the zero to be of at least first order.
\end{proof}
We proceed with a lemma for recovering the potential from the unique solution of \rhref{rhp:m}.

\begin{lemma}\label{L:recover}
 The solution $(a(t), b(t)) \in \ell(\mathbb{Z}) \times \ell(\mathbb{Z})$ to the initial value problem \eqref{E:eom_ab} for the Toda lattice can be recovered from the asymptotic behavior of $m(z;n,t)$ near $z=0$,
\begin{equation*}
m(z;n,t) = \begin{pmatrix} A_n(t)\big( 1 - 2B_{n-1}(t)z\big) & \tfrac{1}{A_n(t)}\big( 1 + 2B_n(t)z\big)\end{pmatrix} + \mathcal{O}\left(z^2\right),
\end{equation*}
or $z=\infty$,
\begin{equation}\label{E:m_asym}
m(z;n,t) = \begin{pmatrix} \tfrac{1}{A_n(t)}\big( 1 + 2B_n(t)z^{-1}\big) &  A_n(t)\big( 1 - 2B_{n-1}(t)z^{-1}\big)\end{pmatrix} + \mathcal{O}\left(z^{-2}\right),
\end{equation}
where $A$ and $B$ are defined by
\begin{equation*}
A_n(t) = \prod_{j=n}^{\infty}2a_n(t)\phantom{x}\text{ and }\phantom{x}B_n(t) = - \sum_{j=n+1}^{\infty}b_j(t)\,.
\end{equation*}
Furthermore, \eqref{E:vector_normal} ensures $A_n(t) >0$ and hence $a_n(t) > 0$ for all $t \geq 0$.
\end{lemma}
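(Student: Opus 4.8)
The plan is to connect the unique solution $m$ of \rhref{rhp:m} to the Jost solutions of the spectral problem \eqref{E:evalprob} and then extract $a_n(t)$ and $b_n(t)$ from the small-$z$ behavior of those Jost solutions. First I would recall the standard representation of the solution of \rhref{rhp:m} in terms of scattering data: following \cite{KT_rev}, inside the unit disk the two entries of $m(z;n,t)$ are given, up to explicit $n$-dependent prefactors and integer powers of $z$, by combinations of the renormalized Jost solutions $z^{\mp n}\varphi_\pm(z;n,t)$ and the transmission coefficient $T(z)$. Since \rhref{rhp:m} has a unique solution (Section~3 of \cite{KT_sol}) and this construction can be checked to satisfy the jump, residue, symmetry and normalization conditions, the two must coincide, so the lemma reduces to expanding $\varphi_\pm(z;n,t)$ in powers of $z$ about $z=0$; the behavior near $z=\infty$ will follow by symmetry.

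The core computation is that expansion. Inserting the ansatz $z^{\mp n}\varphi_\pm(z;n,t) = c_{\pm,0}(n,t) + c_{\pm,1}(n,t)\,z + \bigo(z^2)$ into the Volterra summation equations \eqref{e:volt} — equivalently, into the three-term recurrence $(Lf)_n = \tfrac{1}{2}(z+z^{-1})f_n$ subject to the normalizations \eqref{E:normalize} — and matching powers of $z$ yields, at order $z^0$, a first-order recurrence in $n$ whose solution is the telescoping product $A_n(t)=\prod_{j\ge n}2a_j(t)$ (or its reciprocal, depending on the component), and at order $z^1$, an inhomogeneous recurrence whose solution produces the tail sum $B_n(t)=-\sum_{j>n}b_j(t)$. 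Reassembling the pieces with the correct prefactors gives $m(z;n,t)=\big(A_n(t)(1-2B_{n-1}(t)z)\ \ \tfrac{1}{A_n(t)}(1+2B_n(t)z)\big)+\bigo(z^2)$ near $z=0$; the expansion \eqref{E:m_asym} near $z=\infty$ then follows directly from the symmetry condition \eqref{E:symm}, which under $z\mapsto z^{-1}$ interchanges the two components. Reading off successive coefficients recovers $2a_n(t)=A_n(t)/A_{n+1}(t)$ and $b_n(t)=B_n(t)-B_{n-1}(t)$, while membership of $(a(t)-\tfrac{1}{2},b(t))$ in $\ell(\mathbb{Z})$ follows from the decay of the scattering data as in \cite{Tes}.

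Finally, the positivity statement is a short consequence of the construction: \eqref{E:vector_normal} forces the first entry of $m$ at $z=\infty$, namely $1/A_n(t)$, to be positive, so $A_n(t)>0$ for every $n$; since $2a_n(t)=A_n(t)/A_{n+1}(t)$, this gives $a_n(t)>0$ for all $t$. I expect the main obstacle to be the order-$z$ bookkeeping in the second step: one must solve the inhomogeneous $n$-recurrence with the correct decay at $\pm\infty$, verify that the resulting series telescopes to $B_n(t)$ (rather than to some other tail), confirm convergence under \eqref{E:decay}, and keep the $z$-powers and the $A_n$ prefactors consistent across the two components so that their leading coefficients are genuinely reciprocal, as \eqref{E:vector_normal} requires.
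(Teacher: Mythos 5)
The paper does not actually prove Lemma~\ref{L:recover}; it is imported from the literature (see \cite{KT_rev} and \cite{Tes}), where the argument is exactly the one you sketch: identify the unique solution of \rhref{rhp:m} with the vector $\begin{pmatrix} T(z)\varphi_-(z;n,t)z^n & \varphi_+(z;n,t)z^{-n}\end{pmatrix}$ inside the disk, expand the renormalized Jost solutions at $z=0$ via the three-term recurrence, and transfer to $z=\infty$ by the symmetry \eqref{E:symm}. Your order-by-order computation is right: at $\mathcal O(1)$ the recurrence for $f_+(z;n)=\varphi_+(z;n)z^{-n}$ telescopes to $1/A_n$, and at $\mathcal O(z)$ the substitution $c_1(n)=d_n c_0(n)$ gives $d_n-d_{n-1}=2b_n$ with $d_n\to 0$ as $n\to+\infty$, hence $d_n=2B_n$; the positivity of $A_n$ from \eqref{E:vector_normal} and the recovery formulas \eqref{E:Atoa} then follow as you say. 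The one point your sketch glosses over is the first component near $z=0$: the Jost recurrence for $\varphi_-$ normalized at $-\infty$ telescopes to $\prod_{j<n}(2a_j)^{-1}$, not to $A_n$, so you additionally need the Taylor expansion of $T(z)$ at $z=0$ (namely $T(0)=\prod_{j\in\mathbb Z}2a_j$ together with its first-order coefficient, which carries the $\sum b_j$ contribution) — or, equivalently, a Wronskian identity — before the product collapses to $A_n\bigl(1-2B_{n-1}z\bigr)$. The symmetry condition alone cannot supply this, since it only exchanges the $z=0$ and $z=\infty$ expansions of the two components with each other. With that piece of bookkeeping added (it is carried out in \cite{Tes}), your proof is complete and coincides with the standard one.
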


\subsection{Asymptotic regions}\label{S:regions}
In this section we discuss asymptotic regions for the long-time asymptotics of the Toda lattice with decaying initial data. A rigorous study of long-time asymptotics for solutions of the Toda lattice equations was recently carried out in \cite{KT_sol} and \cite{KT_rev} in the soliton and the dispersive regions (see below), but the question of long-time asymptotics in the region $|n|/t \sim 1$ has not been addressed in generality so far. The long-time behavior of solutions in this region was studied in \cite{Kam} under the additional assumptions that no solitons are present, that is, the {\RHP} has no poles, and that $| R(\pm 1 )| < 1$. Under the latter assumption, the solution is given asymptotically in terms of a Painlev\'e II
transcendent in the region $|n|/t \sim 1$, \cite{Kam}. However, one generically has $R(z = \pm 1) = -1$ (We give a proof of this fact in Appendix~\ref{A:genericity}). In this case, an additional region called the collisionless shock region appears as the stationary phase points of the jump matrix coalesce at $z = \pm 1$, and one needs to introduce additional contour deformations, employing the so-called $g$-function method, to bridge the dispersive and the Painlev\'e regions.  In the current work, we present new deformations of the associated {\RHP} for this unstudied region $|n|/t \sim 1$ with generic initial data. These deformations are essential to compute solutions numerically.
\begin{figure}
\includegraphics[scale=0.6]{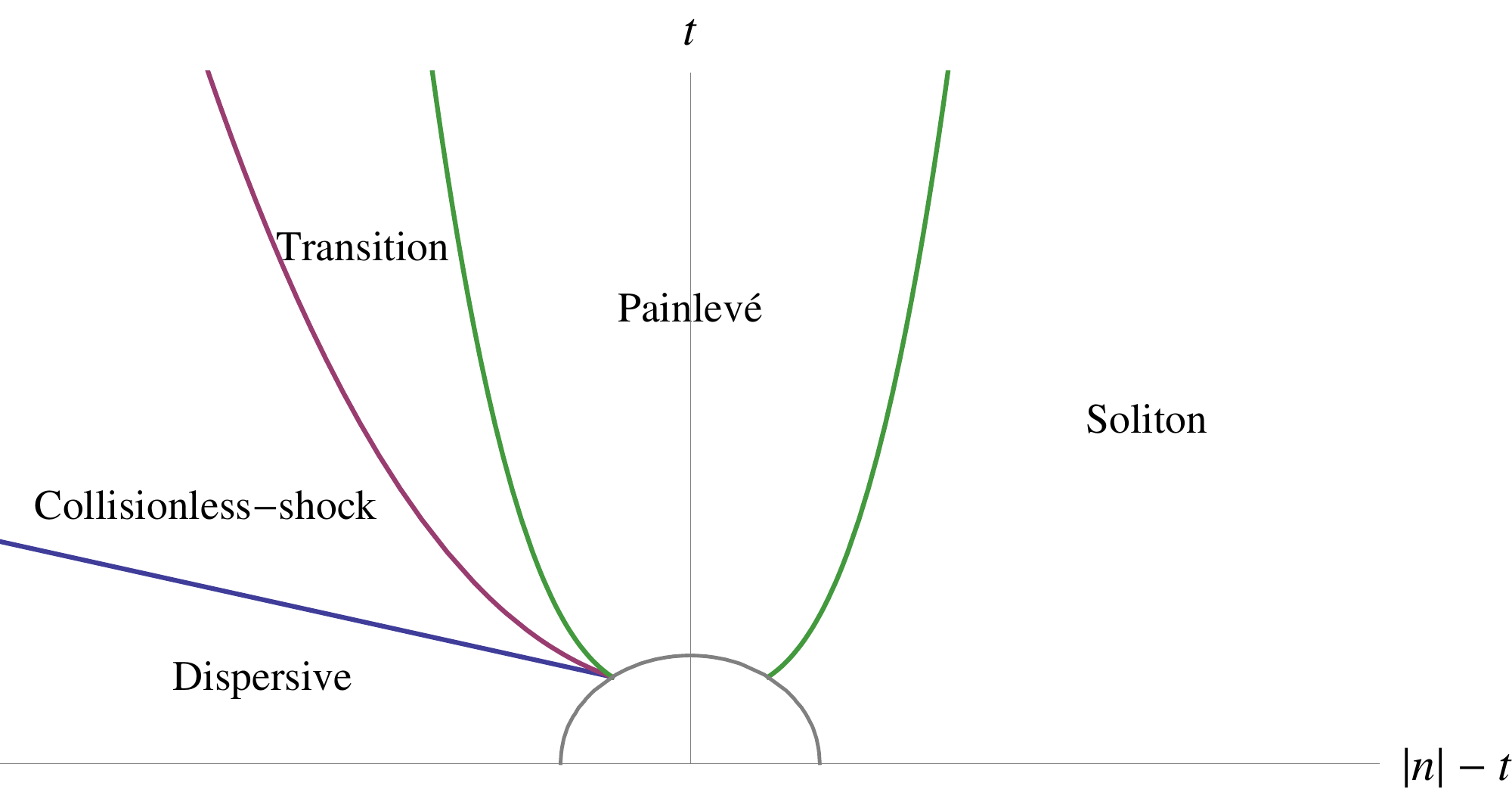}
\caption{Asymptotic regions.}
\end{figure}

Introduce constants, $c_j  > 0$, to divide asymptotic regions.
\begin{enumerate}
\item[1.] \emph{The dispersive region.} This region is defined for $|n| \leq c_1 t$, with $0<c_1<1$. Asymptotics in this region were obtained in \cite{KT_rev}.

\item[2.] \emph{The collisionless shock region.} This region, to the best of our knowledge, has not been addressed in the literature. It is defined by the relation $  c_1 t\leq |n|  \leq t - c_2 t^{1/3}\left(\log{t} \right)^{2/3} $. Asymptotics are not known in this region.

\item[3.] \emph{The transition region.} This region, to the best of our knowledge, is also not present in the literature. {The region is} defined by the relation $ t - c_2 t^{1/3}\left(\log{t} \right)^{2/3}   \leq |n| \leq t - c_3 t^{1/3}$. Asymptotics are not known in this region. An analogue of this region was first introduced for KdV in \cite{TOD_KdV}.

\item[4.] \emph{The Painlev\'e region.} This region is defined for $t - c_3 t^{1/3} \leq |n| \leq t + c_3 t^{1/3}$. Asymptotics in this region were obtained in \cite{Kam} in absence of solitons and under the additional assumption that $|R(z)|<1$.

\item[5.] \emph{The soliton region.} This region is defined for $|n| > t + c_3 t^{1/3}$. Let $v_k >1$ denote the velocity of the $k^{\text{th}}$ soliton and choose $\nu>0$ so that the intervals $(v_k - \nu, v_k +\nu)$, $k=1,2,\dots, N$, are disjoint. If $|n/t - v_k| < \nu$, the asymptotics in this region were obtained in \cite{KT_sol} and \cite{KT_rev}.  It will follow in Section~\ref{S:Painleve} that the deformation in the Painlev\'e region for $|n| \geq t$ is the same as that for soliton region although one should expect the long-time behavior to be different in each region.  We will see that from a numerical perspective the Painlev\'e region for $|n| \geq t$ can be identified with the soliton region.
\end{enumerate}

\subsection{Fundamental deformations of the inverse problem}\label{S:fundamentals}

As outlined in Section~\ref{S:manual} {below}, the procedure for numerical solution of \rhref{rhp:m} involves a sequence of deformations, dictated by the values of $(n,t)$, which result in a numerically tractable RH Problem satisfied by a sectionally analytic vector-valued function. In this section we present two deformations that are {performed} for all values of the parameters $(n,t)$. Our first step is to remove the poles (if any) from the sectionally meromorphic RH Problem \ref{rhp:m}. This is achieved by introducing small circles centered at each pole and using the appropriate jump conditions on these new contours \cite{DKKZ}. Fix $\varepsilon >0$ such that
\begin{equation}
\varepsilon < \frac{1}{2}\min\left\lbrace \min_{1\leq j\leq N}\{ | \zeta_j | \}, \min_{1\leq j\leq N}\left\{\left||\zeta_j| - 1\right|\right\}, \min_{\substack{1\leq j,l\leq N \\ j\neq l}}\left\{|\zeta_j - \zeta_l |\right\} \right\rbrace\,
\label{E:eps-choice}
\end{equation}
and define the circles $D_j^{\pm}$ by
\begin{equation*}
D^{\pm}_j = \left\lbrace z \colon \left |z^{\pm 1} - \zeta_j\right| = \varepsilon \right\rbrace,\quad j=1,2,\dots, N.
\end{equation*}
The choice \eqref{E:eps-choice} of $\varepsilon$ guarantees that the disks enclosed by the circles $D^{\pm}_j$, $j=1,2,\dots, N$, do not intersect each other or the unit circle and none of them contains the origin. We define $\widehat{m}(z;n,t)$ by
\begin{equation}\label{E:hat}
\widehat{m}(z;n,t) \defeq
\begin{cases}
m(z;n,t)\begin{pmatrix} 1 & 0 \\ \frac{\zeta_j \gamma_j e^{\theta(\zeta_j;n,t)}}{z-\zeta_j} & 1\end{pmatrix},\quad &|z - \zeta_j | < \varepsilon,\phantom{x} j =1,2,\dots, N, \\
m(z;n,t)\begin{pmatrix} 1 & -\frac{z\gamma_j e^{\theta(\zeta_j;n,t)}}{z-\zeta_j^{-1}} \\ 0 & 1\end{pmatrix},\quad &\left|z^{-1} - \zeta_j \right| < \varepsilon,\phantom{x} j =1,2,\dots, N, \\
m(z;n,t),\quad&\text{otherwise}.
\end{cases}
\end{equation}
It is straightforward to show that $\widehat{m}(z)$ solves the following sectionally analytic {\RHP}\footnote{From here on we state {\RHP}s only in terms of their jump condition, jump contour, symmetry condition, and normalization.}:
\begin{rhp}\label{rhp:hm}
\begin{equation}\label{E:hat_m}
\begin{aligned}
\widehat{m}^{+}(z;n,t) &=
\begin{cases}
\widehat{m}^{-}(z;n,t)J(z;n,t),\quad&z\in\mathbb{T},\\
\widehat{m}^{-}(z;n,t)\begin{pmatrix} 1 & 0 \\ \frac{\zeta_j \gamma_j e^{\theta(\zeta_j;n,t)}}{z-\zeta_j} & 1\end{pmatrix},\quad &z\in D^{+}_j,\phantom{x} j =1,2,\dots, N, \\
\widehat{m}^{-}(z;n,t)\begin{pmatrix} 1 & \frac{z\gamma_j e^{\theta(\zeta_j;n,t)}}{z-\zeta_j^{-1}} \\ 0& 1\end{pmatrix},\quad &z\in D^{-}_j,\phantom{x} j =1,2,\dots, N,
\end{cases}\\
\widehat{m}\left(z^{-1}; n, t\right) &= \widehat{m}(z;n,t)\begin{pmatrix} 0& 1 \\ 1 & 0 \end{pmatrix}, \quad |z| > 1,\\
\lim_{z\to \infty} \widehat{m}(z;n,t) &= \begin{pmatrix} m_1 & m_2 \end{pmatrix},\text{ with } m_1 \cdot m_2 = 1 \text{ and } m_1 > 0\,,
\end{aligned}
\end{equation}
where $\{D^{+}_j\}$ are oriented counter-clockwise and $\{D^{-}_j\}$ are oriented clockwise.\end{rhp}

This {deformation} brings in a possibility of exponential growth for $t>0$ in the new jump matrices on $\{D_j^\pm\}$. There are two cases to distinguish. If $\Re \theta(\zeta_j;n,t) < 0$ the jump matrices introduced in \rhref{rhp:hm} have exponential decay to the identity as $t\to\infty$, which is what we desire. If $\Re \theta(\zeta_j;n,t) > 0$ for some $\zeta_j$, however, the jumps around such poles are unbounded as $t \to \infty$. Following the approach in \cite{DKKZ} (see also \cite{KT_rev}) we employ a conjugation procedure to restate our problem so that the jump matrices tend to the identity exponentially fast as {either $|n|$ or $t$ tend to infinity.} Let $K_{n,t} \subseteq \lbrace 1,2,\dots, N \rbrace$ denote the index set for $\zeta_j$ (if any) such that $|\gamma_j(t)|\exp{(\Re\theta(\zeta_j;n,t))}>1$. We set

\begin{equation}\label{E:Q}
Q(z) =
\begin{pmatrix}
\displaystyle\prod_{j \in K_{n,t}} \frac{z-\zeta_j}{z - \zeta_j^{-1}} & 0 \\ 0 &\displaystyle\prod_{j \in K_{n,t}} \frac{z -\zeta_j^{-1}}{z - \zeta_j},
\end{pmatrix}
\end{equation}
{and $Q(z)$ is naturally defined as the identity matrix if $K_{n,t}=\emptyset$}. For $j\in K_{n,t}$ define
\begin{equation}\label{E:Phi_1d}
\widetilde{m}(z;n,t)=
\begin{cases}
\widehat{m}(z;n,t)\begin{pmatrix}1 & \frac{z-\zeta_j}{\gamma_j \zeta_j e^{\theta(\zeta_j; n, t)}} \\ -\frac{\zeta_j \gamma_j e^{\theta(\zeta_j;n,t)}}{z-\zeta_j} & 0\end{pmatrix}Q(z), \quad &|z-\zeta_j|<\varepsilon, ~ j\in K_{n,t},\\
\widehat{m}(z;n,t)\begin{pmatrix}0 & \frac{z\zeta_j\gamma_j e^{\theta(\zeta_j;n,t)}}{z\zeta_j - 1} \\ -\frac{z\zeta_j - 1}{z \zeta_j \gamma_j e^{\theta(\zeta_j; n, t)}} & 1\end{pmatrix}Q(z), \quad &\left|z^{-1}-\zeta_j\right|<\varepsilon, ~j\in K_{n,t},\\
\widehat{m}(z;n,t)Q(z), \quad &\text{otherwise.}
\end{cases}
\end{equation}

The matrices
\begin{equation*}
\begin{pmatrix}1 & \frac{z-\zeta_j}{\gamma_j \zeta_j e^{\theta(\zeta_j; n, t)}} \\ -\frac{\zeta_j \gamma_j e^{\theta(\zeta_j;n,t)}}{z-\zeta_j} & 0\end{pmatrix}Q(z)\quad\text{and}\quad \begin{pmatrix}0 & \frac{z\zeta_j\gamma_j e^{\theta(\zeta_j;n,t)}}{z\zeta_j - 1} \\ -\frac{z\zeta_j - 1}{z \zeta_j \gamma_j e^{\theta(\zeta_j; n, t)}} & 1\end{pmatrix}Q(z)
\end{equation*}
have removable singularities at $\zeta_j$ and $\zeta^{-1}_{j}$, respectively. Now, $\widetilde{m}(z)$ satisfies:
\begin{rhp}\label{rhp:disp}
\begin{equation}
\begin{aligned}
\widetilde{m}^{+}(z;n,t) &=
\begin{cases}
\widetilde{m}^{-}(z;n,t)Q^{-1}(z) J(z;n,t) Q(z),\quad & z\in\mathbb{T},\\
\widetilde{m}^{-}(z;n,t)Q^{-1}(z)\begin{pmatrix}1 & \frac{z-\zeta_j}{\zeta_j \gamma_j e^{\theta(\zeta_j; n, t)}} \\ 0 & 1\end{pmatrix}Q(z),\quad & z\in D^{+}_j,\quad j\in K_{n,t},\\
\widetilde{m}^{-}(z;n,t)Q^{-1}(z) \begin{pmatrix}1 & 0 \\ \frac{z\zeta_j - 1}{z \zeta_j\gamma_j e^{\theta(\zeta_j;n,t)}} & 1\end{pmatrix}Q(z),\quad & z\in D^{-}_j,\quad j\in K_{n,t},\\
\widetilde{m}^{-}(z;n,t)Q^{-1}(z)\begin{pmatrix}1 & 0 \\ \frac{\zeta_j \gamma_j e^{\theta(\zeta_j; n, t)}}{z-\zeta_j} & 1\end{pmatrix}Q(z),\quad & z\in D^{+}_j,\quad j\notin K_{n,t}\\
\widetilde{m}^{-}(z;n,t)Q^{-1}(z) \begin{pmatrix}1 & \frac{z \zeta_j\gamma_j e^{\theta(\zeta_j;n,t)}}{z\zeta_j - 1} \\ 0 & 1\end{pmatrix}Q(z),\quad & z\in D^{-}_j,\quad j\notin K_{n,t},
\end{cases}\\
\widetilde{m}(0; n, t) &= \widetilde{m}(\infty; n, t)\begin{pmatrix} 0 & 1 \\ 1 & 0 \end{pmatrix}Q(0).\\
\lim_{z\to \infty} \widetilde{m}(z; n, t) &= \begin{pmatrix} {m}_1 & {m}_2 \end{pmatrix},\text{ with } m_1\cdot m_2 = 1\text{ and } m_1 >0,
\end{aligned}
\end{equation}
where $m_j\in\mathbb{R}$ are given in \eqref{E:vector_normal}.
\end{rhp}
Note that we have relaxed the \emph{global} symmetry condition \eqref{E:symm} (required to hold for all $z\in\mathbb{C}$) present in \rhref{rhp:m} and \rhref{rhp:hm} to an \emph{asymptotic} symmetry condition (required to hold only at $z=\infty$) in \rhref{rhp:disp}. As we shall see in Section~\ref{S:inverse}, doing this does not cause a problem for numerical solution of \rhref{rhp:m} and it simplifies some calculations, see Remark~\ref{R:symm}. If $N$ is large\footnote{One would expect large $N$ if the data has large amplitude.} the entries in these jump matrices can also be large for finite $(n,t)$ even though we have decay to the identity as $|n|$ or $t$ becomes large.  We consider small values of $N$. Finally, to simplify the notation, let $X_{j,\pm}$ denote the jump matrices that are used to invert the exponentials for $j \in K_{n,t}$:
\begin{equation}\label{E:X}
X_{j,+}(z;n,t) = \begin{pmatrix} 1 & \frac{z-\zeta_j}{\zeta_j \gamma_j e^{\theta(\zeta_j;n,t)}} \\ 0 & 1\end{pmatrix}\quad\text{and}\quad X_{j,-}(z;n,t)= \begin{pmatrix} 1 & 0 \\ \frac{z\zeta_j - 1}{z\zeta_j\gamma_j e^{\theta(\zeta_j; n, t)}}& 1\end{pmatrix},
\end{equation}
and $Y_{j,\pm}(z;n,t)$ denote the jumps introduced in \rhref{rhp:disp} for $j \not \in K_{n,t}$:
\begin{equation}\label{E:Y}
Y_{j,+}(z;n,t) = \begin{pmatrix} 1 & 0 \\ \frac{\zeta_j \gamma_j e^{\theta(\zeta_j;n,t)}}{z-\zeta_j} & 1\end{pmatrix}\quad\text{and}\quad Y_{j,-}(z;n,t) = \begin{pmatrix} 1 & \frac{z\zeta_j \gamma_j e^{\theta(\zeta_j;n,t)}}{z\zeta_j-1} \\ 0& 1\end{pmatrix}\,.
\end{equation}

To summarize, for all values of $(n,t)$, we initially perform the following chain of deformations:
\begin{equation*}
\big[m(z;n,t);~\text{\rhref{rhp:m}}\big] \longmapsto\big[\widehat{m}(z;n,t);~\text{\rhref{rhp:hm}}\big]\longmapsto\big[\widetilde{m}(z;n,t);~\text{\rhref{rhp:disp}}\big].
\end{equation*}

\section{The Methodology: A step by step guide}\label{S:manual}
In order to compute solutions of the Toda lattice we must perform each of the procedures oulined in Section~\ref{S:back} numerically. Here we outline what this entails and give a brief discussion of each step. The sections that follow describe many of these steps in an increasing level of detail. We perform the following:
\begin{enumerate}
\item[3.1.] numerical computation of the scattering data,
\item[3.2.] deformation of the vector RH problem, and
\item[3.3.] numerical solution of the deformed vector RH problem.
\end{enumerate}
{Recall that we are computing with initial data that has exponential decay as described in \eqref{E:decay}.}

\subsection{Numerical computation of the scattering data} \label{S:data}

\begin{itemize}
\item\textbf{Computing $R(z)$.}\newline
For $z$ in a neighborhood of $\mathbb{T}$, {on which $R(z)$ is analytic,} we look for solutions of $L\varphi = \tfrac{1}{2}\left(z+z^{-2}\right)\varphi$ which behave like $z^{\pm n}$ as $n \to \pm \infty$. We define two new functions $f_{+}(z;n) = \varphi_{+}(z;n)z^{-n}$ and $f_{-}(z;n) = \varphi_{-}(z;n)z^{n}$ so that we have $f_{\pm}(z;n) \to 1$ as $n\to\pm\infty$. Then $f_{\pm}$ satisfies
\begin{equation*}
a_{n-1}z^{\mp 1}f_{\pm}(z;n-1) + \left(b_n - \tfrac{1}{2}\left(z + z^{-1}\right) \right)f_{\pm}(z;n) + a_nz^{\pm 1}f_{\pm}(z;n+1) = 0.
\end{equation*}
This can be effectively solved using back substitution on $n\geq 0$ using the appropriate boundary conditions for $f_{+}(z;\cdot)$: for $K$ large approximate $f_+$ by using the condition $f_+(z;K+1) = f_+(z;K) = 1$. The constant $K$ is chosen so that $|b_m|$, $|a_m-1/2|$ are both less than machine accuracy for $|m| \geq K$. For $n\leq 0$ a similar method works for $f_{-}(z; \cdot)$ by setting $f_-(z;-K-1)=f_+(z;-K) = 1$. Matching these approximate solutions at $n=0$ yields an approximation of the reflection coefficient.
\item\textbf{Computing $\left\lbrace\zeta^{\pm 1}_1,\zeta^{\pm 1}_2,\dots, \zeta^{\pm 1}_N\right\rbrace$.}\newline
Computing $\zeta^{\pm 1}_j$ is equivalent to computing the $\ell^{2}(\mathbb{Z})$ eigenvalues of the doubly-infinite Jacobi matrix $L$ that is defined in \eqref{E:L}. We approximate the eigenvalues of $L$ by computing eigenvalues of a $L_K$ that are outside the interval $[-1,1]$ for a large value of $K$ \cite{BN}. This method might fail to capture eigenvalues of $L$ that are close to its continuous spectrum. We present an example illustrating this case and provide the underlying spectral theory for Jacobi matrices in Appendix~\ref{A:1}. We check whether we successfully capture all of the eigenvalues by computing the inverse scattering transform at $t=0$ and comparing the reconstructed solution to the initial data. If these values differ by a user-prescribed tolerance, we employ Newton iteration to compute the (simple) zeros of $1/T(z)$, since, as mentioned in Section~\ref{S:scattering_data}, $\zeta_{j}^{\pm 1}$ are the (simple) poles of the transmission coefficient, $T(z)$.

\item\textbf{Computing $\left\lbrace\gamma_1,\gamma_2,\dots, \gamma_N\right\rbrace$.}\newline
At the points $z=\zeta_j$, the solutions $\varphi_{\pm}(z;\cdot)$ are proportional and thus both lie in $\ell^2 (\mathbb{Z})$ with exponential decay as $n\to\pm\infty$. We compute the norming constants by computing $\varphi_{+}(\zeta_j; n)$ for $n\geq 0$ and $\varphi_{-}(\zeta_j; n)$ for $n\leq 0$. We determine the proportionality constant $\mu_j$ (see \eqref{E:proportionality}) by matching these solutions at $n=0$. We then calculate the $\ell^2$-norm $\left\| \varphi_{+}(\zeta_j; \cdot) \right\|_{\ell^{2}(\mathbb{Z})}$ piecewise, using $\varphi_{+}(\zeta_j; n)$ for $n\geq 0$ and $\mu_j^{-1}\varphi_{-}(\zeta_j; n)$ for $n< 0$. Thus, by recalling \eqref{E:gamma}, we obtain the norming constant $\gamma_j$:
\begin{equation*}
\gamma_j^{-1} = \left\| \varphi_{+}(\zeta_j; \cdot) \right\|^2_{\ell^{2}(\mathbb{Z})}.
\end{equation*}
\end{itemize}

\subsection{Deformation of the vector RH problem}

 For most values of $(n,t)$, \rhref{rhp:disp} has high-oscillation in the jump matrix $J(z;n,t)$. To be able to accurately compute the function $\widetilde m(z)$ (or $m(z)$), the {\RHP} needs to be deformed to control these oscillations. This is the objective of the Deift--Zhou method of nonlinear steepest descent (see, for example, \cite{Deift,DVZ,DZ_RHP,DZ_PII}). The input to deform the vector RH problem is {both the parameter set $(n,t)$ and the numerically computed scattering data.}

\begin{itemize}
\item\textbf{Choose the (asymptotic) region for $(n,t)$.}\newline
  For each pair\footnote{We only consider $t > 0$ and $n > 0$. {A} transformation {is used} to treat $n < 0$, see Remark~\ref{r:symmetry}.} $(n,t)$, $n,t >0$, we need to associate a region. This region will dictate how to deform the {\RHP}. We use the five regions introduced in Section~\ref{S:regions} and one {additional} region:
  \begin{enumerate}
  \item[0.] a region where \underline{no deformation} is made.
  \end{enumerate}
  In our code, we use tom~\ref{Alg:Choose} to choose the region:

\renewcommand{\figurename}{Algorithm}
\begin{scheme}[h!]
\centering {\large \underline{Algorithm~\ref{Alg:Choose}:} Choosing a region.}
  \begin{algorithm}[H]
  \KwData{$(n,t)$}
  \KwResult{The region for deformation.}
  set $c_1 = .96$;  set $c_2 = .2$; set $c_3 = 3/2$; set $C_1 = 3$; set $\rho_0 = \sqrt{1-(n/t)^2}$\;
    \uIf{$n^2 + 4 t^2 < C_1$}{no deformation}
    \uElseIf{$n \geq t$}{use soliton region deformation}
    \uElseIf{$t - c_3 (t^{1/3}+1) \leq n \leq t + c_3 (t^{1/3}+1)$}{use Painlev\'e region deformation}
    \uElseIf{$n/t < c$}{use dispersive region deformation}
    \uElseIf{$n/t < 1$ \textbf{and} $- 2 \log \rho_0/(t \rho_0^3) \leq c_2$}{use collisonless shock region deformation}
    \Else{use the transition region deformation}
  \end{algorithm}
 \caption{This is the algorithm for choosing a region. See \eqref{E:rho} for the appearance of $\rho_0$ in the analysis. The constants defined in the algorithm are user specified and can be adjusted on case-by-case basis. In our code we leave them fixed as displayed.\label{Alg:Choose}}
  \end{scheme}
\renewcommand{\figurename}{Figure}

\item\textbf{Deform the jump contours and compute auxilliary functions. }\newline
Once the region has been chosen the deformation of the vector RH problem has to be implemented. This usually follows the ideas of the Deift--Zhou method of nonlinear steepest descent {applied} from a numerical perspective \cite{OT_RHP}. For each output of Algorithm~\ref{Alg:Choose} one has to determine and hard-code appropriate deformations which often introduce additional, or auxilliary, functions that must be computed. These deformations and auxilliary functions are discussed in great detail in Section~\ref{S:deformations}.

\end{itemize}

\subsection{Numerical solution of the deformed vector RH problem}

Once the deformed vector RH problem is in hand, we can proceed with its numerical solution. We must perform the following steps:
\begin{itemize}
\item\textbf{Compute the solution of the associated \underline{matrix} RH problem.}\newline
  The vector \rhref{rhp:disp} and its deformations have a normalization at infinity that is difficult to treat numerically. So, we numerically solve the associated matrix {\RHP} whose solution is defined to be a $2\times 2$ matrix-valued function that has the same jump condition, has no symmetry condition and is normalized to tend to the identity matrix at infinity.  { As discussed in Remark~\ref{r:nosol}, this matrix {\RHP} may fail to have a solution at some exceptional values of $(n,t)$ yet we can still reliably solve it numerically, see Remark~\ref{r:singular}.}

\item\textbf{Construct the solution of the deformed vector RH problem.}\newline
  Because the rows of the associated matrix {\RHP} are linearly independent the solution of the deformed vector {\RHP} must be a {linear combination} of the rows of the associated matrix {\RHP}. The {combination} is determined by solving a $2 \times 2$ eigenvalue problem.

\item\textbf{Extract the solution of the Toda lattice at $(n,t)$.}\newline
  Once the solution of the deformed vector {\RHP} is computed the deformations must be {reversed} and a Taylor expansion is performed to compute $A_n(t)$ and $B_n(t)$ using Lemma~\ref{L:recover}. This procedure is repeated for $(n+1,t)$. Using $A_n(t),A_{n+1}(t),B_n(t)$ and $B_{n-1}(t)$ we have
  \begin{align}\label{E:Atoa}
    a_n(t) = \half A_n(t)/A_{n+1}(t), \quad b_n(t) = B_{n}(t) - B_{n-1}(t).
  \end{align}

\end{itemize}

\section{Deformation of the vector RH problem}\label{S:deformations}
In this section we present the deformations of \rhref{rhp:disp} which are required in each asymptotic region. These deformations involve explicit functions that need not satisfy the global symmetry condition present in \rhref{rhp:m} (or \rhref{rhp:disp}). In each region, the deformations result in a vector {\RHP} with a sectionally analytic solution $m_{\sharp, \alpha}(z;n,t)$ {which becomes unique given a generically true technical assumption discussed in Section~\ref{S:inverse}, specifically Lemma~\ref{l:equiv}.} Here $\alpha$ stands for characters used to denote the asymptotic region $(n,t)$ lies in (\textit{e.g.} $\alpha = \text{cs}$ for the collisionless shock region.) We often suppress the $(n,t)$-dependence of these vector-valued functions $m_{\sharp,\alpha}(z;n,t) = m_{\sharp,\alpha}(z)$.

We use the notation $m_{\kappa,\alpha}(z)$ for unknown vector functions obtained through the deformation procedure where the {integer} $\kappa$ indicates how many deformations have been performed with $\kappa = 1$ being the first deformation of \rhref{rhp:disp}. The final deformation replaces the number $\kappa$ with the symbol $\sharp$. The subscript characters $\alpha$ are used to denote the region as described above. For example, we will have the following sequence of deformations in the dispersive region:
\begin{equation*}
􏰙\widetilde{m}(z)\longmapsto m_{1,\text{d}}(z)\longmapsto m_{2,\text{d}}(z)\longmapsto m_{\sharp,\text{d}}(z)\,.
\end{equation*}
{The functions $m_{\kappa,\alpha}(z)$ are always related to $\widetilde{m}(z)$ via explicit transformations but they may not satisfy an {\RHP} with continous boundary values. Then $m_{\sharp,\alpha}(z)$ will always solve an {\RHP} with continuous boundary values and it is computed numerically.}

The jump matrix $J(z;n,t)$ in \rhref{rhp:disp} has terms that are highly oscillatory for most values of $(n,t)$ and we need to control these oscillations in order to compute $\widetilde{m}(z)$ (or, equivalently $m(z)$) accurately. To do so, we employ Deift-Zhou method of nonlinear steepest descent and examine the phase $\theta(z)$ that appears in these expressions. Solving $\theta'(z)=0$ for $z$, the stationary phase points of $\theta(z)$ are found to be
\begin{equation*}
z_0^{\pm 1} = -\frac{n}{t} \pm \sqrt{\left(\frac{n}{t}\right)^2 - 1}\,.
\end{equation*}
Note that if $n=t$ or $n=-t$, the stationary phase points coalesce at $z=-1$, or $z=1$, respectively.

Also, we present the results and deformations for the case $n\geq 0$ and $t>0$. It is straightforward to obtain the solution for the case $n<0$ and $t>0$ by modifying the initial data and using $n \geq 0$ \cite{KT_rev}:

\begin{remark}\label{r:symmetry}
  If $(a(t), b(t))$ solves the Toda lattice with initial data $a^0_n$ and  $b_n^0$ and $(\tilde a(t), \tilde b(t))$ solves the Toda lattice with initial data $a_{-n}^0$ and  $- b_{-n + 1}^0$ then $a_{-n}(t) = \tilde a_{n}(t)$ and $b_{-n}(t) = -\tilde b_{n-1}(t)$. {An alternate approach would be to use the other reflection coefficient $R_-(z)$ for $n < 0$ and let the stationary phase points lie in the right-half plane.}
\end{remark}

We proceed with the details of the deformations used in each region.

\subsection{The Dispersive Region}\label{S:dispersive}

In this region, the stationary phase points $z_0^{\pm 1}$ of the exponent $\theta(z)$ lie on the unit circle, $\mathbb{T}$. We set $\Sigma = \big\lbrace{z \colon |z|=1,\, -1 \leq \Re z  < \Re z_0 }\big\rbrace$ {and $\Gamma = \mathbb{T}\setminus \Sigma$ as shown in Figure~\ref{F:disp} (this figure omits the contours $D^{\pm}_j$). Note that since $\Re \theta\left(z^{-1}\right) = - \Re \theta(z)$, the curves $\Re \theta(z) = 0$ are symmetric with respect to the mapping $z\mapsto z^{-1}$.
\begin{figure}[htp!]
\centering
\begin{tikzpicture}[scale=0.8,decoration={markings,
mark=at position 0.1 with {\arrow[line width =1.8pt]{>}},
mark=at position 0.5 with {\arrow[line width =1.8pt]{>}},
mark=at position 0.9 with {\arrow[line width =1.8pt]{>}}
}
]
\def\yL{4}
\def\xL{5}
\def\R{\xL-2}
\def\olens{\R+1}
\def\ilens{\R-0.5}

\coordinate (zo) at (120:\R);
\coordinate (zoi) at (240:\R);
\coordinate (o) at (0,0);
\coordinate (yaxisend) at (0,-\yL);
\draw[help lines,<->] (-\xL,0) -- (\xL,0) coordinate (xaxis);
\draw[help lines,<->] (0,-\yL) -- (0,\yL) coordinate (yaxis);

\path[draw,cyan,line width =1.8pt , postaction=decorate]
(\R,0) arc(0:360:\R);

\draw[dotted, gray, line width = 0.8](zoi) to [out=45,in=270] (o) to [out=90, in=135] (zo);
\draw[dotted, gray, line width = 0.8](zoi) to [out=225, in=10] (-\xL,-\yL);
\draw[dotted, gray, line width = 0.8] (zo) to [out=135, in=-10] (-\xL,\yL);

\foreach \Point in {(zo), (zoi), (o)}{
  \node at \Point {\textbullet};
}
\node[below] at (xaxis) {$\Re z$};
\node[above] at (yaxis) {$\Im z$};
\node[below left] {$0$};
\node[above ] at (zo) (z0) {$z_0$};
\node[below] at (zoi) {\small{$z_0^{-1}$}};
\node[below right] at (-\xL+0.3,\yL-1) {\tiny{$\Re \theta(z) < 0$}};
\node[right] at (30:\R) {$\Gamma$};
\node[left] at (150:\R) {$\Sigma$};
\node[above left] at (240:\R-2) {\tiny{$\Re \theta(z) > 0$}};
\node[below] at (0:\R-1.2) {\tiny{$\Re \theta(z) < 0$}};
\node[right] at (15:\R) {\tiny{$\Re \theta(z) > 0$}};
\node[below, gray] at (-\xL+1,-\yL+1) {\tiny{$\Re \theta(z) = 0$}};
\node[below] at (yaxisend) {\small Case: $n>0$};
\end{tikzpicture}
\caption{Sign of $\theta(z)$ and original jump contours for the \rhref{rhp:m} in the dispersive region, $n>0$.}
\label{F:disp}
\end{figure}
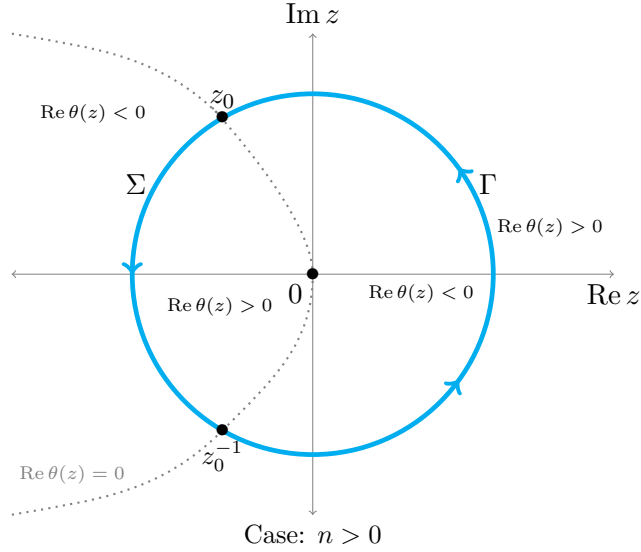

Assume that we have performed the initial deformations discussed in Section~\ref{S:fundamentals} and obtained the vector-valued unknown $\widetilde{m}(z;n,t)$. We then proceed with a deformation which will move the the oscillatory jumps along $\mathbb{T}$ into regions where the oscillatory terms decay exponentially. We define
\begin{equation*}
\tau(z) = 1 - R(z)R\big(z^{-1}\big)\,,
\end{equation*}
and note that the jump matrix $J(z;n,t)$ on $\mathbb{T}$ admits the following two factorizations:
\begin{equation}\label{E:MP}
J(z;n,t)=M(z;n,t)P(z;n,t),
\end{equation}
where
\begin{equation*}
M(z;n,t) = \begin{pmatrix} 1 & - R\left(z^{-1}\right)e^{-\theta(z;n,t)} \\ 0 & 1\end{pmatrix}~\text{and}~P(z;n,t) =\begin{pmatrix} 1 & 0 \\ R(z)e^{\theta(z;n,t)} & 1\end{pmatrix},
\end{equation*}
and
\begin{equation}\label{E:LDU}
J(z;n,t)=L(z;n,t)D(z)U(z;n,t),
\end{equation}
where
\begin{equation*}
L(z;n,t) = \begin{pmatrix} 1 & 0 \\ \frac{ R(z)e^{\theta(z;n,t)}}{\tau(z)} & 1\end{pmatrix},\, U(z;n,t) =\begin{pmatrix} 1 & \frac{- R\left(z^{-1}\right)e^{-\theta(z;n,t)}}{\tau(z)} \\ 0 & 1\end{pmatrix},~\text{and}~
D(z) = \begin{pmatrix} \tau(z) & 0 \\ 0 & \frac{1}{\tau(z)}\end{pmatrix}\,.
\end{equation*}
We use the $MP$-factorization~\eqref{E:MP} on $\Gamma$, and the $LDU$-factorization~\eqref{E:LDU} on $\Sigma$. $M$ (for `minus') will be deformed into the exterior (`minus' side) of the unit circle and $P$ (for `plus') will be deformed into the interior (`plus' side) of the unit circle. Then $L$ is lower triangular and will be deformed into exterior of the unit circle, $D$ is diagonal and will not be deformed, and $U$ is upper triangular and will be deformed into interior of the unit circle. We employ these factorizations so that only one of $e^{\theta(z)}$ or $e^{-\theta(z)}$ appears in each matrix, which in turn makes it possible to obtain exponential decay in different regions of the complex plane. We introduce ``ghost'' contours, $\Sigma_{\pm}$, deformed into $\pm$ side of $\Sigma$, and $\Gamma_{\pm}$ deformed into $\pm$ side of $\Gamma$. Note that these new contours pass locally along the directions of steepest descent for $e^{\pm\theta(z;n,t)}$. The first transformation in this region now follows. We define a new vector-valued function $m_{1,\text{d}}(z)$ based on the regions of the complex plane that emerge from this deformation, as shown in Figure~\ref{F:d_2}(a). Note that $m_{1,\text{d}}(z)$ satisfies the asymptotic symmetry condition and the quadratic normalization condition at infinity which are present in \rhref{rhp:disp}. When $j\in K_{n,t}$ we use the jumps $X_{j,\pm}$ on $D^{\pm}_j$, respectively, in order to turn exponential growth into exponential decay as $t\to +\infty$. We use the jumps $Y_{j,\pm}$ on $D^{\pm}_j$ otherwise.

\begin{figure}
\centering
\subfigure[]{
\begin{tikzpicture}
\def\R{4}
\def\s{0.3}
\def\yL{\R}
\def\xL{\R}
\def\olens{\R+0.8}
\def\ilens{\R-1}

\coordinate (zo) at (120:\R);
\coordinate (zoi) at (240:\R);
\coordinate (o) at (0,0);
\draw[help lines,<->] (-\R-3,0) -- (\R+3,0) coordinate (xaxis);

\begin{scope}[very thick,decoration={
  	 markings,
	 mark=at position 0.33 with {\arrow[line width =1.8pt]{>}},
	mark=at position 0.66 with {\arrow[line width =1.8pt]{>}}}
  ]
\path[draw,cyan,line width=2, postaction=decorate]
(zoi) arc(-120:120:\R);
\path[draw,cyan,line width=2, postaction=decorate]
(zo) arc(120:240:\R);
\end{scope}

\draw[dotted, gray,line width = 0.8] (zoi) to [out=45,in=270] (o) to [out=90, in=315] (zo);
\draw[dotted, gray,line width = 0.8] (zoi) to [out=225, in=10] (-\xL,-\yL);
\draw[dotted, gray,line width = 0.8] (zo) to [out=135, in=-10] (-\xL,\yL);
\begin{scope}[very thick,decoration={
  	 markings,
  	 mark=at position 0.05 with {\arrow[line width =1.8pt]{>}},
	mark=at position 0.5 with {\arrow[line width =1.8pt]{>}},
	mark=at position 0.95 with {\arrow[line width =1.8pt]{>}}}
  ]
\draw[black, line width =1.8, dashed, postaction=decorate] plot[smooth] coordinates {
   (zoi)
   ($(zoi) + (-75:1.13137)$)
   (-90:\olens)
   (-60:\olens)
   (-30:\olens)
   (0:\olens)
   (30:\olens)
   (60:\olens)
   (90:\olens)
   ($(zo)+(75:1.13137)$)
   (zo)
  };
  \draw[black, line width =1.8, dashed, postaction=decorate] plot[smooth] coordinates {
   (zoi)
   ($(zoi)+(15:1.41421)$)
   (-60:\ilens)
   (0:\ilens)
   (60:\ilens)
   ($(zo)+(-15:1.41421)$)
   (zo)
  };
  \end{scope}
\begin{scope}[very thick,decoration={
  	 markings,
  	 mark=at position 0.1 with {\arrow[line width =1.8pt]{>}},
	mark=at position 0.5 with {\arrow[line width =1.8pt]{>}},
	mark=at position 0.9 with {\arrow[line width =1.8pt]{>}}}
  ]
   \draw[black, line width =1.8, dashed, postaction=decorate] plot[smooth] coordinates {
    (zo)
    ($(zo) + (165:1.13137)$)
   (140:\olens)
   (180:\olens)
   (220:\olens)
   ($(zoi) + (-165:1.13137)$)
   (zoi)
   };

\draw[black, line width =1.8, dashed, postaction=decorate] plot[smooth] coordinates {
   	(zo)
   	($(zo)+(255:1.41421)$)
   	(180:\ilens)
	($(zoi)+(-255:1.41421)$)
	(zoi)
     };
\end{scope}

\foreach \Point in {(zo), (zoi), (o), (180:\R), (0:\R) }{
\node at \Point{$\times$};
}
\coordinate (zeta1) at (180:\R-2);
\coordinate (zeta2) at (180:\R-3);
\coordinate (zeta3) at (0:\R-2);
\coordinate (zeta4) at (0:\R-3);
\coordinate (zeta1i) at (180:\R+1.5);
\coordinate (zeta2i) at (180:\R+2.5);
\coordinate (zeta3i) at (0:\R+1.5);
\coordinate (zeta4i) at (0:\R+2.5);

\foreach \Point in {(zeta1), (zeta2), (zeta3), (zeta4), (zeta1i), (zeta2i), (zeta3i), (zeta4i)}{
\node at \Point{\textbullet};
}
\begin{scope}[very thick,decoration={
  	 markings,
	mark=at position 0.5 with {\arrow[line width =1.8pt]{>}}}
  ]

\draw[draw,cyan,line width =1.8, postaction=decorate]
(zeta1) + (0:\s) arc(0:360:\s);
\draw[draw,cyan,line width =1.8, postaction=decorate]
(zeta2) + (0:\s) arc(0:360:\s);
\draw[draw,cyan,line width =1.8, postaction=decorate]
(zeta3) + (0:\s) arc(0:360:\s);
\draw[draw,cyan,line width =1.8, postaction=decorate]
(zeta4) + (0:\s) arc(0:360:\s);
\draw[draw,cyan,line width =1.8, postaction=decorate]
(zeta1i) + (0:\s) arc(0:-360:\s);
\draw[draw,cyan,line width =1.8, postaction=decorate]
(zeta2i) + (0:\s) arc(0:-360:\s);
\draw[draw,cyan,line width =1.8, postaction=decorate]
(zeta3i) + (0:\s) arc(0:-360:\s);
\draw[draw,cyan,line width =1.8, postaction=decorate]
(zeta4i) + (0:\s) arc(0:-360:\s);
\end{scope}

\node[above,yshift=\s+7] at (zeta1) {\tiny{$Q^{-1}X_{j,+}Q$}};
\node[below,yshift=-\s-7] at (zeta2) {\tiny{$Q^{-1}X_{j,+}Q$}};
\node[above,yshift=\s+7] at (zeta1i) {\tiny{$Q^{-1}Y_{j,-}Q$}};
\node[below,yshift=-\s-7] at (zeta2i) {\tiny{$Q^{-1}Y_{j,-}Q$}};
\node[above,yshift=\s+7] at (zeta3) {\tiny{$Q^{-1}Y_{j,+}Q$}};
\node[below,yshift=-\s-7] at (zeta4) {\tiny{$Q^{-1}Y_{j,+}Q$}};
\node[above,yshift=\s+7] at (zeta3i) {\tiny{$Q^{-1}X_{j,-}Q$}};
\node[below,yshift=-\s-7] at (zeta4i) {\tiny{$Q^{-1}X_{j,-}Q$}};

\node (L) at (150:\R+0.2) {};
\coordinate (Ltext) at (150:\R+3);
\node[yshift=4pt] at (Ltext) {\tiny{$m_{1,\text{d}}\defeq\widetilde{m}Q^{-1} LQ$}};
\draw [->,thin ] (Ltext) -- (L);

\node (U) at (160:\R-0.6) {};
\coordinate (Utext) at (160:\R+2.5);
\node[yshift=4pt] at (Utext) {\tiny{$m_{1,\text{d}}\defeq\widetilde{m} (Q^{-1}U Q)^{-1}$}};
\draw [->,thin ] (Utext) -- (U);

\node (M) at (30:\R+0.2) {};
\coordinate (Mtext) at (30:\R+3);
\node[yshift=4pt] at (Mtext) {\tiny{$m_{1,\text{d}}\defeq\widetilde{m}Q^{-1} MQ$}};
\draw [->,thin ] (Mtext) -- (M);

\node (P) at (20:\R-0.6) {};
\coordinate (Ptext) at (20:\R+2.5);
\node[yshift=4pt] at (Ptext) {\tiny{$m_{1,\text{d}}\defeq\widetilde{m} (Q^{-1}P Q)^{-1}$}};
\draw [->,thin ] (Ptext) -- (P);

\node[below left] {$0$};
\node[above right] at (110:2) {\tiny{$m_{1,\text{d}} \defeq \widetilde{m}$}};
\node [right] at(200:\R) {\tiny{$Q^{-1} J Q$}};
\node [left,xshift=1pt] at(-20:\R) {\tiny{$Q^{-1} J Q$}};
\node[above,yshift=6pt,xshift=-8pt] at (zo) (z0) {$z_0$};
\node[below, yshift=-6pt,xshift=-8pt] at (zoi) {$z_0^{-1}$};
\node[above ] at (130:\R+1.5) {\tiny{$m_{1,\text{d}} \defeq \widetilde{m}$}};
\node at (-25:\R+2) {\tiny{$m_{1,\text{d}} \defeq \widetilde{m}$}};
\node [left, gray] at (-\xL,\yL) {\tiny{$\Re \theta(z) =0$}};
\node [above left] at (180:\R) {$-1$};
\node [above right] at (0:\R) {$1$};
\end{tikzpicture}
}
\subfigure[]{
\begin{tikzpicture}
\def\R{4}
\def\s{0.3}
\def\yL{\R}
\def\xL{\R}
\def\olens{\R+0.8}
\def\ilens{\R-1}

\coordinate (zo) at (120:\R);
\coordinate (zoi) at (240:\R);
\coordinate (o) at (0,0);
\draw[help lines,<->] (-\R-3,0) -- (\R+3,0) coordinate (xaxis);

\begin{scope}[very thick,decoration={
  	 markings,
	 mark=at position 0.33 with {\arrow[line width =1.8pt]{>}},
	mark=at position 0.66 with {\arrow[line width =1.8pt]{>}}}
  ]
\path[draw,dashed,gray,line width =1.8,postaction=decorate]
(zoi) arc(-120:120:\R);
\end{scope}
\begin{scope}[very thick,decoration={
  	 markings,
	 mark=at position 0.1 with {\arrow[line width =1.8pt]{>}},
	mark=at position 0.9 with {\arrow[line width =1.8pt]{>}}}
  ]
\path[draw,cyan,line width =1.8, postaction=decorate]
(zo) arc(120:240:\R);
\end{scope}

\draw[dotted, gray,line width = 0.8] (zoi) to [out=45,in=270] (o) to [out=90, in=315] (zo);
\draw[dotted, gray,line width = 0.8] (zoi) to [out=225, in=10] (-\xL,-\yL);
\draw[dotted, gray,line width = 0.8] (zo) to [out=135, in=-10] (-\xL,\yL);
\begin{scope}[very thick,decoration={
  	 markings,
  	 mark=at position 0.05 with {\arrow[line width =1.8pt]{>}},
	mark=at position 0.5 with {\arrow[line width =1.8pt]{>}},
	mark=at position 0.95 with {\arrow[line width =1.8pt]{>}}}
  ]
\draw[cyan, line width =1.8, postaction=decorate] plot[smooth] coordinates {
   (zoi)
   ($(zoi) + (-75:1.13137)$)
   (-90:\olens)
   (-60:\olens)
   (-30:\olens)
   (0:\olens)
   (30:\olens)
   (60:\olens)
   (90:\olens)
   ($(zo)+(75:1.13137)$)
   (zo)
  };
  \draw[cyan, line width =1.8, postaction=decorate] plot[smooth] coordinates {
   (zoi)
   ($(zoi)+(15:1.41421)$)
   (-60:\ilens)
   (0:\ilens)
   (60:\ilens)
   ($(zo)+(-15:1.41421)$)
   (zo)
  };
  \end{scope}
\begin{scope}[very thick,decoration={
  	 markings,
  	 mark=at position 0.1 with {\arrow[line width =1.8pt]{>}},
	mark=at position 0.5 with {\arrow[line width =1.8pt]{>}},
	mark=at position 0.9 with {\arrow[line width =1.8pt]{>}}}
  ]
   \draw[cyan, line width =1.8, postaction=decorate] plot[smooth] coordinates {
    (zo)
    ($(zo) + (165:1.13137)$)
   (140:\olens)
   (180:\olens)
   (220:\olens)
   ($(zoi) + (-165:1.13137)$)
   (zoi)
   };

\draw[cyan, line width =1.8, postaction=decorate] plot[smooth] coordinates {
   	(zo)
   	($(zo)+(255:1.41421)$)
   	(180:\ilens)
	($(zoi)+(-255:1.41421)$)
	(zoi)
     };
\end{scope}

\foreach \Point in {(zo), (zoi), (o), (180:\R), (0:\R) }{
\node at \Point{$\times$};
}
\coordinate (zeta1) at (180:\R-2);
\coordinate (zeta2) at (180:\R-3);
\coordinate (zeta3) at (0:\R-2);
\coordinate (zeta4) at (0:\R-3);
\coordinate (zeta1i) at (180:\R+1.5);
\coordinate (zeta2i) at (180:\R+2.5);
\coordinate (zeta3i) at (0:\R+1.5);
\coordinate (zeta4i) at (0:\R+2.5);

\foreach \Point in {(zeta1), (zeta2), (zeta3), (zeta4), (zeta1i), (zeta2i), (zeta3i), (zeta4i)}{
\node at \Point{\textbullet};
}
\begin{scope}[very thick,decoration={
  	 markings,
	mark=at position 0.5 with {\arrow[line width =1.8pt]{>}}}
  ]
\draw[draw,cyan,line width =1.8, postaction=decorate]
(zeta1) + (0:\s) arc(0:360:\s);
\draw[draw,cyan,line width =1.8, postaction=decorate]
(zeta2) + (0:\s) arc(0:360:\s);
\draw[draw,cyan,line width =1.8, postaction=decorate]
(zeta3) + (0:\s) arc(0:360:\s);
\draw[draw,cyan,line width =1.8, postaction=decorate]
(zeta4) + (0:\s) arc(0:360:\s);
\draw[draw,cyan,line width =1.8, postaction=decorate]
(zeta1i) + (0:\s) arc(0:-360:\s);
\draw[draw,cyan,line width =1.8, postaction=decorate]
(zeta2i) + (0:\s) arc(0:-360:\s);
\draw[draw,cyan,line width =1.8, postaction=decorate]
(zeta3i) + (0:\s) arc(0:-360:\s);
\draw[draw,cyan,line width =1.8, postaction=decorate]
(zeta4i) + (0:\s) arc(0:-360:\s);
\end{scope}

\node[above,yshift=\s+7] at (zeta1) {\tiny{$Q^{-1}X_{j,+}Q$}};
\node[below,yshift=-\s-7] at (zeta2) {\tiny{$Q^{-1}X_{j,+}Q$}};
\node[above,yshift=\s+7] at (zeta1i) {\tiny{$Q^{-1}Y_{j,-}Q$}};
\node[below,yshift=-\s-7] at (zeta2i) {\tiny{$Q^{-1}Y_{j,-}Q$}};
\node[above,yshift=\s+7] at (zeta3) {\tiny{$Q^{-1}Y_{j,+}Q$}};
\node[below,yshift=-\s-7] at (zeta4) {\tiny{$Q^{-1}Y_{j,+}Q$}};
\node[above,yshift=\s+7] at (zeta3i) {\tiny{$Q^{-1}X_{j,-}Q$}};
\node[below,yshift=-\s-7] at (zeta4i) {\tiny{$Q^{-1}X_{j,-}Q$}};
\node [left] (Ltext) at (150:\olens) {\tiny{$Q^{-1} LQ$}};
\node [right, xshift=3pt] at (150:\olens) {\color{cyan}\tiny{$\Sigma_+$}};
\node[right,xshift=2pt] (Utext) at (150:\ilens) {\tiny{$Q^{-1}U Q$}};
\node[left] at (150:\ilens) {\color{cyan}\tiny{$\Sigma_-$}};
\node[right] (Mtext) at (30:\olens) {\tiny{$Q^{-1} MQ$}};
\node[left] at (30:\olens) {\color{cyan}\tiny{$\Gamma_-$}};
\node[left,xshift=-6pt] (Ptext) at (30:\ilens) {\tiny{$Q^{-1}P Q$}};
\node[right,xshift=-3pt] at (30:\ilens) {\color{cyan}\tiny{$\Gamma_+$}};

\node[below left] {$0$};
\node [right] at(200:\R) {\tiny{$Q^{-1} D Q$}};
\node[above,yshift=6pt,xshift=-8pt] at (zo) (z0) {$z_0$};
\node[below, yshift=-6pt,xshift=-8pt] at (zoi) {$z_0^{-1}$};
\node [left, gray] at (-\xL,\yL) {\tiny{$\Re \theta(z) =0$}};
\node [above left] at (180:\R) {$-1$};
\node [above right] at (0:\R) {$1$};
\end{tikzpicture}
}
\caption{(a) Jump contours (blue) and matrices for \rhref{rhp:disp} with `ghost' contours (dashed black), (b) Jump contours and matrices for $m_{1,\text{d}}(z)$.  This figure contains the definitions of $\Gamma_\pm$ and $\Sigma_\pm$.}
\label{F:d_2}
\end{figure}
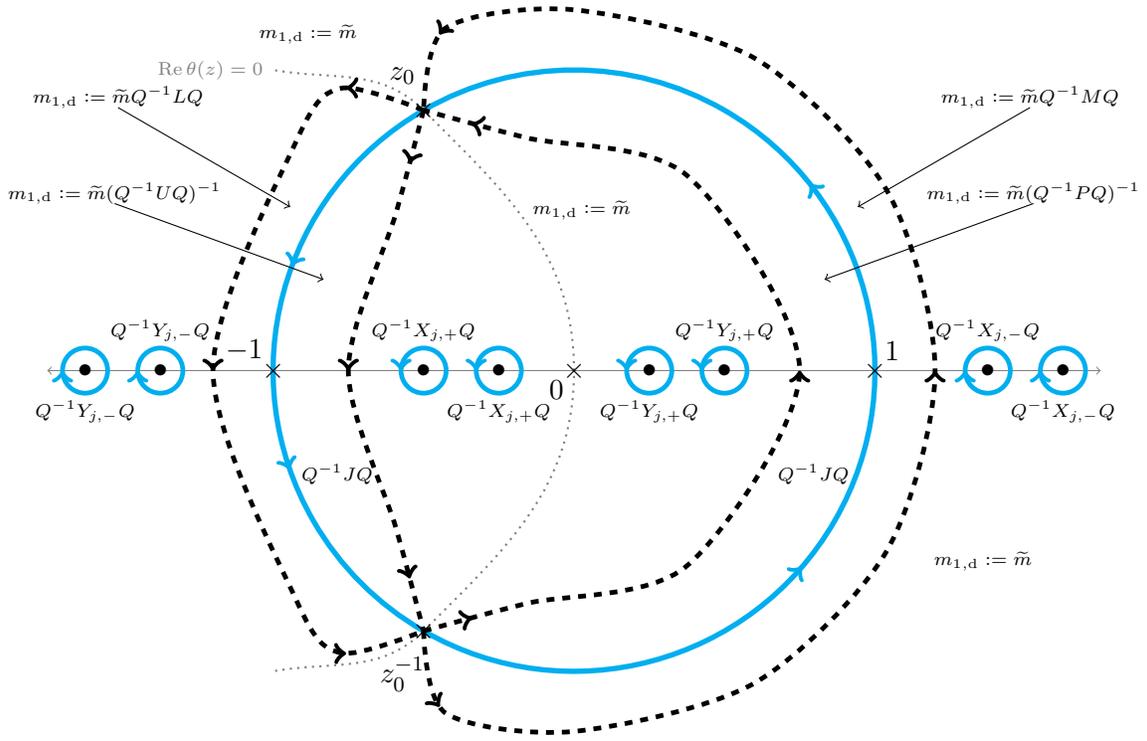
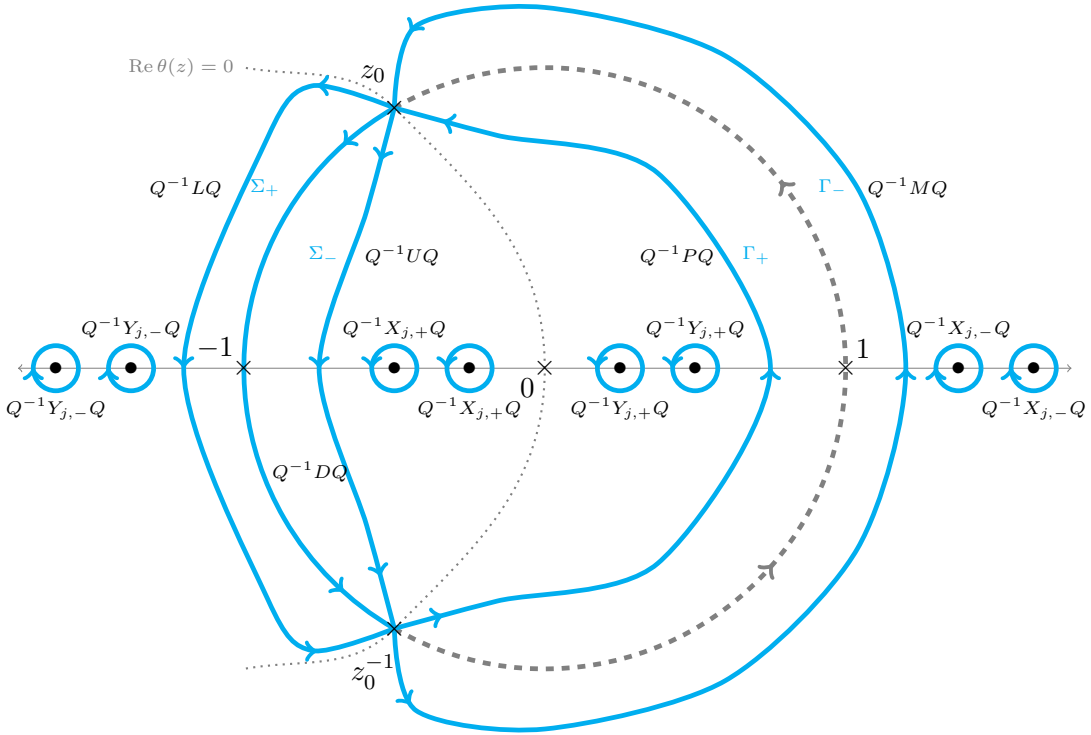

More precisely, $m_{1,\text{d}}(z)$ satisfies the following jump conditions:
\begin{equation*}
m^{+}_{1,\text{d}} (z;n,t)=
\begin{cases}
m^{-}_{1,\text{d}}(z;n,t)Q^{-1}(z)L(z;n,t)Q(z),\quad & z\in \Sigma_{-},\\
m^{-}_{1,\text{d}}(z;n,t)Q^{-1}(z)D(z)Q(z),\quad & z\in \Sigma,\\
m^{-}_{1,\text{d}}(z;n,t)Q^{-1}(z)U(z;n,t)Q(z),\quad & z\in \Sigma_{+},\\
m^{-}_{1,\text{d}}(z;n,t)Q^{-1}(z)M(z;n,t)Q(z),\quad &z\in\Gamma_{-},\\
m^{-}_{1,\text{d}}(z;n,t)Q^{-1}(z)P(z;n,t)Q(z),\quad &z\in\Gamma_{+},\\
m^{-}_{1,\text{d}}(z;n,t)Q^{-1}(z)X_{j,\pm}(z;n,t)Q(z),\quad &z\in D^{\pm}_j,\, j\in K_{n,t},\\
m^{-}_{1,\text{d}}(z;n,t)Q^{-1}(z)Y_{j,\pm}(z;n,t)Q(z),\quad &z\in D^{\pm}_j,\, j\notin K_{n,t},\\
\end{cases}
\end{equation*}
as seen in Figure~{\ref{F:d_2}}(b). Note that the definitions of $\Gamma_\pm$ and $\Sigma_\pm$ are given in the figure.

\begin{remark}\label{r:lens}
The procedure by which the analyticity of an algebraic factorization of the jump matrix is exploited to modify the contours of an {\RHP} (cf.~the transformation from $\widetilde{m}(z;n,t)$ to $m_{1,\text{d}}(z;n,t)$ shown in Figure~\ref{F:d_2}) is referred to here as \emph{lensing}. A full discussion of this can be found in \cite[p.~191]{Deift} although the term lensing does not appear there.
\end{remark}

\begin{remark}\label{r:deformations}
{Away from the points $z_0^{\pm 1}$ where the main contribution of the jump matrix is supported, the deformed contours $\Sigma_{\pm}$ and $\Gamma_{\pm}$ are deformed to stay as far away as possible from $\mathbb{T}$ where $J(z;n,t)$ is oscillatory for large $t$. More precisely, $\Sigma_{\pm}$ and $\Gamma_{\pm}$ pass from $z_0^{\pm 1}$ locally in the directions of steepest descent of $e^{\pm\theta(z;n,t)}$; and away from $z_0^{\pm 1}$, $\Sigma_{\pm}$ and $\Gamma_{\pm}$ are chosen to be close to the inner and outer boundaries of the strip of analyticity of the reflection coefficient, while avoiding any intersection with the finitely many circles $D^{\pm}_{j}$, $1\leq j \leq N$, or with the curves where $\Re \theta(z;n,t)$ changes sign. This arrangement helps us gain sufficient exponential decay (to the identity matrix) in the jump matrices that are defined on the deformed contours. The analogous deformations in the Painlev\'e, transition, collisionless shock, and soliton regions obey this principle. In the collisionless shock and transition regions, the points $\alpha^{\pm 1}$ and $\beta^{\pm 1}$ play the role of $z_0^{\pm 1}$ for the arrangement of the deformed contours.}
\end{remark}

Since generically $R(\pm 1) = -1$ (see Appendix~\ref{A:genericity}), the matrix $D(z)$ has a singularity at $z=-1$ (at $z=1$ in the case $n<0$) and we need to remove this singularity. We also need the jump matrix to approach the identity to achieve accuracy for large values of the parameters. For these reasons we must remove the jump on the contour $\Sigma $ (see Figure~\ref{F:disp}). This is achieved by solving a diagonal matrix {\RHP} with the jump contour $\Sigma$. We introduce the unique $2 \times 2$ matrix-valued function $\Delta(z) = \Delta(z;n,t)$ that solves the diagonal {\RHP}:
\begin{rhp}\label{rhp:delta}
\begin{align}\label{E:Delta}\begin{split}
  \Delta^{+}(z;n,t) = \Delta^{-}(z;n,t)D(z), \quad z&\in\Sigma, \quad \Delta(\infty;n,t) = I,\\
  \Delta(z;n,t)\diag\left(|z+1|^{-1},|z+1|\right) &= \mathcal O(1),~~ z \to -1, ~~ |z| < 1,\\
  \Delta(z;n,t)\diag\left(|z+1|,|z+1|^{-1}\right) &= \mathcal O(1),~~ z \to -1, ~~ |z| > 1,\end{split}
\end{align}
such that $\Delta(z)$ is bounded for $z$ in a neighborhood of $z_0$, $z^{-1}_0$ {and the boundary values $\Delta^\pm(z)$, $z \in \Sigma$, are not continuous only at $z_0,z_0^{-1}$ and $-1$}.
\end{rhp}
It follows from classical theory that $\Delta(z)$ is a diagonal matrix with the property $\Delta_{11}(z)=1/\Delta_{22}(z)$. The exact form of $\Delta(z)$, its properties, and a proof of the fact that it is unique can be found in the Appendix~\ref{A:sing}. Note that in general $\Delta(z)$ has singularities at the end points $z=z_0^{\pm 1}$ of the jump contour. To combat this issue we introduce circles around both $z_0^{\pm 1}$, see Figure~\ref{F:d_3}. {We omit the conjugations by $Q$ in Figure~\ref{F:d_3}(b) and Figure~\ref{F:d_3}(c) for the diagonal jump matrix $D(z)$ since $Q^{-1}(z)$, $D(z)$, and $Q(z)$ commute.}

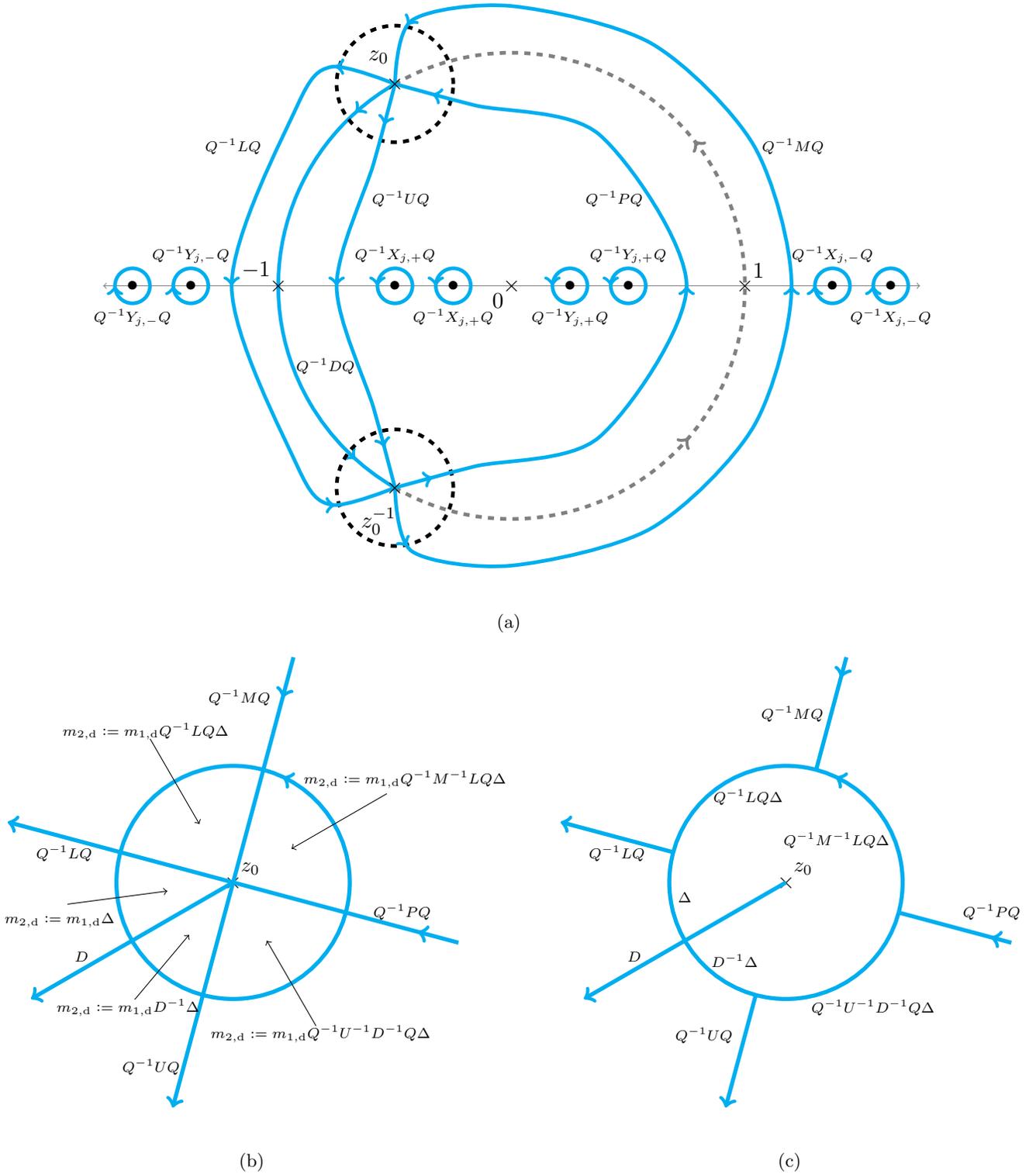
\begin{figure}[htp]
\subfigure[]{
\begin{tikzpicture}
\def\R{4}
\def\s{0.3}
\def\yL{\R}
\def\xL{\R}
\def\olens{\R+0.8}
\def\ilens{\R-1}

\coordinate (zo) at (120:\R);
\coordinate (zoi) at (240:\R);
\coordinate (o) at (0,0);

\draw[help lines,<->] (-\R-3,0) -- (\R+3,0) coordinate (xaxis);

\begin{scope}[very thick,decoration={
  	 markings,
	 mark=at position 0.33 with {\arrow[line width =1.8pt]{>}},
	mark=at position 0.66 with {\arrow[line width =1.8pt]{>}}}
  ]
\path[draw,dashed,gray,line width =1.8,postaction=decorate]
(zoi) arc(-120:120:\R);
\end{scope}
\begin{scope}[very thick,decoration={
  	 markings,
	 mark=at position 0.1 with {\arrow[line width =1.8pt]{>}},
	mark=at position 0.9 with {\arrow[line width =1.8pt]{>}}}
  ]
\path[draw,cyan,line width =1.8, postaction=decorate]
(zo) arc(120:240:\R);
\end{scope}

\path[draw,black,dashed,line width =1.8]
(zo) circle (1);
\path[draw,black,dashed,line width =1.8]
(zoi) circle (1);

\begin{scope}[very thick,decoration={
  	 markings,
  	 mark=at position 0.05 with {\arrow[line width =1.8pt]{>}},
	mark=at position 0.5 with {\arrow[line width =1.8pt]{>}},
	mark=at position 0.95 with {\arrow[line width =1.8pt]{>}}}
  ]
\draw[cyan, line width =1.8, postaction=decorate] plot[smooth] coordinates {
   (zoi)
   ($(zoi) + (-75:1.13137)$)
   (-90:\olens)
   (-60:\olens)
   (-30:\olens)
   (0:\olens)
   (30:\olens)
   (60:\olens)
   (90:\olens)
   ($(zo)+(75:1.13137)$)
   (zo)
  };
  \draw[cyan, line width =1.8, postaction=decorate] plot[smooth] coordinates {
   (zoi)
   ($(zoi)+(15:1.41421)$)
   (-60:\ilens)
   (0:\ilens)
   (60:\ilens)
   ($(zo)+(-15:1.41421)$)
   (zo)
  };
  \end{scope}
\begin{scope}[very thick,decoration={
  	 markings,
  	 mark=at position 0.1 with {\arrow[line width =1.8pt]{>}},
	mark=at position 0.5 with {\arrow[line width =1.8pt]{>}},
	mark=at position 0.9 with {\arrow[line width =1.8pt]{>}}}
  ]
   \draw[cyan, line width =1.8, postaction=decorate] plot[smooth] coordinates {
    (zo)
    ($(zo) + (165:1.13137)$)
   (140:\olens)
   (180:\olens)
   (220:\olens)
   ($(zoi) + (-165:1.13137)$)
   (zoi)
   };

\draw[cyan, line width =1.8, postaction=decorate] plot[smooth] coordinates {
   	(zo)
   	($(zo)+(255:1.41421)$)
   	(180:\ilens)
	($(zoi)+(-255:1.41421)$)
	(zoi)
     };
\end{scope}
\foreach \Point in {(zo), (zoi), (o), (180:\R), (0:\R) }{
\node at \Point{$\times$};
}
\coordinate (zeta1) at (180:\R-2);
\coordinate (zeta2) at (180:\R-3);
\coordinate (zeta3) at (0:\R-2);
\coordinate (zeta4) at (0:\R-3);
\coordinate (zeta1i) at (180:\R+1.5);
\coordinate (zeta2i) at (180:\R+2.5);
\coordinate (zeta3i) at (0:\R+1.5);
\coordinate (zeta4i) at (0:\R+2.5);

\foreach \Point in {(zeta1), (zeta2), (zeta3), (zeta4), (zeta1i), (zeta2i), (zeta3i), (zeta4i)}{
\node at \Point{\textbullet};
}
\begin{scope}[very thick,decoration={
  	 markings,
	mark=at position 0.5 with {\arrow[line width =1.8pt]{>}}}
  ]
\draw[draw,cyan,line width =1.8, postaction=decorate]
(zeta1) + (0:\s) arc(0:360:\s);
\draw[draw,cyan,line width =1.8, postaction=decorate]
(zeta2) + (0:\s) arc(0:360:\s);
\draw[draw,cyan,line width =1.8, postaction=decorate]
(zeta3) + (0:\s) arc(0:360:\s);
\draw[draw,cyan,line width =1.8, postaction=decorate]
(zeta4) + (0:\s) arc(0:360:\s);
\draw[draw,cyan,line width =1.8, postaction=decorate]
(zeta1i) + (0:\s) arc(0:-360:\s);
\draw[draw,cyan,line width =1.8, postaction=decorate]
(zeta2i) + (0:\s) arc(0:-360:\s);
\draw[draw,cyan,line width =1.8, postaction=decorate]
(zeta3i) + (0:\s) arc(0:-360:\s);
\draw[draw,cyan,line width =1.8, postaction=decorate]
(zeta4i) + (0:\s) arc(0:-360:\s);
\end{scope}

\node[above,yshift=\s+7] at (zeta1) {\tiny{$Q^{-1}X_{j,+}Q$}};
\node[below,yshift=-\s-7] at (zeta2) {\tiny{$Q^{-1}X_{j,+}Q$}};
\node[above,yshift=\s+7] at (zeta1i) {\tiny{$Q^{-1}Y_{j,-}Q$}};
\node[below,yshift=-\s-7] at (zeta2i) {\tiny{$Q^{-1}Y_{j,-}Q$}};
\node[above,yshift=\s+7] at (zeta3) {\tiny{$Q^{-1}Y_{j,+}Q$}};
\node[below,yshift=-\s-7] at (zeta4) {\tiny{$Q^{-1}Y_{j,+}Q$}};
\node[above,yshift=\s+7] at (zeta3i) {\tiny{$Q^{-1}X_{j,-}Q$}};
\node[below,yshift=-\s-7] at (zeta4i) {\tiny{$Q^{-1}X_{j,-}Q$}};
\node [left] (Ltext) at (150:\olens) {\tiny{$Q^{-1} LQ$}};
\node[right,xshift=2pt] (Utext) at (150:\ilens) {\tiny{$Q^{-1}U Q$}};
\node[right] (Mtext) at (30:\olens) {\tiny{$Q^{-1} MQ$}};
\node[left,xshift=-6pt] (Ptext) at (30:\ilens) {\tiny{$Q^{-1}P Q$}};
\node[below left] {$0$};
\node [right,xshift=-2pt] at(200:\R) {\tiny{$Q^{-1} D Q$}};
\node[above,yshift=6pt,xshift=-8pt] at (zo) (z0) {$z_0$};
\node[below, yshift=-6pt,xshift=-8pt] at (zoi) {$z_0^{-1}$};
\node [above left] at (180:\R) {$-1$};
\node [above right] at (0:\R) {$1$};
\end{tikzpicture}
}
\subfigure[]{
\begin{tikzpicture}
\def\R{4};
\node at (0,0) {$\times$};
\node[above right] at (0,0) {\small{$z_0$}};
\draw [cyan, line width=2,->] (0:0) -- (165:4);
\draw [cyan, line width=2,->] (0:0) -- (210:4);
\draw [cyan, line width=2,->] (0:0) -- (255:4);
\begin{scope}[very thick,decoration={
  markings,
  mark=at position 0.18 with {\arrow{>}}}
  ]
  \draw [cyan, line width=2,postaction={decorate}] (75:4) -- (0:0);
  \draw [cyan, line width=2,postaction={decorate}] (345:4) -- (0:0);
  \draw [cyan, line width = 2, postaction={decorate}] (0,0) circle (2);
\end{scope}
\node[below] at (165:3) {\tiny{$Q^{-1}LQ$}};
\node[above] at (210:3) {\tiny{$D$}};
\node[below left] at (255:3) {\tiny{$Q^{-1}UQ$}};
\node[above left] at (75:3) {\tiny{$Q^{-1}MQ$}};
\node[above] at (345:3) {\tiny{$Q^{-1}PQ$}};
\node at (120:3) {\tiny{$m_{2,\text{d}} \defeq m_{1,\text{d}}Q^{-1}LQ\Delta$}};
\node[yshift=5pt,xshift=5pt] at (30:3.2) {\tiny{$m_{2,\text{d}} \defeq m_{1,\text{d}}Q^{-1}M^{-1}LQ\Delta$}};
\node[below] at (187.5:3) {\tiny{$m_{2,\text{d}} \defeq m_{1,\text{d}}\Delta$}};
\node at (230:2.8) {\tiny{$m_{2,\text{d}} \defeq m_{1,\text{d}}D^{-1}\Delta$}};
\node at (300:3) {\tiny{$m_{2,\text{d}} \defeq m_{1,\text{d}}Q^{-1}U^{-1}D^{-1}Q\Delta$}};
\node (1) at (120:1){};
\node (2) at (30:1){};
\node (3) at (187.5:1){};
\node (4) at (230:1){};
\node (5) at (300:1){};
\node (1text) at (120:3){};
\node (2text) at (30:3.2){};
\node (3text) at (187.5:3){};
\node (4text) at (230:2.8){};
\node (5text) at (300:3){};

\draw [->,thin,anchor=south ] (1text) -- (1);
\draw [->,thin, anchor=south] (2text) -- (2);
\draw [->,thin ,anchor=east] (3text) -- (3);
\draw [->,thin,above ] (4text) -- (4);
\draw [->,thin,above ] (5text) -- (5);

\end{tikzpicture}
}
\quad
\subfigure[]{
\begin{tikzpicture}
\def\R{4};
\node at (0,0) {$\times$};
\node[above right] at (0,0) {\small{$z_0$}};
\draw [cyan, line width=2,->] (165:2) -- (165:4);
\draw [cyan, line width=2,->] (0:0) -- (210:4);
\draw [cyan, line width=2,->] (255:2) -- (255:4);
\begin{scope}[very thick,decoration={
  markings,
  mark=at position 0.18 with {\arrow{>}}}
  ]
  \draw [cyan, line width=2,postaction={decorate}] (75:4) -- (75:2);
  \draw [cyan, line width=2,postaction={decorate}] (345:4) -- (345:2);
  \draw [cyan, line width = 2, postaction={decorate}] (0,0) circle (2);
\end{scope}
\node[below] at (165:3) {\tiny{$Q^{-1}LQ$}};
\node[above] at (210:3) {\tiny{$D$}};
\node[above left] at (255:3) {\tiny{$Q^{-1}UQ$}};
\node[left] at (75:3) {\tiny{$Q^{-1}MQ$}};
\node[above right] at (345:3) {\tiny{$Q^{-1}PQ$}};

\node[below, xshift=10pt] at (120:2) {\tiny{$Q^{-1}LQ\Delta$}};
\node[below left, xshift =5pt] at (30:2) {\tiny{$Q^{-1}M^{-1}LQ\Delta$}};
\node[right] at (187.5:2) {\tiny{$\Delta$}};
\node[right,yshift=2pt] at (225:2) {\tiny{$D^{-1}\Delta$}};
\node[below,xshift=14pt,yshift=-4pt] at (300:2) {\tiny{$Q^{-1}U^{-1}D^{-1}Q\Delta$}};
\end{tikzpicture}
}
\caption{(a) `Ghost' circles in preparation for the singularities of $\Delta(z)$, (b) Definitions of $m_{2,\text{d}}(z)$ near $z_0$, (c) The jump contours and matrices for $m_{2,\text{d}}(z)$ near $z_0$.}
\label{F:d_3}
\end{figure}

We define $m_{2,\text{d}}(z)$ as shown in Figure~\ref{F:d_3}(b), where $m_{2,\text{d}}(z) \equiv m_{1,\text{d}}(z)$ when no definition is specified; and we see that $m_{2,\text{d}}(z)$ satisfies the {jump conditions that are} presented in Figure~\ref{F:d_3}(c). We apply the same procedure at $z_0^{-1}$. Along with the jump conditions in Figure~\ref{F:d_3}, $m_{2,\text{d}}(z)$ also satisfies an asymptotic symmetry condition
\begin{align}\label{E:m2d}
{m_{2,\text{d}}(0) = m_{2,\text{d}}(\infty) \begin{pmatrix} 0 & 1 \\ 1 & 0 \end{pmatrix} Q(0) \Delta^{-1}(0)}
\end{align}
and the quadratic normalization condition at infinity present in \rhref{rhp:disp}. Finally, we define $m_{\sharp,\text{d}}(z)$ by $m_{\sharp,\text{d}}(z) \equiv m_{2,\text{d}}(z) \Delta^{-1}(z)$ and see that the vector-valued function $m_{\sharp,\text{d}}(z)$ satisfies the jump conditions shown graphically in Figure~\ref{F:d_4}. This is the final deformation performed in the dispersive region.
\begin{figure}[htp]
\begin{tikzpicture}[decoration={markings,
mark=at position 0.35 with {\arrow[line width =1.8pt]{>}}
}
]
\def\R{4.5}
\def\s{0.3}
\def\m{1.2}
\def\yL{\R}
\def\xL{\R}
\def\olens{\R+0.8}
\def\ilens{\R-1}
\coordinate (zo) at (120:\R);
\coordinate (zoi) at (240:\R);
\coordinate (o) at (0,0);
\draw[help lines,<->] (-\R-3,0) -- (\R+3,0) coordinate (xaxis);

\draw[cyan, line width =1.8, postaction=decorate] (zo) circle (\m);
\draw[cyan, line width =1.8, postaction=decorate] (zoi) circle (\m);

\coordinate (zeta1) at (180:\R-1.5);
\coordinate (zeta2) at (180:\R-2.5);
\coordinate (zeta3) at (0:\R-1.5);
\coordinate (zeta4) at (0:\R-2.5);
\coordinate (zeta1i) at (180:\R+1.5);
\coordinate (zeta2i) at (180:\R+2.5);
\coordinate (zeta3i) at (0:\R+1.5);
\coordinate (zeta4i) at (0:\R+2.5);

\foreach \Point in {(zo), (zoi), (o), (180:\R), (0:\R) }{
\node at \Point{$\times$};
}
\foreach \Point in {(zeta1), (zeta2), (zeta3), (zeta4), (zeta1i), (zeta2i), (zeta3i), (zeta4i)}{
\node at \Point{\textbullet};
}
\coordinate (M) at ($(zo)+(75:\m)$);
\coordinate (Mout) at ($(M)+(75:\m)$);
\coordinate (P) at ($(zo)+(-15:\m)$);
\coordinate (Pout) at ($(P)+(-15:\m)$);
\coordinate (L) at ($(zo)+(165:\m)$);
\coordinate (Lout) at ($(L)+(165:\m)$);
\coordinate (U) at ($(zo)+(255:\m)$);
\coordinate (Uout) at ($(U)+(255:\m)$);
\coordinate (iM) at ($(zoi)+(-75:\m)$);
\coordinate (iMout) at ($(iM)+(-75:\m)$);
\coordinate (iP) at ($(zoi)+(15:\m)$);
\coordinate (iPout) at ($(iP)+(15:\m)$);
\coordinate (iL) at ($(zoi)+(-165:\m)$);
\coordinate (iLout) at ($(iL)+(-165:\m)$);
\coordinate (iU) at ($(zoi)+(-255:\m)$);
\coordinate (iUout) at ($(iU)+(-255:\m)$);

\coordinate (pt1) at (-3.19995, 3.16391);
\coordinate (pt2) at (-1.14005, 4.35319);
\coordinate (ipt1) at (-3.19995, -3.16391);
\coordinate (ipt2) at (-1.14005, -4.35319);
\tkzCircumCenter(pt1,zo,pt2)\tkzGetPoint{O};
\tkzDrawArc[gray,dashed,line width =1.8](O,pt2)(pt1);
\tkzCircumCenter(ipt1,zoi,ipt2)\tkzGetPoint{iO};
\tkzDrawArc[gray,dashed,line width =1.8](iO,ipt1)(ipt2);

\draw [cyan, line width  = 1.8, postaction = decorate] (Mout) -- (M);
\draw [cyan, line width  = 1.8, postaction = decorate] (Pout) -- (P);
\draw [cyan, line width  = 1.8, postaction = decorate] (U) -- (Uout);
\draw [cyan, line width  = 1.8, postaction = decorate] (L) -- (Lout);
\draw [cyan, line width  = 1.8, postaction = decorate] (iM) -- (iMout);
\draw [cyan, line width  = 1.8, postaction = decorate] (iP) -- (iPout);
\draw [cyan, line width  = 1.8, postaction = decorate] (iUout) -- (iU);
\draw [cyan, line width  = 1.8, postaction = decorate] (iLout) -- (iL);

\draw[draw,cyan,line width =1.8, postaction=decorate]
(zeta1) + (0:\s) arc(0:360:\s);
\draw[draw,cyan,line width =1.8, postaction=decorate]
(zeta2) + (0:\s) arc(0:360:\s);
\draw[draw,cyan,line width =1.8, postaction=decorate]
(zeta3) + (0:\s) arc(0:360:\s);
\draw[draw,cyan,line width =1.8, postaction=decorate]
(zeta4) + (0:\s) arc(0:360:\s);
\draw[draw,cyan,line width =1.8, postaction=decorate]
(zeta1i) + (0:\s) arc(0:-360:\s);
\draw[draw,cyan,line width =1.8, postaction=decorate]
(zeta2i) + (0:\s) arc(0:-360:\s);
\draw[draw,cyan,line width =1.8, postaction=decorate]
(zeta3i) + (0:\s) arc(0:-360:\s);
\draw[draw,cyan,line width =1.8, postaction=decorate]
(zeta4i) + (0:\s) arc(0:-360:\s);

\node[above,yshift=2pt] at ($(zo)+(122:\m)$) {\tiny{$\Delta Q^{-1}L Q$}};
\node[right] at ($(zo)+(30:\m)$) {\tiny{$\Delta Q^{-1}M^{-1} L Q$}};
\node[anchor=west] at ($(zo)+(185:\m)$) {\tiny{$\Delta$}};
\node[anchor=west] at ($(zo)+(220:\m)$) {\tiny{$\Delta D^{-1}$}};
\node[right] at ($(zo)+(300:\m)$) {\tiny{$\Delta Q^{-1}U^{-1}D^{-1} Q$}};

\node[below] at ($(zoi)+(-122:\m)$) {\tiny{$\Delta Q^{-1}L Q$}};
\node[right] at ($(zoi)+(330:\m)$) {\tiny{$\Delta Q^{-1}M^{-1} L Q$}};
\node[anchor=west] at ($(zoi)+(175:\m)$) {\tiny{$\Delta$}};
\node[anchor=west] at ($(zoi)+(138:\m)$) {\tiny{$\Delta D^{-1}$}};
\node[right, xshift=2pt] at ($(zoi)+(60:\m)$) {\tiny{$\Delta Q^{-1}U^{-1}D^{-1} Q$}};

\node[below right] at (Mout) {\tiny{$\Delta Q^{-1}M Q \Delta^{-1}$}};
\node[above right] at (iMout) {\tiny{$\Delta Q^{-1}M Q \Delta^{-1}$}};
\node[above, xshift=12pt,yshift=2pt] at (Pout) {\tiny{$\Delta Q^{-1}P Q \Delta^{-1}$}};
\node[below, xshift=6pt] at (iPout) {\tiny{$\Delta Q^{-1}P Q \Delta^{-1}$}};
\node[above] at (Lout) {\tiny{$\Delta Q^{-1}L Q \Delta^{-1}$}};
\node[below] at (iLout) {\tiny{$\Delta Q^{-1}L Q \Delta^{-1}$}};
\node[above right] at (Uout) {\tiny{$\Delta Q^{-1}U Q \Delta^{-1}$}};
\node[below right,xshift=2pt] at (iUout) {\tiny{$\Delta Q^{-1}U Q \Delta^{-1}$}};

\node[above,yshift=\s+7] at (zeta1) {\tiny{$\Delta Q^{-1}X_{j,+}Q\Delta^{-1}$}};
\node[below,yshift=-\s-7] at (zeta2) {\tiny{$\Delta Q^{-1}X_{j,+}Q\Delta^{-1}$}};
\node[above,yshift=\s+7] at (zeta1i) {\tiny{$\Delta Q^{-1}Y_{j,-}Q\Delta^{-1}$}};
\node[below,yshift=-\s-7] at (zeta2i) {\tiny{$\Delta Q^{-1}Y_{j,-}Q\Delta^{-1}$}};
\node[above,yshift=\s+7] at (zeta3) {\tiny{$\Delta Q^{-1}Y_{j,+}Q\Delta^{-1}$}};
\node[below,yshift=-\s-7] at (zeta4) {\tiny{$\Delta Q^{-1}Y_{j,+}Q\Delta^{-1}$}};
\node[above,yshift=\s+7] at (zeta3i) {\tiny{$\Delta Q^{-1}X_{j,-}Q\Delta^{-1}$}};
\node[below,yshift=-\s-7] at (zeta4i) {\tiny{$\Delta Q^{-1}X_{j,-}Q\Delta^{-1}$}};

\node[below] at (0:\R){$1$};
\node[below] at (180:\R){$-1$};
\node[above] {$0$};
\node[above] at (zo) (z0) {\small{$z_0$}};
\node[below] at (zoi) {\small{$z_0^{-1}$}};
\end{tikzpicture}
\caption{A zoomed view of the jump contours and matrices for the final {\RHP} in the dispersive region.}
\label{F:d_4}
\end{figure}
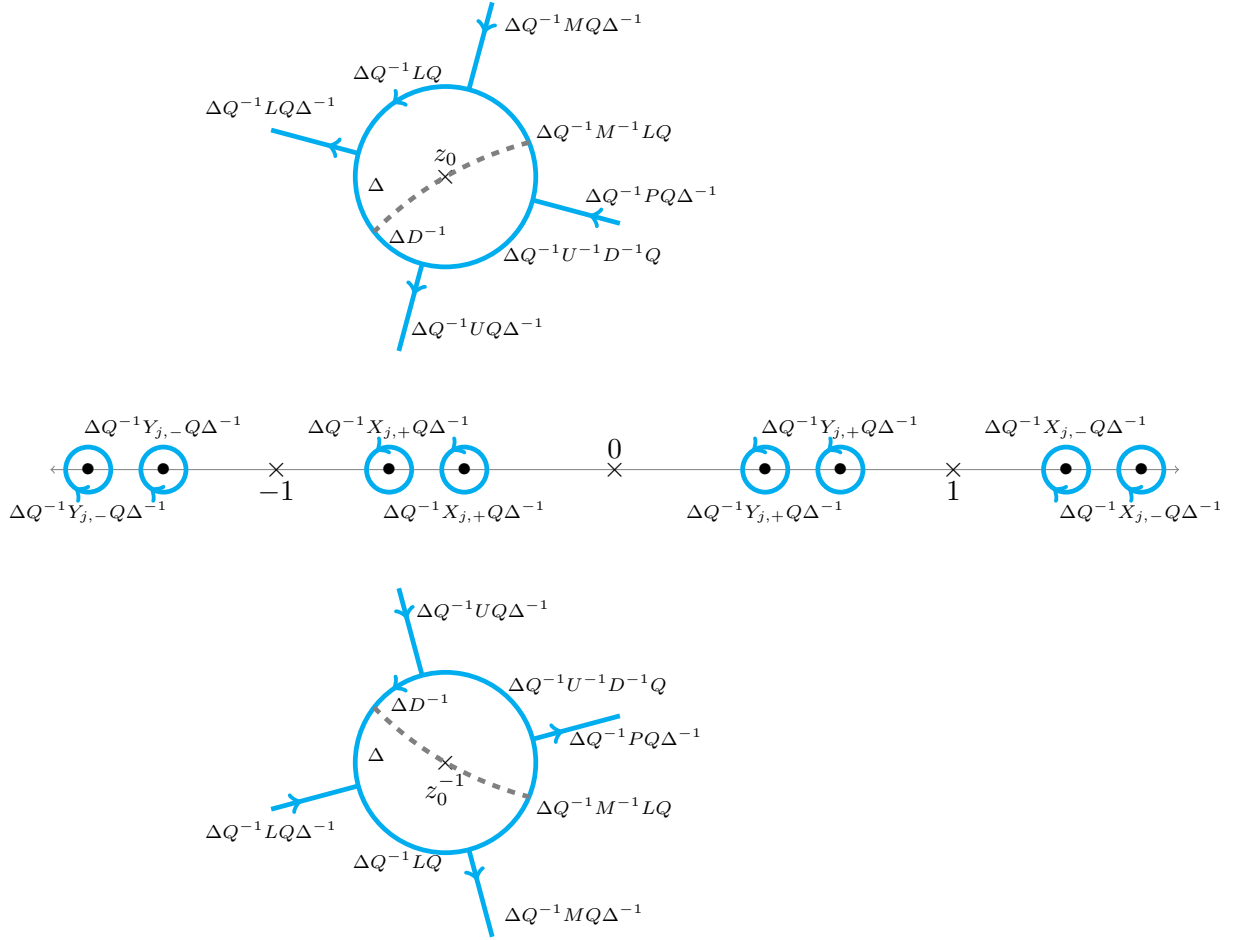
We have an important remark on boundary values of $m_{\sharp,\text{d}}(z)$ near $z=-1$.
\begin{remark}\label{r:sing}
  Due to the singularity in $L(z)$, $D(z)$, and $U(z)$ at $ z = -1$ it is not immediately clear in what sense $m_{2,\text{d}}(z)$ should satisfy the jump condition, or if a residue condition at $z = -1$ is needed for $m_{\sharp,\text{d}}(z)$.  Since we never solve for $m_{2,\text{d}}(z)$ we can ignore this issue if we understand what conditions we need on $m_{\sharp,\text{d}}(z)$.
{From Proposition~\ref{p:m-analytic} and the behavior of $\Delta_\pm(z)$ near $z = -1$} we conclude that $\widehat{m}(z) \Delta^{-1}(z)$ (and hence $\widetilde{m}(z) \Delta^{-1}(z)$) is continuous up to $\mathbb T$ with jump
\begin{align*}
	\widehat{m}^{+}(z) \Delta^{-1}_+(z) = \widehat{m}^-(z) \Delta^{-1}_-(z) \left[\Delta_{-}(z) L(z) \Delta_-^{-1}(z) \Delta_{+}(z) U(z) \Delta_+^{-1}(z) \right].
\end{align*}
It then follows that both $\Delta_{-}(z) L(z) \Delta^{-1}_-(z)$ and $\Delta_{+}(z) U(z) \Delta_+^{-1}(z)$ have analytic extensions to a strip that lies outside and inside the unit circle, respectively, and extend continuously up to the unit circle. Therefore, the jump contours and matrices of the vector problem (\rhref{rhp:disp}) can be deformed to those of $m_{\sharp,\text{d}}(z)$ leaving no singularity at $z = -1$.
\end{remark}

One final detail to be covered is the radius of the circles we have placed near $z_0$ and $z_0^{-1}$. We follow the methodology put forth in \cite{TOD_KdV} to determine this radius. Near $z_0$ we have
\begin{align*}
  \theta(z;n,t) = \theta (z_0;n,t) - 2 \frac{t+n z_0}{z_0^3} (z-z_0)^2 + \bigo (z-z_0)^3.
\end{align*}
We choose the radius of the circles to be proportional to $r(n,t) \defeq \sqrt{\frac{|z_0|^3}{|t+n z_0|}}$ so that for $s \defeq (z-z_0)/(c r(n,t))$, $|z-z_0| = |c r(n,t)|$
\begin{align}\label{E:scale}
  \tilde \theta(s;n,t) \defeq \theta\left( z_0 + s c r(n,t) ;n,t \right)) = \theta (z_0;n,t) - \tilde c s^2( 1 + o(1)),
\end{align}
where $\tilde c$ is proportional to $c^2$ and accounts for the phase.  Because $\theta(z_0;n,t)$ is purely imaginary, $e^{\tilde \theta(s;n,t)}$ is bounded if $s$ is bounded.  It also follows that for $|n|/t \leq c_1$, in the dispersive region, we have $r(n,t) = \mathcal O(t^{-1/2})$ so one may use $r(n,t)= c t^{-1/2}$, in practice.

{The function} $m_{\sharp,\text{d}}(z)$ satisfies a sectionally analytic {\RHP} with
\begin{itemize}
\item the jump conditions described in Figure~\ref{F:d_4},
\item the asymptotic symmetry condition given in \eqref{E:m2d}, and
\item the quadratic normalization condition present in \rhref{rhp:disp}.
\end{itemize}

\subsection{The Painlev\'e Region}\label{S:Painleve}

For $\frac{n}{t}>1$ this region intersects with the soliton region defined below, and we use that deformation (see Section~\ref{S:soliton}). For $\frac{n}{t}<1$, the saddle points are coalescing at $z=-1$ and this allows for a new deformation. Consider the arc $\Sigma$ that passes from $z=-1$ and the two stationary phase points $z^{\pm 1}_0$ as shown in Figure~\ref{F:disp}. Set $z = e^{i(\pi-\omega)}$ and $z_{0} = e^{i (\pi-\omega_0)}$ with $\cos (\pi-\omega_0 )= -\tfrac{n}{t}$. Thus $z\in\Sigma$ if and only if $|\omega| < |\omega_0|$, and $\omega_0 \to 0$ as $t\to \infty$. Choose the branch cut $(0, \infty)$ for the logarithm. Then $\theta(z;n,t)$ can be expressed in terms of $\omega$ as:
\begin{equation*}
\theta\big(e^{i(\pi - \omega)};n,t\big) = 2i\left( t\sin (\pi -\omega) + n(\pi- \omega)\right)\,.
\end{equation*}
Note that $\frac{|n|}{t} \geq 1 - ct^{-2/3}$ implies
\begin{equation*}
|\sin (\pi - \omega)| \leq \sin\omega_0 \leq \sqrt{2c}t^{-1/3}\sqrt{1 - \tfrac{c}{2} t^{-2/3}} = \sqrt{2c}t^{-1/3}\left( 1 - \frac{c}{4}t^{-2/3} + \frac{c}{16}t^{-4/3} + \mathcal{O}\left(t^{-2}\right)\right).
\end{equation*}
for large values of $t$. This together with $t - |n| <ct^{1/3}$ yields
\begin{equation*}
\left|\theta\big(e^{i(\pi - \omega)};n,t\big) - 2 \pi i n\right| = 2| t \sin (\pi - \omega) - n \omega | \leq 2|\sin \omega_0|\big( t - |n| \big) \leq 2\sqrt{2}c^2 + \mathcal{O}(t^{-2/3}),
\end{equation*}
{for $|\omega| < \omega_0$, which implies that the oscillations are controlled between the two stationary points. Therefore, the $LDU$-factorization that was used in the dispersive region is not needed. {Note that in this case the parametrix $\Delta(z)$, which has unbounded behavior at $z=-1$ in the dispersive region, is not used.} We perform a single deformation $\widetilde{m}(z)\longmapsto m_{\sharp,\text{P}}(z)$. Definition of the vector-valued function $m_{\sharp,\text{P}}(z)$ is given in Figure~\ref{F:p_1}(a). The jump contours and the jump matrices satisfied by $m_{\sharp,\text{P}}(z)$ are described in Figure~\ref{F:p_1}.
\begin{figure}[htp]
\centering
\subfigure[]{
\begin{tikzpicture}
\def\R{4}
\def\s{0.3}
\def\yL{\R}
\def\xL{\R}
\def\olens{\R+0.8}
\def\ilens{\R-1}

\coordinate (zo) at (120:\R);
\coordinate (zoi) at (240:\R);
\coordinate (o) at (0,0);
\draw[help lines,<->] (-\R-3,0) -- (\R+3,0) coordinate (xaxis);

\begin{scope}[very thick,decoration={
     markings,
   mark=at position 0.33 with {\arrow[line width =1.8pt]{>}},
  mark=at position 0.66 with {\arrow[line width =1.8pt]{>}}}
  ]
\path[draw,cyan,line width=2, postaction=decorate]
(zoi) arc(-120:120:\R);
\path[draw,cyan,line width=2, postaction=decorate]
(zo) arc(120:240:\R);
\end{scope}

\draw[dotted, gray,line width = 0.8] (zoi) to [out=45,in=270] (o) to [out=90, in=315] (zo);
\draw[dotted, gray,line width = 0.8] (zoi) to [out=225, in=10] (-\xL,-\yL);
\draw[dotted, gray,line width = 0.8] (zo) to [out=135, in=-10] (-\xL,\yL);
\begin{scope}[very thick,decoration={
     markings,
     mark=at position 0.05 with {\arrow[line width =1.8pt]{>}},
  mark=at position 0.5 with {\arrow[line width =1.8pt]{>}},
  mark=at position 0.95 with {\arrow[line width =1.8pt]{>}}}
  ]
\draw[black, line width =1.8, dashed, postaction=decorate] plot[smooth] coordinates {
   (zoi)
   ($(zoi) + (-75:1.13137)$)
   (-90:\olens)
   (-60:\olens)
   (-30:\olens)
   (0:\olens)
   (30:\olens)
   (60:\olens)
   (90:\olens)
   ($(zo)+(75:1.13137)$)
   (zo)
  };
  \draw[black, line width =1.8, dashed, postaction=decorate] plot[smooth] coordinates {
   (zoi)
   ($(zoi)+(15:1.41421)$)
   (-60:\ilens)
   (0:\ilens)
   (60:\ilens)
   ($(zo)+(-15:1.41421)$)
   (zo)
  };
\end{scope}
\begin{scope}[very thick,decoration={
     markings,
     mark=at position 0.1 with {\arrow[line width =1.8pt]{>}},
  mark=at position 0.5 with {\arrow[line width =1.8pt]{>}},
  mark=at position 0.9 with {\arrow[line width =1.8pt]{>}}}
  ]

\end{scope}

\foreach \Point in {(zo), (zoi), (o), (180:\R), (0:\R) }{
\node at \Point{$\times$};
}
\coordinate (zeta1) at (180:\R-2);
\coordinate (zeta2) at (180:\R-3);
\coordinate (zeta3) at (0:\R-2);
\coordinate (zeta4) at (0:\R-3);
\coordinate (zeta1i) at (180:\R+1.5);
\coordinate (zeta2i) at (180:\R+2.5);
\coordinate (zeta3i) at (0:\R+1.5);
\coordinate (zeta4i) at (0:\R+2.5);

\foreach \Point in {(zeta1), (zeta2), (zeta3), (zeta4), (zeta1i), (zeta2i), (zeta3i), (zeta4i)}{
\node at \Point{\textbullet};
}
\begin{scope}[very thick,decoration={
     markings,
  mark=at position 0.5 with {\arrow[line width =1.8pt]{>}}}
  ]

\draw[draw,cyan,line width =1.8, postaction=decorate]
(zeta1) + (0:\s) arc(0:360:\s);
\draw[draw,cyan,line width =1.8, postaction=decorate]
(zeta2) + (0:\s) arc(0:360:\s);
\draw[draw,cyan,line width =1.8, postaction=decorate]
(zeta3) + (0:\s) arc(0:360:\s);
\draw[draw,cyan,line width =1.8, postaction=decorate]
(zeta4) + (0:\s) arc(0:360:\s);
\draw[draw,cyan,line width =1.8, postaction=decorate]
(zeta1i) + (0:\s) arc(0:-360:\s);
\draw[draw,cyan,line width =1.8, postaction=decorate]
(zeta2i) + (0:\s) arc(0:-360:\s);
\draw[draw,cyan,line width =1.8, postaction=decorate]
(zeta3i) + (0:\s) arc(0:-360:\s);
\draw[draw,cyan,line width =1.8, postaction=decorate]
(zeta4i) + (0:\s) arc(0:-360:\s);
\end{scope}

\node[above,yshift=\s+7] at (zeta1) {\tiny{$Q^{-1}X_{j,+}Q$}};
\node[below,yshift=-\s-7] at (zeta2) {\tiny{$Q^{-1}X_{j,+}Q$}};
\node[above,yshift=\s+7] at (zeta1i) {\tiny{$Q^{-1}Y_{j,-}Q$}};
\node[below,yshift=-\s-7] at (zeta2i) {\tiny{$Q^{-1}Y_{j,-}Q$}};
\node[above,yshift=\s+7] at (zeta3) {\tiny{$Q^{-1}Y_{j,+}Q$}};
\node[below,yshift=-\s-7] at (zeta4) {\tiny{$Q^{-1}Y_{j,+}Q$}};
\node[above,yshift=\s+7] at (zeta3i) {\tiny{$Q^{-1}X_{j,-}Q$}};
\node[below,yshift=-\s-7] at (zeta4i) {\tiny{$Q^{-1}X_{j,-}Q$}};



\node (M) at (30:\R+0.2) {};
\coordinate (Mtext) at (30:\R+3);
\node[yshift=4pt] at (Mtext) {\tiny{$m_{\sharp,\text{P}}\defeq\widetilde{m}Q^{-1} MQ$}};
\draw [->,thin ] (Mtext) -- (M);

\node (P) at (20:\R-0.6) {};
\coordinate (Ptext) at (20:\R+2.5);
\node[yshift=4pt] at (Ptext) {\tiny{$m_{\sharp,\text{P}}\defeq\widetilde{m} (Q^{-1}P Q)^{-1}$}};
\draw [->,thin ] (Ptext) -- (P);

\node[below left] {$0$};
\node[above right] at (110:2) {\tiny{$m_{\sharp,\text{P}} \defeq \widetilde{m}$}};
\node [right] at(200:\R) {\tiny{$Q^{-1} J Q$}};
\node [left,xshift=1pt] at(-20:\R) {\tiny{$Q^{-1} J Q$}};
\node[above,yshift=6pt,xshift=-8pt] at (zo) (z0) {$z_0$};
\node[below, yshift=-6pt,xshift=-8pt] at (zoi) {$z_0^{-1}$};
\node[above ] at (130:\R+1.5) {\tiny{$m_{\sharp,\text{P}} \defeq \widetilde{m}$}};
\node at (-25:\R+2) {\tiny{$m_{\sharp,\text{P}} \defeq \widetilde{m}$}};
\node [left, gray] at (-\xL,\yL) {\tiny{$\Re \theta(z) =0$}};
\node [above left] at (180:\R) {$-1$};
\node [above right] at (0:\R) {$1$};
\end{tikzpicture}
}
\subfigure[]{
\begin{tikzpicture}
\def\R{4}
\def\s{0.3}
\def\yL{\R}
\def\xL{\R}
\def\olens{\R+0.8}
\def\ilens{\R-1}

\coordinate (zo) at (120:\R);
\coordinate (zoi) at (240:\R);
\coordinate (o) at (0,0);
\draw[help lines,<->] (-\R-3,0) -- (\R+3,0) coordinate (xaxis);

\begin{scope}[very thick,decoration={
     markings,
   mark=at position 0.33 with {\arrow[line width =1.8pt]{>}},
  mark=at position 0.66 with {\arrow[line width =1.8pt]{>}}}
  ]
\path[draw,dashed,gray,line width =1.8,postaction=decorate]
(zoi) arc(-120:120:\R);
\end{scope}
\begin{scope}[very thick,decoration={
     markings,
   mark=at position 0.1 with {\arrow[line width =1.8pt]{>}},
  mark=at position 0.9 with {\arrow[line width =1.8pt]{>}}}
  ]
\path[draw,cyan,line width =1.8, postaction=decorate]
(zo) arc(120:240:\R);
\end{scope}
\begin{scope}[very thick,decoration={
     markings,
     mark=at position 0.05 with {\arrow[line width =1.8pt]{>}},
  mark=at position 0.5 with {\arrow[line width =1.8pt]{>}},
  mark=at position 0.95 with {\arrow[line width =1.8pt]{>}}}
  ]
\draw[cyan, line width =1.8, postaction=decorate] plot[smooth] coordinates {
   (zoi)
   ($(zoi) + (-75:1.13137)$)
   (-90:\olens)
   (-60:\olens)
   (-30:\olens)
   (0:\olens)
   (30:\olens)
   (60:\olens)
   (90:\olens)
   ($(zo)+(75:1.13137)$)
   (zo)
  };
  \draw[cyan, line width =1.8, postaction=decorate] plot[smooth] coordinates {
   (zoi)
   ($(zoi)+(15:1.41421)$)
   (-60:\ilens)
   (0:\ilens)
   (60:\ilens)
   ($(zo)+(-15:1.41421)$)
   (zo)
  };
\end{scope}

\foreach \Point in {(zo), (zoi), (o), (180:\R), (0:\R) }{
\node at \Point{$\times$};
}
\coordinate (zeta1) at (180:\R-2);
\coordinate (zeta2) at (180:\R-3);
\coordinate (zeta3) at (0:\R-2);
\coordinate (zeta4) at (0:\R-3);
\coordinate (zeta1i) at (180:\R+1.5);
\coordinate (zeta2i) at (180:\R+2.5);
\coordinate (zeta3i) at (0:\R+1.5);
\coordinate (zeta4i) at (0:\R+2.5);

\foreach \Point in {(zeta1), (zeta2), (zeta3), (zeta4), (zeta1i), (zeta2i), (zeta3i), (zeta4i)}{
\node at \Point{\textbullet};
}
\begin{scope}[very thick,decoration={
     markings,
  mark=at position 0.5 with {\arrow[line width =1.8pt]{>}}}
  ]
\draw[draw,cyan,line width =1.8, postaction=decorate]
(zeta1) + (0:\s) arc(0:360:\s);
\draw[draw,cyan,line width =1.8, postaction=decorate]
(zeta2) + (0:\s) arc(0:360:\s);
\draw[draw,cyan,line width =1.8, postaction=decorate]
(zeta3) + (0:\s) arc(0:360:\s);
\draw[draw,cyan,line width =1.8, postaction=decorate]
(zeta4) + (0:\s) arc(0:360:\s);
\draw[draw,cyan,line width =1.8, postaction=decorate]
(zeta1i) + (0:\s) arc(0:-360:\s);
\draw[draw,cyan,line width =1.8, postaction=decorate]
(zeta2i) + (0:\s) arc(0:-360:\s);
\draw[draw,cyan,line width =1.8, postaction=decorate]
(zeta3i) + (0:\s) arc(0:-360:\s);
\draw[draw,cyan,line width =1.8, postaction=decorate]
(zeta4i) + (0:\s) arc(0:-360:\s);
\end{scope}

\node[above,yshift=\s+7] at (zeta1) {\tiny{$Q^{-1}X_{j,+}Q$}};
\node[below,yshift=-\s-7] at (zeta2) {\tiny{$Q^{-1}X_{j,+}Q$}};
\node[above,yshift=\s+7] at (zeta1i) {\tiny{$Q^{-1}Y_{j,-}Q$}};
\node[below,yshift=-\s-7] at (zeta2i) {\tiny{$Q^{-1}Y_{j,-}Q$}};
\node[above,yshift=\s+7] at (zeta3) {\tiny{$Q^{-1}Y_{j,+}Q$}};
\node[below,yshift=-\s-7] at (zeta4) {\tiny{$Q^{-1}Y_{j,+}Q$}};
\node[above,yshift=\s+7] at (zeta3i) {\tiny{$Q^{-1}X_{j,-}Q$}};
\node[below,yshift=-\s-7] at (zeta4i) {\tiny{$Q^{-1}X_{j,-}Q$}};
\node[right] (Mtext) at (30:\olens) {\tiny{$Q^{-1} MQ$}};
\node[left] (Mtext) at (30:\olens) {\color{cyan}\tiny{$\Gamma_-$}};
\node[left,xshift=-6pt] (Ptext) at (30:\ilens) {\tiny{$Q^{-1}P Q$}};
\node[right,xshift=-3pt] (Ptext) at (30:\ilens) {\color{cyan}\tiny{$\Gamma_+$}};

\node[below left] {$0$};
\node [right] at(200:\R) {\tiny{$Q^{-1} J Q$}};
\node[above,yshift=6pt,xshift=-8pt] at (zo) (z0) {$z_0$};
\node[below, yshift=-6pt,xshift=-8pt] at (zoi) {$z_0^{-1}$};
\node [above left] at (180:\R) {$-1$};
\node [above right] at (0:\R) {$1$};
\end{tikzpicture}
}
\caption{(a) Definition of $m_{\sharp,\text{P}}(z)$ and the `ghost' contours in preparation for the deformation, (b) The jump contours and matrices for the final {\RHP} in the Painlev\'{e} region with $0 <n < t$.}
\label{F:p_1}
\end{figure}
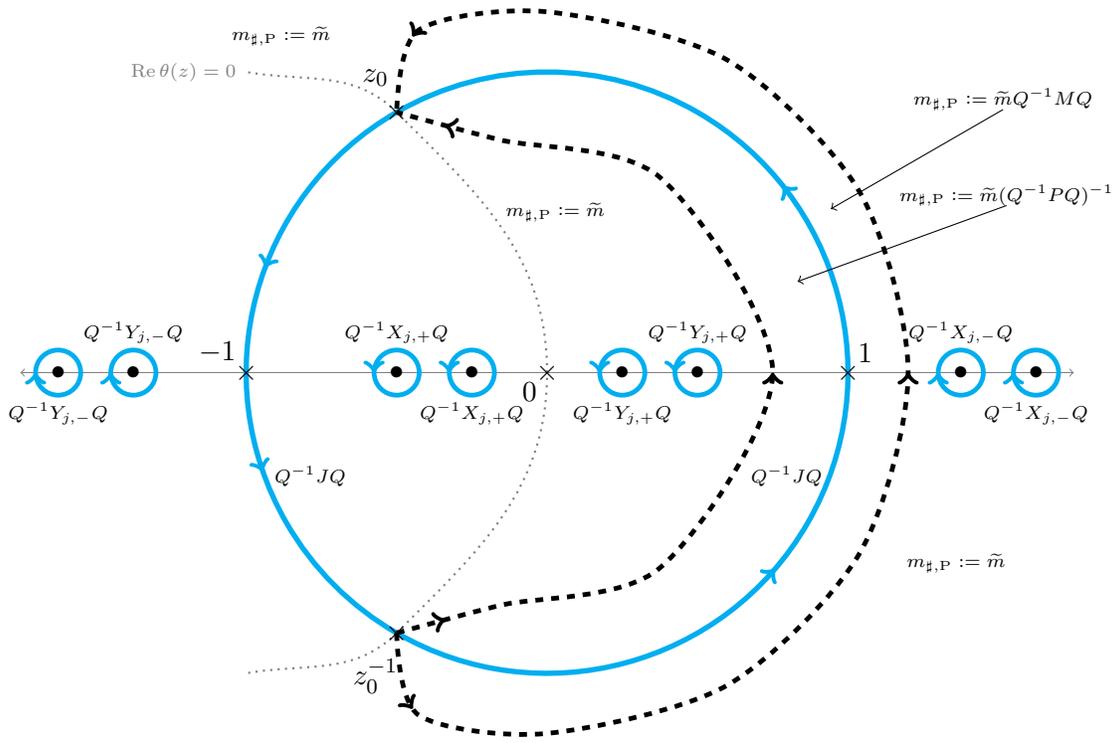
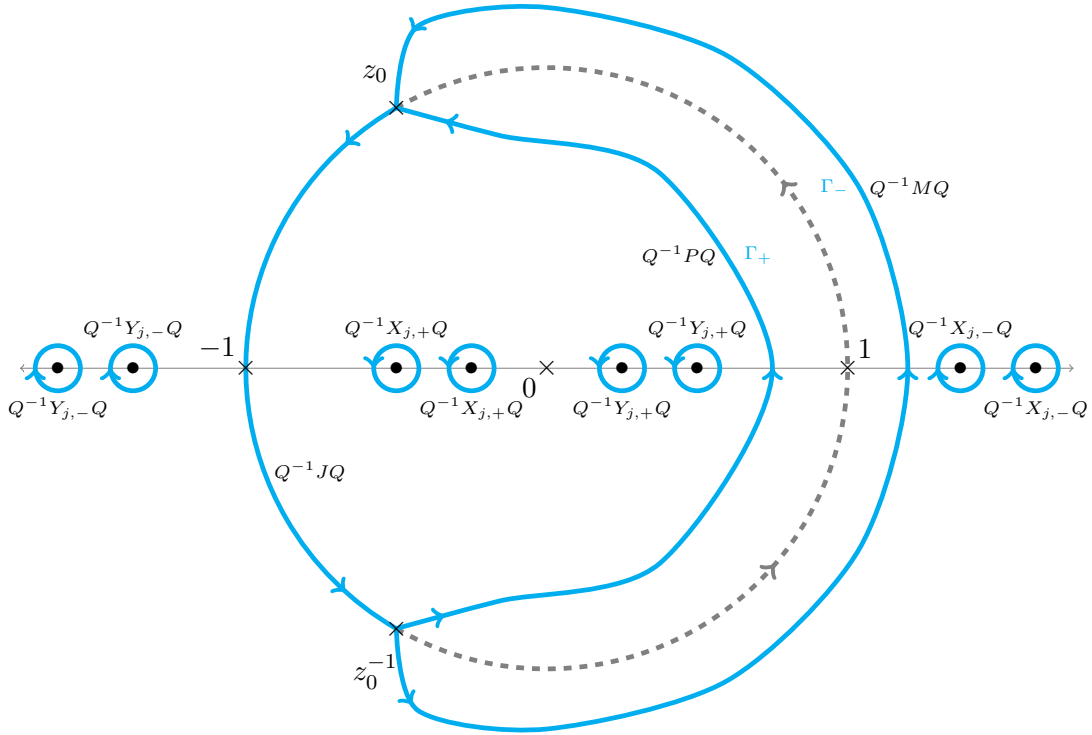
We note that $m_{\sharp,\text{P}}(z)$ takes continuous boundary values on its jump contour, and it is analytic on the arc of the jump contour where the jump matrix has been turned into the identity matrix. $m_{\sharp,\text{P}}(z)$ satisfies a sectionally analytic {\RHP} with
\begin{itemize}
\item the jump conditions described in Figure~\ref{F:p_1}(b),
\item the asymptotic symmetry condition present in \rhref{rhp:disp}, and
\item the quadratic normalization condition present in \rhref{rhp:disp}.
\end{itemize}

\subsection{The Collisionless Shock Region}\label{S:col_shock}
Recall that we use a deformation which involved $\Delta(z)$ in the dispersive region. The singularity at $z=-1$ in the matrix $D(z)$ destroys the boundedness of the parametrix\footnote{We use the term parametrix in a different way than is typical in the asymptotic analysis of {\RHP}s.  We use the term for any function that solves, or regularizes, any portion of the {\RHP}.} $\Delta(z;n,t)$. As $z \to -1$, the matrices $\Delta(z) Q^{-1}(z)M(z)Q(z)\Delta^{-1}(z)$ and $\Delta(z) Q^{-1}(z)P(z)Q(z)\Delta^{-1}(z)$ are unbounded and we cannot bridge the dispersive region and the Painlev\'{e} region. By adjusting the constants that determine the asymptotic regions we can make the dispersive and Painlev\'{e} regions overlap up to some finite $t$, but we wish to obtain a method which is stable for large values of $t$. To achieve this stability, we need to introduce additional deformations. The analogous region for the KdV equation has been introduced in \cite{SA} and the deformations were derived in \cite{DVZ}. The asymptotic analysis of the solutions, the scaling, and the needed deformations for the Toda lattice in this region, to the best of our knowledge, are not present in the literature.

As $n$ increases in the dispersive region, the stationary phase points of $e^{\theta(z)}$ approach the singularity ($z=-1$) of the parametrix $\Delta(z)$. To prevent this, we replace the exponent $\theta(z;n,t)$ by a so-called $g$-function as was done for KdV in \cite{DZV_EXT} (see also \cite{TOD_KdV}). In what follows, we define the $g$-function $g(z)$ as the solution of an {\RHP} with the properties that {mollify the unboundedness of $D(z)$.} Having done that, we introduce the needed deformations in this region. We leave the implementation details for solution of the {\RHP} given in \eqref{E:g_rhp} to Appendix~\ref{g-func-comp}. In Appendix~\ref{A:construct_g}, we explicitly construct the $g$-function, but it is more convenient to compute it numerically from the {\RHP} formulation given in \eqref{E:g_rhp}.

For $n/t \leq 1$, we let $\lambda_0 \in[-1,0]$ and $\rho_0 \in [0,1]$ denote the real and imaginary parts of the stationary phase point $z_0$, respectively. {Explicitly,}
\begin{equation}\label{E:rho}
\lambda_0 = -\frac{n}{t} ~\text{ and }~\rho_0 = \sqrt{1 - \left(\frac{n}{t}\right)^2}\,.
\end{equation}
For $z_1,\,z_2 \in \mathbb T$, $z_1 \neq z_2$, with $0\leq \arg z_j < 2\pi$, $j = 1,2$, define
\begin{align*}
  \oarc{z_1,z_2} = \left\lbrace e^{i \theta} : \min\{ \arg z_1,~ \arg z_2\} < \theta < \max\{ \arg z_1,~ \arg z_2\}\right\rbrace
\end{align*}
oriented from $z_1$ to $z_2$.  Then $\carc{z_1,z_2}$ is {defined to be} the closure of $\oarc{z_1,z_2}$. For $\alpha,\beta \in \mathbb{T}$ with $-1 \leq \Re{\alpha} \leq \lambda_0 \leq \Re{\beta} \leq 1$, we label $\Sigma_u=\carc{\beta,\alpha},~\Sigma_c=\carc{\alpha, \alpha^{-1}}$, and $\Sigma_l =\carc{\alpha^{-1}, \beta^{-1}}$, as shown in Figure~\ref{F:g}.
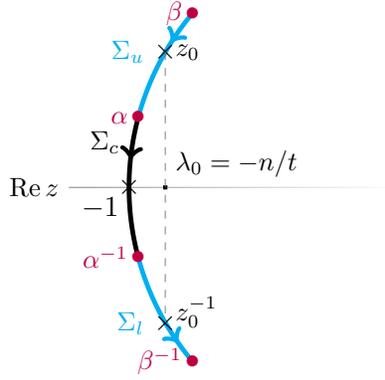
\begin{figure}[htp]
\centering
\begin{tikzpicture}[scale=0.4]
\def\L{9}
\coordinate (zo) at (150:\L);
\coordinate (zoi) at (210:\L);
\coordinate (o) at (0,0);
\coordinate (b) at (140:\L);
\coordinate (bi) at (220:\L);
\coordinate (a) at (165:\L);
\coordinate (ai) at (195:\L);
\coordinate (xaxisend) at (-\L-2,0);
\pgfmathsetmacro\lo{cos(150)*\L};

\draw[help lines, path fading=east] (0,0) -- (-\L-2,0) coordinate (xaxis) ++(0,1pt);

\draw [line width =1.8, cyan, domain=140:165, decoration={markings, mark=at position 0.3 with {\arrow[line width=2pt]{>}}},postaction=decorate] plot ({\L*cos(\x)}, {\L*sin(\x)});
\draw [line width =1.8, black, domain=165:195, decoration={markings,mark=at position 0.3 with {\arrow[line width=2pt]{>}}},postaction=decorate] plot ({\L*cos(\x)}, {\L*sin(\x)});
\draw [line width =1.8, cyan, domain=195:220, decoration={markings, mark=at position 0.8 with {\arrow[line width=2pt]{>}}},postaction=decorate] plot ({\L*cos(\x)}, {\L*sin(\x)});

\foreach \Point in {(zo), (zoi)}{
  \node at \Point {$\times$};
}
\foreach \Point in {(a), (ai),(b),(bi)}{
  \node at \Point {{\color{purple}\textbullet}};
}

\draw [ultra thick] (\lo,-2pt) -- (\lo,2pt);

\draw [very thin, dashed, gray] (zo) -- (zoi);

\node[left] at (xaxisend) {\small{$\Re z$}};
\node[right ] at (zo) (z0) {\small{$z_0 $}};
\node[right, yshift=4pt ] at (zoi) {\small{$z_0^{-1}$}};
\node [below left] at (180:\L) {$-1$};
\node[left] at (b) {\color{purple} \small{$\beta$}};
\node[left] at (a) {\color{purple} \small{$\alpha$}};

\node[left] at (ai) {\color{purple} \small{$\alpha^{-1}$}};

\node [above left] at (155:\L) {\color{black} \small{\color{cyan}$\Sigma_u$}};
\node [below left] at (205:\L) {\color{black} \small{\color{cyan}$\Sigma_l$}};
\node [above left] at (175:\L) {\color{black} \small{$\Sigma_c$}};
\node[left] at (bi) {\color{purple} \small{$\beta^{-1}$}};

\node at (-\L,0) {$\times$};
\node [above, anchor=south west] at (\lo,0) {\small{$\lambda_0=-n/t$}};

\end{tikzpicture}
\caption{The cuts $\Sigma_u$, $\Sigma_c$, and $\Sigma_l$ for $0<n<t$.}
\label{F:g}
\end{figure}

Before describing the sequence of deformations used in this region, we proceed with the {\RHP} that determines the $g$-function. Following the approach in \cite{TOD_KdV}, we determine $\alpha$ and $\beta$ on the unit circle so that there exists a function $g(z)$ that satisfies the following properties, for some complex constants $\delta_1$ and $\delta_2$
\begin{equation}\label{E:g_rhp}
\begin{aligned}
&g^{+}(z) + g^{-}(z) =
\begin{cases}
\begin{aligned}
{\delta_1/t}&\quad\text{if } z\in \Sigma_u, \\
{-\delta_1/t}&\quad\text{if } z\in \Sigma_l,
\end{aligned}
\end{cases}\\
&g^{+}(z) - g^{-}(z) = {\delta_2/t},\phantom{x} z\in \Sigma_c,\\
&g(z) - \frac{1}{2t}\theta(z) \text{ analytic in $z$ for } z \not\in \carc{\beta,\beta^{-1}}= \Sigma_u \cup \Sigma_c \cup \Sigma_l,\\
&g(z)\text{ is bounded at }z=\alpha^{\pm 1} \text{ and } z=\beta^{\pm 1},\\
&g(z) = \tfrac{1}{2}z - \lambda_0\log z + \mathcal{O}\left(z^{-1}\right)\text{ as } z\rightarrow\infty\,,
\end{aligned}
\end{equation}
The constants {$\delta_1$ and $\delta_2$} depend on $\alpha$ and $\beta$ and, as we will see, they have the desired properties to eliminate the singularities. We leave the details of our method to solve this {\RHP} to Appendix~\ref{g-func-comp}. Once $g(z)$ is obtained, define the scalar function
\begin{equation*}\mathfrak{g}(z) \defeq t\left(g(z) - \frac{1}{2t}\theta(z) \right),
\end{equation*}
and construct the matrix function
\begin{equation}\label{e:phi}
\phi(z) \defeq \begin{pmatrix}
e^{\mathfrak{g}(z)} & 0 \\ 0 & e^{-\mathfrak{g}(z)}
\end{pmatrix},
\end{equation}
which has the asymptotic behavior
\begin{equation*}
\phi(z)\to I \text{ as }z\to\infty.
\end{equation*}
Note that, for $z\in\mathbb{T}$ the jump condition satisfied by $\widetilde{m}(z)\phi(z)$ is $\phi_{-}^{-1}(z)Q^{-1}(z)J(z) Q(z)\phi_{+}(z)$. {In order to determine $\delta_1$ and $\delta_2$ we proceed as if $Q(z) = I$.  It will be clear that this is sufficient.} In this case, the jump condition (satisfied by the vector function $\widetilde{m}(z)\phi(z)$) on $\mathbb{T}$ is given explicitly by
\begin{equation*}
\phi_{-}^{-1}(z)J(z)\phi_{+}(z) = \begin{pmatrix}
\left[1 - R(z)R\left(z^{-1}\right) \right]e^{\mathfrak{g}^{+}(z) -\mathfrak{g}^{-}(z) } & - R\left(z^{-1}\right)e^{-\theta(z) - \mathfrak{g}^{+}(z) - \mathfrak{g}^{-}(z)}\\[2pt]
R(z)e^{\theta(z) + \mathfrak{g}^{+}(z) + \mathfrak{g}^{-}(z)} & e^{- \mathfrak{g}^{+}(z) + \mathfrak{g}^{-}(z)}
\end{pmatrix},
\end{equation*}
because $\mathfrak{g}(z)$ satisfies
\begin{equation*}
\begin{aligned}
\mathfrak{g}^{+}(z) - \mathfrak{g}^{-}(z) &= t\left(g^{+}(z) - g^{-}(z)\right) = 0,\text{ for } z\notin\carc{\beta,\beta^{-1}}\\
2\mathfrak{g}(z) & = 2 t g(z)  - \theta(z) \to 0 \text{ as } z\to \infty.
\end{aligned}
\end{equation*}
{We write}
\begin{equation}\label{E:cs-jump}
\phi_{-}^{-1}(z)J(z)\phi^{+}(z) =
\begin{cases}
\begin{pmatrix} 1 - R(z)R\left(z^{-1}\right) & - R\left(z^{-1}\right) e^{-2tg(z)} \\ R(z) e^{2 tg(z)} & 1 \end{pmatrix},\quad & z \in \oarc{1,\beta},\\[12pt]
\begin{pmatrix} \left[1 - R(z)R\left(z^{-1}\right)\right]e^{t\left(g^{+}(z) - g^{-}(z)\right)} & - R\left(z^{-1}\right) e^{-\delta_1} \\ R(z) e^{\delta_1} & e^{t\left(-g^{+}(z) + g^{-}(z)\right)} \end{pmatrix},\quad & z \in \oarc{\beta,\alpha},\\[12pt]
\begin{pmatrix} \left[1 - R(z)R\left(z^{-1}\right)\right]e^{\delta_2} & - R\left(z^{-1}\right) e^{-t \left(g^{+}(z) + g^{-}(z)\right)} \\ R(z) e^{t\left(g^{+}(z) + g^{-}(z)\right)} & e^{-\delta_2} \end{pmatrix},\quad & z \in \oarc{\alpha,\alpha^{-1}},\\[12pt]
\begin{pmatrix} \left[1 - R(z)R\left(z^{-1}\right)\right]e^{t\left(g^{+}(z) - g^{-}(z)\right)} & - R\left(z^{-1}\right) e^{\delta_1} \\ R(z) e^{-\delta_1} & e^{t\left(-g^{+}(z) + g^{-}(z)\right)} \end{pmatrix},\quad & z \in \oarc{\alpha^{-1},\beta^{-1}},\\[12pt]
\begin{pmatrix} 1 - R(z)R\left(z^{-1}\right) & - R\left(z^{-1}\right) e^{-2tg(z)} \\ R(z) e^{2t g(z)} & 1 \end{pmatrix},\quad & z \in \oarc{\beta^{-1}, 1}.
\end{cases}
\end{equation}
Here $\delta_1 /t = g^{+}(z) + g^{-}(z) $ for $z\in\Sigma_u$, and $\delta_2 / t = g^{+}(z) - g^{-}(z) $ for $z\in\Sigma_c$. As can be seen in \eqref{E:cs-jump}, this conjugation removes $\theta(z;n,t)$ from the problem.

We now present the initial deformation $\widetilde{m}(z) \longmapsto m_{1,\text{cs}}(z)$ in the collisionless shock region. As in the dispersive region, we use the $J(z)=L(z)D(z)U(z)$ factorization on $\Sigma_{c}$ and define $m_{1,\text{cs}}(z)$, see Figure~\ref{F:cs_1}(a). {Here we used the lensing process (see Remark~\ref{r:lens}) to deform the {\RHP}}. The jumps and contours near $\alpha^{-1}$ and $\beta^{-1}$ are given in Figure~\ref{F:cs_1}(b). What happens near $\alpha$ and $\beta$ is clear by symmetry.
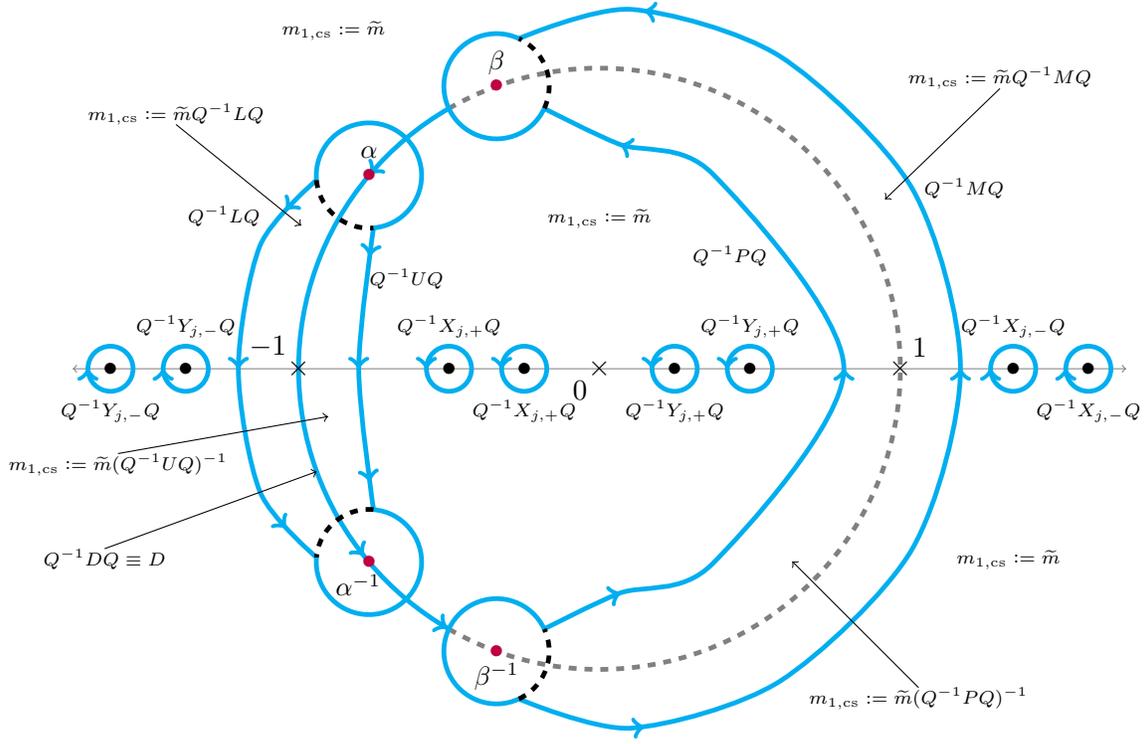
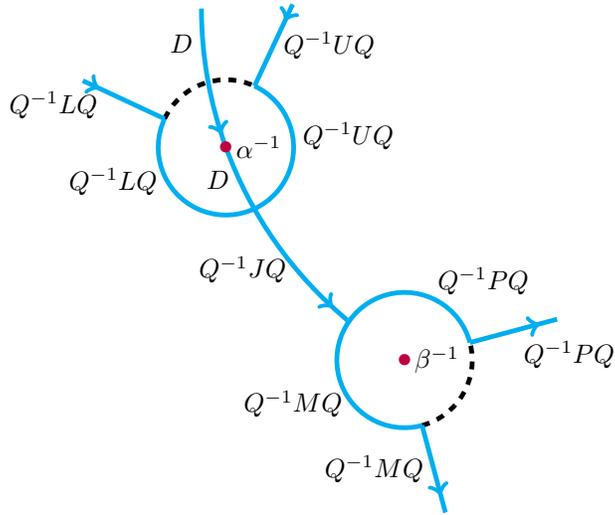
\begin{figure}[htp]
\centering
\subfigure[]{
\begin{tikzpicture}
\def\R{4}
\def\s{0.3}
\def\m{0.7}
\def\yL{\R}
\def\xL{\R}
\def\olens{\R+0.8}
\def\ilens{\R-1}

\coordinate (a) at (140:\R);
\coordinate (ai) at (-140:\R);
\coordinate (b) at (110:\R);
\coordinate (bi) at (-110:\R);
\coordinate (zo) at (120:\R);
\coordinate (zoi) at (240:\R);
\coordinate (o) at (0,0);

\draw[help lines,<->] (-\R-3,0) -- (\R+3,0) coordinate (xaxis);


\coordinate (pt1) at (-2.00239, 3.46272);
\coordinate (pt2) at (-2.00239, -3.46272);
\coordinate (pt3) at (-0.69187, 3.93971);
\coordinate (pt4) at (-0.69187, -3.93971);
\coordinate (end) at (180:\R);
\coordinate (begin) at (0:\R);

\begin{scope}[decoration={
  markings,
  mark=at position 0.98 with {\arrow[line width =1.8pt]{>}}}
  ]
  \tkzCircumCenter(pt1,end,pt2)\tkzGetPoint{O};
  \tkzDrawArc[color=cyan,line width =1.8,postaction=decorate](O,pt1)(a);
  \tkzDrawArc[color=cyan,line width =1.8,postaction=decorate](O,a)(ai);
  \tkzDrawArc[color=cyan,line width =1.8,postaction=decorate](O,ai)(pt2);
\end{scope}
\tkzCircumCenter(pt3,begin,pt4)\tkzGetPoint{O};
\tkzDrawArc[color=gray,dashed, line width =1.8](O,pt2)(pt1);

\begin{scope}[decoration={markings,
mark=at position 0.1 with {\arrow[line width =1.8pt]{>}},
mark=at position 0.5 with {\arrow[line width =1.8pt]{>}},
mark=at position 0.9 with {\arrow[line width =1.8pt]{>}}
}
]

\draw[cyan, line width =1.8, postaction=decorate] plot[smooth] coordinates {
   ($(bi) + (-65:\m)$)
   (-85:\olens)
   (-60:\olens)
   (-30:\olens)
   (0:\olens)
   (30:\olens)
   (60:\olens)
   (85:\olens)
   ($(b) +(65:\m)$)
  };

\draw[cyan, line width =1.8, postaction=decorate] plot[smooth] coordinates {
   ($(bi) + (25:\m)$)
   (-85:\ilens)
   (-60:\ilens)
   (0:\ilens+0.25)
   (60:\ilens)
   (85:\ilens)
   ($(b) +(-25:\m)$)
  };
\draw[cyan, line width =1.8, postaction=decorate] plot[smooth] coordinates {
    ($(a)+(185:\m)$)
   (160:\olens)
   (180:\olens)
   (200:\olens)
  ($(ai)+(175:\m)$)
   };
\draw[cyan, line width =1.8, postaction=decorate] plot[smooth] coordinates {
    ($(a)+(275:\m)$)
   (180:\ilens+0.2)
   ($(ai)+(85:\m)$)
     };
 \end{scope}
\draw[draw,cyan,line width =1.8]
(a) + (275:\m) arc(-85:185:\m);
\draw[draw,dashed,line width =1.8]
(a) + (185:\m) arc(185:275:\m);

\draw[draw,cyan,line width =1.8]
(ai) + (175:\m) arc(-185:85:\m);
\draw[draw,dashed,line width =1.8]
(ai) + (85:\m) arc(85:175:\m);

\draw[draw,dashed,line width =1.8]
(b) + (-25:\m) arc(-25:65:\m);
\draw[draw,cyan,line width =1.8]
(b) + (65:\m) arc(65:335:\m);

\draw[draw,cyan,line width =1.8]
(bi) + (25:\m) arc(25:295:\m);
\draw[draw,dashed,line width =1.8]
(bi) + (-65:\m) arc(-65:25:\m);

\foreach \Point in {(o), (180:\R), (0:\R) }{
\node at \Point{$\times$};
}
\foreach \Point in{(a), (ai),(b),(bi)}{
\node at \Point {\color{purple}\textbullet};
}
\coordinate (zeta1) at (180:\R-2);
\coordinate (zeta2) at (180:\R-3);
\coordinate (zeta3) at (0:\R-2);
\coordinate (zeta4) at (0:\R-3);
\coordinate (zeta1i) at (180:\R+1.5);
\coordinate (zeta2i) at (180:\R+2.5);
\coordinate (zeta3i) at (0:\R+1.5);
\coordinate (zeta4i) at (0:\R+2.5);

\foreach \Point in {(zeta1), (zeta2), (zeta3), (zeta4), (zeta1i), (zeta2i), (zeta3i), (zeta4i)}{
\node at \Point{\textbullet};
}
\begin{scope}[decoration={markings,
mark=at position 0.5 with {\arrow[line width =1.8pt]{>}}
}
]
\draw[draw,cyan,line width =1.8, postaction=decorate]
(zeta1) + (0:\s) arc(0:360:\s);
\draw[draw,cyan,line width =1.8, postaction=decorate]
(zeta2) + (0:\s) arc(0:360:\s);
\draw[draw,cyan,line width =1.8, postaction=decorate]
(zeta3) + (0:\s) arc(0:360:\s);
\draw[draw,cyan,line width =1.8, postaction=decorate]
(zeta4) + (0:\s) arc(0:360:\s);
\draw[draw,cyan,line width =1.8, postaction=decorate]
(zeta1i) + (0:\s) arc(0:-360:\s);
\draw[draw,cyan,line width =1.8, postaction=decorate]
(zeta2i) + (0:\s) arc(0:-360:\s);
\draw[draw,cyan,line width =1.8, postaction=decorate]
(zeta3i) + (0:\s) arc(0:-360:\s);
\draw[draw,cyan,line width =1.8, postaction=decorate]
(zeta4i) + (0:\s) arc(0:-360:\s);
\end{scope}
\node[above,yshift=\s+7] at (zeta1) {\tiny{$Q^{-1}X_{j,+}Q$}};
\node[below,yshift=-\s-7] at (zeta2) {\tiny{$Q^{-1}X_{j,+}Q$}};
\node[above,yshift=\s+7] at (zeta1i) {\tiny{$Q^{-1}Y_{j,-}Q$}};
\node[below,yshift=-\s-7] at (zeta2i) {\tiny{$Q^{-1}Y_{j,-}Q$}};
\node[above,yshift=\s+7] at (zeta3) {\tiny{$Q^{-1}Y_{j,+}Q$}};
\node[below,yshift=-\s-7] at (zeta4) {\tiny{$Q^{-1}Y_{j,+}Q$}};
\node[above,yshift=\s+7] at (zeta3i) {\tiny{$Q^{-1}X_{j,-}Q$}};
\node[below,yshift=-\s-7] at (zeta4i) {\tiny{$Q^{-1}X_{j,-}Q$}};
\node [left] at (155:\olens) {\tiny{$Q^{-1} LQ$}};
\node[right,xshift=2pt] at (160:\ilens+0.5) {\tiny{$Q^{-1}U Q$}};
\node[right] at (30:\olens) {\tiny{$Q^{-1} MQ$}};
\node[left,xshift=-6pt] at (30:\ilens) {\tiny{$Q^{-1}P Q$}};
\node[below left] {$0$};

\coordinate (dtext) at(200:\R+3);
\node[yshift=-4pt] at (dtext) {\tiny{$Q^{-1} D Q \equiv D$}};
\coordinate (d) at (200:\R);
\draw [->,thin] (dtext) -- (d);

\node[above,yshift=2pt] at (a) {\small$\alpha$};
\node[below,xshift=-4pt] at (ai) {\small$\alpha^{-1}$};
\node[above] at (b) {\small$\beta$};
\node[below] at (bi) {\small$\beta^{-1}$};
\node [above left] at (180:\R) {$-1$};
\node [above right] at (0:\R) {$1$};

\node at (90:2) {\tiny{$m_{1,\text{cs}}\defeq \widetilde{m}$}};

\node (L) at (155:\R+0.2) {};
\coordinate (Ltext) at (150:\R+2.5);
\node[yshift=4pt] at (Ltext) {\tiny{$m_{1,\text{cs}}\defeq\widetilde{m}Q^{-1} LQ$}};
\draw [->,thin ] (Ltext) -- (L);

\node (U) at (190:\R-0.5) {};
\coordinate (Utext) at (190:\R+2.5);
\node[yshift=-4pt] at (Utext) {\tiny{$m_{1,\text{cs}}\defeq\widetilde{m} (Q^{-1}U Q)^{-1}$}};
\draw [->,thin ] (Utext) -- (U);

\node (M) at (30:\R+0.2) {};
\coordinate (Mtext) at (35:\R+2.5);
\node[yshift=4pt] at (Mtext) {\tiny{$m_{1,\text{cs}}\defeq\widetilde{m}Q^{-1} MQ$}};
\draw [->,thin ] (Mtext) -- (M);

\node (P) at (-45:\R-0.6) {};
\coordinate (Ptext) at (-45:\R+2);
\node[yshift=-4pt] at (Ptext) {\tiny{$m_{1,\text{cs}}\defeq\widetilde{m} (Q^{-1}P Q)^{-1}$}};
\draw [->,thin] (Ptext) -- (P);

\node[above] at (130:\R+1.5) {\tiny{$m_{1,\text{cs}} \defeq \widetilde{m}$}};
\node at (-25:\R+2) {\tiny{$m_{1,\text{cs}} \defeq \widetilde{m}$}};
\end{tikzpicture}
}\\
\subfigure[]{
\begin{tikzpicture}[scale=0.6]
\def\R{9}
\def\m{1.5}
\coordinate (begin) at (180:\R);
\coordinate (a) at (200:\R);
\coordinate (b) at (240:\R);
\coordinate (end) at (255:\R);
\coordinate (zo) at (-6.47862,-6.2472); 
\coordinate (left) at (-5.73202, -6.93858);
\coordinate (right) at (-3.14298, -8.43337);
\node[right] at (a) {\small{$\alpha^{-1}$}};
\node[right] at (b) {\small{$\beta^{-1}$}};
\tkzCircumCenter(a,zo,b)\tkzGetPoint{O};
\begin{scope}[decoration={markings,
mark=at position 0.9 with {\arrow[line width =1.8pt]{>}}
}
]
\tkzDrawArc[color=cyan, line width =1.8,postaction=decorate](O,begin)(a);
\tkzDrawArc[color=cyan, line width =1.8,postaction=decorate](O,a)(left);
\end{scope}
\draw[draw,dashed,line width =1.8]
(a) + (155:\m) arc(155:65:\m);
\draw[draw,cyan,line width =1.8]
(a) + (155:\m) arc(155:425:\m);

\draw[draw,dashed,line width =1.8]
(b) + (285:\m) arc(285:375:\m);
\draw[draw,cyan,line width =1.8]
(b) + (285:\m) arc(285:15:\m);

\foreach \Point in{(a), (b)}{
\node at \Point {\color{purple}\textbullet};
}
\begin{scope}[decoration={markings,
mark=at position 0.2 with {\arrow[line width =1.8pt]{>}}
}
]
\draw [cyan, line width =1.8,postaction=decorate] ($(a) + (155:\m+2)$)--($(a) + (155:\m)$);
\draw [cyan, line width =1.8,postaction=decorate] ($(a) + (65:\m+2)$)--($(a) + (65:\m)$);
\end{scope}
\begin{scope}[decoration={markings,
mark=at position 0.8 with {\arrow[line width =1.8pt]{>}}
}
]
\draw [cyan, line width =1.8,postaction=decorate] ($(b) + (15:\m)$)--($(b) + (15:\m+2)$);
\draw [cyan, line width =1.8,postaction=decorate] ($(b) + (285:\m)$)--($(b) + (285:\m+2)$);
\end{scope}
\node[left] at (185:\R) {\small{$D$}};
\node[left,yshift=-2pt,xshift=-6pt] at ($(a) + (155:\m+1)$) {\small{$Q^{-1}LQ$}};
\node[left] at ($(a) + (210:\m)$) {\small{$Q^{-1}LQ$}};
\node[right] at ($(a) + (65:\m+1)$) {\small{$Q^{-1}UQ$}};
\node[right] at ($(a) + (10:\m)$) {\small{$Q^{-1}UQ$}};
\node[left] at (205:\R) {\small{$D$}};
\node[left] at (220:\R) {\small{$Q^{-1}JQ$}};
\node[below right] at ($(b) + (15:\m+1)$) {\small{$Q^{-1}PQ$}};
\node[right,yshift=6pt] at ($(b) + (70:\m)$) {\small{$Q^{-1}PQ$}};
\node[ left] at ($(b) + (285:\m+1)$) {\small{$Q^{-1}MQ$}};
\node[ left] at ($(b) + (220:\m)$) {\small{$Q^{-1}MQ$}};

\end{tikzpicture}
}
\caption{The initial deformation of the {\RHP} in the collisionless shock region. (a) Definition of $m_{1,\text{cs}}(z)$ with its jump contours and matrices, (b) The initial jump contours and matrices near $\alpha^{-1}$, $\beta^{-1}$.}
\label{F:cs_1}
\end{figure}

We now perform our second deformation, $m_{1,\text{cs}}(z)\longmapsto m_{2,\text{cs}}(z)$, in this region. Define $m_{2,\text{cs}}(z)$ inside the circles centered at $\alpha$ and $\alpha^{-1}$ as shown in Figure~\ref{F:cs_2}(a) and leave $m_{2,\text{cs}}(z)\equiv m_{1,\text{cs}}(z)$ everywhere else. The jump conditions satisfied by $m_{2,\text{cs}}(z)$ near the points $\alpha^{-1}$ and $\beta^{-1}$ are shown in Figure~\ref{F:cs_2}(b).
\begin{figure}[htp]
\centering
\subfigure[]{
\begin{tikzpicture}[scale=0.65]
\def\R{9}
\def\m{1.5}
\coordinate (begin) at (180:\R);
\coordinate (a) at (200:\R);
\coordinate (b) at (240:\R);
\coordinate (end) at (255:\R);
\coordinate (zo) at (-6.47862,-6.2472); 
\coordinate (left) at (-5.73202, -6.93858);
\coordinate (right) at (-3.14298, -8.43337);
\node[right] at (a) {\small{$\alpha^{-1}$}};
\node[right] at (b) {\small{$\beta^{-1}$}};
\tkzCircumCenter(a,zo,b)\tkzGetPoint{O};
\begin{scope}[decoration={markings,
mark=at position 0.9 with {\arrow[line width =1.8pt]{>}}
}
]
\tkzDrawArc[color=cyan, line width =1.8,postaction=decorate](O,begin)(a);
\tkzDrawArc[color=cyan, line width =1.8,postaction=decorate](O,a)(left);
\end{scope}
\draw[draw,dashed,line width =1.8]
(a) + (155:\m) arc(155:65:\m);
\draw[draw,cyan,line width =1.8]
(a) + (155:\m) arc(155:425:\m);

\draw[draw,dashed,line width =1.8]
(b) + (285:\m) arc(285:375:\m);
\draw[draw,cyan,line width =1.8]
(b) + (285:\m) arc(285:15:\m);

\foreach \Point in{(a), (b)}{
\node at \Point {\color{purple}\textbullet};
}
\begin{scope}[decoration={markings,
mark=at position 0.2 with {\arrow[line width =1.8pt]{>}}
}
]
\draw [cyan, line width =1.8,postaction=decorate] ($(a) + (155:\m+2)$)--($(a) + (155:\m)$);
\draw [cyan, line width =1.8,postaction=decorate] ($(a) + (65:\m+2)$)--($(a) + (65:\m)$);
\end{scope}
\begin{scope}[decoration={markings,
mark=at position 0.8 with {\arrow[line width =1.8pt]{>}}
}
]
\draw [cyan, line width =1.8,postaction=decorate] ($(b) + (15:\m)$)--($(b) + (15:\m+2)$);
\draw [cyan, line width =1.8,postaction=decorate] ($(b) + (285:\m)$)--($(b) + (285:\m+2)$);
\end{scope}

\coordinate (phi_L_text) at ($(a)+(230:\m+3)$);
\node[yshift=-4pt] at (phi_L_text) {\small{$m_{2,\text{cs}} \defeq m_{1,\text{cs}} D$}};
\node(phi_L) at ($(a)+(210:\m/2) $){};
\draw [->,thin] (phi_L_text) -- (phi_L);
\node(phi_R_text) at ($(a)+(-15:\m+3.5)$) {\small{$m_{2,\text{cs}} \defeq m_{1,\text{cs}} $}};
\node(phi_R) at ($(a)+(-35:\m/2) $){};
\draw [->,thin] (phi_R_text) -- (phi_R);

\node[left] at (185:\R) {\small{$D$}};
\node[left,yshift=-2pt,xshift=-6pt] at ($(a) + (155:\m+1)$) {\small{$Q^{-1}LQ$}};
\node[left] at ($(a) + (210:\m)$) {\small{$Q^{-1}LQ$}};
\node[right] at ($(a) + (65:\m+1)$) {\small{$Q^{-1}UQ$}};
\node[right] at ($(a) + (10:\m)$) {\small{$Q^{-1}UQ$}};
\node[left] at (205:\R) {\small{$D$}};
\node[left] at (220:\R) {\small{$Q^{-1}JQ$}};
\node[below right] at ($(b) + (15:\m+1)$) {\small{$Q^{-1}PQ$}};
\node[right,yshift=6pt] at ($(b) + (70:\m)$) {\small{$Q^{-1}PQ$}};
\node[ left] at ($(b) + (285:\m+1)$) {\small{$Q^{-1}MQ$}};
\node[ left] at ($(b) + (220:\m)$) {\small{$Q^{-1}MQ$}};
\end{tikzpicture}
}\\
\subfigure[]{
\begin{tikzpicture}[scale=0.65]
\def\R{9}
\def\m{1.5}
\coordinate (begin) at (180:\R);
\coordinate (a) at (200:\R);
\coordinate (b) at (240:\R);
\coordinate (end) at (255:\R);
\coordinate (zo) at (-6.47862,-6.2472); 
\coordinate (top_left) at (-8.85102, -1.63079);
\coordinate (top_right) at (-7.82853, -4.44006);
\coordinate (left) at (-5.73202, -6.93858);
\coordinate (right) at (-3.14298, -8.43337);
\coordinate (topcircle_L) at ($(a)+(155:\m)$);
\coordinate (topcircle_U) at ($(a)+(65:\m)$);
\coordinate (bottomcircle_M) at ($(b)+(285:\m)$);
\coordinate (bottomcircle_P) at ($(b)+(15:\m)$);
\node[right] at (a) {\small{$\alpha^{-1}$}};
\node[right] at (b) {\small{$\beta^{-1}$}};
\tkzCircumCenter(a,zo,b)\tkzGetPoint{O};
\begin{scope}[decoration={markings,
mark=at position 0.6 with {\arrow[line width =1.8pt]{>}}
}
]
\tkzDrawArc[color=cyan, line width =1.8,postaction=decorate](O,begin)(top_left);
\tkzDrawArc[color=cyan, line width =1.8,postaction=decorate](O,top_right)(left);
\end{scope}

\begin{scope}[decoration={markings,
mark=at position 0.5 with {\arrow[line width =1.8pt]{>}}
}
]
\tkzDrawArc[color=cyan, line width =1.8, postaction=decorate](a,top_left)(topcircle_L);
\tkzDrawArc[color=cyan, line width =1.8, postaction=decorate](a,topcircle_L)(top_right);
\tkzDrawArc[color=cyan, line width =1.8, postaction=decorate](a,top_right)(topcircle_U);
\tkzDrawArc[color=black, line width =1.8, dashed](a,topcircle_U)(top_left);

\tkzDrawArc[color=cyan, line width =1.8, postaction=decorate](b,bottomcircle_P)(left);
\tkzDrawArc[color=cyan, line width =1.8, postaction=decorate](b,left)(bottomcircle_M);
\tkzDrawArc[color=black, line width =1.8, dashed](b,bottomcircle_M)(bottomcircle_P);

\end{scope}

\foreach \Point in{(a), (b)}{
\node at \Point {\color{purple}\textbullet};
}
\begin{scope}[decoration={markings,
mark=at position 0.2 with {\arrow[line width =1.8pt]{>}}
}
]
\draw [cyan, line width =1.8,postaction=decorate] ($(a) + (155:\m+2)$)--($(a) + (155:\m)$);
\draw [cyan, line width =1.8,postaction=decorate] ($(a) + (65:\m+2)$)--($(a) + (65:\m)$);
\end{scope}
\begin{scope}[decoration={markings,
mark=at position 0.8 with {\arrow[line width =1.8pt]{>}}
}
]
\draw [cyan, line width =1.8,postaction=decorate] ($(b) + (15:\m)$)--($(b) + (15:\m+2)$);
\draw [cyan, line width =1.8,postaction=decorate] ($(b) + (285:\m)$)--($(b) + (285:\m+2)$);
\end{scope}

\node[left] at (185:\R) {\small{$D$}};
\node[left,yshift=-2pt,xshift=-6pt] at ($(a) + (155:\m+1)$) {\small{$Q^{-1}LQ$}};
\node[left] at ($(a) + (210:\m)$) {\small{$Q^{-1}LD Q$}};
\node[right] at ($(a) + (65:\m+1)$) {\small{$Q^{-1}UQ$}};
\node[right] at ($(a) + (10:\m)$) {\small{$Q^{-1}U^{-1}Q$}};
\node[above,yshift=4pt] at ($(a) +(140:\m)$) {\small{$D$}};
\node[left] at (220:\R) {\small{$Q^{-1}JQ$}};
\node[below right] at ($(b) + (15:\m+1)$) {\small{$Q^{-1}PQ$}};
\node[right,yshift=6pt] at ($(b) + (70:\m)$) {\small{$Q^{-1}P^{-1}Q$}};
\node[ left] at ($(b) + (285:\m+1)$) {\small{$Q^{-1}MQ$}};
\node[ left, xshift=-4pt] at ($(b) + (220:\m)$) {\small{$Q^{-1}MQ$}};
\end{tikzpicture}}
\caption{(a) Definition of $m_{2,\text{cs}}(z)$ near $\alpha^{-1}$, $\beta^{-1}$, (b) The jump contours and matrices for $m_{2,\text{cs}}(z)$ near $\alpha^{-1}$, $\beta^{-1}$.}
\label{F:cs_2}
\end{figure}
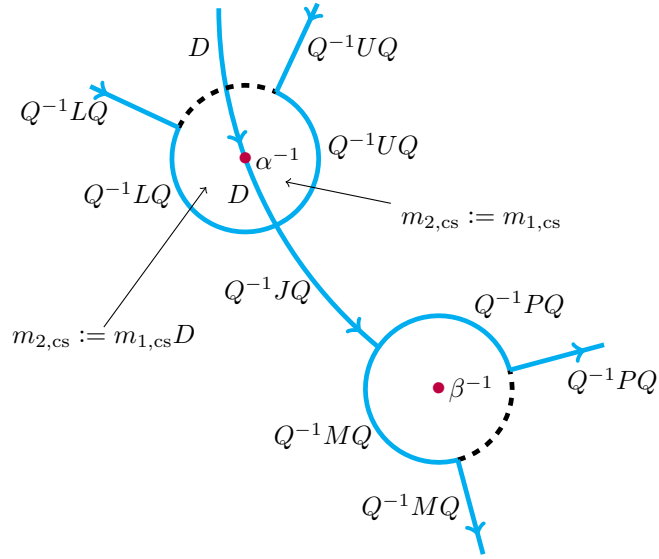
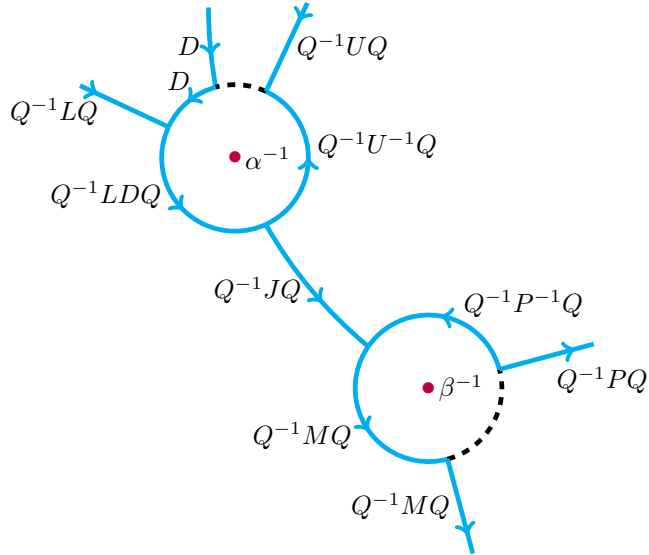
This deformation turned the jumps inside the circles surrounding $\alpha$ and $\alpha^{-1}$ to the identity jump, \emph{i.e.} no jump. {We remove} $\theta(z;n,t)$ from the problem using $\phi(z)$, as discussed in the beginning of this section, by defining
$m_{3,\text{cs}}(z)$ to be
\begin{equation*}
m_{3,\text{cs}}(z) =
\begin{cases}
m_{2,\text{cs}}(z),\quad&\text{inside the circles centered at $\alpha^{\pm 1}$ and $\beta^{\pm 1}$}, \\
m_{2,\text{cs}}(z)\phi(z),\quad&\text{outside the circles centered at $\alpha^{\pm 1}$ and $\beta^{\pm 1}$}.
\end{cases}
\end{equation*}
The jump condition satisfied by $m_{3,\text{cs}}(z)$ near the points $\alpha^{-1}$ and $\beta^{-1}$ is shown in Figure~\ref{F:cs_3}.
\begin{figure}[htp]
\centering
\begin{tikzpicture}[scale=0.65]
\def\R{9}
\def\m{1.5}
\coordinate (begin) at (180:\R);
\coordinate (a) at (200:\R);
\coordinate (b) at (240:\R);
\coordinate (end) at (255:\R);
\coordinate (zo) at (-6.47862,-6.2472); 
\coordinate (top_left) at (-8.85102, -1.63079);
\coordinate (top_right) at (-7.82853, -4.44006);
\coordinate (left) at (-5.73202, -6.93858);
\coordinate (right) at (-3.14298, -8.43337);
\coordinate (topcircle_L) at ($(a)+(155:\m)$);
\coordinate (topcircle_U) at ($(a)+(65:\m)$);
\coordinate (bottomcircle_M) at ($(b)+(285:\m)$);
\coordinate (bottomcircle_P) at ($(b)+(15:\m)$);
\node[right] at (a) {\small{$\alpha^{-1}$}};
\node[right] at (b) {\small{$\beta^{-1}$}};
\tkzCircumCenter(a,zo,b)\tkzGetPoint{O};
\begin{scope}[decoration={markings,
mark=at position 0.6 with {\arrow[line width =1.8pt]{>}}
}
]
\tkzDrawArc[color=cyan, line width =1.8,postaction=decorate](O,begin)(top_left);
\tkzDrawArc[color=cyan, line width =1.8,postaction=decorate](O,top_right)(left);
\end{scope}

\begin{scope}[decoration={markings,
mark=at position 0.5 with {\arrow[line width =1.8pt]{>}}
}
]
\tkzDrawArc[color=cyan, line width =1.8, postaction=decorate](a,top_left)(topcircle_L);
\tkzDrawArc[color=cyan, line width =1.8, postaction=decorate](a,topcircle_L)(top_right);
\tkzDrawArc[color=cyan, line width =1.8, postaction=decorate](a,top_right)(topcircle_U);
\tkzDrawArc[color=cyan, line width =1.8, postaction=decorate](a,topcircle_U)(top_left);

\tkzDrawArc[color=cyan, line width =1.8, postaction=decorate](b,bottomcircle_P)(left);
\tkzDrawArc[color=cyan, line width =1.8, postaction=decorate](b,left)(bottomcircle_M);
\tkzDrawArc[color=cyan, line width =1.8, postaction=decorate](b,bottomcircle_M)(bottomcircle_P);
\end{scope}

\foreach \Point in{(a), (b)}{
\node at \Point {\color{purple}\textbullet};
}
\begin{scope}[decoration={markings,
mark=at position 0.2 with {\arrow[line width =1.8pt]{>}}
}
]
\draw [cyan, line width =1.8,postaction=decorate] ($(a) + (155:\m+2)$)--($(a) + (155:\m)$);
\draw [cyan, line width =1.8,postaction=decorate] ($(a) + (65:\m+2)$)--($(a) + (65:\m)$);
\end{scope}
\begin{scope}[decoration={markings,
mark=at position 0.8 with {\arrow[line width =1.8pt]{>}}
}
]
\draw [cyan, line width =1.8,postaction=decorate] ($(b) + (15:\m)$)--($(b) + (15:\m+2)$);
\draw [cyan, line width =1.8,postaction=decorate] ($(b) + (285:\m)$)--($(b) + (285:\m+2)$);
\end{scope}

\node[above left] at (185:\R) {\small{$\phi_{-}^{-1}D\phi_+$}};
\node[left,yshift=-2pt,xshift=-6pt] at ($(a) + (155:\m+1)$) {\small{$\phi^{-1}Q^{-1}LQ\phi$}};
\node[left] at ($(a) + (210:\m)$) {\small{$\phi^{-1}Q^{-1}LD Q$}};
\node[right] at ($(a) + (65:\m+1)$) {\small{$Q^{-1}UQ$}};
\node[right] at ($(a) + (10:\m)$) {\small{$\phi^{-1}Q^{-1}U^{-1}Q$}};
\node[above,yshift=4pt,xshift=-6pt] at ($(a) +(140:\m)$) {\small{$\phi^{-1} D$}};
\node[above] at ($(a) +(75:\m)$) {\small{$\phi^{-1}$}};
\node[below right] at ($(b) +(330:\m)$) {\small{$\phi^{-1}$}};
\node[left] at (220:\R) {\small{$\phi_{-}^{-1}Q^{-1}JQ\phi_{+}$}};
\node[below right] at ($(b) + (15:\m+1)$) {\small{$\phi^{-1}Q^{-1}PQ\phi$}};
\node[right,yshift=6pt] at ($(b) + (70:\m)$) {\small{$\phi^{-1}Q^{-1}P^{-1}Q$}};
\node[ left] at ($(b) + (285:\m+1)$) {\small{$\phi^{-1}Q^{-1}MQ\phi$}};
\node[ left, xshift=-4pt] at ($(b) + (220:\m)$) {\small{$\phi^{-1}Q^{-1}MQ$}};
\end{tikzpicture}
\caption{The jump contours and matrices for $m_{3,\text{cs}}(z)$ near $\alpha^{-1}$, $\beta^{-1}$.}
\label{F:cs_3}
\end{figure}
As in the dispersive region, the diagonal matrix $D(z)$ has a singularity at $z=-1\in\Sigma_c$ since $R(\pm 1)=-1$ and this singularity has to be removed.

We proceed with analyzing the jump matrix on $\Sigma_c$ in the limit $z_0\to-1$ to determine the constants $\delta_1$ and $\delta_2$ introduced in \eqref{E:g_rhp} (or in \eqref{E:cs-jump}) so that the singularity disappears. On $\Sigma_c = \oarc{ \alpha, \alpha^{-1} }$, the jump matrix is given by
\begin{equation}\label{E:cs_diag_jump}
\phi^{-1}_{-} (z) D(z) \phi_{+}(z)=
\begin{pmatrix}
\left[1 - R(z)R\left(z^{-1}\right)\right]e^{\delta_2} & 0 \\ 0 &\left( \left[1 - R(z)R\left(z^{-1}\right)\right]e^{\delta_2} \right)^{-1}
\end{pmatrix}\,.
\end{equation}
Using $R(-1) = -1$ and the analyticity of $R(z)$ in a neighborhood around $z= -1$, we observe that
\begin{equation}\label{E:ref_expand}
1 - R(z)R\left(z^{-1}\right) = \nu (z+1)^2 + \mathcal{O}\left( (z+1)^4 \right)~\text{near}~z=-1\,,
\end{equation}
for some constant $\nu\in\mathbb{C}$. So far, we have left $\alpha$ and $\beta$ mostly arbitrary. It follows that (see Appendix~\ref{A:construct_g}) the prescribed asymptotic behavior in \eqref{E:g_rhp} for $g(z)$ {as $z \to \infty$} requires $\Re \alpha + \Re \beta = 2 \lambda_ 0$, leaving us with single degree of freedom. Now consider the affine transformation, $k=K(z)$, defined by
\begin{equation}\label{E:cov}
K(z) = i \left(\frac{z-\lambda_0}{\rho_0}\right),\quad z(k) = K^{-1}(k) = \lambda_0 - i \rho_0 k\,.
\end{equation}
Note that this transformation fixes the stationary phase points: $K\left(z_0^{\pm 1}\right) = \mp 1$, and the image of the contour $\oarc{\beta, \beta^{-1}}$ under the mapping $K$ flattens as $z_0^{\pm 1} \to -1$ (see Figure~\ref{F:Kmap}).
\begin{figure}[htp]
\centering
\begin{minipage}{.40\textwidth}
\centering
\begin{tikzpicture}[
decoration={markings,
mark=at position 0.4 with {\arrow[line width=1pt]{>}}
}
]
\def\L{4}
\coordinate (zo) at (155:\L);
\coordinate (zoi) at (205:\L);
\coordinate (o) at (0,0);
\coordinate (b) at (150:\L);
\coordinate (bi) at (210:\L);
\coordinate (a) at (165:\L);
\coordinate (ai) at (195:\L);
\coordinate (xaxisend) at (-\L-2,0);
\pgfmathsetmacro\lo{cos(155)*\L};
\pgfmathsetmacro\la{cos(150)*\L};
\pgfmathsetmacro\lb{cos(165)*\L};

\draw[help lines, path fading=east] (0,0) -- (-\L-2,0) coordinate (xaxis) ++(0,1pt);

\draw [color=cyan, line width=1.8,domain=150:165] plot ({\L*cos(\x)}, {\L*sin(\x)});
\draw [color=cyan, line width=1.8,domain=165:195, postaction=decorate] plot ({\L*cos(\x)}, {\L*sin(\x)});
\draw [color= cyan, line width=1.8,domain=195:210] plot ({\L*cos(\x)}, {\L*sin(\x)});
\foreach \Point in {(zo), (zoi)}{
  \node at \Point {\textbullet};
}

\foreach \Point in {(a), (ai), (b), (bi)}{
  \node at \Point {\color{purple}\textbullet};
}

\draw [very thick] (\lo,-2pt) -- (\lo,2pt);
\draw [very thin, dashed, gray] (zo) -- (zoi);
\node at (150:\L-1.2) {\small $z$-plane};
\node[below] at (xaxisend) {\small{$\Re z$}};
\node[left ] at (zo) (z0) {\small{$z_0 $}};
\node[left ] at (zoi) {\small{$z_0^{-1}$}};
\node [below left] at (180:\L) {$-1$};
\node[ above left] at (b) {\color{purple} \small{$\beta$}};
\node[above left] at (a) {\color{purple} \small{$\alpha$}};

\node[above left] at (ai) {\color{purple} \small{$\alpha^{-1}$}};

\node[below left] at (bi) {\color{purple} \small{$\beta^{-1}$}};

\node at (-\L,0) {$\times$};
\node [above, anchor=south west] at (\lo,0) {\small{$\lambda_0=-n/t$}};

\end{tikzpicture}
\end{minipage}
$\hspace{-2em}\overset{k=K(z)}\longmapsto$
\begin{minipage}{.40\textwidth}
\centering
\begin{tikzpicture}[
decoration={markings,
mark=at position 0.45 with {\arrow[line width=1pt]{>}}
}
]

\def\L{4.6}
\pgfmathsetmacro\lo{cos(155)*\L};
\pgfmathsetmacro\betao{sin(155)*\L};
\pgfmathsetmacro\la{cos(150)*\L};
\pgfmathsetmacro\lb{cos(165)*\L};
\coordinate (one) at ([shift=(90:\L)]0,\lo);
\coordinate (ko) at ([shift=(65:\L)]0,\lo);
\coordinate (koi) at ([shift=(115:\L)]0,\lo);
\coordinate (Ai) at ([shift=(-80:\L)]0,-\lo);
\coordinate (A) at ([shift=(-100:\L)]0,-\lo);
\coordinate (Bi) at ([shift=(-50:\L)]0,-\lo);
\coordinate (B) at ([shift=(-130:\L)]0,-\lo);
\coordinate (xaxisend) at (-\L+1,0);
\draw[line width=1.8, color=cyan, postaction=decorate] ([shift=(-130:\L)]0,-\lo) arc (-130:-50:\L);

\draw[help lines,->] (-\L+1,0) -- (\L-1,0) coordinate (xaxis);
\draw[help lines,->] (0,-\betao) -- (0,\betao) coordinate (yaxis);

\foreach \Point in {(ko), (koi)}{
  \node at \Point {\textbullet};
}

\foreach \Point in {(A), (Ai), (B), (Bi)}{
  \node at \Point {\color{purple}\textbullet};
}

\node at (150:\L-1.2) {\small $k$-plane};
\node[below] at (xaxisend) {\small{$\Re k$}};
\node[below] at (Ai) {\color{purple} \small $-\overline{A}$};
\node[above] at (Bi) {\color{purple} \small $-\overline{B}$};
\node[below] at (A) {\color{purple} \small $A$};
\node[above] at (B) {\color{purple} \small $B$};
\node[below right] at (ko) {\small $1$};
\node[below left] at (koi) { \small $-1$};

\node at ([shift=(-90:\L)]0,-\lo){$\times$};
\node[above right] at ([shift=(-90:\L)]0,-\lo){\small $k^*$};
\end{tikzpicture}
\end{minipage}
\caption{Cuts mapped under the affine mapping $z \mapsto k=K(z)$.}
\label{F:Kmap}
\end{figure}
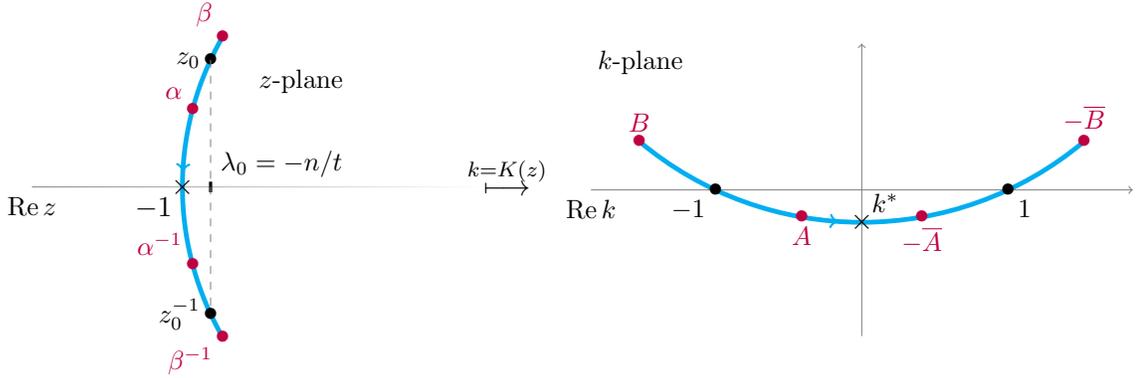

To remove the singularity of $D(z)$ at $z=-1$, we need to obtain a parametrix $\psi(z)$ by solving the following diagonal {\RHP}:
\begin{rhp}\label{rhp:psi}
\begin{align*}\begin{split}
  \psi^{+}(z) = \psi^{-}(z)\phi^{-1}_{-} (z) D(z) \phi_{+}(z),~~~z&\in \Sigma_c, ~~ \psi(\infty) = I,\\
  \psi(z) \diag\left(|z+1|^{-1},|z+1|\right) &= \mathcal O(1), ~~~ z \to -1, ~~~ |z| < 1,\\
  \psi(z) \diag\left(|z+1|,|z+1|^{-1}\right) &= \mathcal O(1), ~~~ z \to -1, ~~~ |z| > 1,
\end{split}\end{align*}
    {such that $\psi(z)$ is bounded for $z$ in a neighborhood of the endpoints $\alpha$, $\alpha^{-1}$  and the boundary values $\psi^\pm(z)$ are not continuous at $\alpha,\alpha^{-1}$ and $-1$}.
    \end{rhp}
In new variables \eqref{E:cov}, {the jump condition in \rhref{rhp:psi}} reads
\begin{equation}
H^{+}(k) = H^{-}(k) \widetilde{D}(k),\,~k\in K\left(\Sigma_c\right),
\end{equation}
where $K^{-1}(k) = z(k)$, $H(k) = \psi\left(K^{-1}(k)\right)$, and $\widetilde{D}(k) = \phi^{-1}_{-} (K^{-1}\left(k)\right) D(K^{-1}\left(k)\right) \phi_{+}\left(K^{-1}(k)\right)$.
Let $k^{*} = K(-1)$ so that $z(k)+1 = -i\rho_0 (k-k^{*})$. We choose $\delta_2$ to enforce $\rho_0^2 e^{\delta_2} = 1$ so that the $(1,1)$-entry of the diagonal jump matrix $\widetilde{D}(k)$ satisfies
\begin{equation}
\left[1 - R\left(z(k)\right)R\left(z(k)^{-1}\right)\right]e^{\delta_2} = \nu (k - k^{*})^2 + \mathcal{O} \left( \rho_0^2 (k-k^{*})^4 \right),~\text{ near }~z=-1,~k=k^{*},
\end{equation}
hence removing, up to second order, the dependence on $\rho_0$. This indicates that $\widetilde{D}(k)$ remains bounded as $\rho_0 \to 0$ away from $k^*$ which ensures the boundedness of $H$, and hence of $\psi$ as $t\to\infty$. In Appendix~\ref{A:construct_g} it is shown that $\alpha$ and $\beta$ can (and should) be chosen so that
\begin{align}\label{E:cs-ab}
\frac{-\log \rho_0^2}{t} = \int_{\beta}^{\alpha}\frac{1}{p^2}\sqrt{\left(p-\alpha\right)\left(p-\alpha^{-1}\right)\left(p-\beta\right)\left(p-\beta^{-1}\right)}^{+}\, dp.
\end{align}
See Appendix~\ref{A:construct_g} for the definition of the square root in \eqref{E:cs-ab}. Loosely speaking, the collisionless shock region is defined to be the region in the $(n,t)$-plane where \eqref{E:cs-ab} is solvable for $\alpha$ and $\beta$ and $\alpha+1$ is not too small. This reasoning gives the asymptotic condition $n = t - c_2 t^{1/3} (\log t)^{2/3}$.  See Appendix~\ref{A:scaling} for more detail.

Once $\psi(z)$ is obtained (see Appendix~\ref{A:g}), we conjugate the problem by $\psi(z)$ as was done with $\Delta(z)$ in Section~\ref{S:dispersive}. Define $m_{\sharp,\text{cs}}(z)$ by
\begin{equation*}
m_{\sharp,\text{cs}}(z) =
\begin{cases}
m_{3,\text{cs}}(z)\psi^{-1}(z),\quad &\text{outside the circles centered at $\alpha$ and $\alpha^{-1}$},\\
m_{3,\text{cs}}(z),\quad &\text{inside the circles centered at $\alpha$ and $\alpha^{-1}$}.
\end{cases}
\end{equation*}
The final deformation for this region and the {\RHP} satisfied by $m_{\sharp,\text{cs}}(z)$ is shown in Figure~\ref{F:cs_4}. Similar to the case addressed in Remark~\ref{r:sing} the jump contours and matrices of the vector problem (\rhref{rhp:disp}) can be deformed to those of $m_{\sharp,\text{cs}}(z)$ leaving no singularity at $z=-1$, {despite the fact that $D(z)$ and $\psi(z)$ are singular at $z = -1$.}
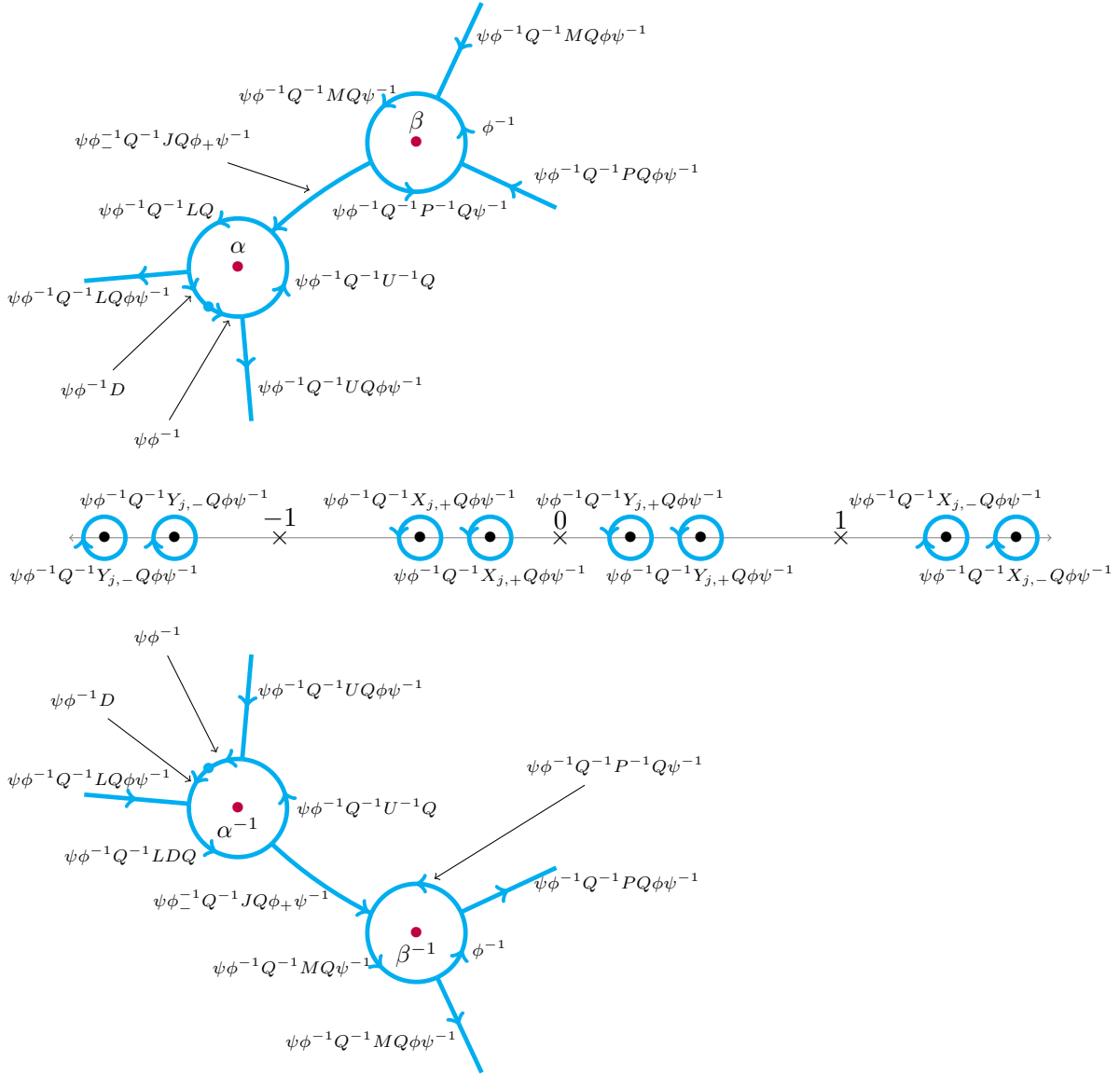
\begin{figure}[htp]
\centering
\begin{tikzpicture}
\def\R{4}
\def\L{\R+2}
\def\s{0.3}
\def\m{0.7}

\coordinate (a) at (140:\L);
\coordinate (ai) at (-140:\L);
\coordinate (b) at (110:\L);
\coordinate (bi) at (-110:\L);
\coordinate (zo) at (120:\L);
\coordinate (zoi) at (240:\L);
\coordinate (o) at (0,0);
\coordinate (to_b) at (-1.38149, 5.83879);
\coordinate (from_b) at (-2.69482, 5.36078);
\coordinate (to_a) at (-4.1158, 4.3658);
\coordinate (from_a) at (-5.01417, 3.29516);
\coordinate (from_bi) at (-1.38149, -5.83879);
\coordinate (to_bi) at (-2.69482, -5.36078);
\coordinate (from_ai) at (-4.1158, -4.3658);
\coordinate (to_ai) at (-5.01417, -3.29516);
\coordinate (top_M) at ($(b) + (65:\m)$);
\coordinate (top_P) at ($(b) + (-25:\m)$);
\coordinate (top_L) at ($(a) + (185:\m)$);
\coordinate (top_U) at ($(a) + (275:\m)$);
\coordinate (bottom_M) at ($(bi) + (295:\m)$);
\coordinate (bottom_P) at ($(bi) + (25:\m)$);
\coordinate (bottom_L) at ($(ai) + (175:\m)$);
\coordinate (bottom_U) at ($(ai) + (85:\m)$);
\draw[help lines,<->] (-\R-3,0) -- (\R+3,0) coordinate (xaxis);

\coordinate (pt1) at (-2.00239, 3.46272);
\coordinate (pt2) at (-2.00239, -3.46272);
\coordinate (pt3) at (-0.69187, 3.93971);
\coordinate (pt4) at (-0.69187, -3.93971);
\coordinate (end) at (180:\R);
\coordinate (begin) at (0:\R);
\begin{scope}[decoration={markings, mark=at position 0.5 with {\arrow[line width =1.8pt]{>}}}]
\tkzDrawArc[color=cyan,line width =1.8,postaction=decorate](b,top_M)(from_b);
\tkzDrawArc[color=cyan,line width =1.8,postaction=decorate](b,from_b)(top_P);
\tkzDrawArc[color=cyan,line width =1.8,postaction=decorate](b,top_P)(top_M);

\tkzDrawArc[color=cyan,line width =1.8,postaction=decorate](a,to_a)(top_L);
\tkzDrawArc[color=cyan,line width =1.8,postaction=decorate](a,top_L)(from_a)
\tkzDrawArc[color=cyan,line width =1.8,postaction=decorate](a,from_a)(top_U);
\tkzDrawArc[color=cyan,line width =1.8,postaction=decorate](a,top_U)(to_a);

\tkzDrawArc[color=cyan,line width =1.8,postaction=decorate](bi,to_bi)(bottom_M);
\tkzDrawArc[color=cyan,line width =1.8,postaction=decorate](bi,bottom_M)(bottom_P);
\tkzDrawArc[color=cyan,line width =1.8,postaction=decorate](bi,bottom_P)(to_bi);

\tkzDrawArc[color=cyan,line width =1.8,postaction=decorate](ai,to_ai)(bottom_L);
\tkzDrawArc[color=cyan,line width =1.8,postaction=decorate](ai,bottom_L)(from_ai)
\tkzDrawArc[color=cyan,line width =1.8,postaction=decorate](ai,from_ai)(bottom_U);
\tkzDrawArc[color=cyan,line width =1.8,postaction=decorate](ai,bottom_U)(to_ai);
\end{scope}

\begin{scope}[decoration={markings, mark=at position 0.98 with {\arrow[line width =1.8pt]{>}}}]
\tkzDrawArc[color=cyan,line width =1.8,postaction=decorate](o,from_b)(to_a);
\tkzDrawArc[color=cyan,line width =1.8,postaction=decorate](o,from_ai)(to_bi);
\end{scope}


\foreach \Point in {(o), (180:\R), (0:\R) }{
\node at \Point{$\times$};
}
\foreach \Point in{(a), (ai),(b),(bi)}{
\node at \Point {\color{purple}\textbullet};
}
\coordinate (zeta1) at (180:\R-2);
\coordinate (zeta2) at (180:\R-3);
\coordinate (zeta3) at (0:\R-2);
\coordinate (zeta4) at (0:\R-3);
\coordinate (zeta1i) at (180:\R+1.5);
\coordinate (zeta2i) at (180:\R+2.5);
\coordinate (zeta3i) at (0:\R+1.5);
\coordinate (zeta4i) at (0:\R+2.5);

\foreach \Point in {(zeta1), (zeta2), (zeta3), (zeta4), (zeta1i), (zeta2i), (zeta3i), (zeta4i)}{
\node at \Point{\textbullet};
}
\begin{scope}[decoration={markings,
mark=at position 0.5 with {\arrow[line width =1.8pt]{>}}
}
]
\draw[draw,cyan,line width =1.8, postaction=decorate]
(zeta1) + (0:\s) arc(0:360:\s);
\draw[draw,cyan,line width =1.8, postaction=decorate]
(zeta2) + (0:\s) arc(0:360:\s);
\draw[draw,cyan,line width =1.8, postaction=decorate]
(zeta3) + (0:\s) arc(0:360:\s);
\draw[draw,cyan,line width =1.8, postaction=decorate]
(zeta4) + (0:\s) arc(0:360:\s);
\draw[draw,cyan,line width =1.8, postaction=decorate]
(zeta1i) + (0:\s) arc(0:-360:\s);
\draw[draw,cyan,line width =1.8, postaction=decorate]
(zeta2i) + (0:\s) arc(0:-360:\s);
\draw[draw,cyan,line width =1.8, postaction=decorate]
(zeta3i) + (0:\s) arc(0:-360:\s);
\draw[draw,cyan,line width =1.8, postaction=decorate]
(zeta4i) + (0:\s) arc(0:-360:\s);

\draw [cyan, line width =1.8,postaction=decorate] ($(b) + (65:\m+1.5)$)--($(b) + (65:\m)$);
\draw [cyan, line width =1.8,postaction=decorate] ($(b) + (-25:\m+1.5)$)--($(b) + (-25:\m)$);

\draw [cyan, line width =1.8,postaction=decorate] ($(a) + (275:\m)$)--($(a) + (275:\m+1.5)$);
\draw [cyan, line width =1.8,postaction=decorate] ($(a) + (185:\m)$)--($(a) + (185:\m+1.5)$);

\draw [cyan, line width =1.8,postaction=decorate] ($(ai) + (175:\m+1.5)$)--($(ai) + (175:\m)$);
\draw [cyan, line width =1.8,postaction=decorate] ($(ai) + (85:\m+1.5)$)--($(ai) + (85:\m)$);

\draw [cyan, line width =1.8,postaction=decorate] ($(bi) + (25:\m)$)--($(bi) + (25:\m+1.5)$);
\draw [cyan, line width =1.8,postaction=decorate] ($(bi) + (295:\m)$)--($(bi) + (295:\m+1.5)$);
\end{scope}

\node[above,yshift=\s+7] at (zeta1) {\tiny{$\psi\phi^{-1}Q^{-1}X_{j,+}Q\phi\psi^{-1}$}};
\node[below,yshift=-\s-7] at (zeta2) {\tiny{$\psi\phi^{-1}Q^{-1}X_{j,+}Q\phi\psi^{-1}$}};
\node[above,yshift=\s+7] at (zeta1i) {\tiny{$\psi\phi^{-1}Q^{-1}Y_{j,-}Q\phi\psi^{-1}$}};
\node[below,yshift=-\s-7] at (zeta2i) {\tiny{$\psi\phi^{-1}Q^{-1}Y_{j,-}Q\phi\psi^{-1}$}};
\node[below,yshift=-\s-7] at (zeta3) {\tiny{$\psi\phi^{-1}Q^{-1}Y_{j,+}Q\phi\psi^{-1}$}};
\node[above,yshift=\s+7] at (zeta4) {\tiny{$\psi\phi^{-1}Q^{-1}Y_{j,+}Q\phi\psi^{-1}$}};
\node[above,yshift=\s+7] at (zeta3i) {\tiny{$\psi\phi^{-1}Q^{-1}X_{j,-}Q\phi\psi^{-1}$}};
\node[below,yshift=-\s-7] at (zeta4i) {\tiny{$\psi\phi^{-1}Q^{-1}X_{j,-}Q\phi\psi^{-1}$}};

\node at (from_a) {\color{cyan}\textbullet};
\node at (to_ai) {\color{cyan}\textbullet};
\node[right,xshift=4pt] at ($(b) + (20:\m)$) {\tiny{$\phi^{-1}$}};
\node[right] at ($(b) + (65:\m+1)$) {\tiny{$\psi\phi^{-1} Q^{-1}M Q \phi \psi^{-1}$}};
\node[above right] at ($(b) + (-25:\m+1)$) {\tiny{$\psi\phi^{-1} Q^{-1}P Q \phi \psi^{-1}$}};
\node[left] at ($(b) + (100:\m)$) {\tiny{$\psi\phi^{-1} Q^{-1}M Q \psi^{-1}$}};
\node[below] at ($(b) + (-85:\m)$) {\tiny{$\psi\phi^{-1} Q^{-1}P^{-1} Q \psi^{-1}$}};
\node[right] at ($(a) + (-15:\m)$) {\tiny{$\psi\phi^{-1} Q^{-1}U^{-1} Q $}};
\node[left,yshift=4pt,xshift=-4pt] at ($(a) + (95:\m)$) {\tiny{$\psi\phi^{-1} Q^{-1}L Q $}};
\node[right] at ($(a) + (275:\m+1)$) {\tiny{$\psi\phi^{-1} Q^{-1}U Q \phi \psi^{-1}$}};
\node[yshift=2pt,xshift=2pt](phiD_up) at ($(a) + (210:\m)$) {};
\node(phiD_up_text) at ($(a) + (220:\m+2)$) {\tiny{$\psi\phi^{-1}D$}};
\draw [->,thin] (phiD_up_text) -- (phiD_up);
\node[xshift=2pt, yshift=2pt](phi_up) at ($(a) + (260:\m)$) {};
\node(phi_up_text) at ($(a) + (245:\m+2)$) {\tiny{$\psi\phi^{-1}$}};
\draw [->,thin] (phi_up_text) -- (phi_up);
\node[below,xshift=-12pt ] at ($(a) + (185:\m+1)$) {\tiny{$\psi\phi^{-1} Q^{-1}L Q \phi \psi^{-1}$}};
\node(J_up) at (125:\L) {};
\node(J_up_text) at (135:\L+2) {\tiny{$\psi \phi_{-}^{-1}Q^{-1}JQ\phi_{+}\psi^{-1}$}};
\draw [->,thin] (J_up_text) -- (J_up);

\node[xshift=2pt,yshift=-2pt](phiD_down) at ($(ai) + (150:\m)$) {};
\node(phiD_down_text) at ($(ai) + (145:\m+2)$) {\tiny{$\psi\phi^{-1}D$}};
\draw [->,thin] (phiD_down_text) -- (phiD_down);
\node(phi_down) at ($(ai) + (110:\m)$) {};
\node(phi_down_text) at ($(ai) + (115:\m+2)$) {\tiny{$\psi\phi^{-1}$}};
\draw [->,thin] (phi_down_text) -- (phi_down);

\node[right] at ($(ai) + (85:\m+1)$) {\tiny{$\psi\phi^{-1} Q^{-1}U Q \phi \psi^{-1}$}};
\node[above, xshift=-12pt] at ($(ai) + (175:\m+1)$) {\tiny{$\psi\phi^{-1} Q^{-1}L Q \phi \psi^{-1}$}};
\node[right] at ($(ai) + (0:\m)$) {\tiny{$\psi\phi^{-1} Q^{-1}U^{-1} Q $}};
\node[left,yshift=-4pt] at ($(ai) + (230:\m)$) {\tiny{$\psi\phi^{-1} Q^{-1}L DQ$}};
\node[left,xshift=-4pt] at (-120:\L) {\tiny{$\psi \phi_{-}^{-1}Q^{-1}JQ\phi_{+}\psi^{-1}$}};
\node[right] at ($(bi) + (340:\m)$) {\tiny{$\phi^{-1}$}};
\node[left] at ($(bi) + (225:\m)$) {\tiny{$\psi\phi^{-1} Q^{-1}M Q \psi^{-1}$}};
\node[left] at ($(bi) + (295:\m+1)$) {\tiny{$\psi\phi^{-1} Q^{-1}M Q \phi \psi^{-1}$}};
\node[right] at ($(bi) + (25:\m+1)$) {\tiny{$\psi\phi^{-1} Q^{-1}P Q \phi \psi^{-1}$}};
\node (P_down_text) at ($(bi)+(40:\m+3)$) {\tiny{$\psi\phi^{-1} Q^{-1}P^{-1} Q \psi^{-1}$}};
\node(P_down) at ($(bi)+(80:\m)$) {};
\draw [->,thin] (P_down_text) -- (P_down);

\node[above] {$0$};
\node[above,yshift=2pt] at (a) {\small$\alpha$};
\node[below] at (ai) {\small$\alpha^{-1}$};
\node[above,] at (b) {\small$\beta$};
\node[below] at (bi) {\small$\beta^{-1}$};
\node [above ] at (180:\R) {$-1$};
\node [above ] at (0:\R) {$1$};
\end{tikzpicture}
\caption{A zoomed view of the jump contours and matrices of the final deformation of the {\RHP} in the collisionless shock region. Note that $\phi(z)$ and $\psi(z)$ commute.}
\label{F:cs_4}
\end{figure}

Finally, the choice of the radii of the circles round $\alpha$, $\beta$, $\alpha^{-1}$ and $\beta^{-1}$ must be specified.  It is easily seen from \eqref{E:def_g} that $g'(z)$ vanishes as a square root at each of these points and $g(z) = a + b(z-c)^{3/2}$ for $c = \alpha$, $\beta$, $\alpha^{-1}$ or $\beta^{-1}$ and $a,b$ depend on the choice of $c$. Following the arguments in \eqref{E:scale} we choose the radius of these circles to be proportional to $t^{-2/3}$, of course, under the constraint that the circles should not intersect one another.

The function $m_{\sharp,\text{cs}}(z)$ satisfies a sectionally analytic {\RHP} with
\begin{itemize}
\item the jump conditions described in Figure~\ref{F:cs_4},
\item the asymptotic symmetry condition
  \begin{align}\label{E:cs-sym}
    m_{\sharp,\text{cs}}(0) = m_{\sharp,\text{cs}}(\infty) \begin{pmatrix} 0 & 1 \\ 1 & 0 \end{pmatrix} Q(0) \phi(0) \psi^{-1}(0),
  \end{align}
  and
\item the quadratic normalization condition present in \rhref{rhp:disp}.
\end{itemize}

\subsection{Transition Region}\label{S:transition}
Similar to the case for the KdV equation (see \cite{TOD_KdV}), the deformations in the collisionless shock region extends the values of $(n,t)$ for which there exists a well-behaved {\RHP} {beyond} the dispersive region. However, this is not asymptotically reliable as we approach the Painlev\'{e} region: as $|n| - t$ decreases, $\alpha$ and $\alpha^{-1}$ approach the singularity of the parametrix $\psi(z)$ (see RH Problem~\ref{rhp:psi}) at $z=-1$. To avoid this issue, we collapse the lensing on $\Sigma_c=\oarc{\alpha, \alpha^{-1}}$ that was introduced in the collisionless shock region (see Figure~\ref{F:cs_1}) . Thus the $LDU$-factorization of the jump matrix $J(z;n,t)$ is not used in this region. In order to maintain numerical accuracy, we choose $\alpha$ to ensure that the oscillations are controlled on $\carc{\beta,\beta^{-1}}$. The first deformation $\widetilde{m}(z) \longmapsto m_{1,\text{t}}(z)$ we perform in this region is similar to the first deformation $\widetilde{m}(z) \longmapsto m_{1,\text{cs}}(z)$ in the collisionless shock region, but without the lensing on $\oarc{\alpha, \alpha^{-1}}$. Definition of $m_{1,\text{t}}(z)$ and the jump conditions it satisfies are given in Figure~\ref{F:trans_1}.
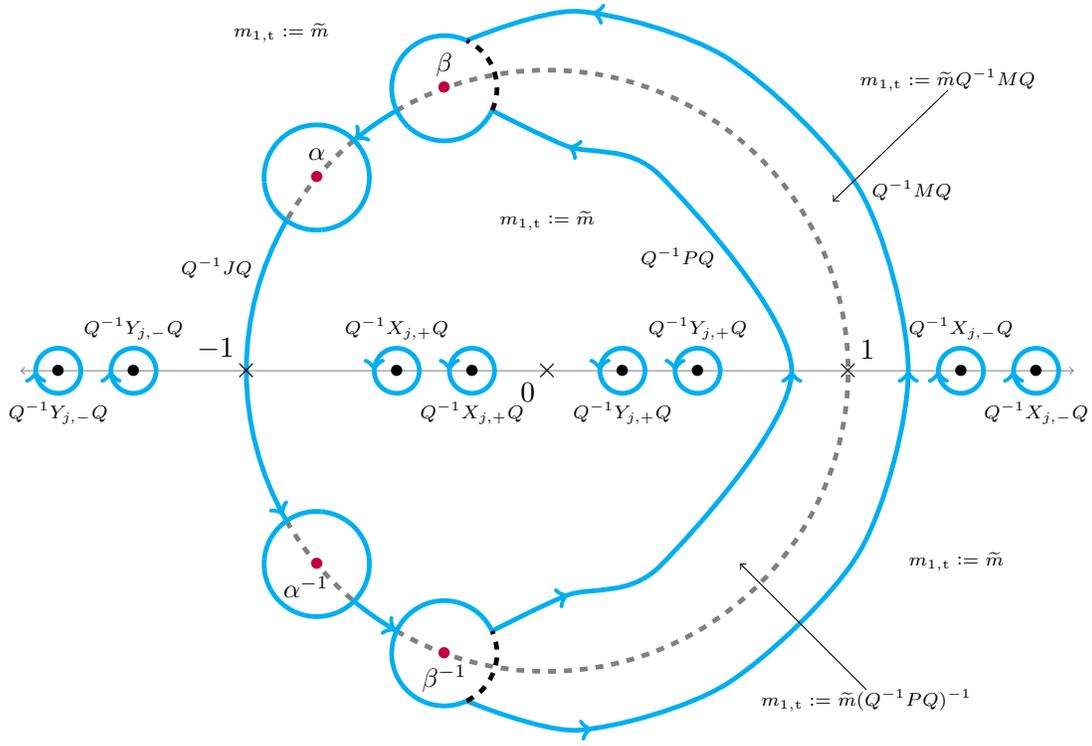
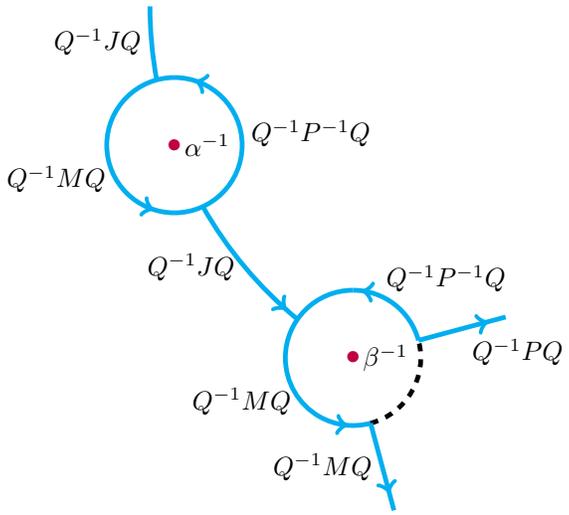
\begin{figure}[htp]
\centering
\subfigure[]{
\begin{tikzpicture}
\def\R{4}
\def\s{0.3}
\def\m{0.7}
\def\yL{\R}
\def\xL{\R}
\def\olens{\R+0.8}
\def\ilens{\R-1}

\coordinate (a) at (140:\R);
\coordinate (ai) at (-140:\R);
\coordinate (b) at (110:\R);
\coordinate (bi) at (-110:\R);
\coordinate (zo) at (120:\R);
\coordinate (zoi) at (240:\R);
\coordinate (o) at (0,0);

\draw[help lines,<->] (-\R-3,0) -- (\R+3,0) coordinate (xaxis);

\coordinate (pt1) at (-2.00239, 3.46272);
\coordinate (pt1i) at (-2.00239, -3.46272);
\coordinate (pt3) at (-0.69187, 3.93971);
\coordinate (pt3i) at (-0.69187, -3.93971);
\coordinate (end) at (180:\R);
\coordinate (begin) at (0:\R);
\coordinate (pt2) at (-2.56903,3.06595);
\coordinate (pt2i) at (-2.56903,-3.06595);
\coordinate (pt4) at (-3.46548,1.99761);
\coordinate (pt4i) at (-3.46548,-1.99761);

\begin{scope}[decoration={
  markings,
  mark=at position 0.98 with {\arrow[line width =1.8pt]{>}}}
  ]
  \tkzCircumCenter(pt1,end,pt2)\tkzGetPoint{O};
  \tkzDrawArc[color=cyan,line width =1.8,postaction=decorate](O,pt1)(pt2);
  \tkzDrawArc[color=cyan,line width =1.8,postaction=decorate](O,pt4)(pt4i);
  \tkzDrawArc[color=cyan,line width =1.8,postaction=decorate](O,pt2i)(pt1i);
\end{scope}
\tkzCircumCenter(pt3,begin,pt3i)\tkzGetPoint{O};
\tkzDrawArc[color=gray,dashed, line width =1.8](O,pt1i)(pt1);
\tkzDrawArc[color=gray,dashed, line width =1.8](O,pt2)(pt4);
\tkzDrawArc[color=gray,dashed, line width =1.8](O,pt4i)(pt2i);

\begin{scope}[decoration={markings,
mark=at position 0.1 with {\arrow[line width =1.8pt]{>}},
mark=at position 0.5 with {\arrow[line width =1.8pt]{>}},
mark=at position 0.9 with {\arrow[line width =1.8pt]{>}}
}
]

\draw[cyan, line width =1.8, postaction=decorate] plot[smooth] coordinates {
   ($(bi) + (-65:\m)$)
   (-85:\olens)
   (-60:\olens)
   (-30:\olens)
   (0:\olens)
   (30:\olens)
   (60:\olens)
   (85:\olens)
   ($(b) +(65:\m)$)
  };

\draw[cyan, line width =1.8, postaction=decorate] plot[smooth] coordinates {
   ($(bi) + (25:\m)$)
   (-85:\ilens)
   (-60:\ilens)
   (0:\ilens+0.25)
   (60:\ilens)
   (85:\ilens)
   ($(b) +(-25:\m)$)
  };
 \end{scope}
\draw[draw,cyan,line width =1.8]
(a) + (275:\m) arc(-85:185:\m);
\draw[draw,cyan,line width =1.8]
(a) + (185:\m) arc(185:275:\m);

\draw[draw,cyan,line width =1.8]
(ai) + (175:\m) arc(-185:85:\m);
\draw[draw,cyan,line width =1.8]
(ai) + (85:\m) arc(85:175:\m);

\draw[draw,dashed,line width =1.8]
(b) + (-25:\m) arc(-25:65:\m);
\draw[draw,cyan,line width =1.8]
(b) + (65:\m) arc(65:335:\m);

\draw[draw,cyan,line width =1.8]
(bi) + (25:\m) arc(25:295:\m);
\draw[draw,dashed,line width =1.8]
(bi) + (-65:\m) arc(-65:25:\m);

\foreach \Point in {(o), (180:\R), (0:\R) }{
\node at \Point{$\times$};
}
\foreach \Point in{(a), (ai),(b),(bi)}{
\node at \Point {\color{purple}\textbullet};
}
\coordinate (zeta1) at (180:\R-2);
\coordinate (zeta2) at (180:\R-3);
\coordinate (zeta3) at (0:\R-2);
\coordinate (zeta4) at (0:\R-3);
\coordinate (zeta1i) at (180:\R+1.5);
\coordinate (zeta2i) at (180:\R+2.5);
\coordinate (zeta3i) at (0:\R+1.5);
\coordinate (zeta4i) at (0:\R+2.5);

\foreach \Point in {(zeta1), (zeta2), (zeta3), (zeta4), (zeta1i), (zeta2i), (zeta3i), (zeta4i)}{
\node at \Point{\textbullet};
}
\begin{scope}[decoration={markings,
mark=at position 0.5 with {\arrow[line width =1.8pt]{>}}
}
]
\draw[draw,cyan,line width =1.8, postaction=decorate]
(zeta1) + (0:\s) arc(0:360:\s);
\draw[draw,cyan,line width =1.8, postaction=decorate]
(zeta2) + (0:\s) arc(0:360:\s);
\draw[draw,cyan,line width =1.8, postaction=decorate]
(zeta3) + (0:\s) arc(0:360:\s);
\draw[draw,cyan,line width =1.8, postaction=decorate]
(zeta4) + (0:\s) arc(0:360:\s);
\draw[draw,cyan,line width =1.8, postaction=decorate]
(zeta1i) + (0:\s) arc(0:-360:\s);
\draw[draw,cyan,line width =1.8, postaction=decorate]
(zeta2i) + (0:\s) arc(0:-360:\s);
\draw[draw,cyan,line width =1.8, postaction=decorate]
(zeta3i) + (0:\s) arc(0:-360:\s);
\draw[draw,cyan,line width =1.8, postaction=decorate]
(zeta4i) + (0:\s) arc(0:-360:\s);
\end{scope}
\node[above,yshift=\s+7] at (zeta1) {\tiny{$Q^{-1}X_{j,+}Q$}};
\node[below,yshift=-\s-7] at (zeta2) {\tiny{$Q^{-1}X_{j,+}Q$}};
\node[above,yshift=\s+7] at (zeta1i) {\tiny{$Q^{-1}Y_{j,-}Q$}};
\node[below,yshift=-\s-7] at (zeta2i) {\tiny{$Q^{-1}Y_{j,-}Q$}};
\node[above,yshift=\s+7] at (zeta3) {\tiny{$Q^{-1}Y_{j,+}Q$}};
\node[below,yshift=-\s-7] at (zeta4) {\tiny{$Q^{-1}Y_{j,+}Q$}};
\node[above,yshift=\s+7] at (zeta3i) {\tiny{$Q^{-1}X_{j,-}Q$}};
\node[below,yshift=-\s-7] at (zeta4i) {\tiny{$Q^{-1}X_{j,-}Q$}};
\node[right] (Mtext) at (30:\olens) {\tiny{$Q^{-1} MQ$}};
\node[left,xshift=-6pt] (Ptext) at (30:\ilens) {\tiny{$Q^{-1}P Q$}};
\node[below left] {$0$};
\node[left] at(160:\R) {\tiny{$Q^{-1}JQ$}};

\node[above,yshift=2pt] at (a) {\small$\alpha$};
\node[below,xshift=-4pt] at (ai) {\small$\alpha^{-1}$};
\node[above,] at (b) {\small$\beta$};
\node[below] at (bi) {\small$\beta^{-1}$};
\node [above left] at (180:\R) {$-1$};
\node [above right] at (0:\R) {$1$};
\node at (90:2) {\tiny{$m_{1,\text{t}}\defeq \widetilde{m}$}};

\node (M) at (30:\R+0.2) {};
\coordinate (Mtext) at (35:\R+2.5);
\node[yshift=4pt] at (Mtext) {\tiny{$m_{1,\text{t}}\defeq\widetilde{m}Q^{-1} MQ$}};
\draw [->,thin ] (Mtext) -- (M);

\node (P) at (-45:\R-0.6) {};
\coordinate (Ptext) at (-45:\R+2);
\node[yshift=-4pt] at (Ptext) {\tiny{$m_{1,\text{t}}\defeq\widetilde{m} (Q^{-1}P Q)^{-1}$}};
\draw [->,thin] (Ptext) -- (P);

\node[above] at (130:\R+1.5) {\tiny{$m_{1,\text{t}} \defeq \widetilde{m}$}};
\node at (-25:\R+2) {\tiny{$m_{1,\text{t}} \defeq \widetilde{m}$}};
\end{tikzpicture}
}\\
\subfigure[]{
\begin{tikzpicture}[scale=0.6]
\def\R{9}
\def\m{1.5}
\coordinate (begin) at (180:\R);
\coordinate (a) at (200:\R);
\coordinate (b) at (240:\R);
\coordinate (end) at (255:\R);
\coordinate (zo) at (-6.47862,-6.2472); 
\coordinate (a1) at (-8.85102, -1.63079);
\coordinate (a2) at (-7.82853, -4.44006);
\coordinate (left) at (-5.73202, -6.93858);
\coordinate (right) at (-3.14298, -8.43337);
\node[right] at (a) {\small{$\alpha^{-1}$}};
\node[right] at (b) {\small{$\beta^{-1}$}};
\tkzCircumCenter(a,zo,b)\tkzGetPoint{O};
\begin{scope}[decoration={markings,
mark=at position 0.9 with {\arrow[line width =1.8pt]{>}}
}
]
\tkzDrawArc[color=cyan, line width =1.8](O,begin)(a1); 
\tkzDrawArc[color=cyan, line width =1.8,postaction=decorate](O,a2)(left);
\draw[draw,cyan,line width =1.8, postaction=decorate]
(a) + (90:\m) arc(90:270:\m);
\draw[draw,cyan,line width =1.8, postaction=decorate]
(a) + (270:\m) arc(270:450:\m);
\draw[draw,cyan,line width =1.8, postaction=decorate]
(b) + (15:\m) arc(15:90:\m);
\draw[draw,cyan,line width =1.8, postaction=decorate]
(b) + (90:\m) arc(90:285:\m);
\end{scope}

\draw[draw,dashed,line width =1.8]
(b) + (285:\m) arc(285:375:\m);

\foreach \Point in{(a), (b)}{
\node at \Point {\color{purple}\textbullet};
}
\begin{scope}[decoration={markings,
mark=at position 0.8 with {\arrow[line width =1.8pt]{>}}
}
]
\draw [cyan, line width =1.8,postaction=decorate] ($(b) + (15:\m)$)--($(b) + (15:\m+2)$);
\draw [cyan, line width =1.8,postaction=decorate] ($(b) + (285:\m)$)--($(b) + (285:\m+2)$);
\end{scope}
\node[left] at (185:\R) {\small{$Q^{-1}JQ$}};
\node[left] at ($(a) + (210:\m)$) {\small{$Q^{-1}MQ$}};
\node[right] at ($(a) + (10:\m)$) {\small{$Q^{-1}P^{-1}Q$}};
\node[left] at (220:\R) {\small{$Q^{-1}JQ$}};
\node[below right] at ($(b) + (15:\m+1)$) {\small{$Q^{-1}PQ$}};
\node[right,yshift=6pt] at ($(b) + (70:\m)$) {\small{$Q^{-1}P^{-1}Q$}};
\node[ left] at ($(b) + (285:\m+1)$) {\small{$Q^{-1}MQ$}};
\node[ left] at ($(b) + (220:\m)$) {\small{$Q^{-1}MQ$}};
\end{tikzpicture}
}
\caption{The initial deformation of the {\RHP} in the collisionless shock region. (a) Definition of $m_{1,\text{t}}(z)$ with its jump contours and matrices, (b) The initial jump contours and matrices near $\alpha^{-1}$, $\beta^{-1}$.}
\label{F:trans_1}
\end{figure}

Our second deformation involves conjugation by $\phi(z)$ as in the collisionless shock region. We define $m_{2,\text{t}}(z)$ by $m_{2,\text{t}}(z)=m_{1,\text{t}}(z)\phi(z)$, where $\phi(z)$ is defined as in \eqref{e:phi} {but we modify the definition $g(z)$ below}. The jump contours and the jump matrices for $m_{2,\rm{t}}(z)$ near $\alpha^{-1}$ and $\beta^{-1}$ are presented in Figure~\ref{F:trans_2}. We will now show that collapsing the lensing on $\Sigma_c$ and conjugating by $\phi(z)$ results in a well-behaved {\RHP} when the values $(n,t)$ lie in this region. In the analysis that follows, we omit the factors that come from conjugation by $Q(z)$ to simplify the notation. As in the collisionless shock region, doing this has no effect on the result.

Let $n = t - t^{1/3}r(t)$, where $r(t)$ satisfies
\begin{equation*}
\lim_{t\to\infty}\frac{r(t)}{(\log t)^{2/3}}= 0,~\text{ and }~\lim_{t\to\infty} r(t) = \infty.
\end{equation*}
Given a positive bounded function $f(n,t)$ we choose $\alpha$ and $\beta$ by enforcing (recall that $\Re \alpha + \Re \beta = 2 \lambda_0$)
\begin{equation}\label{E:alpha_transition}
\frac{f(n,t)}{t}= -i \int_{-1}^{\alpha}\frac{1}{p^2}\sqrt{\left(p-\alpha\right)\left(p-\alpha^{-1}\right)\left(p-\beta\right)\left(p-\beta^{-1}\right)}\, dp.
\end{equation}
In light of \eqref{E:g-plus}, this is equivalent to the conditions
\begin{equation}
\begin{aligned}\label{E:cs-offd1}
t\big(g^+(z) + g^-(z)\big) &= i f(n,t)~\text{ for }~z\in\oarc{\beta,\alpha},\\
t\big(g^+(z) + g^-(z)\big) &= -i f(n,t)~\text{ for }~z\in\oarc{\alpha^{-1},\beta^{-1}}.
\end{aligned}
\end{equation}
By adjusting $f$, \eqref{E:alpha_transition} can be solved\footnote{In practice we use $f(n,t) \equiv 2$.  Other choices may result in more efficient computations.} for $\alpha$ since the right hand side is a monotone function of $\Re \alpha$ under the constraint $\Re \alpha + \Re \beta = 2\Re \lambda_0$.

\begin{figure}[htp]
\begin{tikzpicture}[scale=0.6]
\def\R{9}
\def\m{1.5}
\coordinate (begin) at (180:\R);
\coordinate (a) at (200:\R);
\coordinate (b) at (240:\R);
\coordinate (end) at (255:\R);
\coordinate (zo) at (-6.47862,-6.2472); 
\coordinate (a1) at (-8.85102, -1.63079);
\coordinate (a2) at (-7.82853, -4.44006);
\coordinate (amid1) at ($(a1)!.5!(a)$);
\coordinate (amid2) at ($(a)!.5!(a2)$);
\coordinate (bmid) at ($(left)!.5!(b)$);
\coordinate (left) at (-5.73202, -6.93858);
\coordinate (right) at (-3.14298, -8.43337);
\node[right] at (a) {\small{$\alpha^{-1}$}};
\node[right] at (b) {\small{$\beta^{-1}$}};
\tkzCircumCenter(a,zo,b)\tkzGetPoint{O};
\begin{scope}[decoration={markings,
mark=at position 0.9 with {\arrow[line width =1.8pt]{>}},
mark=at position 0.45 with {\arrow[line width =1.8pt]{>}}
}
]
\tkzDrawArc[color=cyan, line width =1.8,postaction=decorate](O,begin)(a); 
\end{scope}

\begin{scope}[decoration={markings,
mark=at position 0.9 with {\arrow[line width =1.8pt]{>}}
}
]
\tkzDrawArc[color=cyan, line width =1.8,postaction=decorate](O,a)(a2);
\tkzDrawArc[color=cyan, line width =1.8,postaction=decorate](O,a2)(left);
\tkzDrawArc[color=cyan, line width =1.8,postaction=decorate](O,left)(b);
\draw[draw,cyan,line width =1.8, postaction=decorate]
(a) + (90:\m) arc(90:270:\m);
\draw[draw,cyan,line width =1.8, postaction=decorate]
(a) + (270:\m) arc(270:450:\m);
\draw[draw,cyan,line width =1.8, postaction=decorate]
(b) + (15:\m) arc(15:90:\m);
\draw[draw,cyan,line width =1.8, postaction=decorate]
(b) + (90:\m) arc(90:285:\m);
\end{scope}

\draw[draw,dashed,line width =1.8]
(b) + (285:\m) arc(285:375:\m);

\foreach \Point in{(a), (b)}{
\node at \Point {\color{purple}\textbullet};
}
\begin{scope}[decoration={markings,
mark=at position 0.8 with {\arrow[line width =1.8pt]{>}}
}
]
\draw [cyan, line width =1.8,postaction=decorate] ($(b) + (15:\m)$)--($(b) + (15:\m+2)$);
\draw [cyan, line width =1.8,postaction=decorate] ($(b) + (285:\m)$)--($(b) + (285:\m+2)$);
\end{scope}
\node[left] at (185:\R) {\small{$\phi_{-}^{-1}Q^{-1}JQ\phi_{+}$}};
\node[left] at ($(a) + (210:\m)$) {\small{$\phi^{-1}Q^{-1}MQ\phi$}};
\node[right] at ($(a) + (10:\m)$) {\small{$\phi^{-1}Q^{-1}P^{-1}Q\phi$}};
\node[left] at (220:\R) {\small{$\phi_{-}^{-1}Q^{-1}JQ\phi_{+}$}};
\node[below right] at ($(b) + (15:\m+1)$) {\small{$\phi^{-1}Q^{-1}P^{-1}Q\phi$}};
\node[right,yshift=6pt] at ($(b) + (70:\m)$) {\small{$\phi^{-1}Q^{-1}P^{-1}Q\phi$}};
\node[ left] at ($(b) + (285:\m+1)$) {\small{$\phi^{-1}Q^{-1}MQ\phi$}};
\node[ left] at ($(b) + (220:\m)$) {\small{$\phi^{-1}Q^{-1}MQ\phi$}};
\node(a_up_text) at ($(a) + (165:\m+2)$) {\small{$\phi^{-1}_{-}\phi_{+}$}};
\draw [->,thin] (a_up_text) -- (amid1);
\node(a_down_text) at ($(b) + (90:\m+2)$) {\small{$\phi^{-1}_{-}\phi_{+}$}};
\draw [->,thin] (a_down_text) -- (amid2);
\draw [->,thin] (a_down_text) -- (bmid);
\end{tikzpicture}
\caption{The jump contours and matrices for $m_{2,\text{t}}(z)$ near $\alpha^{-1}$, $\beta^{-1}$.}
\label{F:trans_2}
\end{figure}

Define $h(n,t)$ by
\begin{equation*}
\frac{h(n,t)}{t} =  \int_{\alpha}^{\beta}\frac{1}{p^2}\sqrt{\left(p-\alpha\right)\left(p-\alpha^{-1}\right)\left(p-\beta\right)\left(p-\beta^{-1}\right)}^{+}\, dp,
\end{equation*}
and we have the following properties for $g(z)$:
\begin{equation}\label{E:g_rhp_2}
\begin{aligned}
&g^{+}(z) + g^{-}(z) =
\begin{cases}
\begin{aligned}
{i f(n,t)/t}&\quad\text{if } z\in \Sigma_u, \\
{-i f(n,t)/t}&\quad\text{if } z\in \Sigma_l,
\end{aligned}
\end{cases}\\
&g^{+}(z) - g^{-}(z) = {h(n,t)/t},\phantom{x} z\in \Sigma_c,\\
&g(z) - \frac{1}{2t}\theta(z) \text{ analytic in $z$ for } z \not\in \carc{\beta,\beta^{-1}}= \Sigma_u \cup \Sigma_c \cup \Sigma_l,\\
&g(z)\text{ is bounded at }z=\alpha^{\pm 1} \text{ and } z=\beta^{\pm 1},\\
&g(z) = \tfrac{1}{2}z - \lambda_0\log z + \mathcal{O}\left(z^{-1}\right)\text{ as } z\rightarrow\infty\,,
\end{aligned}
\end{equation}

Now, after applying the conjugation by $\phi(z)$, again assuming for simplicity that $Q(z) = I$, as in the collisionless shock region, the jump matrix on $\mathbb T$ in this region is of the form
\begin{equation}\label{E:t-jump}
\phi_{-}^{-1}(z)J(z)\phi^{+}(z) =
\begin{cases}
\begin{pmatrix} 1 - R(z)R\left(z^{-1}\right) & - R\left(z^{-1}\right) e^{-2tg(z)} \\ R(z) e^{2 tg(z)} & 1 \end{pmatrix},\quad & z \in \oarc{1,\beta},\\[12pt]
\begin{pmatrix} \left[1 - R(z)R\left(z^{-1}\right)\right]e^{t\left(g^{+}(z) - g^{-}(z)\right)} & - R\left(z^{-1}\right) e^{-if(n,t)} \\ R(z) e^{if(n,t)} & e^{t\left(-g^{+}(z) + g^{-}(z)\right)} \end{pmatrix},\quad & z \in \oarc{\beta,\alpha},\\[12pt]
\begin{pmatrix} \left[1 - R(z)R\left(z^{-1}\right)\right]e^{h(n,t)} & - R\left(z^{-1}\right) e^{-t \left(g^{+}(z) + g^{-}(z)\right)} \\ R(z) e^{t\left(g^{+}(z) + g^{-}(z)\right)} & e^{-h(n,t)} \end{pmatrix},\quad & z \in \oarc{\alpha,\alpha^{-1}},\\[12pt]
\begin{pmatrix} \left[1 - R(z)R\left(z^{-1}\right)\right]e^{t\left(g^{+}(z) - g^{-}(z)\right)} & - R\left(z^{-1}\right) e^{if(n,t)} \\ R(z) e^{-if(n,t)} & e^{t\left(-g^{+}(z) + g^{-}(z)\right)} \end{pmatrix},\quad & z \in \oarc{\alpha^{-1},\beta^{-1}},\\[12pt]
\begin{pmatrix} 1 - R(z)R\left(z^{-1}\right) & - R\left(z^{-1}\right) e^{-2tg(z)} \\ R(z) e^{2t g(z)} & 1 \end{pmatrix},\quad & z \in \oarc{\beta^{-1}, 1}.
\end{cases}
\end{equation}
Note that \eqref{E:cs-offd1}, along with the fact that
\begin{equation*}
t\big|g^+(z) + g^-(z)\big| \leq |f(n,t)|~\text{ for }~z\in\oarc{\alpha,\alpha^{-1}},
\end{equation*}
implies that oscillations in the off-diagonal entries of the jump matrix are controlled on $\oarc{\beta,\beta^{-1}}$. To analyze the situation concerning the diagonal entries, we find that $t\big|g^+(z) - g^-(z)\big| \leq |h(n,t)|$ for $z\in\oarc{\beta,\alpha}$. Using the change of variables $z(k) = K^{-1}(k)$ given in \eqref{E:cov}, one can see that there exists a constant $C>1$ such that
\begin{equation*}
\frac{1}{C} \leq \frac{f(n,t)}{t\rho_0^3} + \frac{h(n,t)}{t\rho_0^3} \leq C
\end{equation*}
in this region, where $\rho_0 = \Im z_0$ as before. Now, note that
\begin{equation*}
t\rho_0^3 \sim t \sqrt{8}\left( 1- \dfrac{t-t^{1/3}r(t)}{t} \right)^{3/2} = t\sqrt{8}t^{-1}r(t)^{3/2} =\sqrt{8}r(t)^{3/2} \to \infty,~\text{ as }~t\to\infty,
\end{equation*}
by the assumptions on $r(t)$. This implies that
\begin{equation*}
\frac{f(n,t)}{t\rho_0^3}\to 0~\text{ as }~t\to\infty,
\end{equation*}
and \eqref{E:alpha_transition} is solvable for sufficiently large $t$. Furthermore,
\begin{equation*}
h(n,t)\sim C t\rho_0^3 \to \infty~\text{ as }~t\to\infty,
\end{equation*}
which implies that the $(2,2)$-entries of the jump matrix given in \eqref{E:cs-jump} all tend to $0$ as $t\to\infty$ in this region. We are now left with the analysis of the $(1,1)$-entries of the jump matrix. We examine
\begin{equation*}
\left[1-R(z)R\left(z^{-1}\right) \right]e^{h(n,t)}~\text{ on }~\oarc{\beta,\beta^{-1}},
\end{equation*}
using the change of variables $z(k)=K^{-1}(k)$. Observe that
\begin{equation*}
\left[1-R\big(z(k)\big)R\big(z(k)^{-1}\big) \right]e^{h(n,t)} = \nu\big(z(k)+1\big)^2\big(1 + \mathcal{O}(z(k)+1)^2\big)e^{h(n,t)} = \nu\rho_0^2\left(1+\mathcal{O}\left(\rho_0^2\right) \right)e^{h(n,t)}
\end{equation*}
uniformly in $k$ for $A = K(\alpha)$ and $B = K(\beta)$ bounded as $t\to\infty$ and $K$ is defined in \eqref{E:cov}. Thus we are led to examine the behavior of $\rho_0^2 e^{h(n,t)}$ for large values of $t>0$. Note that $h(n,t)=\mathcal{O}(r(t)^{3/2})$, $\rho_0^2 = \mathcal{O}(t^{-2/3}r(t))$, and that for any $c>0$ there exists $T$ such that $r(t)^{3/2} \leq c \log(t)$ for $t>T$. There exists $C_1$, $C_2  > 0$
\begin{equation*}
\rho_0^2 e^{h(n,t)}\leq C_1 t^{-2/3} r(t) e^{C_2 c \log{t}}\leq C_1 t^{-2/3}t^{C_2 c} r(t)\to 0,~\text{ as }~t\to\infty,
\end{equation*}
for $c$ chosen sufficiently small. This implies that the $(1,1)$-entries of the jump matrix all tend to zero. Therefore the entries of the jump matrix remain bounded and this gives us an asymptotically well-behaved {\RHP} without any lensing on $\carc{\beta,\beta^{-1}}$.

We now proceed with the final deformation in this region. We define
\begin{equation*}
m_{\sharp,\text{t}}(z) =
\begin{cases}
m_{2,\text{t}}(z)\phi^{-1}(z),\quad &\text{inside the circles centered at $\alpha^{\pm 1}$ and $\beta^{\pm1}$},\\
m_{2,\text{t}}(z), \quad &\text{otherwise}.
\end{cases}
\end{equation*}
The jump contours and the jump matrices for the final {\RHP} for $m_{\sharp,\text{t}}(z)$ is given in Figure~\ref{F:trans_3}. The scaling of the circles around $\alpha$, $\beta$, $\alpha^{-1}$ and $\beta$ is the same as in the collisionless shock region: $\propto t^{-2/3}$.
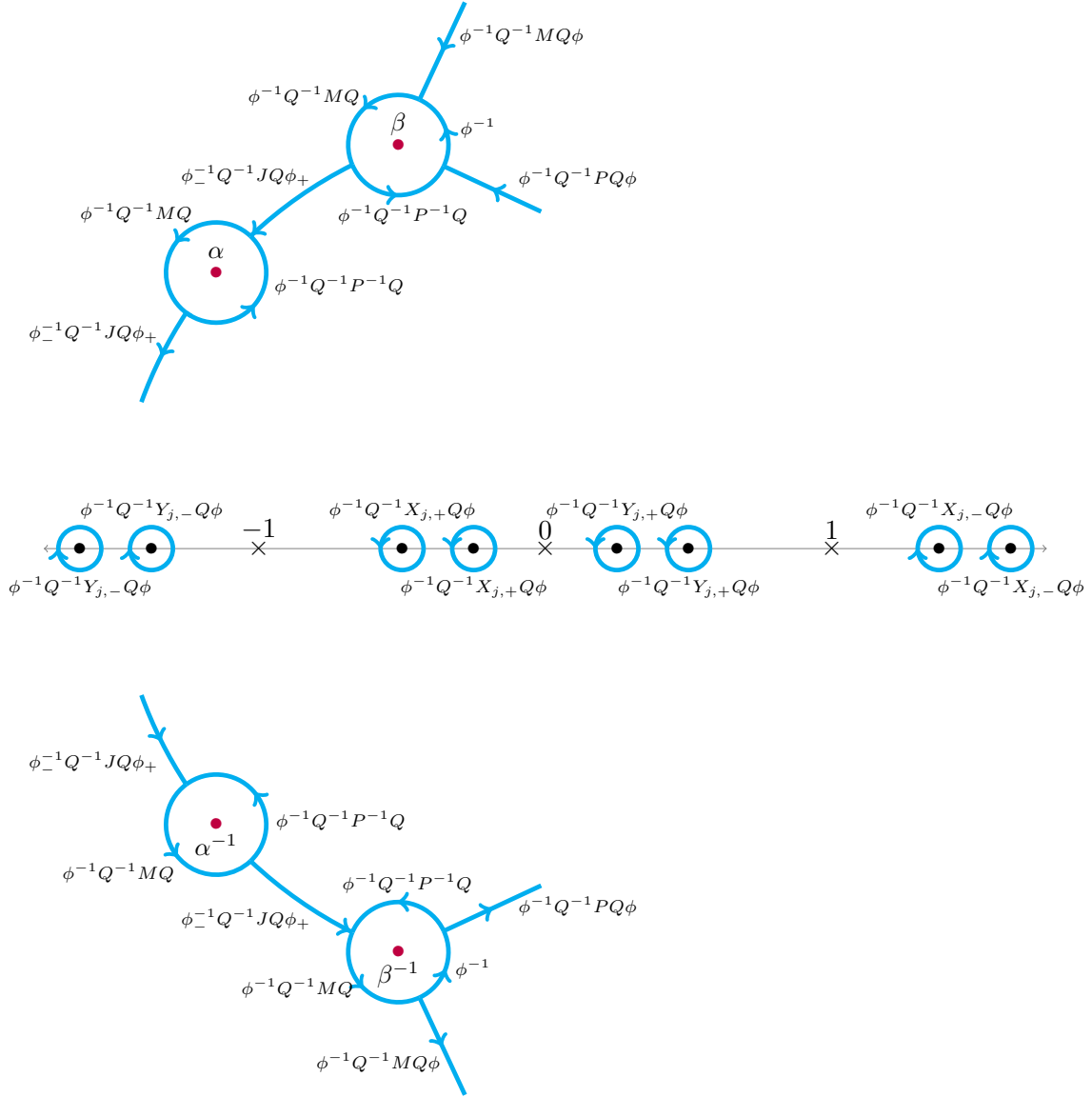
\begin{figure}[htp]
\centering
\begin{tikzpicture}
\def\R{4}
\def\L{\R+2}
\def\s{0.3}
\def\m{0.7}

\coordinate (a) at (140:\L);
\coordinate (ai) at (-140:\L);
\coordinate (b) at (110:\L);
\coordinate (bi) at (-110:\L);
\coordinate (zo) at (120:\L);
\coordinate (zoi) at (240:\L);
\coordinate (o) at (0,0);
\coordinate (to_b) at (-1.38149, 5.83879);
\coordinate (from_b) at (-2.69482, 5.36078);
\coordinate (to_a) at (-4.1158, 4.3658);
\coordinate (from_a) at (-5.01417, 3.29516);
\coordinate (from_bi) at (-1.38149, -5.83879);
\coordinate (to_bi) at (-2.69482, -5.36078);
\coordinate (from_ai) at (-4.1158, -4.3658);
\coordinate (to_ai) at (-5.01417, -3.29516);
\coordinate (top_M) at ($(b) + (65:\m)$);
\coordinate (top_P) at ($(b) + (-25:\m)$);
\coordinate (top_L) at ($(a) + (185:\m)$);
\coordinate (top_U) at ($(a) + (275:\m)$);
\coordinate (bottom_M) at ($(bi) + (295:\m)$);
\coordinate (bottom_P) at ($(bi) + (25:\m)$);
\coordinate (bottom_L) at ($(ai) + (175:\m)$);
\coordinate (bottom_U) at ($(ai) + (85:\m)$);
\coordinate (a_end) at ($(o) + (160:\L)$);
\coordinate (ai_begin) at ($(o) + (-160:\L)$);
\draw[help lines,<->] (-\R-3,0) -- (\R+3,0) coordinate (xaxis);

\coordinate (pt1) at (-2.00239, 3.46272);
\coordinate (pt2) at (-2.00239, -3.46272);
\coordinate (pt3) at (-0.69187, 3.93971);
\coordinate (pt4) at (-0.69187, -3.93971);
\coordinate (end) at (180:\R);
\coordinate (begin) at (0:\R);
\begin{scope}[decoration={markings, mark=at position 0.5 with {\arrow[line width =1.8pt]{>}}}]
\tkzDrawArc[color=cyan,line width =1.8,postaction=decorate](b,top_M)(from_b);
\tkzDrawArc[color=cyan,line width =1.8,postaction=decorate](b,from_b)(top_P);
\tkzDrawArc[color=cyan,line width =1.8,postaction=decorate](b,top_P)(top_M);

\tkzDrawArc[color=cyan,line width =1.8,postaction=decorate](a,to_a)(a_end);
\tkzDrawArc[color=cyan,line width =1.8,postaction=decorate](a,from_a)(to_a);

\tkzDrawArc[color=cyan,line width =1.8,postaction=decorate](bi,to_bi)(bottom_M);
\tkzDrawArc[color=cyan,line width =1.8,postaction=decorate](bi,bottom_M)(bottom_P);
\tkzDrawArc[color=cyan,line width =1.8,postaction=decorate](bi,bottom_P)(to_bi);

\tkzDrawArc[color=cyan,line width =1.8,postaction=decorate](ai,to_ai)(from_ai);
\tkzDrawArc[color=cyan,line width =1.8,postaction=decorate](ai,from_ai)(to_ai);

\tkzDrawArc[color=cyan,line width =1.8,postaction=decorate](o,from_a)(a_end);
\tkzDrawArc[color=cyan,line width =1.8,postaction=decorate](o,ai_begin)(to_ai);
\end{scope}

\begin{scope}[decoration={markings, mark=at position 0.98 with {\arrow[line width =1.8pt]{>}}}]
\tkzDrawArc[color=cyan,line width =1.8,postaction=decorate](o,from_b)(to_a);
\tkzDrawArc[color=cyan,line width =1.8,postaction=decorate](o,from_ai)(to_bi);
\end{scope}

\foreach \Point in {(o), (180:\R), (0:\R) }{
\node at \Point{$\times$};
}
\foreach \Point in{(a), (ai),(b),(bi)}{
\node at \Point {\color{purple}\textbullet};
}
\coordinate (zeta1) at (180:\R-2);
\coordinate (zeta2) at (180:\R-3);
\coordinate (zeta3) at (0:\R-2);
\coordinate (zeta4) at (0:\R-3);
\coordinate (zeta1i) at (180:\R+1.5);
\coordinate (zeta2i) at (180:\R+2.5);
\coordinate (zeta3i) at (0:\R+1.5);
\coordinate (zeta4i) at (0:\R+2.5);

\foreach \Point in {(zeta1), (zeta2), (zeta3), (zeta4), (zeta1i), (zeta2i), (zeta3i), (zeta4i)}{
\node at \Point{\textbullet};
}
\begin{scope}[decoration={markings,
mark=at position 0.5 with {\arrow[line width =1.8pt]{>}}
}
]
\draw[draw,cyan,line width =1.8, postaction=decorate]
(zeta1) + (0:\s) arc(0:360:\s);
\draw[draw,cyan,line width =1.8, postaction=decorate]
(zeta2) + (0:\s) arc(0:360:\s);
\draw[draw,cyan,line width =1.8, postaction=decorate]
(zeta3) + (0:\s) arc(0:360:\s);
\draw[draw,cyan,line width =1.8, postaction=decorate]
(zeta4) + (0:\s) arc(0:360:\s);
\draw[draw,cyan,line width =1.8, postaction=decorate]
(zeta1i) + (0:\s) arc(0:-360:\s);
\draw[draw,cyan,line width =1.8, postaction=decorate]
(zeta2i) + (0:\s) arc(0:-360:\s);
\draw[draw,cyan,line width =1.8, postaction=decorate]
(zeta3i) + (0:\s) arc(0:-360:\s);
\draw[draw,cyan,line width =1.8, postaction=decorate]
(zeta4i) + (0:\s) arc(0:-360:\s);

\draw [cyan, line width =1.8,postaction=decorate] ($(b) + (65:\m+1.5)$)--($(b) + (65:\m)$);
\draw [cyan, line width =1.8,postaction=decorate] ($(b) + (-25:\m+1.5)$)--($(b) + (-25:\m)$);



\draw [cyan, line width =1.8,postaction=decorate] ($(bi) + (25:\m)$)--($(bi) + (25:\m+1.5)$);
\draw [cyan, line width =1.8,postaction=decorate] ($(bi) + (295:\m)$)--($(bi) + (295:\m+1.5)$);
\end{scope}

\node[above,yshift=\s+7] at (zeta1) {\tiny{$\phi^{-1}Q^{-1}X_{j,+}Q\phi$}};
\node[below,yshift=-\s-7] at (zeta2) {\tiny{$\phi^{-1}Q^{-1}X_{j,+}Q\phi$}};
\node[above,yshift=\s+7] at (zeta1i) {\tiny{$\phi^{-1}Q^{-1}Y_{j,-}Q\phi$}};
\node[below,yshift=-\s-7] at (zeta2i) {\tiny{$\phi^{-1}Q^{-1}Y_{j,-}Q\phi$}};
\node[below,yshift=-\s-7] at (zeta3) {\tiny{$\phi^{-1}Q^{-1}Y_{j,+}Q\phi$}};
\node[above,yshift=\s+7] at (zeta4) {\tiny{$\phi^{-1}Q^{-1}Y_{j,+}Q\phi$}};
\node[above,yshift=\s+7] at (zeta3i) {\tiny{$\phi^{-1}Q^{-1}X_{j,-}Q\phi$}};
\node[below,yshift=-\s-7] at (zeta4i) {\tiny{$\phi^{-1}Q^{-1}X_{j,-}Q\phi$}};

\node[right,xshift=2pt] at ($(b) + (20:\m)$) {\tiny{$\phi^{-1}$}};
\node[right] at ($(b) + (65:\m+1)$) {\tiny{$\phi^{-1} Q^{-1}M Q \phi$}};
\node[above right] at ($(b) + (-25:\m+1)$) {\tiny{$\phi^{-1} Q^{-1}P Q \phi$}};
\node[left,xshift=-8pt] at ($(b) + (100:\m)$) {\tiny{$\phi^{-1} Q^{-1}M Q $}};
\node[below] at ($(b) + (-85:\m)$) {\tiny{$\phi^{-1} Q^{-1}P^{-1} Q$}};
\node[right] at ($(a) + (-15:\m)$) {\tiny{$\phi^{-1} Q^{-1}P^{-1} Q $}};
\node[left,yshift=4pt,xshift=-4pt] at ($(a) + (95:\m)$) {\tiny{$\phi^{-1} Q^{-1}M Q $}};
\node[left,xshift=-4pt] at (120:\L) {\tiny{$ \phi_{-}^{-1}Q^{-1}JQ\phi_{+}$}};
\node[left,xshift=-2pt] at ($(o)+(150:\L)$) {\tiny{$ \phi_{-}^{-1}Q^{-1}JQ\phi_{+}$}};
\node[left,xshift=-2pt] at ($(o)+(-150:\L)$) {\tiny{$ \phi_{-}^{-1}Q^{-1}JQ\phi_{+}$}};
\node[right] at ($(ai) + (5:\m)$) {\tiny{$\phi^{-1} Q^{-1}P^{-1} Q $}};
\node[left,yshift=-4pt] at ($(ai) + (230:\m)$) {\tiny{$\phi^{-1} Q^{-1}M Q$}};
\node[left,xshift=-4pt] at (-120:\L) {\tiny{$\phi_{-}^{-1}Q^{-1}JQ\phi_{+}$}};
\node[right] at ($(bi) + (340:\m)$) {\tiny{$\phi^{-1}$}};
\node[left] at ($(bi) + (225:\m)$) {\tiny{$\phi^{-1} Q^{-1}M Q $}};
\node[left] at ($(bi) + (295:\m+1)$) {\tiny{$\phi^{-1} Q^{-1}M Q \phi $}};
\node[right,yshift=-2pt] at ($(bi) + (25:\m+1)$) {\tiny{$\phi^{-1} Q^{-1}P Q \phi $}};
\node[above] at ($(bi)+(80:\m)$) {\tiny{$\phi^{-1} Q^{-1}P^{-1} Q $}};

\node[above] {$0$};
\node[above,yshift=2pt] at (a) {\small$\alpha$};
\node[below] at (ai) {\small$\alpha^{-1}$};
\node[above,] at (b) {\small$\beta$};
\node[below] at (bi) {\small$\beta^{-1}$};
\node [above ] at (180:\R) {$-1$};
\node [above ] at (0:\R) {$1$};
\end{tikzpicture}
\caption{A zoomed view of the jump contours and matrices of the final deformation of the {\RHP} (for $m_{\sharp,\text{t}}(z)$) in the transition region.}
\label{F:trans_3}
\end{figure}

The function $m_{\sharp,\text{t}}(z)$ satisfies a sectionally analytic {\RHP} with
\begin{itemize}
\item the jump conditions described in Figure~\ref{F:trans_3},
\item the asymptotic symmetry condition {given in \eqref{E:cs-sym}, and}
\item the quadratic normalization condition present in \rhref{rhp:disp}.
\end{itemize}

\begin{remark}In the collisionless shock region  when $\mathfrak t \defeq - \frac{\log \rho_0^2}{t \rho_0^3} \approx 0$, $\alpha \approx \beta$ and the $g$-function is essentially zero.  This is the degeneration of the collisionless shock region to the dispersive region and hence the two regions overlap.  The transition from the collisionless shock region to the transition region can be seen as $t \goto \infty$, when
\begin{align*}
    \mathfrak t  = - i \int_{B}^{A}\frac{\sqrt{(q-A)(q-B)\left(q+\bar{A}\right)\left(q+\bar{B}\right)}}{(\lambda_0 - i\rho_0 q)^2}^+ \; dq \approx -i \int_B^A \sqrt{(q-A)(q-B)\left(q+\bar{A}\right)\left(q+\bar{B}\right)}^+\,dq,
\end{align*}
and $A \approx 0$ (see Appendix~\ref{A:scaling} for the definition of $A$ and $B$).  This occurs when $ n = t - C_1 t^{1/3}(\log t)^{2/3}$ as $C_1 \downarrow 0$ and it can been seen that the jump matrices in the transition region are regularized for $r(t) = \epsilon (\log t)^{2/3}$ for $\epsilon$ sufficiently small.  Thus the collisionless shock region and the transition region overlap.  In the transition region as $t \goto \infty$ we have
\begin{align*}
\frac{f(n,t)}{t \rho_0^3}  =  -\int_{A}^{0}\frac{\sqrt{(q-A)(q-B)\left(q+\bar{A}\right)\left(q+\bar{B}\right)}}{(\lambda_0 - i\rho_0 q)^2} \, dq \approx \int_{0}^{A} \sqrt{(q-A)(q-B)\left(q+\bar{A}\right)\left(q+\bar{B}\right)}\,dq.
\end{align*}
Thus for $f(n,t)/(t \rho_0^3)$ sufficiently small, this equation is solvable for $A$.  Then note that for $n = t - c_3 t^{1/3}$, $t \rho^3 \sim (2 c_3)^{3/2}$.  Thus choosing $c_3$  sufficiently large, we see that the Painlev\'e region and the transition region overlap and the $g$-function degenerates to zero as $A \goto 0$ (or $\alpha \goto -1$). In this way, our deformations can be seen to bridge all regions. An animation showing the deformations is given in the supplementary material.  \end{remark}

\subsection{Soliton Region}\label{S:soliton}
In this region, we have $|n| > t$.  Note that $t = 0$, $|n| > 0$ is within the region. Therefore, the stationary phase points are no longer on the unit circle. Instead, for $n>0$:
\begin{equation*}
z_0 = -\frac{n}{t} + \sqrt{\left( \frac{n}{t}\right)^2 - 1} \in (-1, 0)\,,
\end{equation*}
and hence $z_0^{-1} \in (-\infty, -1)$. Let $\mathcal{A}_{\nu}=\left\lbrace { re^{i\omega}}\colon \omega\in[0,2\pi),\; r\in(1-\nu,1+\nu)\right\rbrace$, $\nu >0$, be the strip where $R(z)$ is analytic. Note that such a strip exists as a consequence of our exponential decay assumption on $\left(a_n^{0} -1/2, b_n^0\right)$. We have only one deformation to perform. We use the $MP$-factorization $J=MP$ within $\mathcal{A}_\nu$ and deform $\mathbb{T}$ into two contours, $\Gamma_{+}$ and $\Gamma_{-}$. As described in Remark~\ref{r:deformations}, if $z_0$ and $z_0^{-1}$ lie inside $\mathcal{A}_\nu$, $\Gamma_{+}$ and $\Gamma_{-}$ pass through $z_0$ and $z_0^{-1}$, respectively, locally in the directions of steepest descent of $e^{\pm\theta(z;n,t)}$. Away from $z_0$ and $z_0^{-1}$, $\Gamma_{+}$ and $\Gamma_{-}$ are concentric circles that stay close to the inner and outer boundaries of $\mathcal{A}_\nu$, respectively. Once $z_0^{\pm 1}$ leave the strip $\mathcal{A}_{\nu}$ deformed contours $\Gamma_{\pm}$ truncate to the concentric circles placed close to the boundaries of $\mathcal{A}_{\nu}$. Choices of $\Gamma_{\pm}$ are depicted in Figure~\ref{F:s_1}.

As in the dispersive region, we use the jump
\begin{equation*}
M(z;n,t) =\begin{pmatrix} 1 & - R\left(z^{-1}\right)e^{-\theta(z;n,t)} \\ 0 & 1\end{pmatrix}
\end{equation*}
on the contour $\Gamma_{-}$, and
\begin{equation*}
P(z;n,t) =\begin{pmatrix} 1 & 0 \\ R(z)e^{\theta(z;n,t)} & 1\end{pmatrix}
\end{equation*}
on the contour $\Gamma_{+}$. We define the vector-valued function $m_{\sharp,\text{s}}(z)$ in this region as given in Figure~\ref{F:s_1}(a).

There is one final detail left to be covered concerning the signature of the real part of the exponent $\theta(z;nt)$. Define $\zeta_0\in(-1,0)$ by $\Re \theta(z;n,t) = 0$, that is, by
\begin{equation}
\frac{n}{t} = - \frac{\zeta_0 - \zeta_0^{-1}}{2\log \left|\zeta_0\right|}\,.
\label{E:zeta0curve}
\end{equation}
Note that for $n>t>0$, we have $z_0<\zeta_0$ (see \cite{KT_rev}). In light of the discussion in Remark~\ref{r:deformations}, we arrange $\Gamma^{\pm} \subset \mathcal{A}_{\nu}$ in a way that they do not intersect the curve given in \eqref{E:zeta0curve}. This ensures that the exponents in $P$ and $M$ have negative real parts on their domains. Consequently, $P$ and $M$ tend to the identity matrix exponentially fast as $t\to\infty$.

The strip of analyticity, jump matrices, and jump contours for the {\RHP} satisfied by $m_{\sharp,\text{s}}(z)$ in this region are presented in Figure~\ref{F:s_1}(b) in the absence of poles $\{\zeta_j\}$. When poles are present, we make sure that $\Gamma_{\pm}$ do not intersect with the circles, $D_j^{\pm}$, around each pole, $\zeta_j^{\pm 1}$, $j=1,2,\dots,N$. We use the jumps that include $X_j^{\pm}$ and $Y_j^{\pm}$ (defined in \eqref{E:X} and \eqref{E:Y}) on $D_j^{\pm}$ in presence of the poles precisely as described in Section~\ref{S:dispersive}.
\begin{figure}[htp]
\centering
\subfigure[]{
\begin{tikzpicture}
\def\R{4}
\def\rin{3.6}
\def\rout{4.5}
\def\s{0.3}
\def\yL{\R}
\def\xL{\R}
\def\strip{1.1}
\def\olens{\R+0.8}
\def\ilens{\R-1}
\def\snapR{0.25}
\def\epsilon{0.15}

\draw[purple, dotted, line width = 1, fill=gray!10] (0,0) circle (\R+\strip);
\draw[purple, dotted, line width = 1, fill=white] (0,0) circle (\R-\strip);

\coordinate (zo) at (180:\rin);
\coordinate (zoi) at (180:\rout);
\coordinate (zetao) at (-1.6,0);
\coordinate (zetaoi) at (-5.8,0);
\coordinate (o) at (0,0);
\coordinate (snapUpperzo) at (170:\R-\strip+\epsilon);
\coordinate (snapLowerzo) at (190:\R-\strip+\epsilon);
\coordinate (snapUpperzoi) at (175:\R+\strip-\epsilon);
\coordinate (snapLowerzoi) at (185:\R+\strip-\epsilon);
\coordinate (innerEdgeStrip) at (180:\R-\strip+\epsilon);
\coordinate (outerEdgeStrip) at (180:\R+\strip+\epsilon);
\coordinate (snapInnerCircleCenter) at (180:\rin-\snapR);
\coordinate (snapOuterCircleCenter) at (180:\rout+\snapR);
\draw[help lines,<->] (-\R-3,0) -- (\R+3,0) coordinate (xaxis);

\begin{scope}[very thick,decoration={
     markings,
   mark=at position 0.33 with {\arrow[line width =1.8pt]{>}},
  mark=at position 0.66 with {\arrow[line width =1.8pt]{>}}}
  ]
\path[draw,cyan,line width =1.8,postaction=decorate]
(\R,0) arc(0:360:\R);
\end{scope}
\begin{scope}[very thick,decoration={
     markings,
   mark=at position 0.16 with {\arrow[line width =1.8pt]{>}},
   mark=at position 0.33 with {\arrow[line width =1.8pt]{>}},
   mark=at position 0.66 with {\arrow[line width =1.8pt]{>}},
   mark=at position 0.82 with {\arrow[line width =1.8pt]{>}}}
  ]
\path[draw,dashed,black,line width =1.8, postaction=decorate]
(\R-\strip+\epsilon,0) arc(0:170.3:\R-\strip+\epsilon);
\path[draw,dashed,black,line width =1.8, postaction=decorate]
(189.7:\R-\strip+\epsilon) arc(189.70:360:\R-\strip+\epsilon);
\path[draw,dashed,black,line width =1.8, postaction=decorate]
(\R+\strip-\epsilon,0) arc(0:175.3:\R+\strip-\epsilon);
\path[draw,dashed,black,line width =1.8, postaction=decorate]
(184.7:\R+\strip-\epsilon) arc(184.7:360:\R+\strip-\epsilon);
\end{scope}
\draw[dashed, black, line width =1.8, postaction=decorate] plot[smooth] coordinates {
  (snapUpperzo)
  ($(snapInnerCircleCenter) + (\snapR,+\snapR)$)
  ($(snapInnerCircleCenter) + (90:\snapR)$)
  ($(snapInnerCircleCenter) + (120:\snapR)$)
  ($(snapInnerCircleCenter) + (150:\snapR)$)
  ($(snapInnerCircleCenter) + (170:\snapR)$)
  ($(snapInnerCircleCenter) + (175:\snapR)$)
  (zo)
  ($(snapInnerCircleCenter) + (185:\snapR)$)
  ($(snapInnerCircleCenter) + (190:\snapR)$)
  ($(snapInnerCircleCenter) + (220:\snapR)$)
  ($(snapInnerCircleCenter) + (240:\snapR)$)
  ($(snapInnerCircleCenter) + (270:\snapR)$)
  ($(snapInnerCircleCenter) + (\snapR,-\snapR)$)
  (snapLowerzo)
   };

\draw[dashed, black, line width =1.8, postaction=decorate] plot[smooth] coordinates {
  (snapUpperzoi)
  ($(snapOuterCircleCenter) + (-\snapR+\epsilon/2,+\snapR)$)
  ($(snapOuterCircleCenter) + (90:\snapR)$)
  ($(snapOuterCircleCenter) + (60:\snapR)$)
  ($(snapOuterCircleCenter) + (30:\snapR)$)
  ($(snapOuterCircleCenter) + (10:\snapR)$)
  ($(snapOuterCircleCenter) + (5:\snapR)$)
  (zoi)
  ($(snapOuterCircleCenter) + (-5:\snapR)$)
  ($(snapOuterCircleCenter) + (-10:\snapR)$)
  ($(snapOuterCircleCenter) + (-30:\snapR)$)
  ($(snapOuterCircleCenter) + (-60:\snapR)$)
  ($(snapOuterCircleCenter) + (-90:\snapR)$)
  ($(snapOuterCircleCenter) + (-\snapR+\epsilon/2,-\snapR)$)
  (snapLowerzoi)
   };

\foreach \Point in {(zo), (zoi), (o),(zetao),(zetaoi) (180:\R), (0:\R) }{
\node at \Point{$\times$};
}
\coordinate (zeta1) at (180:\R-2);
\coordinate (zeta2) at (180:\R-3);
\coordinate (zeta3) at (0:\R-2);
\coordinate (zeta4) at (0:\R-3);
\coordinate (zeta1i) at (180:\R+1.5);
\coordinate (zeta2i) at (180:\R+2.5);
\coordinate (zeta3i) at (0:\R+1.5);
\coordinate (zeta4i) at (0:\R+2.5);

\draw[dotted, black, line width = 0.8](o) to [out=100,in=0] (-0.5,0.75) to [out=180, in=90] (zetao);
\draw[dotted, black, line width = 0.8](zetao) to [out=270,in=180] (-0.5,-0.75) to [out=0, in=260] (o);
\draw[dotted, black, line width = 0.8](zetaoi) to [out=100,in=315] (-\R-3,\R-2);
\draw[dotted, black, line width = 0.8](zetaoi) to [out=260,in=45] (-\R-3,-\R+2);
\node[below] (Mtext) at (90:\R+\strip-\epsilon) {\small{$\Gamma_{-}$}};
\node[above] (Ptext) at (90:\R-\strip+\epsilon) {\small{$\Gamma_{+}$}};
\node[below right] {$0$};
\node [below,yshift=2pt] at(270:\R) {\color{gray}\small{\color{cyan}$\mathbb{T}$}};
\node[above left,xshift=5pt] at (zo) {$z_0$};
\node[above right, xshift=-5pt] at (zoi) {$z_0^{-1}$};
\node [below left, xshift=4pt] at (180:\R) {$-1$};
\node [below right] at (0:\R) {$1$};
\node [below, xshift=-2pt] at (zetao) {$\zeta_0$};
\node [below, xshift=8pt] at (zetaoi) {$\zeta_0^{-1}$};
\node [above] at (-0.8, 0) {\tiny{\color{black}$\Re \theta(z) > 0$}};
\node at (0, -1.5) {\tiny{\color{black}$\Re \theta(z) < 0$}};
\node [above left] at (zetaoi) {\tiny{\color{black}$\Re \theta(z) < 0$}};
\node [above] at ($(zetaoi)+(0,2)$) {\tiny{\color{black}$\Re \theta(z) > 0$}};
\node[right] at (\R+\strip,\R-\strip) {$\mathcal{A}_{\nu}$};
\draw [<->,purple] (45:\R-\strip) -- (45:\R+\strip);
\draw [->,thin] (45:\R+0.1) -- (\R+\strip,\R-\strip);

\node[left] at (125:\R+1.5){\tiny{$m_{\sharp,\text{s}} \defeq \widetilde{m}$}};

\node (M) at (140:\R+0.25) {};
\node (Mtext) at (145:\R+3) {\tiny{$m_{\sharp,\text{s}} \defeq \widetilde{m}Q^{-1}MQ$}};
\draw [->,thin] (Mtext) -- (M);

\node (P) at (220:\R-0.7) {};
\node (Ptext) at (220:\R+3) {\tiny{$m_{\sharp,\text{s}} \defeq \widetilde{m}(Q^{-1}PQ)^{-1}$}};
\draw [->,thin] (Ptext) -- (P);

\node at (90:2){\tiny{$m_{\sharp,\text{s}} \defeq \widetilde{m}$}};

\end{tikzpicture}
}\\
\subfigure[]{
\begin{tikzpicture}
\def\R{4}
\def\rin{3.6}
\def\rout{4.5}
\def\s{0.3}
\def\yL{\R}
\def\xL{\R}
\def\strip{1.1}
\def\olens{\R+0.8}
\def\ilens{\R-1}
\def\snapR{0.25}
\def\epsilon{0.15}

\draw[purple, dotted, line width = 1, fill=gray!10] (0,0) circle (\R+\strip);
\draw[purple, dotted, line width = 1, fill=white] (0,0) circle (\R-\strip);

\coordinate (zo) at (180:\rin);
\coordinate (zoi) at (180:\rout);
\coordinate (zetao) at (-1.6,0);
\coordinate (zetaoi) at (-5.8,0);
\coordinate (o) at (0,0);
\coordinate (snapUpperzo) at (170:\R-\strip+\epsilon);
\coordinate (snapLowerzo) at (190:\R-\strip+\epsilon);
\coordinate (snapUpperzoi) at (175:\R+\strip-\epsilon);
\coordinate (snapLowerzoi) at (185:\R+\strip-\epsilon);
\coordinate (innerEdgeStrip) at (180:\R-\strip+\epsilon);
\coordinate (outerEdgeStrip) at (180:\R+\strip+\epsilon);
\coordinate (snapInnerCircleCenter) at (180:\rin-\snapR);
\coordinate (snapOuterCircleCenter) at (180:\rout+\snapR);
\draw[help lines,<->] (-\R-3,0) -- (\R+3,0) coordinate (xaxis);

\begin{scope}[very thick,decoration={
     markings,
   mark=at position 0.33 with {\arrow[line width =1.8pt]{>}},
  mark=at position 0.66 with {\arrow[line width =1.8pt]{>}}}
  ]
\path[draw,dashed,gray,line width =1.8,postaction=decorate]
(\R,0) arc(0:360:\R);
\end{scope}
\begin{scope}[very thick,decoration={
     markings,
   mark=at position 0.16 with {\arrow[line width =1.8pt]{>}},
   mark=at position 0.33 with {\arrow[line width =1.8pt]{>}},
   mark=at position 0.66 with {\arrow[line width =1.8pt]{>}},
   mark=at position 0.82 with {\arrow[line width =1.8pt]{>}}}
  ]
\path[draw,cyan,line width =1.8, postaction=decorate]
(\R-\strip+\epsilon,0) arc(0:170.3:\R-\strip+\epsilon);
\path[draw,cyan,line width =1.8, postaction=decorate]
(189.7:\R-\strip+\epsilon) arc(189.70:360:\R-\strip+\epsilon);
\path[draw,cyan,line width =1.8, postaction=decorate]
(\R+\strip-\epsilon,0) arc(0:175.3:\R+\strip-\epsilon);
\path[draw,cyan,line width =1.8, postaction=decorate]
(184.7:\R+\strip-\epsilon) arc(184.7:360:\R+\strip-\epsilon);
\end{scope}
\draw[cyan, line width =1.8, postaction=decorate] plot[smooth] coordinates {
  (snapUpperzo)
  ($(snapInnerCircleCenter) + (\snapR,+\snapR)$)
  ($(snapInnerCircleCenter) + (90:\snapR)$)
  ($(snapInnerCircleCenter) + (120:\snapR)$)
  ($(snapInnerCircleCenter) + (150:\snapR)$)
  ($(snapInnerCircleCenter) + (170:\snapR)$)
  ($(snapInnerCircleCenter) + (175:\snapR)$)
  (zo)
  ($(snapInnerCircleCenter) + (185:\snapR)$)
  ($(snapInnerCircleCenter) + (190:\snapR)$)
  ($(snapInnerCircleCenter) + (220:\snapR)$)
  ($(snapInnerCircleCenter) + (240:\snapR)$)
  ($(snapInnerCircleCenter) + (270:\snapR)$)
  ($(snapInnerCircleCenter) + (\snapR,-\snapR)$)
  (snapLowerzo)
   };

\draw[cyan, line width =1.8, postaction=decorate] plot[smooth] coordinates {
  (snapUpperzoi)
  ($(snapOuterCircleCenter) + (-\snapR+\epsilon/2,+\snapR)$)
  ($(snapOuterCircleCenter) + (90:\snapR)$)
  ($(snapOuterCircleCenter) + (60:\snapR)$)
  ($(snapOuterCircleCenter) + (30:\snapR)$)
  ($(snapOuterCircleCenter) + (10:\snapR)$)
  ($(snapOuterCircleCenter) + (5:\snapR)$)
  (zoi)
  ($(snapOuterCircleCenter) + (-5:\snapR)$)
  ($(snapOuterCircleCenter) + (-10:\snapR)$)
  ($(snapOuterCircleCenter) + (-30:\snapR)$)
  ($(snapOuterCircleCenter) + (-60:\snapR)$)
  ($(snapOuterCircleCenter) + (-90:\snapR)$)
  ($(snapOuterCircleCenter) + (-\snapR+\epsilon/2,-\snapR)$)
  (snapLowerzoi)
   };

\foreach \Point in {(zo), (zoi), (o),(zetao),(zetaoi) (180:\R), (0:\R) }{
\node at \Point{$\times$};
}
\coordinate (zeta1) at (180:\R-2);
\coordinate (zeta2) at (180:\R-3);
\coordinate (zeta3) at (0:\R-2);
\coordinate (zeta4) at (0:\R-3);
\coordinate (zeta1i) at (180:\R+1.5);
\coordinate (zeta2i) at (180:\R+2.5);
\coordinate (zeta3i) at (0:\R+1.5);
\coordinate (zeta4i) at (0:\R+2.5);

\draw[dotted, black, line width = 0.8](o) to [out=100,in=0] (-0.5,0.75) to [out=180, in=90] (zetao);
\draw[dotted, black, line width = 0.8](zetao) to [out=270,in=180] (-0.5,-0.75) to [out=0, in=260] (o);
\draw[dotted, black, line width = 0.8](zetaoi) to [out=100,in=315] (-\R-3,\R-2);
\draw[dotted, black, line width = 0.8](zetaoi) to [out=260,in=45] (-\R-3,-\R+2);
\node[below] (Mtext) at (90:\R+\strip-\epsilon) {\small{$Q^{-1} MQ$}};
\node[above] (Ptext) at (90:\R-\strip+\epsilon) {\small{$Q^{-1}P Q$}};
\node[below right] {$0$};
\node [below,yshift=2pt] at(270:\R) {\color{gray}\small{$Q^{-1} J Q$}};
\node[above left,xshift=5pt] at (zo) {$z_0$};
\node[above right, xshift=-5pt] at (zoi) {$z_0^{-1}$};
\node [below left, xshift=4pt] at (180:\R) {$-1$};
\node [below right] at (0:\R) {$1$};
\node [below, xshift=-2pt] at (zetao) {$\zeta_0$};
\node [below, xshift=8pt] at (zetaoi) {$\zeta_0^{-1}$};
\node [above] at (-0.8, 0) {\tiny{\color{black}$\Re \theta(z) > 0$}};
\node at (0, -1.5) {\tiny{\color{black}$\Re \theta(z) < 0$}};
\node [above left] at (zetaoi) {\tiny{\color{black}$\Re \theta(z) < 0$}};
\node [above] at ($(zetaoi)+(0,2)$) {\tiny{\color{black}$\Re \theta(z) > 0$}};
\node[right] at (\R+\strip,\R-\strip) {$\mathcal{A}_{\nu}$};
\draw [<->,purple] (45:\R-\strip) -- (45:\R+\strip);
\draw [->,thin] (45:\R+0.1) -- (\R+\strip,\R-\strip);
\end{tikzpicture}
}
\caption{(a) Jump contour (blue) and matrices for the initial {\RHP} with `ghost' contours (dashed black) that guide the deformation,(b) The jump contours and matrices for the {\RHP} satisfied by $m_{\sharp,\text{s}}(z)$ in the soliton region. This figure contains the definitions of the contours $\Gamma_{\pm}$.}
\label{F:s_1}
\end{figure}
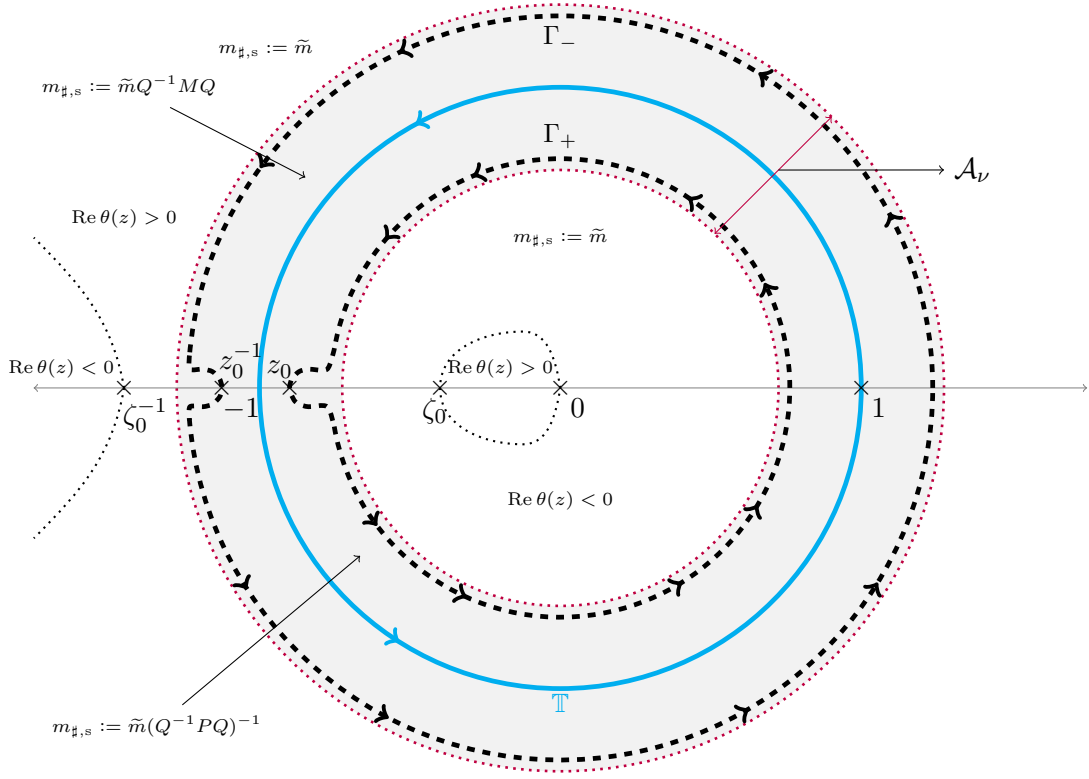
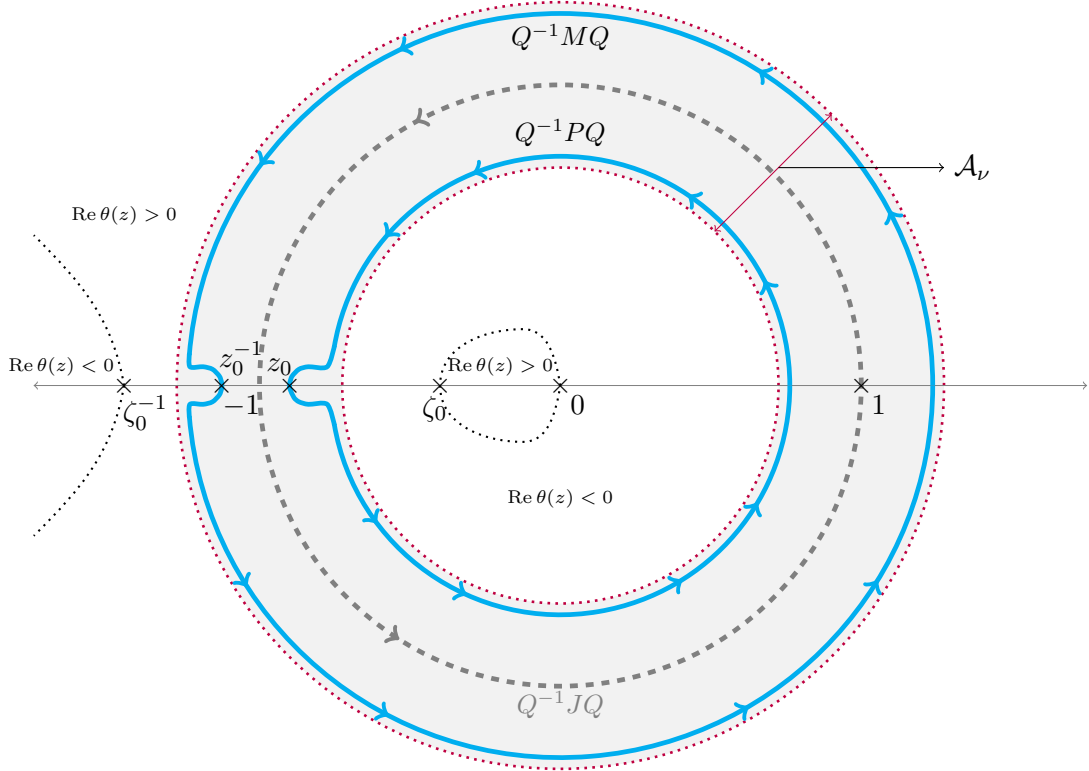

{The function} $m_{\sharp,\text{s}}(z)$ satisfies a sectionally analytic {\RHP} with
\begin{itemize}
\item the jump conditions described in Figure~\ref{F:s_1}(b),
\item the asymptotic symmetry condition present in \rhref{rhp:disp}, and
\item the quadratic normalization condition present in \rhref{rhp:disp}.
\end{itemize}

\section{Numerical solution of the deformed vector RH Problem}\label{S:inverse}
In this section we describe, in detail, the methodology used to numerically approximate the solution $m_{\sharp,\alpha}(z)$ of the deformed vector {\RHP} and then produce the associated approximation of the solution of the Toda lattice.

\subsection{The numerical solution of the associated matrix {\RHP}}\label{S:convert}

The asymptotic condition \eqref{E:vector_normal} is not convenient for numerical methods because it is nonlinear.   But, as pointed out in \cite{DKKZ}, we can convert a vector {\RHP} such as \rhref{rhp:disp} to a ($2\times 2$)-matrix {\RHP} with the same jump conditions and a standard (linear) condition at $\infty$ provided this matrix problem has a solution.  Then the solution of the vector {\RHP} can be reconstructed from the solution of the matrix problem.  This reconstruction is discussed in the following section.

Consider an {\RHP}
\begin{align}\label{test-RHP}
\Phi^+(z) = \Phi^-(z) J(z), ~~ z \in \Gamma, ~~\Phi(\infty) = I,
\end{align}
which has smooth solutions (see Section~2.7 in~\cite{mybook} for the requisite conditions on $J$).  We use $[J;\Gamma]$ to refer to this matrix {\RHP} with the identity matrix condition at infinity.

\begin{definition}\label{D:assoc}
If a vector {\RHP} has the same jump condition as \eqref{test-RHP} then $[J;\Gamma]$ is called the \emph{associated matrix {\RHP}}.  For example, \rhref{rhp:Psi} below is the associated matrix {\RHP} for \rhref{rhp:v}.
\end{definition}

Given an oriented, piecewise-smooth contour $\Gamma$ we define the Cauchy integral
\begin{equation*}
\mathcal{C}_\Gamma u(z) = \frac{1}{2\pi i}\int_{\Gamma}\frac{u(s)}{s-z}\,ds.
\end{equation*}
It is well known that the operators defined by
\begin{equation*}
\mathcal{C}_{\Gamma}^{\pm}u(z) = \left(\mathcal{C}_\Gamma u(z) \right)^{\pm}
\end{equation*}
are bounded operators from $L^{2}(\Gamma)$ to itself. Moreover, these operators satisfy the identity
\begin{equation}\label{E:plem}
\mathcal{C}_{\Gamma}^{+} - \mathcal{C}_\Gamma^{-} = I\,.
\end{equation}
If we assume that the solution to a matrix {\RHP} is of the form $\Phi = I + C_{\Gamma} u$, we can substitute this into the jump condition $\Phi^{+} = \Phi^{-} J$ and use this identity to obtain
\begin{equation}\label{E:sing_int}
\mathcal C[G;\Gamma] \defeq u - \mathcal{C}^{-}_{\Gamma}u \cdot(J-I) = J - I\,.
\end{equation}
This is a singular integral equation (SIE) for $u$. This motivates the following definition.
\begin{definition}
The matrix {\RHP}  $[J;\Gamma]$ is said to be well-posed if $\mathcal{C}[J;\Gamma]$ is invertible with a bounded inverse on $L^{2}(\Gamma)$ and $J-I \in L^{2}(\Gamma)$.
\end{definition}
This singular integral equation is critical in both the numerical and asymptotic solution of {\RHP}s.  A reader looking for a more in-depth discussion of {\RHP}s should look to \cite{AF} for an introduction and \cite{CG, Deift,Zhou} for a more advanced discussion. For numerical solution of {\RHP}s, we refer the reader to \cite{OT_RHP, mybook, TOD_KdV,TO_NLS}. A comprehensive discussion of the inverse scattering transform can be found in \cite{AC, AS}.

{We also point out that the singular integral equation formulation we use differs from that of \cite{Zhou}.  Our formulation has the benefit that the operator $u \mapsto \mathcal{C}^{-}_{\Gamma}u \cdot(J-I)$ can be applied exactly to a chosen basis where the operator considered in \cite{Zhou} given by $u \mapsto \mathcal{C}^{-}_{\Gamma}(u(J-I))$ does not have this property.  This seems to give a mild increase in the convergence rate.  See \cite[Chapter 2]{mybook} for a comparison of the theory for these two formulations.}

Consider the contour $\Gamma = \bigcup_{j=1}^n \Gamma_j$ where each
$\Gamma_j$ is either a line segment or a circular arc. Thus we restrict to considering contours where a
sequence of M\"obius transformations $M_1,\ldots,M_n$ are known such that
$M_k([-1,1]) = \Gamma_k$. Let $\mathbb P_m = \{\cos(j\pi/m): j =
0,1,\ldots,m\}$ be the Chebyshev points and let $T_m(x)$ denote the
$m^{\text{th}}$ Chebyshev polynomial of the first kind.  {The points $\bigcup_j M_j(\mathbb P_{n_j})$ are called the collocation points.  Note that if $a$ is an intersection point of a subset $\{\Gamma_{j_1},\ldots,\Gamma_{j_m}\}$ of the contours $\Gamma_1,\ldots,\Gamma_j$ then it will be included $m$ times in this union.  We include it $m$ times by using the notation $a + 0e^{i\theta_{j_k}}$ where $\theta_{j_k}$ is the angle at which $\Gamma_{j_k}$ leaves/approaches $a$.  Additionally,  $f \circeq g$ is used if and only if $f(a) = g(a)$ for all $a \in \bigcup_j M_j(\mathbb P_{n_j})$ that is not a point of self-intersection and $\lim_{\epsilon\downarrow 0} f(a +  \epsilon e^{i\theta_{j_k}}) = \lim_{\epsilon\downarrow 0} g(a + \epsilon e^{i\theta_{j_k}})$ if $a$ is a point of intersection.

 A function $\Psi$ is said to satisfy the {\RHP} \eqref{test-RHP} at the collocation points $\bigcup_j M_j(\mathbb P_{n_j})$ if $\Psi$ has continuous boundary values and $\Psi^+ \circeq \Psi^- J$.}  The framework of Olver \cite{olver-frame} implemented in \cite{RHPackage} (see also \cite{mybook}) is designed to return a vector $V_j$ of function values at the mapped points $M_j(\mathbb P_{n_j})$ (with directions attached at intersection points), so that the function $U: \Gamma \goto \mathbb C^{i \times j}$ defined piecewise by
\begin{align}
U(z)|_{\Gamma_j} &= \sum_{i=0}^{n_j} \alpha_i T_i(M_j^{-1}(z)),\label{E:coeff}\\
U(M(\mathbb P_{n_j})) &= V_j,\label{E:coeffdet}
\end{align}
satisfies
\begin{itemize}
\item $I + {\mathcal C}_{\Gamma}U$ is a bounded function in $\mathbb C\setminus \Gamma$, and
\item $I+ {\mathcal C}_{\Gamma}U$ satisfies the {\RHP} \eqref{test-RHP} exactly at $M_j(\mathbb P_{n_j})$.
\end{itemize}
Here the coefficients $\alpha_i$ in the definition of $U$ in \eqref{E:coeff} are determined by the condition \eqref{E:coeffdet}.

We describe the method in more detail. Similar to before, substituting
\begin{equation}\label{E:integral-form}
\Phi \approx I + \mathcal{C}_{\Gamma}U
\end{equation}
into the {\RHP} and using \eqref{E:plem} gives a linear equation for $U$:
\begin{align}\label{E:SIE}
  U - {\mathcal C}_\Gamma^- U \cdot (J-I) \circeq J - I.
\end{align}

 A closed-form expression for the Cauchy
 transform of the basis $T_i\big(M_j^{-1}(k)\big)$ \cite[Section 4]{Andre} (see also \cite{olver-hilbert}) allows the discretization
 of this linear equation by evaluating the Cauchy transform of the
 basis at the points $M_j(\mathbb P_{n_j})$. However, a modified
 definition for the Cauchy transform is required at the self-intersection (or junction) points
 points $\Gamma_0$ (which are included in the collocation points
 $M_j\big(\mathbb P_{n_j}\big)$), at which the Cauchy transform of this basis
 is unbounded.  By assuming that the computed $U$ is in the class of
 functions for which $I + {\mathcal C}_{\Gamma}U$ is bounded, we can
 define the bounded contribution of the Cauchy transform of each basis
 element at the points $\Gamma_0$. It can be shown, with appropriate assumptions on $J$ (see the \emph{product condition}, \cite[Definition~3.8.3]{thesis}) that the numerically calculated $U$ must
 be in this class of functions. Therefore $I + {\mathcal C}_\Gamma U$
 will be bounded and satisfies the {\RHP} at $\Gamma_0$, hence at all points in $M_j(\mathbb P_{n_j})$.

We use \eqref{E:integral-form} to show that if $U \in L^1(\Gamma)$ then
\begin{align*}
\lim_{z\goto \infty} z (\Phi(z)-I) = - \frac{1}{2\pi i} \int_{\Gamma} U(t) dt,
\end{align*}
by the Dominated Convergence Theorem provided $z$ is bounded away from $\Gamma$. The integral on the right-hand side can be computed using Clenshaw--Curtis quadrature. This relationship is needed in what follows to reconstruct the solution to the Toda lattice from the solution of the {\RHP}. Observe the complimentary fact: if
\begin{align*}
  \Phi(z) = A_1(z)A_2(z), \quad A_i(z) = I + A_{i,1} z^{-1} + \bigo(z^{-2}), \quad z \goto \infty,
\end{align*}
then
\begin{align}\label{product-recover}
  \Phi(z;n,t) = I + (A_{1,1} + A_{2,1}) z^{-1} + \bigo(z^{-2}), \quad z \goto \infty.
\end{align}

Another aspect of the numerical solution of {\RHP}s is contour truncation.  Note that if $J(z^*) = I$ then the linear system \eqref{E:SIE} at this point becomes $U(z^*) = 0$.  From this one can rigorously justify the removal of contours from the {\RHP} on which $\|J-I\|_{L^1 \cap L^\infty}$ is small at the cost of a small error.  This is discussed in more detail in \cite[Chapter 2]{mybook} and see \cite{Gauss} for a discussion of implementing this idea.

\begin{remark}\label{convergence} From the results in \cite{olver-frame} it follows that spectral convergence (\emph{i.e.} convergence that is faster than $n^{-k}$ for any $k$ where $n$ is the number of collocation points) can be verified \emph{a posteriori} for a well-posed {\RHP} by checking that the norm of the inverse of  the discretization of $\mathcal C[J;\Gamma]$ grows at most algebraically with respect to the number of collocation points. In the computations for this paper, we noticed at most logarithmic growth of the condition number for this collocation matrix, with a maximum on the order of $10^3$. \end{remark}

\subsection{Construct the solution of the deformed vector {\RHP}}
 With a method in hand to compute the solution of a matrix {\RHP}, normalized to be the identity matrix at infinity, we show that in order to solve the vector {\RHP} one can first solve the associated matrix {\RHP} and then take an appropriate linear combination of the rows of the solution of the associate matrix {\RHP}.  This is the generic situation but there are some technicalities so we take care in the following developments.

Let $J: \Gamma \to \mathbb C^{2\times 2}$ where the contour $\Gamma$ satisfies $\Gamma^{-1} \defeq \{z^{-1}: z \in \Gamma\} = - \Gamma$ and the minus sign refers to a reversal of orientation.  Assume the symmetry condition
  \begin{align}\label{E:J-symm}
  	J(z) = \begin{pmatrix} 0 & 1 \\ 1 & 0 \end{pmatrix} J^{-1}\left(z^{-1}\right) \begin{pmatrix} 0 & 1 \\ 1 & 0 \end{pmatrix}.
  \end{align}
  The {\RHP}s we want to solve are of the following form (compare with \rhref{rhp:hm}) which is assumed to be uniquely solvable:
  \begin{rhp}\label{rhp:v-full}
  For an oriented contour $\Gamma$, we seek a function $v \colon \mathbb C \setminus {\Gamma} \to \mathbb{C}^{1\times 2}$ that is sectionally analytic, continuous up to $ \Gamma$ and satisfies:
\begin{itemize}
\item \emph{the jump condition:}
\begin{equation}\label{E:matrix_jump-full}
v^{+}(z) = v^{-}(z) J(z),\phantom{x} z\in {\Gamma},
\end{equation}
\item \emph{the symmetry condition:}
\begin{equation}\label{E:symm-full}
v\left(z\right) = v(z^{-1})\begin{pmatrix} 0& 1 \\ 1 & 0 \end{pmatrix},
\end{equation}
\item  and \emph{the normalization condition:}
\begin{equation}\label{E:normal-full}
v(\infty)=\lim_{z\to \infty} v(z) = \begin{pmatrix} v_1 & v_2 \end{pmatrix}, \quad v_1\cdot v_2 = 1,~v_1 > 0\,.
\end{equation}
\end{itemize}
\end{rhp}
In the background, throughout all of our deformations of \rhref{rhp:hm} is a problem of the form of \rhref{rhp:v-full}. But we \emph{do not} preserve the symmetry condition through the deformations, mainly for convenience and ease of numerical implementation.  We introduce the notion of a non-singular deformation to encapsulate this:
\begin{definition}
A vector or matrix function $\tilde v$ is a \emph{non-singular deformation} of a sectionally analytic function $v : \mathbb C \setminus \Gamma \to \mathbb C^{j \times 2}$, for $j=1$ or $j=2$, with continuous boundary values if there exists a sectionally analytic matrix function $H: \mathbb C \setminus \Gamma' \to \mathbb C^{2\times 2}$, $\det H(z) = 1$, also with continuous boundary values, such that
\begin{align}\label{E:Hdiag}
\lim_{z \to \infty} H(z) = \diag(c_1,c_2)
\end{align}
exists and
\begin{align*}
\tilde v(z) = v(z) H(z), \quad z \in \mathbb C \setminus \tilde \Gamma, \quad \tilde \Gamma \defeq (\Gamma \cup \Gamma').
\end{align*}
\end{definition}
\noindent So, if $\tilde v: \mathbb C \setminus \tilde \Gamma \to\mathbb{C}^{j\times 2}$, $j=1$ or $j=2$, is a non-singular deformation of $v(z)$ then for
\begin{align}\label{E:tildeJ}
\tilde J(z) \defeq H^{-1}_-(z) J(z) H_+(z),
\end{align}
$\tilde v(z)$ satisfies:
\begin{rhp}\label{rhp:v}
  For an oriented contour $\tilde \Gamma$, we seek a function $\tilde v \colon \mathbb C \setminus {\tilde \Gamma} \to \mathbb{C}^{1\times 2}$ that is sectionally analytic, continuous up to $ \tilde \Gamma$ and satisfies:
\begin{itemize}
\item \emph{the jump condition:}
\begin{equation}\label{E:matrix_jump}
\tilde v^{+}(z) = \tilde v^{-}(z) \tilde J(z),\phantom{x} z\in {\tilde \Gamma},
\end{equation}
\item \emph{the {asymptotic} symmetry condition:}
\begin{equation}\label{E:symm-asym}
\tilde v\left(0\right) H^{-1}(0) = \tilde v(\infty)H^{-1}(\infty)\begin{pmatrix} 0& 1 \\ 1 & 0 \end{pmatrix},
\end{equation}
\item  and \emph{the normalization condition:}
\begin{equation}\label{E:normal}
\tilde v(\infty)H^{-1}(\infty) = \begin{pmatrix}v_1 & v_2\end{pmatrix}, \quad v_1\cdot v_2 = 1, ~v_1 > 0\,.
\end{equation}
\end{itemize}
\end{rhp}
\noindent  A solution of \rhref{rhp:v-full} clearly produces a solution of \rhref{rhp:v} and so \rhref{rhp:v} is solvable (we assume \rhref{rhp:v-full} is always uniquely solvable).  Since \rhref{rhp:v} turns out to be a bit more numerically tractable, we want to know when the solution of \rhref{rhp:v} is unique.  So, consider the associated matrix {\RHP}:
  \begin{rhp}\label{rhp:Psi}
  For a bounded, oriented contour $\tilde \Gamma$, bounded away from the origin, we seek a function $\Psi \colon \mathbb C \setminus {\tilde \Gamma} \to \mathbb{C}^{2\times 2}$ that is sectionally analytic, continuous up to $\tilde \Gamma$ and satisfies:
\begin{itemize}
\item \emph{the jump condition:}
\begin{equation*}
\Psi^{+}(z) = \Psi^{-}(z) \tilde J(z),\phantom{x} z\in {\tilde \Gamma},
\end{equation*}
\item  and \emph{the normalization condition:}
\begin{equation*}
\Psi(\infty)= I\,.
\end{equation*}
\end{itemize}
\end{rhp}

\begin{lemma}\label{l:equiv}
Suppose that \rhref{rhp:Psi} is uniquely solvable.  Then \rhref{rhp:v} is uniquely solvable if and only if \rhref{rhp:v-full} is uniquely solvable.
\end{lemma}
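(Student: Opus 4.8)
The plan is to establish the two implications separately, using the assumed unique solvability of \rhref{rhp:Psi} as the common backbone. For the direction ``\rhref{rhp:v-full} uniquely solvable $\Rightarrow$ \rhref{rhp:v} uniquely solvable'': existence is already noted (a solution $v$ of \rhref{rhp:v-full} yields $\tilde v = v H$ solving \rhref{rhp:v}), so the work is uniqueness. Suppose $\tilde v_1,\tilde v_2$ both solve \rhref{rhp:v}. Since $\Psi$ solves the matrix problem with $\det \Psi \equiv 1$ (the jump matrices have determinant $1$, $\Psi(\infty)=I$, and Liouville applies), the rows of $\Psi$ are linearly independent sectionally analytic solutions of the jump condition \eqref{E:matrix_jump}. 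Any vector solution $\tilde v_i$ must therefore be of the form $\tilde v_i(z) = c_i \Psi(z)$ for a constant row vector $c_i = \tilde v_i(\infty)$ — the standard argument is that $\tilde v_i \Psi^{-1}$ is sectionally analytic across $\tilde\Gamma$ (the jumps cancel), bounded, hence constant by Liouville, and the constant is read off at $z=\infty$. Then the asymptotic symmetry condition \eqref{E:symm-asym} and normalization \eqref{E:normal} become two conditions on $c_i$: writing $w_i \defeq c_i H^{-1}(\infty)$ and using $\tilde v_i(0) = c_i \Psi(0)$, \eqref{E:symm-asym} reads $c_i \Psi(0) H^{-1}(0) = w_i \left(\begin{smallmatrix}0&1\\1&0\end{smallmatrix}\right)$, while \eqref{E:normal} gives $w_{i,1} w_{i,2} = 1$, $w_{i,1}>0$. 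I claim these determine $c_i$ uniquely. Indeed, plugging $c_i = w_i H(\infty)$ into the symmetry relation gives a linear relation expressing the row $w_i$ in terms of itself via the fixed matrix $H(\infty)\Psi(0)H^{-1}(0)\left(\begin{smallmatrix}0&1\\1&0\end{smallmatrix}\right)$; this pins down $w_i$ up to a scalar, and the scalar is fixed by the quadratic normalization together with $w_{i,1}>0$. Uniqueness of \rhref{rhp:v-full} enters precisely here: the symmetry-respecting problem having a \emph{unique} solution guarantees that this $2\times 2$ eigenvalue-type system has exactly one admissible solution (rather than none or a one-parameter family degenerating), so $c_1 = c_2$ and hence $\tilde v_1 = \tilde v_2$.

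For the converse, ``\rhref{rhp:v} uniquely solvable $\Rightarrow$ \rhref{rhp:v-full} uniquely solvable'': existence of a solution to \rhref{rhp:v-full} again follows from unraveling, since if $\tilde v$ solves \rhref{rhp:v} then $v \defeq \tilde v H^{-1}$ is a candidate for \rhref{rhp:v-full} — one must check $v$ is continuous up to $\Gamma$ (the extra jumps of $H^{-1}$ on $\Gamma' \setminus \Gamma$ are cancelled by those of $\tilde v$, using $\tilde J = H_-^{-1} J H_+$ and $\det H = 1$, which is exactly the content of the non-singular deformation definition together with Remark~\ref{r:sing}-type analyticity at $z=-1$), that the global symmetry \eqref{E:symm-full} holds, and that the normalization \eqref{E:normal-full} holds; the latter two follow from \eqref{E:symm-asym} and \eqref{E:normal} after multiplying by $H$ and using \eqref{E:Hdiag}. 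Uniqueness for \rhref{rhp:v-full} then follows because any solution of \rhref{rhp:v-full} produces, via $v \mapsto vH$, a solution of \rhref{rhp:v}, and distinct solutions of the former would give distinct solutions of the latter, contradicting its assumed uniqueness; here one needs that $v \mapsto vH$ is injective, which is immediate since $\det H \equiv 1$ so $H$ is invertible.

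The main obstacle I anticipate is the bookkeeping around continuity at the points where $H$ (and hence $H^{-1}$) is singular — in the applications $H$ is built from $\Delta(z)$, $\phi(z)$, $\psi(z)$, $Q(z)$, which are genuinely unbounded at $z=-1$ (see RH Problems~\ref{rhp:delta} and~\ref{rhp:psi}) — so the claim that $v = \tilde v H^{-1}$ has \emph{continuous} boundary values up to the original contour is not completely formal. The resolution is the same mechanism used in Remark~\ref{r:sing}: the combination of the zero of $m_1^+$, $m_2^-$ at $z=-1$ from Proposition~\ref{p:m-analytic} with the controlled blow-up of the diagonal parametrices makes the product bounded and continuous. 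I would state this as a hypothesis already verified in each region (it is, in the construction of each $m_{\sharp,\alpha}$) rather than re-prove it, so that the lemma's proof stays at the level of the abstract Liouville/linear-algebra argument above. A secondary point to handle carefully is that $\tilde\Gamma$ is bounded and bounded away from the origin (as stipulated in \rhref{rhp:Psi}), which is what legitimizes evaluating $\Psi(0)$ and using Liouville on $\tilde v_i \Psi^{-1}$; this is harmless since all deformed contours in Section~\ref{S:deformations} have this property.
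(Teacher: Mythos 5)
There are two genuine gaps, both traceable to the same oversight: the asymptotic symmetry condition \eqref{E:symm-asym} is strictly weaker than the global symmetry condition \eqref{E:symm-full}, and you cannot pass from one to the other just by multiplying by $H^{-1}$. In your converse direction (``\rhref{rhp:v} uniquely solvable $\Rightarrow$ \rhref{rhp:v-full} uniquely solvable'') you assert that $v \defeq \tilde v H^{-1}$ satisfies the global symmetry ``after multiplying by $H$ and using \eqref{E:Hdiag}.'' It does not: \eqref{E:symm-asym} only relates the values of $\tilde v H^{-1}$ at the single pair $\{0,\infty\}$ and says nothing about $v(z)$ versus $v(z^{-1})$ at other points, so your candidate need not solve \rhref{rhp:v-full} at all. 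The missing idea is symmetrization: set
\begin{align*}
k(z) \defeq \half\, \tilde v(z)H^{-1}(z) + \half\, \tilde v\big(z^{-1}\big)H^{-1}\big(z^{-1}\big)\begin{pmatrix} 0 & 1 \\ 1 & 0\end{pmatrix},
\end{align*}
check that the second summand still satisfies the jump of \rhref{rhp:v-full} (this uses \eqref{E:J-symm} together with $\Gamma^{-1} = -\Gamma$), and use \eqref{E:symm-asym} only to verify that $k(\infty) = \tilde v(\infty)H^{-1}(\infty)$ so that the normalization \eqref{E:normal-full} survives. This is exactly the paper's construction, and the ``straightforward calculation'' opening the paper's proof is precisely the verification that the reflected term obeys the right jump.

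The second gap is in the forward direction. Your reduction of any solution of \rhref{rhp:v} to $c\,\Psi(z)$ via Liouville is fine, but the crucial claim --- that the left nullspace of $\Psi(0)H^{-1}(0) - H^{-1}(\infty)\left(\begin{smallmatrix}0&1\\1&0\end{smallmatrix}\right)$ together with the quadratic normalization admits \emph{exactly one} admissible $c$ --- is asserted, not proved; the phrase ``uniqueness of \rhref{rhp:v-full} enters precisely here'' is doing all the work without an argument. That one-dimensionality statement is the content of the paper's Lemma~\ref{L:MatToVec} and requires its own proof (constructing an auxiliary matrix from a row of $\Psi$ and $vH$ and invoking uniqueness of \rhref{rhp:Psi}). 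The paper's proof of Lemma~\ref{l:equiv} itself sidesteps the nullspace entirely: given a putative second solution $k$ of \rhref{rhp:v}, symmetrize it as above to get a solution of \rhref{rhp:v-full}, identify it with $v$ by uniqueness, deduce from \eqref{E:symm-asym} that $k(\infty)H^{-1}(\infty) = v(\infty)$, and then observe that $w \defeq v H - k$ satisfies the homogeneous jump with $w(\infty) = 0$, so the unique solvability of \rhref{rhp:Psi} forces $w \equiv 0$. Your Liouville machinery would supply that last step, but without the symmetrization you have no way to match the values at infinity of the two candidate solutions, which is where the hypothesis on \rhref{rhp:v-full} actually enters.
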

\begin{proof}
We begin with a straightforward calculation. Let $\tilde v(z)$ be a solution of \rhref{rhp:v}.  Define $\hat v(z) = \tilde v(z)H^{-1}(z)$, a solution of \rhref{rhp:v-full} with an asymptotic symmetry condition (not the global symmetry condition \eqref{E:symm-full}):
\begin{align}\label{E:v2-symm}
\hat v(0) = \hat v(\infty) \begin{pmatrix}
0 & 1 \\ 1 & 0
\end{pmatrix}.
\end{align}
Then consider
\begin{align*}
v(z) \defeq \hat v(z^{-1}) \begin{pmatrix} 0 & 1 \\ 1 & 0 \end{pmatrix}.
\end{align*}
We see from \eqref{E:J-symm}
\begin{align*}
v^+(z) & = \hat v^-(z^{-1}) \begin{pmatrix} 0 & 1 \\ 1 & 0 \end{pmatrix} = \hat v^+(z^{-1}) J^{-1}(z^{-1}) \begin{pmatrix} 0 & 1 \\ 1 & 0 \end{pmatrix} = v^-(z) J(z).
\end{align*}
Therefore $v(z)$  satisfies the jump condition in \rhref{rhp:v-full} with the asymptotic symmetry condition \eqref{E:v2-symm}.

Now, assume \rhref{rhp:v} is uniquely solvable, with solution $\tilde v(z)$.  Define $k(z)$ by
\begin{align*}
k(z) = \half \tilde v (z)H^{-1}(z) + \half \tilde v(z^{-1})H^{-1}(z^{-1}) \begin{pmatrix} 0 & 1 \\ 1 & 0 \end{pmatrix}.
\end{align*}
It follows that $k(z)$ satisfies \eqref{E:normal-full} and therefore $k(z)$ is a solution of \rhref{rhp:v-full}.  It is clear that any solution of \rhref{rhp:v-full} gives a solution, via $H(z)$, of \rhref{rhp:v} and so \rhref{rhp:v-full} is uniquely solvable.

Assume \rhref{rhp:v-full} is uniquely solvable with solution $v(z)$.  Then $v(z)H(z)$ is clearly a solution of \rhref{rhp:v}.  Assume then that $k(z)$ is a new solution to \rhref{rhp:v} with $k(z) \neq v(z)H(z)$.  Define
\begin{align*}
\tilde v(z) = \half k(z)H^{-1}(z) + \half k(z^{-1})H^{-1}(z^{-1}) \begin{pmatrix} 0 & 1 \\ 1 & 0 \end{pmatrix},
\end{align*}
which is a solution of \rhref{rhp:v-full} and by uniqueness $v(z) = \tilde v(z)$.  From \eqref{E:symm-asym}
\begin{align*}
k(0)H^{-1}(0) = k(\infty) H^{-1}(\infty) \begin{pmatrix} 0 & 1 \\ 1 & 0 \end{pmatrix},
\end{align*}
so that
\begin{align*}
\tilde v(\infty) = \half k(\infty) H^{-1}(\infty) + \half k(0) H^{-1}(0)\begin{pmatrix} 0 & 1 \\ 1 & 0 \end{pmatrix} = k(\infty) H^{-1}(\infty) = v(\infty).
\end{align*}
So, let $w(z) \defeq (v(z) - k(z)H^{-1}(z))H(z) = v(z)H(z) - k(z)$, the difference of two solutions of \rhref{rhp:v}. It follows that $w(z)$ satisfies the jump condition \eqref{E:matrix_jump} with
\begin{align*}
w(\infty) &= \begin{pmatrix}
0 & 0
\end{pmatrix}.
\end{align*}
From the uniqueness of solutions of \rhref{rhp:Psi}, $w$ is identically zero.
\end{proof}

Given \rhref{rhp:v}, we call \rhref{rhp:Psi} the associated matrix {\RHP}.  The goal is to compute the coefficients in the expansion
\begin{align*}
v(z) = \begin{pmatrix} v_1 + v_{1,1}z & v_2 + v_{2,1} z \end{pmatrix} + \mathcal O(z^{2}), \quad z \to 0.
\end{align*}
Since we know that \rhref{rhp:v} and \rhref{rhp:v-full} are equivalent if \rhref{rhp:Psi} is uniquely solvable, we use the following:
\begin{lemma}[From matrix solution to vector solution]\label{L:MatToVec}
Assume {RH Problem}s~\ref{rhp:Psi} and \ref{rhp:v-full} have unique solutions $\Psi$ and $v$, respectively.  Then the left nullspace of $\tilde \Psi \defeq \Psi(0)H^{-1}(0) - H^{-1}(\infty)\begin{pmatrix} 0 & 1 \\ 1 & 0 \end{pmatrix}$ is one dimensional and
\begin{align*}
v(z) \defeq Y \Psi(z) H^{-1}(z)
\end{align*}
solves \rhref{rhp:v-full} where $Y$ is the left null vector of $\tilde \Psi$ chosen so that $X\defeq Y H^{-1}(\infty)$ satisfies  $X_1 > 0$ and $Y_1 \cdot Y_2 = 1$.
\end{lemma}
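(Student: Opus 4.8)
The plan is to exploit the equivalence established in Lemma~\ref{l:equiv}: since \rhref{rhp:Psi} is uniquely solvable, \rhref{rhp:v} and \rhref{rhp:v-full} are equivalent, and both are uniquely solvable. First I would observe that the solution $v$ of \rhref{rhp:v-full} must be expressible through the matrix solution $\Psi$. Indeed, $v(z)\Psi^{-1}(z)$ has no jump across $\tilde\Gamma$ (both $v$ and $\Psi$ satisfy the same jump $\tilde J$ after conjugation by $H$, or more precisely $vH$ does), so by Liouville's theorem together with the behavior at the finitely many possible singular points, $v(z)H(z)\Psi^{-1}(z)$ extends to an entire $\mathbb C^{1\times 2}$-valued function that is bounded, hence constant: $v(z)H(z) = Y\Psi(z)$ for a constant row vector $Y = v(\infty)H(\infty)$. (Here one checks that the potential singularities at self-intersection points and endpoints are removable because both $v$ and $\Psi$ have at worst the same mild integrable singularities there, and $\det\Psi \equiv 1$ keeps $\Psi^{-1}$ under control.) Therefore $v(z) = Y\Psi(z)H^{-1}(z)$ for some constant $Y$, and the task reduces to pinning down $Y$.

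Next I would impose the two remaining conditions on $v$ that are not automatic from the jump alone: the asymptotic symmetry condition \eqref{E:symm-asym} (equivalently \eqref{E:v2-symm} for $\hat v = vH^{-1}$) and the normalization \eqref{E:normal}. Writing $v(z) = Y\Psi(z)H^{-1}(z)$, the symmetry condition $\hat v(0) = \hat v(\infty)\left(\begin{smallmatrix}0&1\\1&0\end{smallmatrix}\right)$ becomes
\begin{align*}
Y\Psi(0)H^{-1}(0) = Y\Psi(\infty)H^{-1}(\infty)\begin{pmatrix}0&1\\1&0\end{pmatrix} = Y H^{-1}(\infty)\begin{pmatrix}0&1\\1&0\end{pmatrix},
\end{align*}
using $\Psi(\infty) = I$. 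This is exactly the statement that $Y$ lies in the left nullspace of $\tilde\Psi \defeq \Psi(0)H^{-1}(0) - H^{-1}(\infty)\left(\begin{smallmatrix}0&1\\1&0\end{smallmatrix}\right)$. So the symmetry condition is equivalent to $Y\tilde\Psi = 0$.

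The main obstacle — and the crux of the lemma — is showing that this left nullspace is exactly one-dimensional. I would argue this by the equivalence of solvability: existence of a solution to \rhref{rhp:v-full} guarantees a nonzero $Y$ with $Y\tilde\Psi = 0$, so $\dim\ker\tilde\Psi^T \geq 1$. For the reverse inequality, if $\tilde\Psi$ had a two-dimensional left nullspace, then $\tilde\Psi = 0$ identically, and \emph{every} constant row vector $Y$ would yield a function $v(z) = Y\Psi(z)H^{-1}(z)$ satisfying the jump and the asymptotic symmetry condition; scaling each such $v$ to meet the normalization $v_1 v_2 = 1$, $v_1 > 0$ would produce a continuum (or at least more than one) of solutions of \rhref{rhp:v-full}, contradicting uniqueness. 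Hence $\dim\ker\tilde\Psi^T = 1$. Finally, with $Y$ determined up to a scalar multiple $\lambda$, I would use the normalization condition: replacing $Y$ by $\lambda Y$ replaces $X = YH^{-1}(\infty)$ by $\lambda X$, so $X_1 X_2 \mapsto \lambda^2 X_1 X_2$ and the requirement $X_1 \cdot X_2 = 1$ (note $X = v(\infty)$, so the product is $v_1 v_2$, and $YH^{-1}(\infty)$ is the correctly normalized quantity appearing in \eqref{E:normal}) fixes $\lambda$ up to sign; the sign is then fixed by $X_1 > 0$. A quick check that $X_1 X_2 \neq 0$ (so that the rescaling is possible) follows because $v \equiv 0$ is not a solution of \rhref{rhp:v-full}, and in fact $v_1 v_2 = 1$ for the genuine solution forces both entries nonzero. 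This verifies that $v(z) = Y\Psi(z)H^{-1}(z)$ with the stated choice of $Y$ solves \rhref{rhp:v-full}, completing the proof.
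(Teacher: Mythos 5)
Your overall strategy coincides with the paper's: represent $v(z)H(z)$ as $Y\Psi(z)$ for a constant row vector $Y$, observe that the asymptotic symmetry condition is exactly the statement $Y\tilde\Psi = 0$, and rule out a two-dimensional left nullspace by producing a second normalized solution of \rhref{rhp:v-full}, contradicting uniqueness. Your Liouville argument for the representation $v(z)H(z) = Y\Psi(z)$ is a legitimate alternative to the paper's device of stacking the first row of $\Psi$ on top of $vH$ to form a matrix $K(z)$, renormalizing at infinity, and invoking uniqueness of $\Psi$; the two are interchangeable, though the paper's version sidesteps the discussion of removable singularities at junction points that yours must (and does, parenthetically) address.

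The one step that would fail as written is the dimension count. If $\tilde\Psi = 0$, it is \emph{not} true that each row vector $Y$ can be rescaled to meet the normalization \eqref{E:normal-full}: writing $X = YH^{-1}(\infty)$, a complex scalar $\lambda$ with $\lambda X_1 > 0$ and $\lambda^2 X_1 X_2 = 1$ exists only when $X_1 X_2 \neq 0$ and $\arg X_1 = \arg X_2$. So to derive a contradiction you must exhibit at least one normalizable null direction not proportional to the original $Y$. The paper does exactly this: using $Y_1 \cdot Y_2 = 1$, hence $Y_1 = Y_2^*/|Y_2|^2$ and $\arg Y_1 = \arg Y_2^*$, it takes $\tilde Y = \begin{pmatrix} Y_2^* & 0 \end{pmatrix}$ and checks that $\hat Y = \beta(Y + \alpha \tilde Y)$ satisfies both normalization conditions for $\alpha > 0$ sufficiently small and a suitable $\beta > 0$, yielding a second solution with $\hat v(\infty) \neq v(\infty)$. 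Your conclusion is correct and the repair is short, but the phrase ``scaling each such $v$'' glosses over the directions for which no admissible scaling exists, and some explicit construction of a second normalizable null vector is needed to close the argument.
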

\begin{proof}
Let $v(z)$ be the unique solution of \rhref{rhp:v-full} and we must show that $v(z) = Y \Psi(z)H^{-1}(z)$ for a unique vector $Y$. As we assume \rhref{rhp:Psi} is uniquely solvable with solution $\Psi$,  from Lemma~\ref{l:equiv} it suffices to enforce the asymptotic symmetry condition \eqref{E:symm-asym} because the unique solutions to \rhref{rhp:v} and \rhref{rhp:v-full} coincide.   By enforcing \eqref{E:symm-asym}
  \begin{align}\label{E:enforce-symm}
    Y \Psi(0)H^{-1}(0) = Y \Psi(\infty) H^{-1}(\infty) \begin{pmatrix} 0 & 1 \\ 1 & 0 \end{pmatrix}.
  \end{align}
  Then $Y$, if it exists, must be in the left nullspace of $\tilde \Psi \defeq \Psi(0)H^{-1}(0) - H^{-1}(\infty)\begin{pmatrix} 0 & 1 \\ 1 & 0 \end{pmatrix}$.  Now, to see that we can find a left null vector that can be normalized by  $X\defeq Y H^{-1}(\infty)$,  $X_1 > 0$ and $X_1 \cdot X_2 = 1$. Consider the matrix function
\begin{align*}
  K(z) \defeq \begin{pmatrix} \Psi_1(z)\\
    v(z)H(z) \end{pmatrix}
\end{align*}
where subscript refers to the first row.  Because neither $v_1$ nor $v_2$ in $v(\infty) = \begin{pmatrix}
v_1 & v_2
\end{pmatrix}$ can vanish and $\Psi_1(\infty) = \begin{pmatrix}
1 & 0
\end{pmatrix}$, $K(\infty)$ is invertible by \eqref{E:Hdiag}. We set $Z(z) = K^{-1}(\infty)K(z)$ so that $Z(\infty)= I$ and by uniqueness $\Psi = Z$.  It suffices to take $Y = \begin{pmatrix} 0 & 1 \end{pmatrix} K(\infty)$ and $Y$ must exist.

Now, assume there is another left null vector $\tilde Y$ of $\tilde \Psi$ that is not a multiple of $Y$.  Then $\tilde \Psi = 0$ and $Y_1 \cdot Y_2 > 0$ so $\tilde Y = \begin{pmatrix} Y_2^* & 0 \end{pmatrix}$ is linearly independent of $Y$. Let $\alpha > 0$ and $\beta > 0$, and consider
\begin{align*}
\hat Y H^{-1}(\infty) = \beta(Y + \alpha \tilde Y) H^{-1}(\infty) = \begin{pmatrix} \beta(Y_1 + \alpha Y_2^*)/c & c\beta Y_2 \end{pmatrix}, \quad H^{-1}(\infty) = \diag(1/c,c).
\end{align*}
As $Y_1 \cdot Y_2 = 1$, $Y_1 = Y_2^*/|Y_2|^2$ so that $\arg Y_1 = \arg Y_2^*$.  Then,  choosing $\alpha > 0$ sufficiently small
\begin{align*}
\frac{Y_1}{c}> 0 \Rightarrow \frac{(Y_1 + \alpha Y_2^*)}{c} > 0.
\end{align*}
Then by choosing $\beta$ so that $\hat Y_1 \cdot \hat Y_2 = 1$ we find that
\begin{align*}
\hat v(z) = \hat Y \Psi(z)H^{-1}(z),
\end{align*}
is another solution of \rhref{rhp:v-full} with $\hat v(\infty) \neq v(\infty)$, violating uniqueness.  Therefore $Y$ is uniquely defined.
\end{proof}

There is a subtlety here that will become apparent as we proceed.  We often know that the non-singular deformation carried through by $H(z)$ exists but we will not want to compute it.  We will also encounter singular deformations, \emph{i.e.}\ we will multiply the solution of \rhref{rhp:hm} by matrix functions that have singularities.  Assume that
\begin{align*}
\tilde v(z) = v(z)H(z) = \widehat m(z) S(z) H(z),
\end{align*}
where $S(z)$ is a, possibly singular, matrix function with $\det S(z) = 1$.  The product $S(z)H(z)$ is something we will know explicitly and be able to compute (\emph{i.e.}, $S(z)H(z) = \Delta^{-1}(z)Q(z)$ in  Section~\ref{S:dispersive}). Because $\widehat m(z)$ and $v(z)$ both satisfy the symmetry condition it follows that
\begin{align*}
S(z) = \begin{pmatrix}
0 & 1 \\ 1 & 0
\end{pmatrix} S(z^{-1}) \begin{pmatrix}
0 & 1 \\ 1 & 0
\end{pmatrix}.
\end{align*}
From \eqref{E:enforce-symm}, assuming $S(z)$ is analytic in a neighborhood of $z = 0$,
\begin{align*}
   v(0) &= Y \Psi(0)H^{-1}(0) = Y \Psi(\infty) H^{-1}(\infty) \begin{pmatrix} 0 & 1 \\ 1 & 0 \end{pmatrix},\\
    Y \Psi(0)H^{-1}(0)S^{-1}(0) &= Y \Psi(\infty) H^{-1}(\infty) \begin{pmatrix} 0 & 1 \\ 1 & 0 \end{pmatrix} S^{-1}(0),\\
     \widehat m(0) &= Y \Psi(0)H^{-1}(0)S^{-1}(0) = Y \Psi(\infty) H^{-1}(\infty) S^{-1}(\infty) \begin{pmatrix} 0 & 1 \\ 1 & 0 \end{pmatrix}.
 \end{align*}
  To compute $Y$ it suffices to know the product $S(z)H(z)$ at infinity and at zero.  Define $P(z) = S(z)H(z)$, $\Psi^1 \defeq \lim_{z\goto\infty} z(\Psi(z)-I)$ and $P^1 \defeq \lim_{z\goto\infty} z(P(z)-P(\infty))$. Then
\begin{align*}
\widehat m(z) &= Y\Psi(z)P^{-1}(z) =  Y\Psi(z^{-1})P^{-1}(z^{-1}) \begin{pmatrix}
0 & 1 \\ 1 & 0
\end{pmatrix} \\
&=  \left[ Y\Psi(\infty)P^{-1}(\infty) +   \left[ z Y  \Psi^1-P^{-1}(\infty)P^1 P^{-1}(\infty) \right] \right] \begin{pmatrix}
0 & 1 \\ 1 & 0
\end{pmatrix} + \bigo (z^2), ~~ z \to 0.
\end{align*}

 One should think of \rhref{rhp:v-full} as being an abstraction of \rhref{rhp:hm}. Our basic assumption is that the associated matrix {\RHP} to a non-singular deformation of a problem of the form of \rhref{rhp:v-full} is uniquely solvable.  This assumption is not violated in practice but one cannot rule out exceptional cases, see Remark~\ref{r:singular}.  To compute the solution of a vector {\RHP} with the normalization \eqref{E:normal}, we first deform the vector problem at hand, then solve the associated $2\times 2$ matrix {\RHP} and then use that matrix solution to construct the solution of the vector {\RHP}.  This process fails only if the solution of the associated matrix {\RHP} fails to exist.

\subsection{Extracting the solution of the Toda lattice}

The numerical procedure here returns an approximation of $m_{\sharp,\alpha}(z;n,t) = Y \Psi(z)$ for a matrix-valued function $\Psi(z)$ and a row vector $Y$. In a neighborhood of infinity $m(z;n,t) = m_{\sharp,\alpha}(z;n,t) Z(z) = Y \Psi(z) Z(z)$ for a (locally) analytic function $Z(z)$ such that $Z(\infty) = I$. From \eqref{product-recover} with
\begin{align*}
        \Psi(z) = I + \Psi_1z^{-1} + \bigo(z^{-1}), \quad Z(z) = I + Z_1z^{-1} + \bigo(z^{-1}), \quad z \to \infty,
\end{align*}
we have
\begin{align*}
m(z;n,t) = Y(I + (\Psi_1 + Z_1)z^{-1}) + \bigo(z^{-1}).
\end{align*}
This combined with Lemma~\ref{L:recover} and \eqref{E:Atoa} is enough to compute the solution of the Toda lattice.

\begin{remark} \label{R:symm}
In choosing $Z(\infty) = I$, which simplifies this calculation, we cannot perserve the symmetry condition $m(z;n,t) = m(z^{-1};n,t) \begin{pmatrix} 0 & 1 \\ 1 & 0 \end{pmatrix}$.
\end{remark}

\section{Numerical results}\label{S:numerics}
\subsection{Direct scattering}
In this section we present some numerical results on the computation of the scattering data.  We study two choices of initial data in detail:
\begin{itemize}
\item[(TS)]  A choice of initial data giving rise to two solitons (TS) is
  \begin{align*}
    a_n(0) &= \frac{1}{2} + \frac{4}{5} n e^{-n^2},\\
    b_n(0) &= \frac{1}{10} \sech(n).
  \end{align*}
  \item[(NS)] A choice of non-solitonic (NS) (\emph{i.e.}, $\sigma_{\text{pp}}(L) = \varnothing$) initial data is
  \begin{align*}
    a_n(0) &= \frac{1}{2} - \frac{1}{4} e^{-n^2},\\
    b_n(0) &= \frac{1}{10} \sech(n).
  \end{align*}
\end{itemize}

The reflection coefficient on $\mathbb T$ is shown in Figure~\ref{TS-R} for TS initial data and in Figure~\ref{NS-R} for NS initial data. With $K$ in Section~\ref{S:data} sufficiently large ($K = 30$ is sufficient), accuracy is guaranteed. In the case of the TS data, we find
\begin{align*}
  (\zeta_1,\zeta_2) &\approx (0.596142,-0.704859),\\
  (\gamma_1,\gamma_2) &\approx (3.25791,1.43054).
\end{align*}

\begin{figure}[htp]
\subfigure[]{
  \begin{overpic}[width=0.4\textwidth]{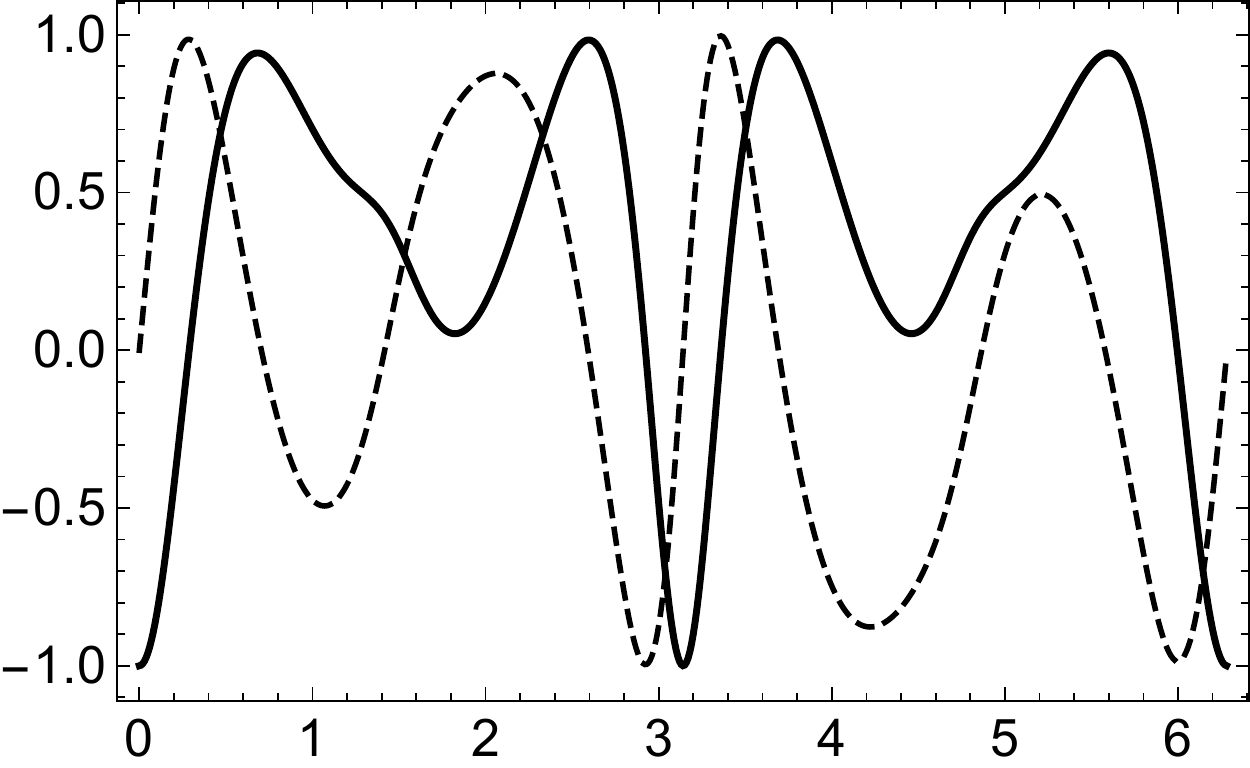}
    \put(50,-5){$\theta$}
    \put(-4,25){\rotatebox{90}{$R(e^{i \theta})$}}
  \end{overpic}\label{TS-R}
}\hspace{.15in}
\subfigure[]{
  \begin{overpic}[width=0.4\textwidth]{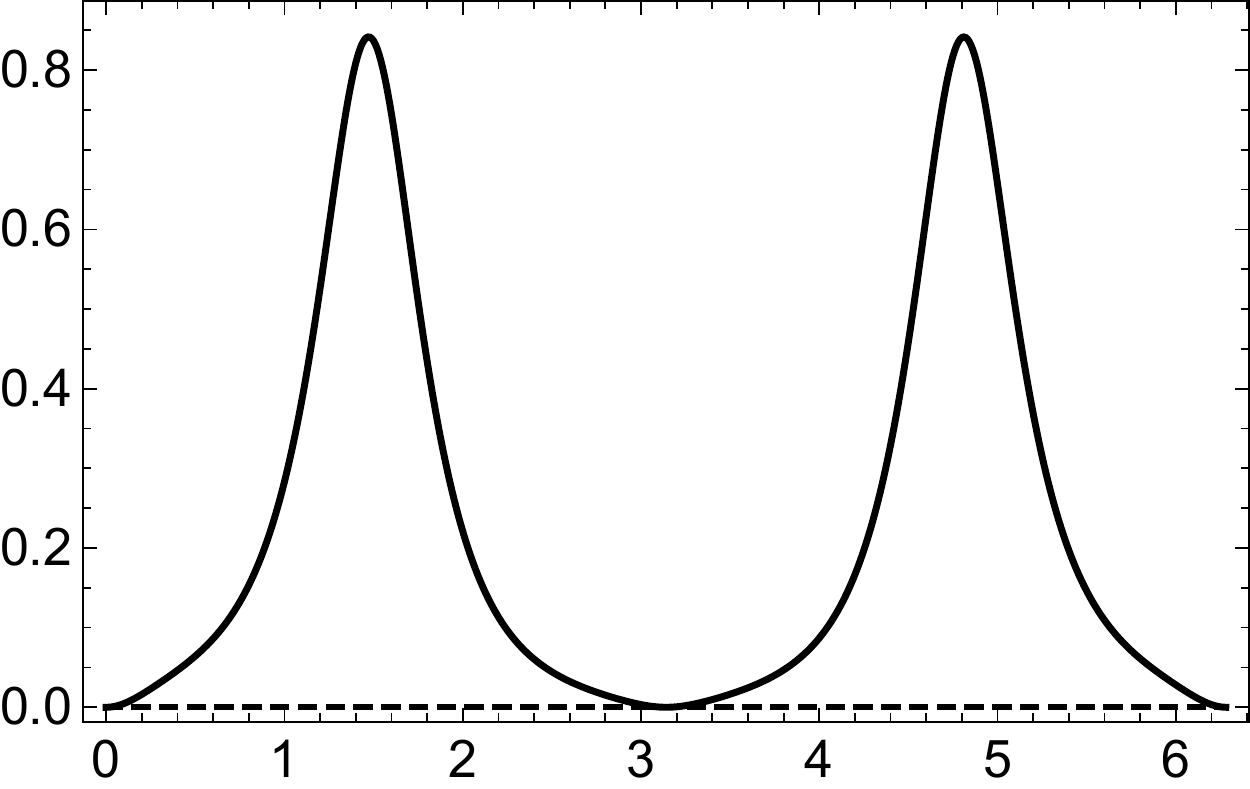}
    \put(50,-5){$\theta$}
    \put(-6.5,25){\rotatebox{90}{$1-|R(e^{i \theta})|^2$}}
  \end{overpic}
}
\subfigure[]{
 \begin{overpic}[width=0.4\textwidth]{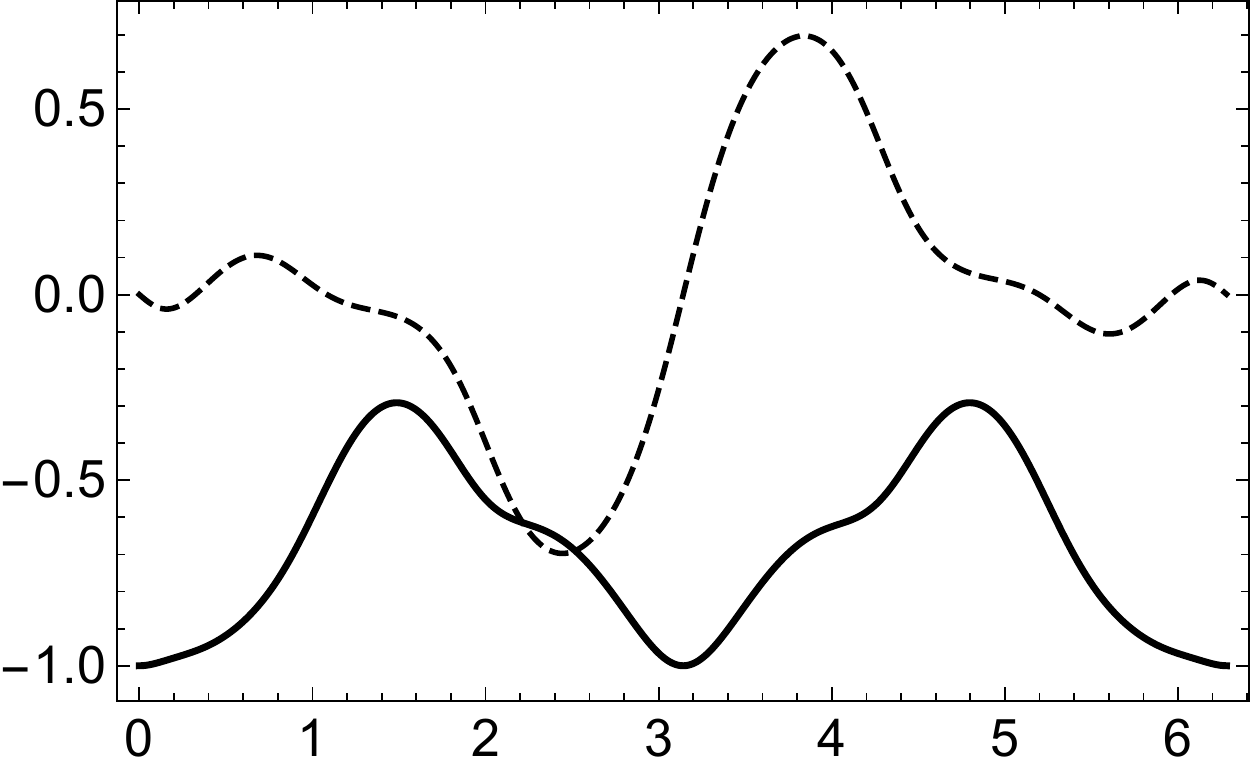}
    \put(50,-5){$\theta$}
    \put(-4,25){\rotatebox{90}{$R(e^{i \theta})$}}
  \end{overpic}\label{NS-R}
}\hspace{.15in}
\subfigure[]{
  \begin{overpic}[width=0.4\textwidth]{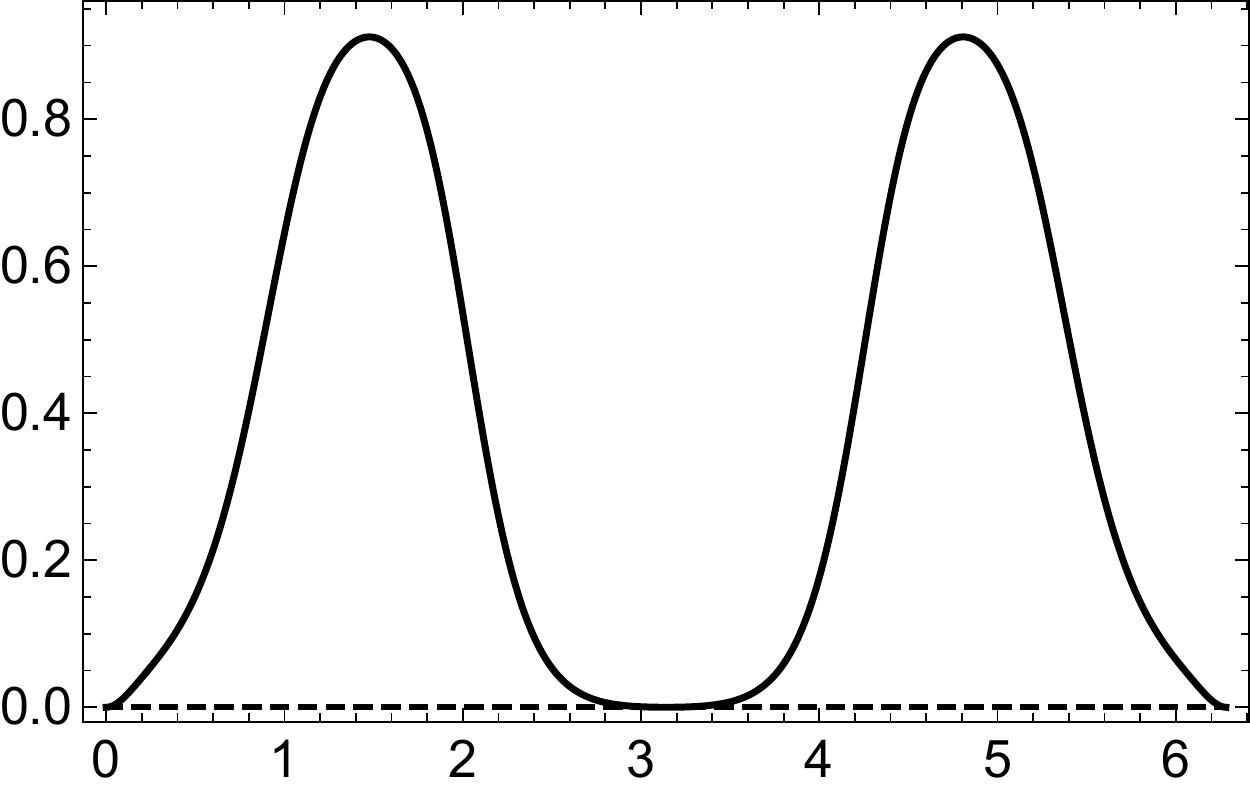}
    \put(50,-5){$\theta$}
    \put(-6.5,25){\rotatebox{90}{$1-|R(e^{i \theta})|^2$}}
  \end{overpic}
}

\caption{Numerical computation of the reflection coefficient $R(z)$ ($\Re R(z)$ solid curve, $\Im R(z)$ dashed graph) with initial data where (a) two eigenvalues are present (TS) and (c) $\sigma_{\text{pp}}(L)$ is empty (NS). We plot $1-|R(z)|^2$ for TS and NS initial data in (b) and (d), respectively. }

\label{F:reflection}
\end{figure}
\subsection{Inverse scattering}
In this section we present numerical results for the computation of the inverse scattering transform.  These results are of three flavors:
\begin{itemize}
\item Example solution plots,
\item error analysis, and
\item numerical asymptotics.
\end{itemize}

\subsubsection{Example solutions}
Here we present plots of the solution of the Toda lattice with both TS and NS initial data.  See Figures~\ref{F:Solution-DS-a} and \ref{F:Solution-DS-b} for plots of $a_n(t)$ and $b_n(t)$ when $ t= 100, 1000$ in the case of TS initial data.  Two solitons traveling in opposite directions are clearly visible.  Indeed, this is anticipated because $\zeta_1$ and $\zeta_2$ have opposite signs.  See Figures~\ref{F:Solution-NoS-a} and \ref{F:Solution-NoS-b} for plots of $a_n(t)$ and $b_n(t)$ when $t = 100, 1000, 10000$ in the case of NS initial data.  As stated above, no solitons are present in the solution and the high oscillation in the solution is apparent, especially at $t = 10000$.

\begin{figure}[htp]
  \subfigure[]{
    \begin{overpic}[width=0.8\textwidth]{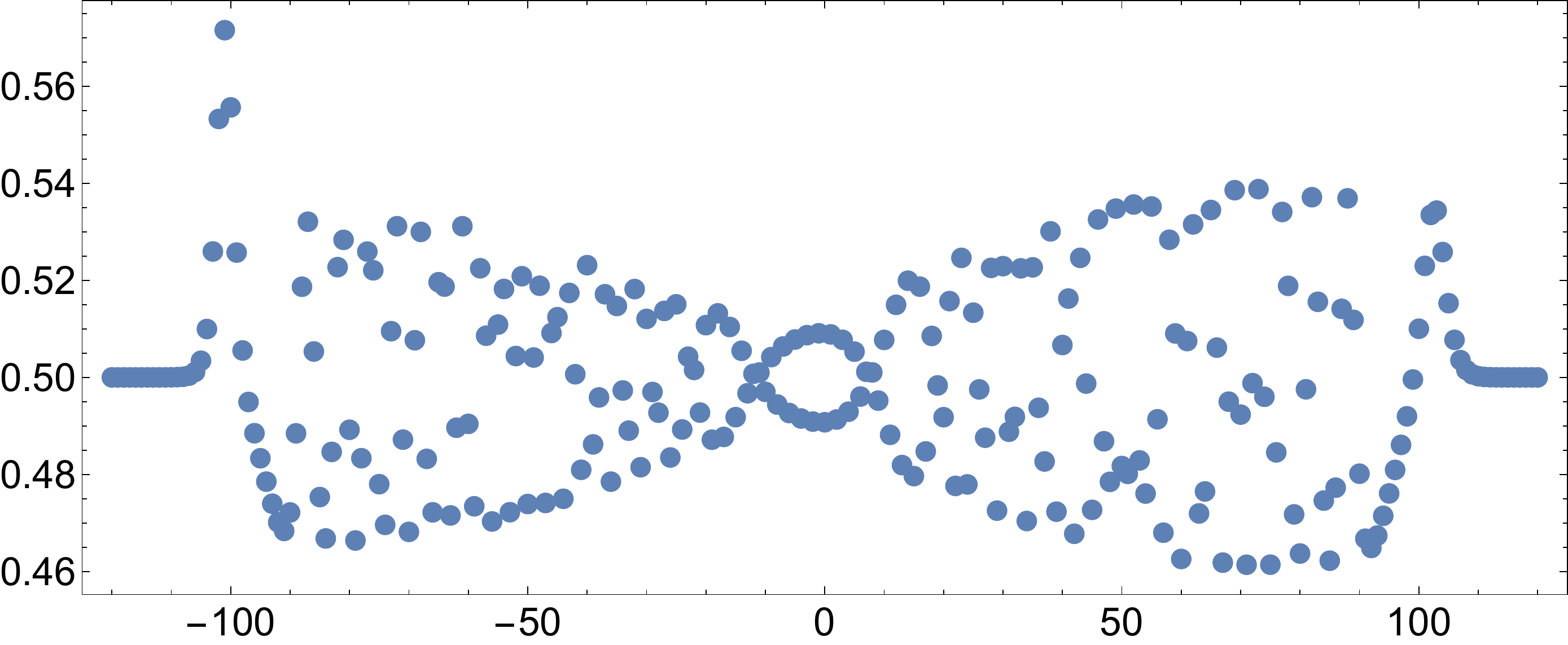}
    \put(52,-3){\large $n$}
    \put(-4,15){\rotatebox{90}{\large $a_n(100)$}}
  \end{overpic}
}
  \subfigure[]{
    \begin{overpic}[width=0.8\textwidth]{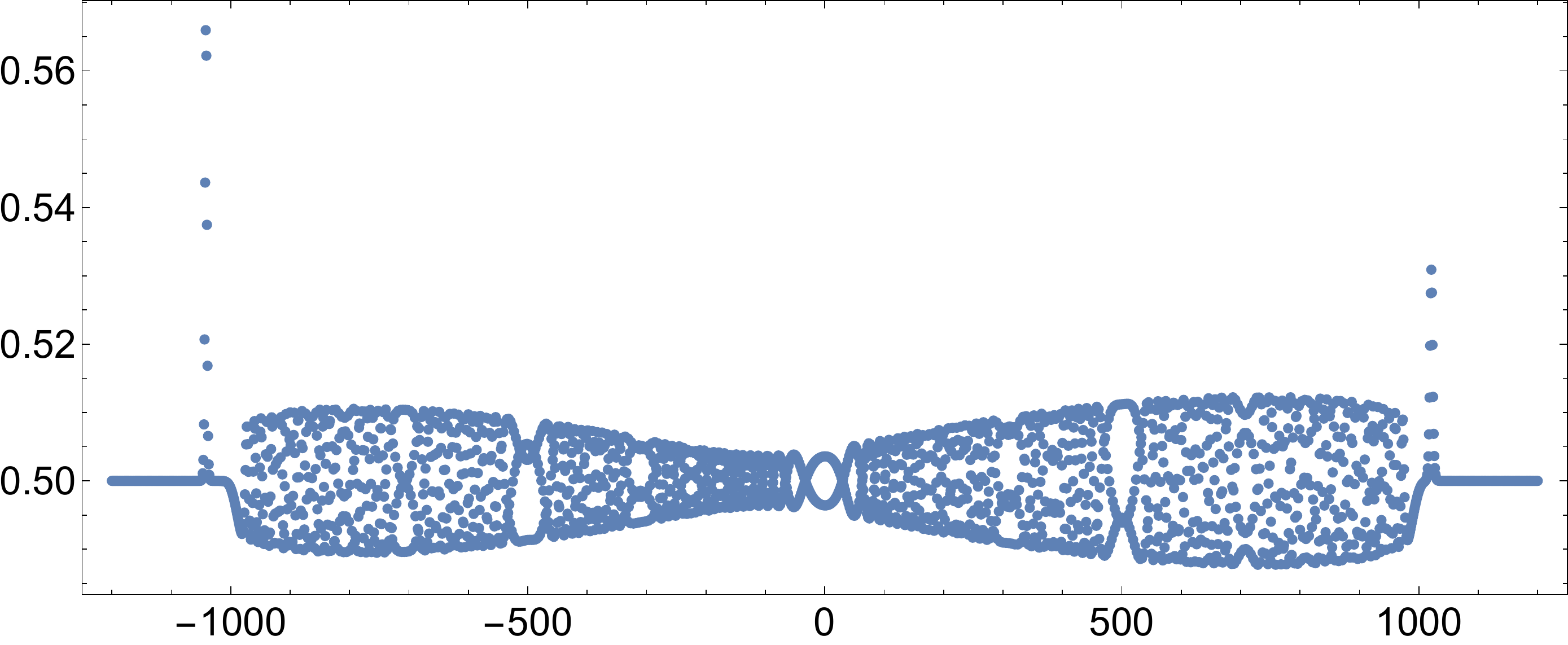}
    \put(52,-3){\large $n$}
    \put(-4,18){\rotatebox{90}{\large $a_n(1000)$}}
  \end{overpic}
}
\caption{Solution $a_n(t)$ obtained by numerical inverse scattering (a) at $t=100$, (b) at $t=1000$. The horizontal axis denotes the spatial parameter $n\in\mathbb{Z}$. The initial data produces two eigenvalues on opposite sides of the a.c.-spectrum giving rise to two solitons traveling in opposite directions.}
\label{F:Solution-DS-a}
\end{figure}

\begin{figure}[htp]
  \subfigure[]{
    \begin{overpic}[width=0.8\textwidth]{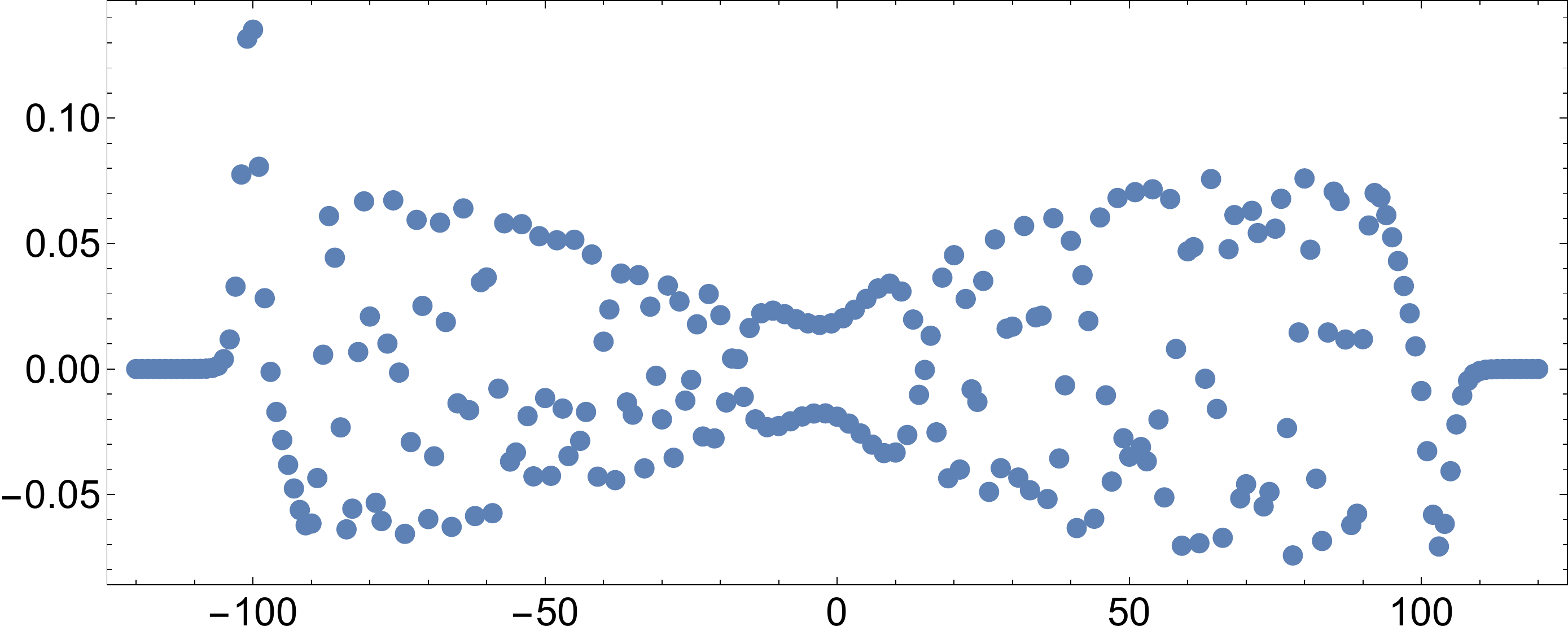}
    \put(52,-3){\large $n$}
    \put(-4,15){\rotatebox{90}{\large $b_n(100)$}}
  \end{overpic}
}
  \subfigure[]{
    \begin{overpic}[width=0.8\textwidth]{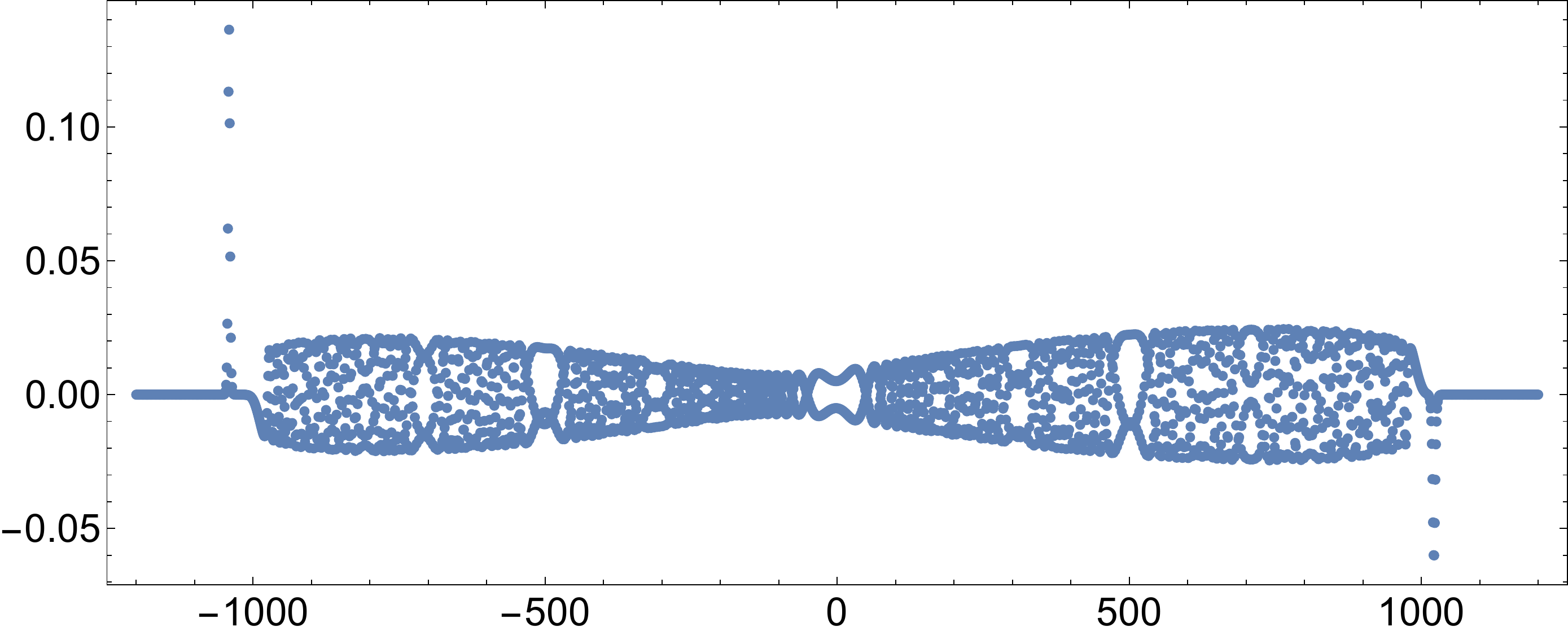}
    \put(52,-3){\large $n$}
    \put(-4,18){\rotatebox{90}{\large $b_n(1000)$}}
  \end{overpic}
}
\caption{Solution $b_n(t)$ obtained by numerical inverse scattering (a) at $t=100$, (b) at $t=1000$. The horizontal axis denotes the spatial parameter $n\in\mathbb{Z}$. The initial data produces two eigenvalues on opposite sides of the a.c.-spectrum giving rise to two solitons traveling in opposite directions.}
\label{F:Solution-DS-b}
\end{figure}

\begin{figure}[htp]
  \subfigure[]{
    \begin{overpic}[width=0.8\textwidth]{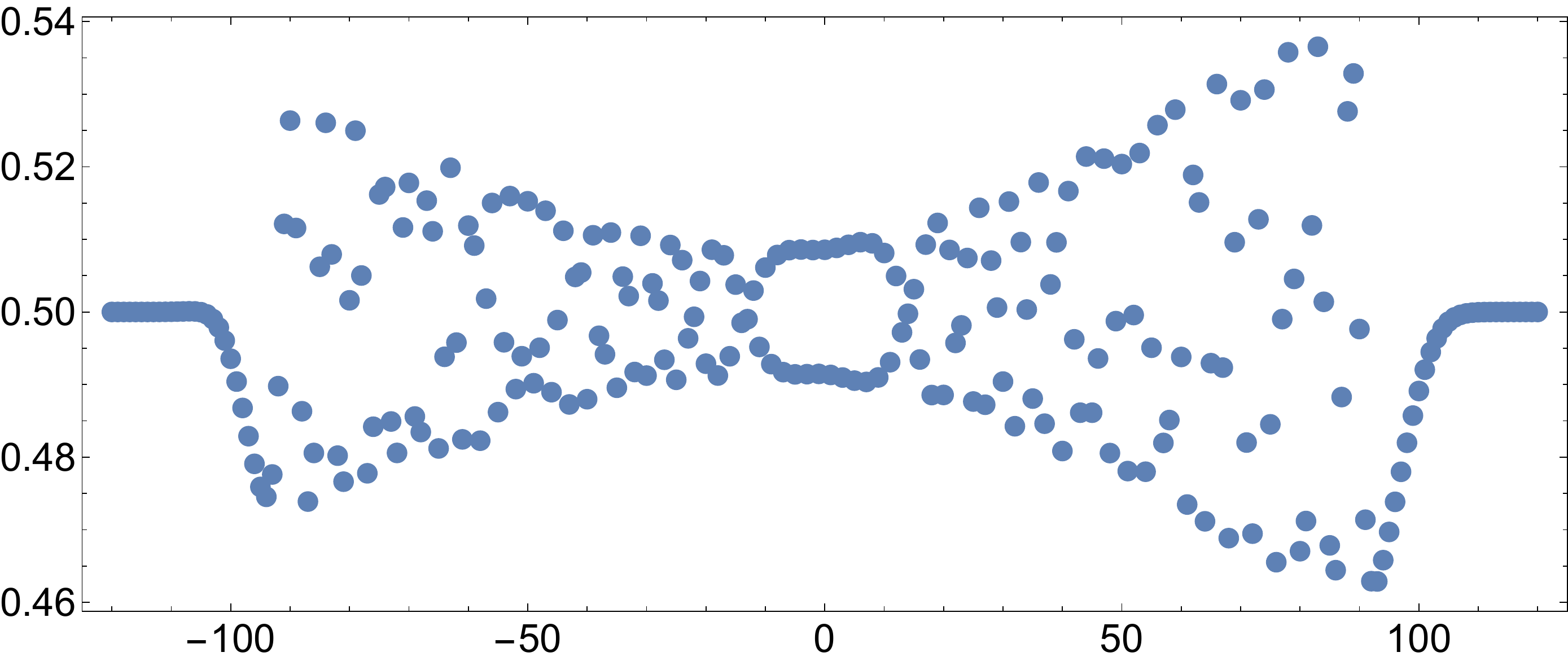}
    \put(52,-3){\large $n$}
    \put(-4,15){\rotatebox{90}{\large $a_n(100)$}}
  \end{overpic}
}
  \subfigure[]{
    \begin{overpic}[width=0.8\textwidth]{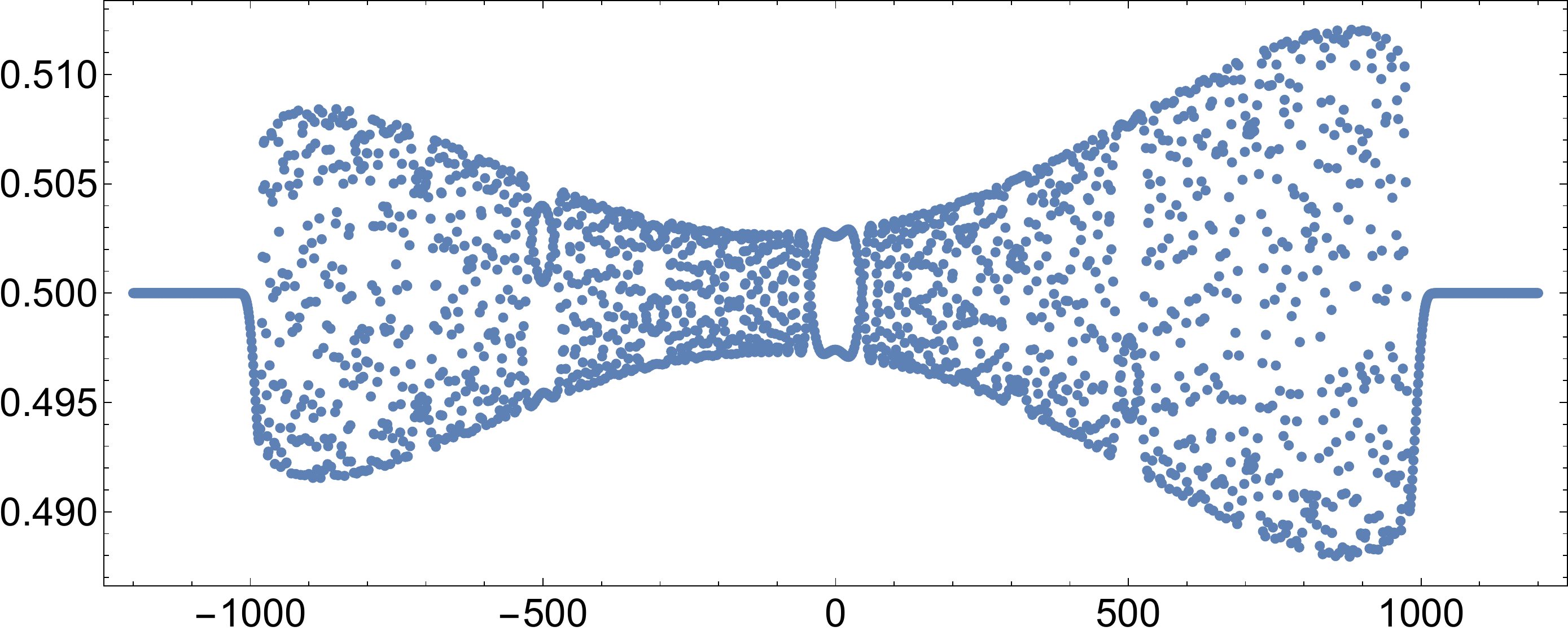}
    \put(52,-3){\large $n$}
    \put(-4,15){\rotatebox{90}{\large $a_n(1000)$}}
  \end{overpic}
}
  \subfigure[]{
    \begin{overpic}[width=0.8\textwidth]{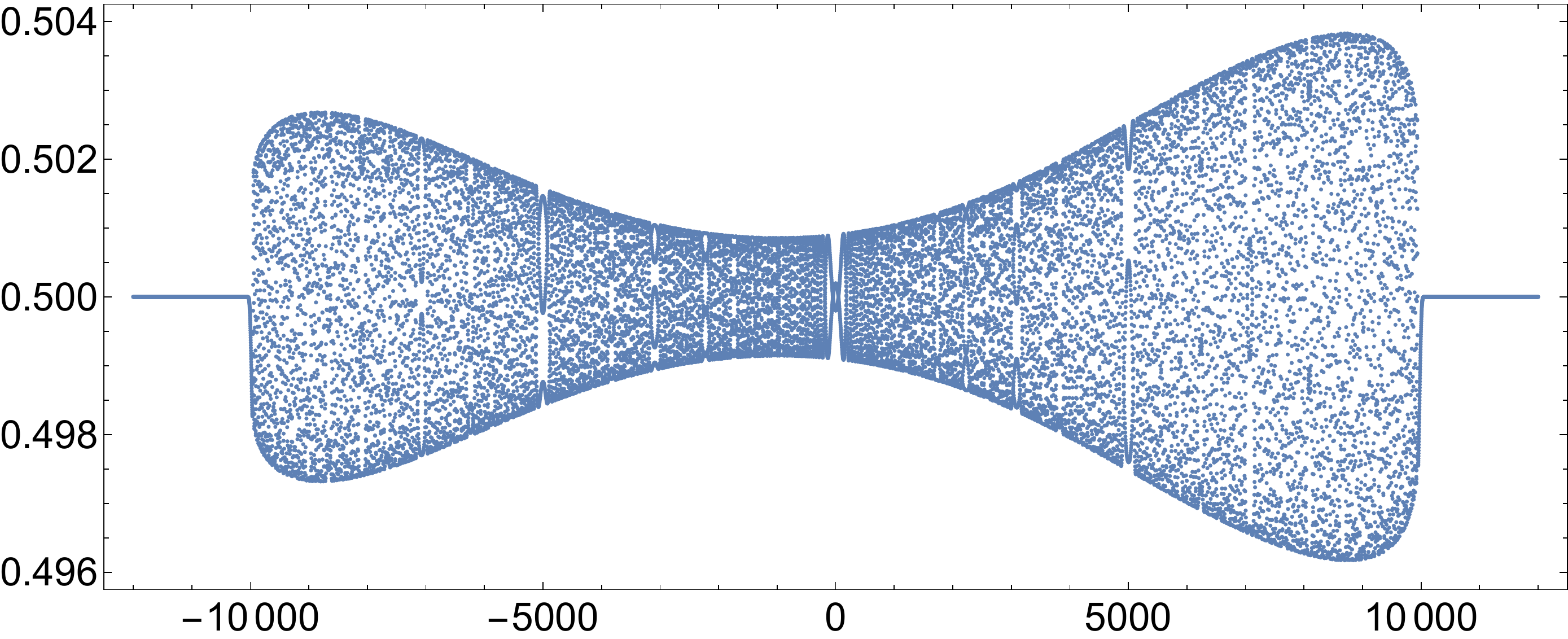}
    \put(52,-3){\large $n$}
    \put(-4,15){\rotatebox{90}{\large $a_n(10000)$}}
  \end{overpic}
}
\caption{Solution $a_n(t)$ obtained by numerical inverse scattering (a) at $t=100$, (b) at $t=1000$, (b) at $t=10000$. The horizontal axis denotes the spatial parameter $n\in\mathbb{Z}$. The initial data produces no eigenvalues and therefore no solitons are present in the solution.}
\label{F:Solution-NoS-a}
\end{figure}
\begin{figure}[htp]
 \subfigure[]{
    \begin{overpic}[width=0.8\textwidth]{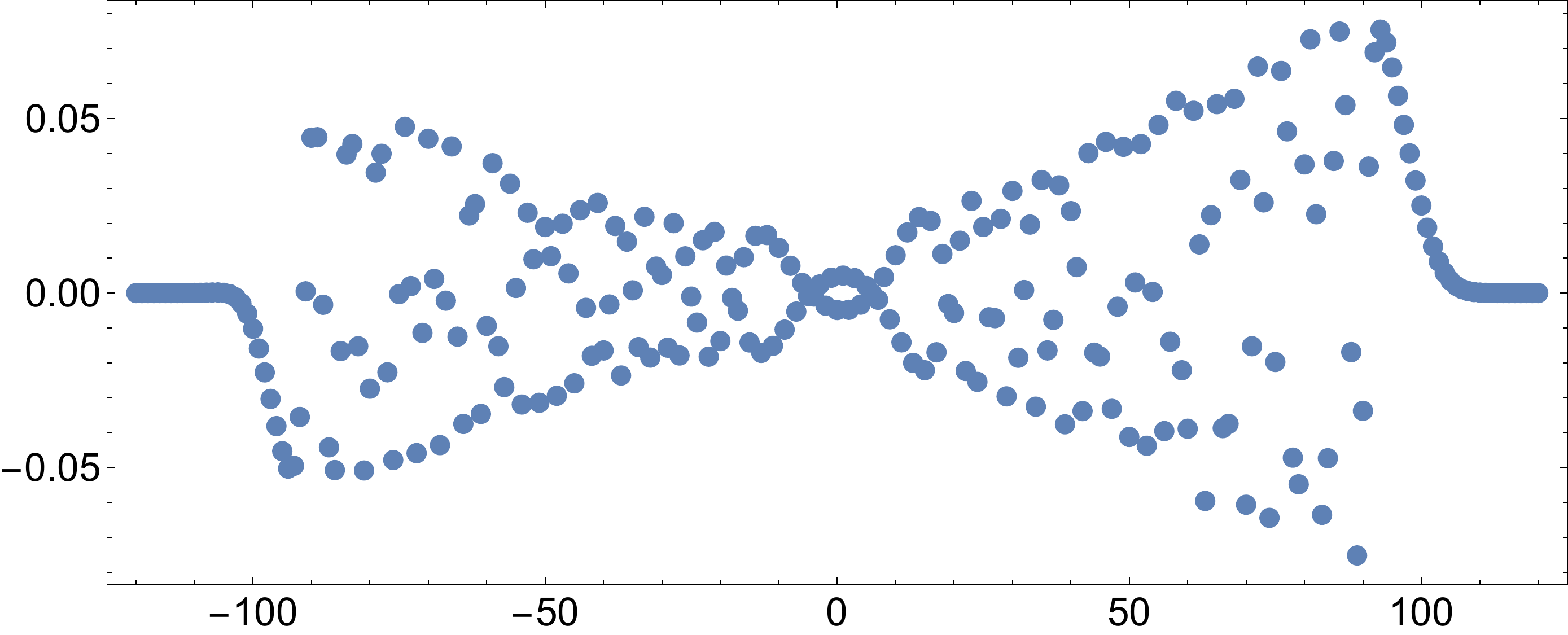}
    \put(52,-3){\large $n$}
    \put(-4,15){\rotatebox{90}{\large $b_n(100)$}}
  \end{overpic}
}
  \subfigure[]{
    \begin{overpic}[width=0.8\textwidth]{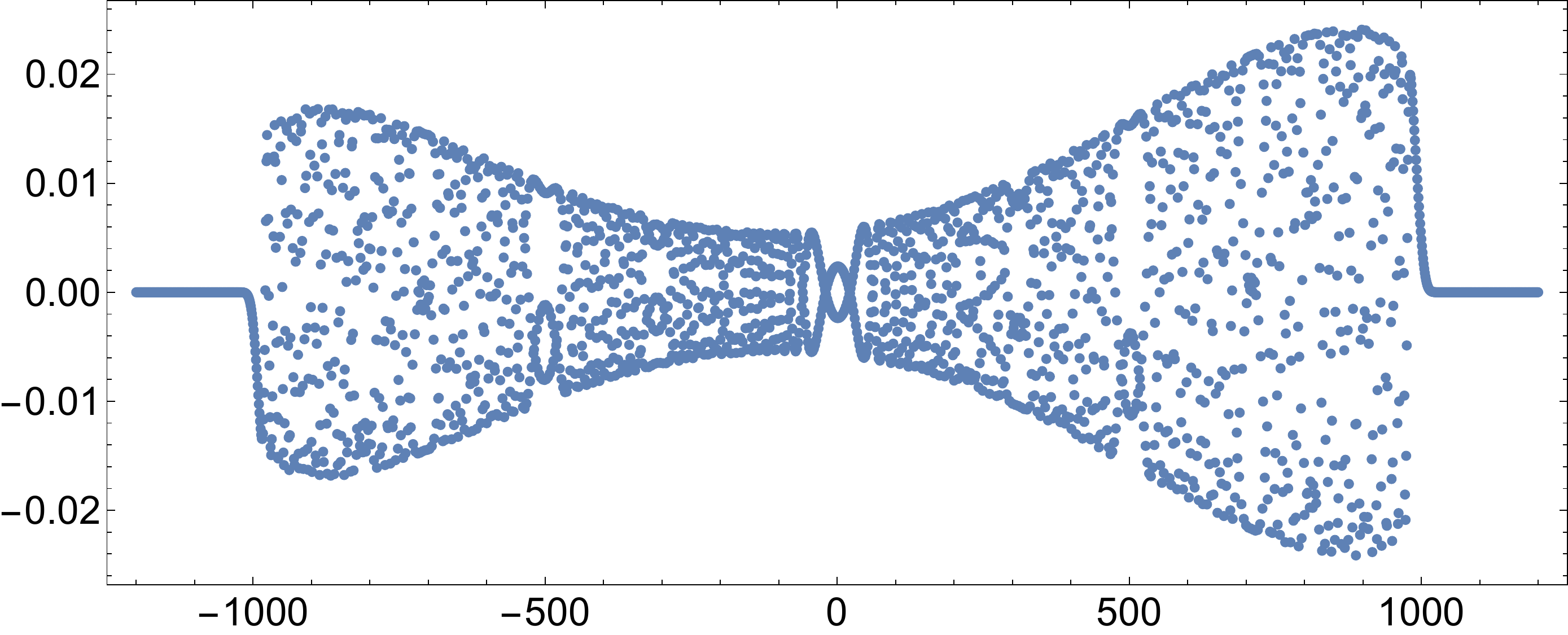}
    \put(52,-3){\large $n$}
    \put(-4,15){\rotatebox{90}{\large $b_n(1000)$}}
  \end{overpic}
}
  \subfigure[]{
    \begin{overpic}[width=0.8\textwidth]{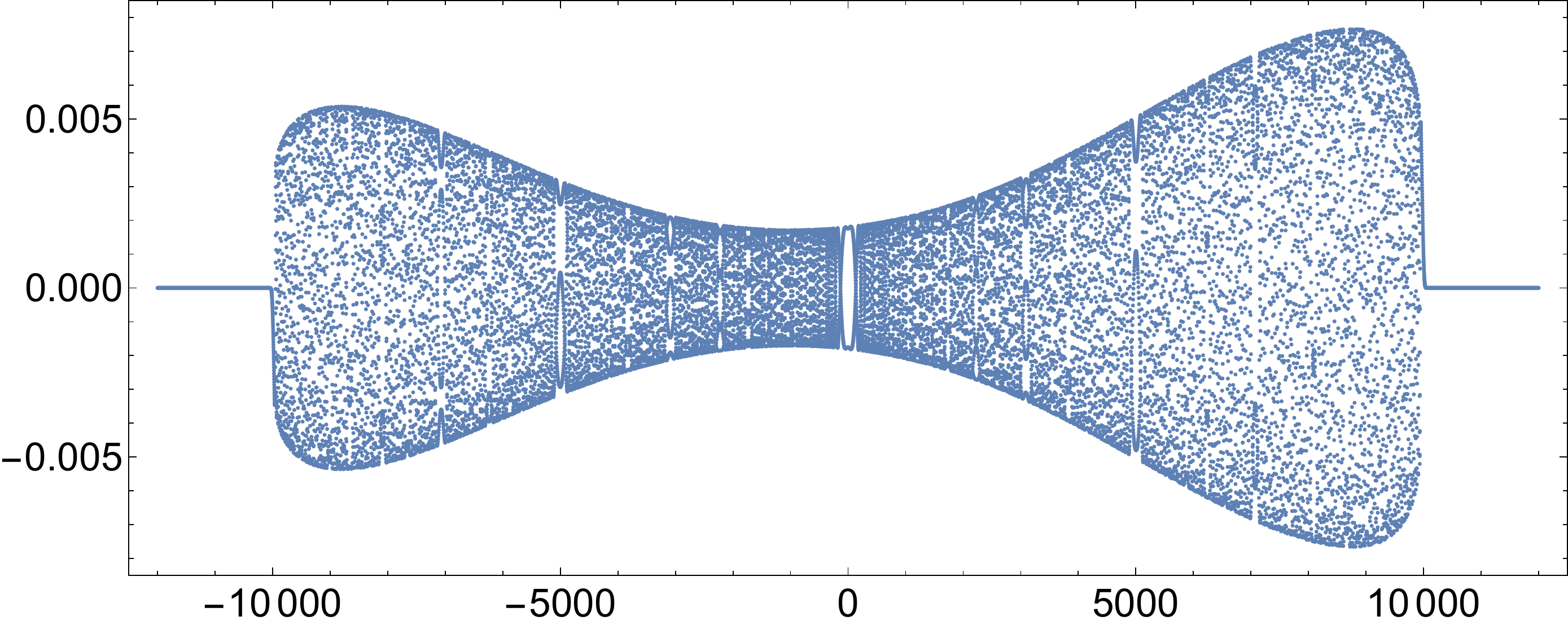}
    \put(52,-3){\large $n$}
    \put(-4,15){\rotatebox{90}{\large $b_n(10000)$}}
  \end{overpic}
}
\caption{Solution $b_n(t)$ obtained by numerical inverse scattering (a) at $t=100$, (b) at $t=1000$, (b) at $t=10000$. The horizontal axis denotes the spatial parameter $n\in\mathbb{Z}$. The initial data produces no eigenvalues and therefore no solitons are present in the solution.}
\label{F:Solution-NoS-b}
\end{figure}

\subsubsection{Error analysis}

To examine the accuracy of our numerical inverse scattering transform (IST) \emph{a posteriori} we compare it with a naive time-stepping method. A more detailed description of a related method can be found in \cite{BN}.   {Here we just use out-of-the-box Runge--Kutta 4.}  Fix $K > 0$ and consider the Toda lattice with Dirichlet boundary conditions: $a_{-K}(t) = a_K(t) = 1/2$, $b_{-K} = b_{K} = 0$.  Here $K$ is chosen sufficiently large so that the solution remains flat at the boundary $\pm K$ for all times simulated.  If $t < T$, it suffices to take $K = c T$ for some $c > 1$ which is larger than the speed of the fastest soliton present in the solution.  This is a finite-dimensional system of ODEs and can be integrated in time using the fourth-order Runge--Kutta method.  We compare the time-stepped solution at $t = 30$ with $\Delta t = 10^{-5}$ with the solution computed via the inverse scattering transform in Figure~\ref{F:Error}. As the number of collocation points in increased, the numerical inverse scattering solution converges exponentially to the true solution, with these errors saturating at approximately $10^{-11}$.  It is reasonable to expect that at this point the numerical IST gives a more accurate solution than the time-stepping method.   Furthermore, on a standard laptop it takes $\approx$ 6 seconds to compute the solution at $t = 30$, $n = 30$ with 720 collocation points using the IST and $\approx$ $2\times 10^5$ seconds with the naive time-stepping method\footnote{It should be noted that the time-stepping method produces an approximation of the entire solution profile in this time while the numerical IST gives the solution at only one point. Even so, it would take $\approx 1.2 \times 10^4$ second to compute the entire solution profile at this rate with the numerical IST.} {implemented in {\tt Mathematica}.  Presumably, by using more efficient integrators and software packages this time can be reduced by at least an order of magnitude but when computing at sufficiently long times, the numerical IST is guaranteed to have a shorter runtime.}  This comparison is pessimistic as contour truncation (see Figure~\ref{F:truncate} and the next paragraph) reduces IST computation times as $t$ increases while time-stepping methods see their complexity increase.

We emphasize that the number of collocation points required to solve the Toda lattice using the numerical inverse scattering transform to a given accuracy is typically decreasing with respect to $t$.  This is because the deformations performed on the original {\RHP} force the jump matrices on some contours to tend exponentially fast to the identity matrix as $t$ increases.  Thus, after truncation, discussed in Section~\ref{S:numerics}, fewer contours need to be discretized in the {\RHP} resulting in fewer collocation points.  To see this in action, using the contour truncation algorithm described in \cite{Gauss}, see Figure~\ref{F:truncate}.  Thus the errors seen in Figure~\ref{F:Error} are pessimistic for large values of $t$.

\begin{remark}\label{r:singular}
Given a uniquely solvable vector {\RHP} such as \rhref{rhp:v}, it does not follow that the associated matrix solution exists.  It is well-known (see \cite{ZhouIS}, for example) that the operator $\mathcal C[J;\Gamma]$ associated with \rhref{rhp:hm} is Fredholm with index zero.  Furthermore $\mathcal C[J^{-1};\Gamma]\mathcal C[J;\Gamma] = I + K$ where $K$ is a compact operator.  For fixed $n$, $K = K(t)$ is analytic in $t$ and then by the analytic Fredholm theorem $\mathcal C[J;\Gamma]$ is either never invertible or invertible on the compliment of a discrete set values of $t$.  From the work of \cite{DKKZ}, it can be deduced that for each fixed $n$, there exists $t^*(n)> 0$ such that if $t > t^*(n)$ then $\mathcal C[J;\Gamma]$ is invertible.  Therefore, we know that there is only a discrete set of possible values of $(n,t)$ where $\mathcal C[J;\Gamma]$ may fail to have an inverse.  This is discussed in Appendix~\ref{A:solve}.

To investigate the possibility of encountering a point $(n,t)$ we perform the following computation.  Fix $t \geq 0$ and vary $n$.  For each value of $(n,t)$ we plot the smallest singular value of the discretization of $\mathcal C[J;\Gamma]$ found using the numerical method described in this section.  It is known that $\mathcal C[J;\Gamma]$ is always invertible when no solitons are present --- when no poles are present in \rhref{rhp:m}.  In Figure~\ref{f:singular} we perform this experiment for initial data with and without solitons.  Because $n$ is discrete, and we will only ever evaluate at discrete times, we do not ever expect to encounter a singular operator $\mathcal C[J;\Gamma]$.
\end{remark}
\begin{figure}[htp]
\centering
\subfigure[]{\includegraphics[width=.49\linewidth]{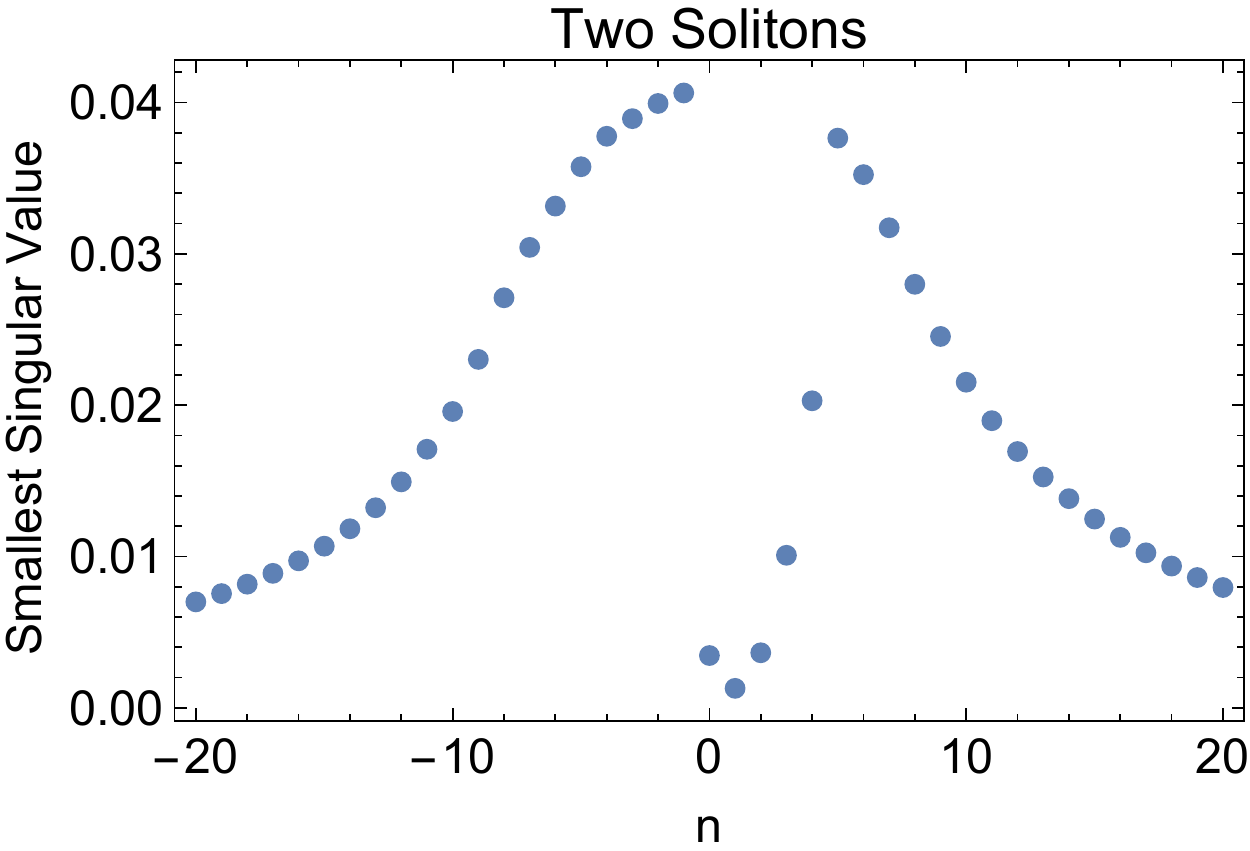}}
\subfigure[]{\includegraphics[width=.49\linewidth]{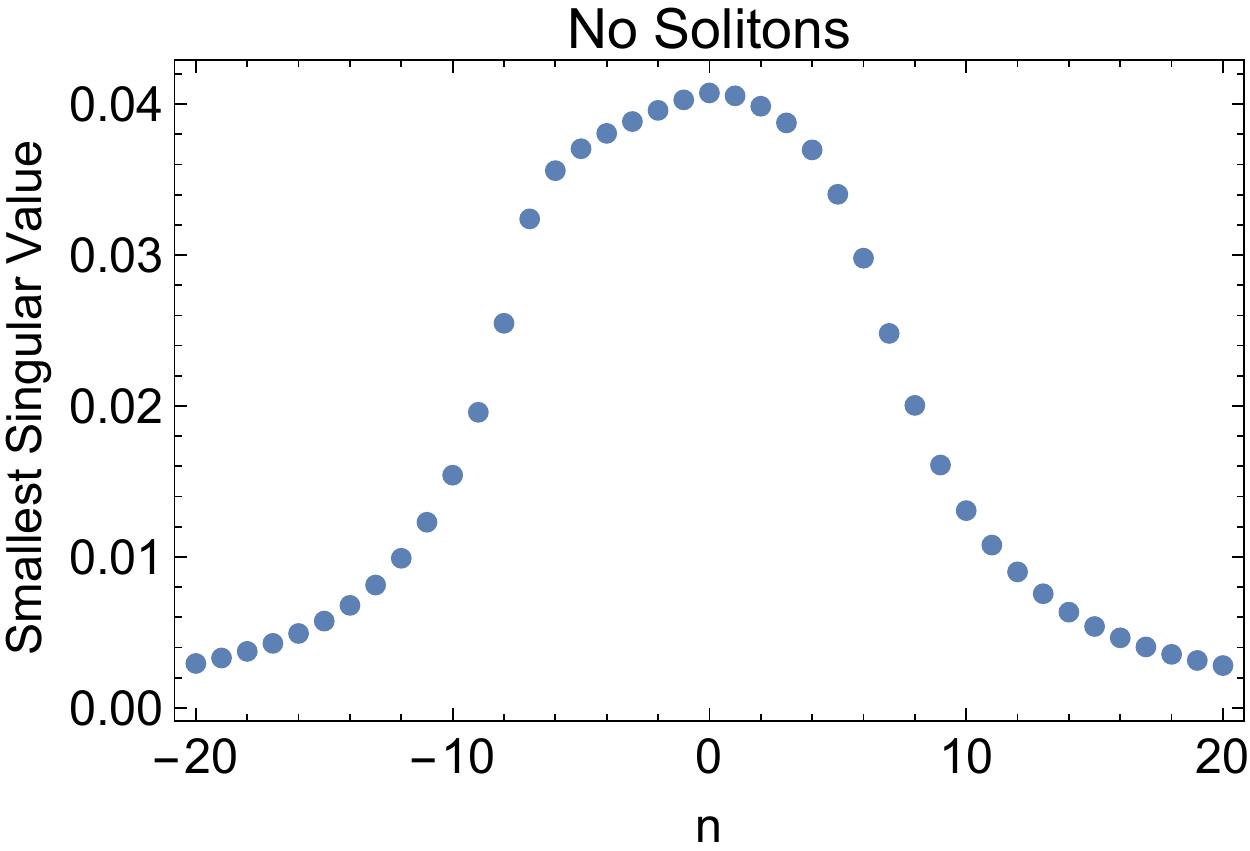}}
\caption{\label{f:singular} Plotting singular values for $t = 0$.  (a) A plot of the singular values of the discretization of the operator $\mathcal C[J;\Gamma]$from \rhref{rhp:hm} when the initial data produces two solitons.  (b) A plot of the singular values of the discretization of the operator $\mathcal C[J;\Gamma]$ from \rhref{rhp:hm} when the initial data produces no solitons.  In comparing these two plots, it is clear that the smallest singular value abruptly approaches zero, but due to the discrete nature, a zero singular value is not encountered.  In panel (a), the minimum at $n = 1$ is $\approx 0.0013$.  For the exact form of the initial data see (NS) and (TS) initial data in Section~\ref{S:numerics}.}
\end{figure}

\begin{figure}[htp]
  \begin{overpic}[width=0.5\textwidth]{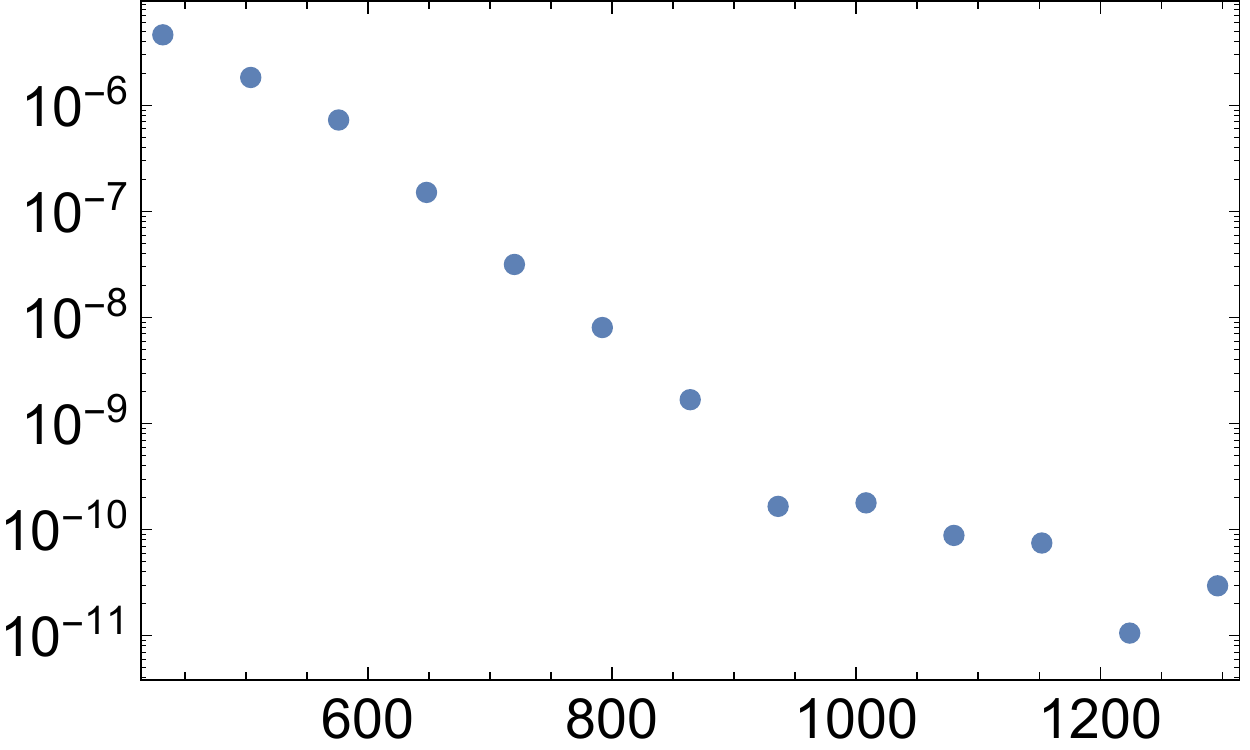}
    \put(45,-5){\large Collocation points}
    \put(-4,25){\large \rotatebox{90}{Error}}
  \end{overpic}
\vspace{.1in}

\caption{\label{F:Error} The error between a solution of the Toda lattice computed using time stepping and numerical inverse scattering.  The error shown is the maximum of the error computed at the points $(n,t) = (-51,30), (-31,30), (-11,30)$ plotted versus the number of collocation points used in the numerical solution of the associated {\RHP}. }
\end{figure}

\begin{figure}[htp]
  \subfigure[]{\includegraphics[width=.45\linewidth]{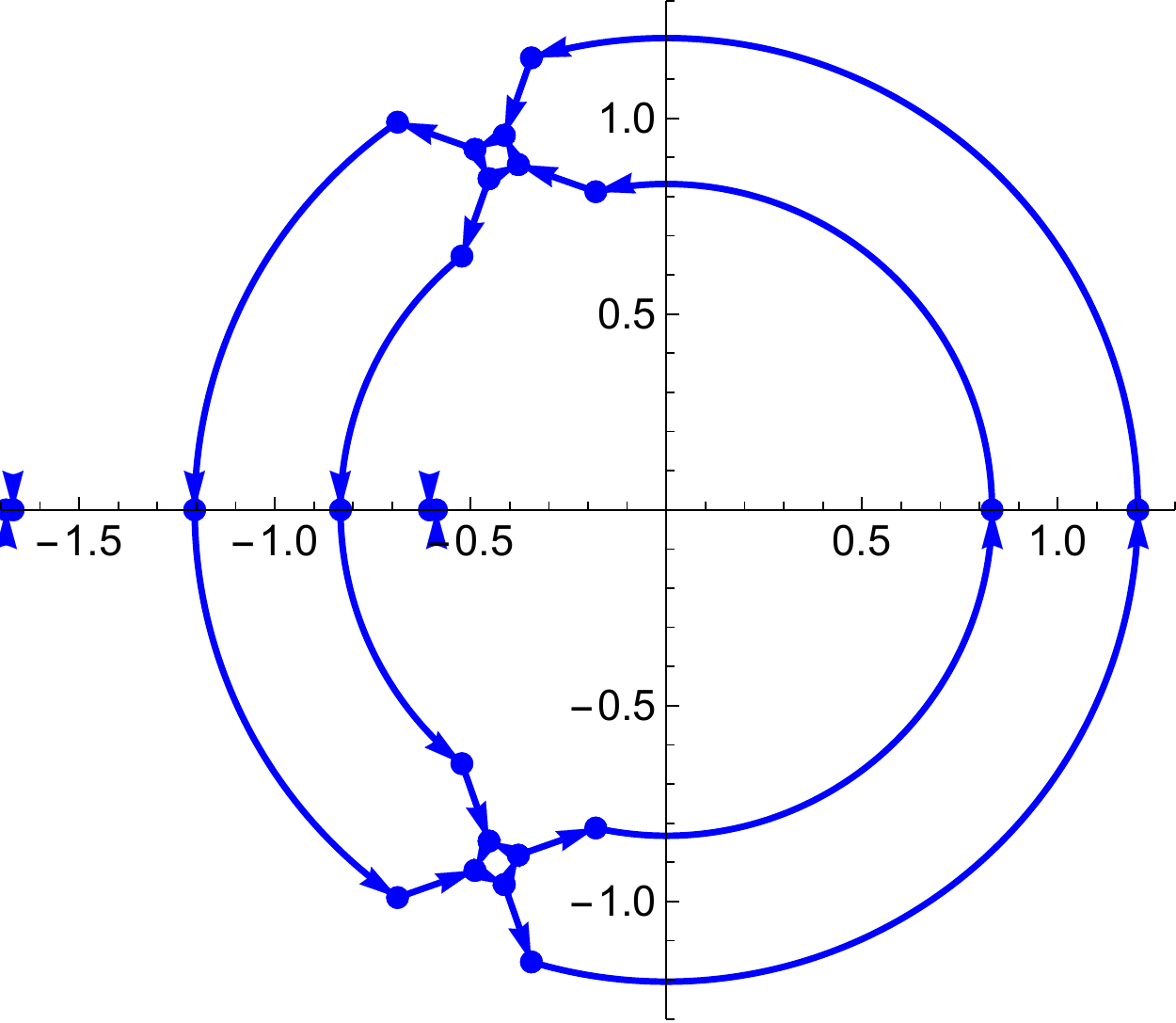}}
  \subfigure[]{\includegraphics[width=.45\linewidth]{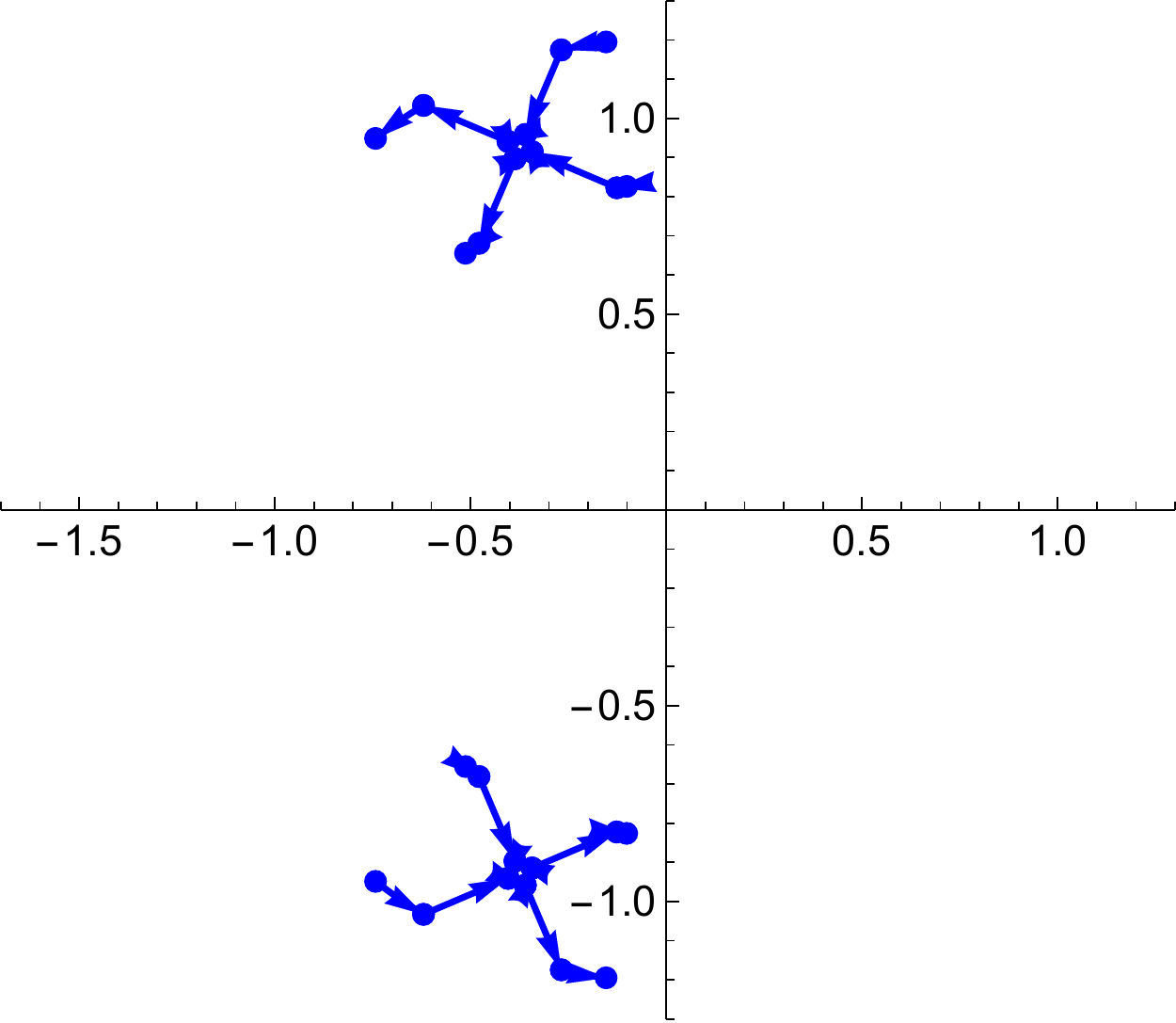}}
  \subfigure[]{\includegraphics[width=.45\linewidth]{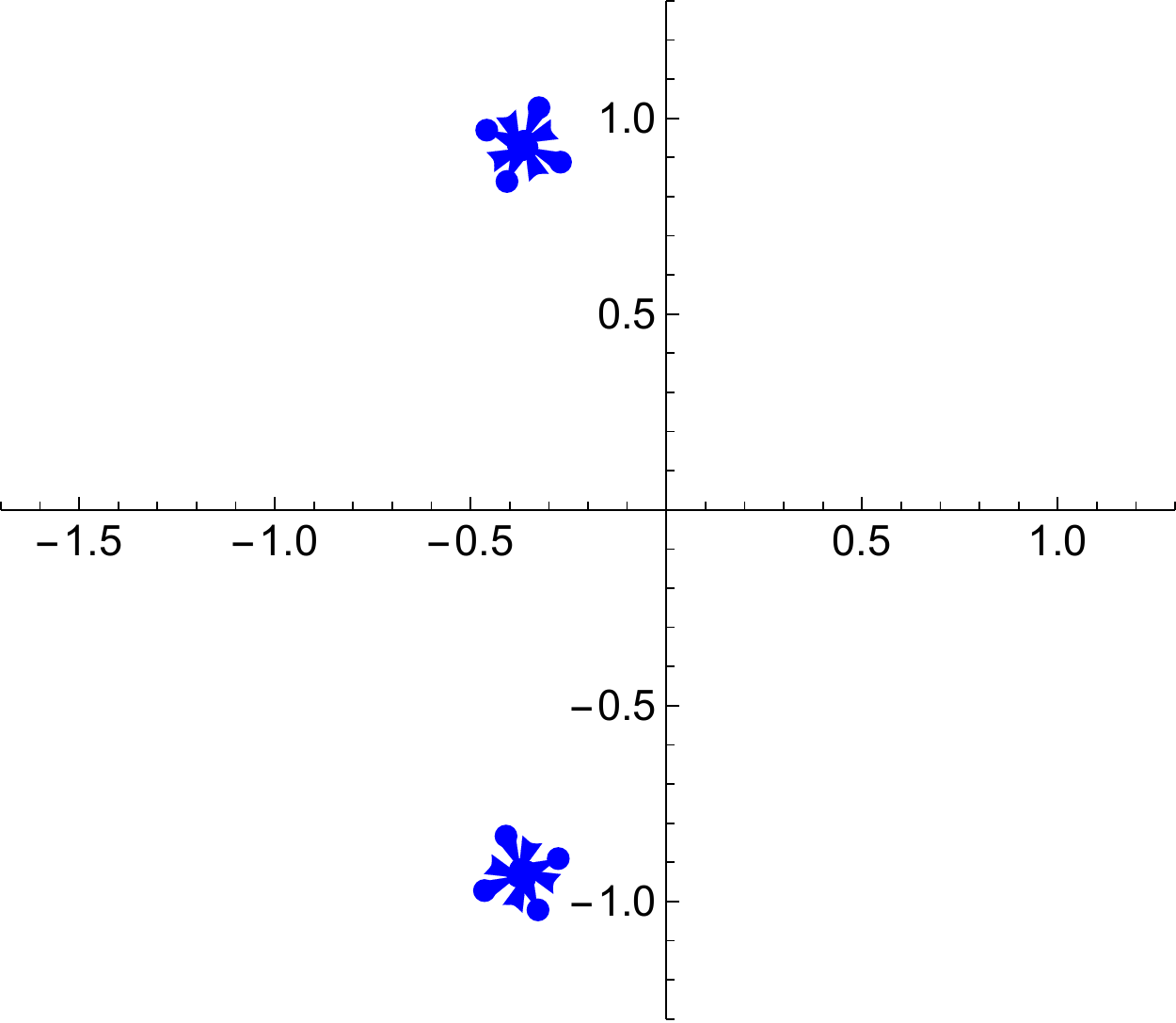}}
  \caption{Contours used in numerical inverse scattering for the Toda lattice in the dispersive region with TS initial data.  This figure shows how contour truncation localizes the contours as $t \goto \infty$.  (a) The contours for $(n,t) = (-11,30)$ with 28 total contours.  (b) The contours for $(n,t) = (-110,300)$ with 24 total contours (soliton contours have been truncated). (c) The contours for $(n,t) = (-1100,3000)$ with 14 total contours.}
 \label{F:truncate}
\end{figure}

\subsubsection{Numerical asymptotics}
We have seen that the accuracy of the numerical method is easily verified for short/moderate times by the comparison with time stepping methods. Thus in terms of accuracy, the method will also out-perform the asymptotic formulae (\cite{Tes, Tesch02, Kam}) for long-time. {Thus, with a numerical inverse scattering transform the numerical evaluation of asymptotic formulae (which is truly a non-trivial task) is no longer necessary in many cases.}  To demonstrate this, we show the long-time behavior of the solution with NS initial data in the dispersive, Painlev\'e and collisionless shock regions.  We call such computations \emph{numerical asymptotics}.
\begin{itemize}
\item {Dispersive region}.\newline
  To show the solution in the dispersive region we let $n$ depend on $t$ through $n = \lfloor 7t/10 \rfloor$ where $\lfloor \cdot \rfloor$ represents the integer part.  See Figure~\ref{F:NA-disp} for a plot of the solution into the dispersive region.
  \begin{figure}[h]
    \begin{overpic}[width=0.4\textwidth]{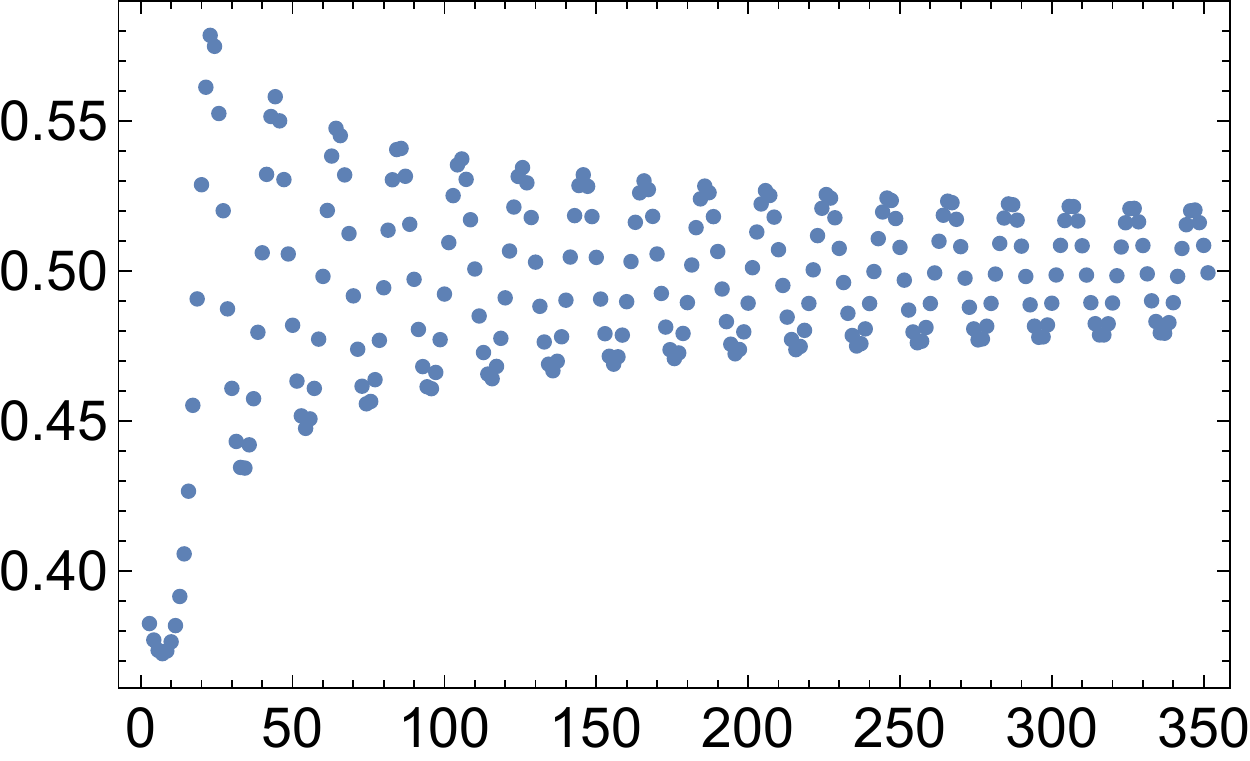}
    \put(52,-4){\large $t$}
    \put(-8,25){\rotatebox{90}{\large $a_n(t)$}}
    \end{overpic}
    \hspace{.12in}
    \begin{overpic}[width=0.4\textwidth]{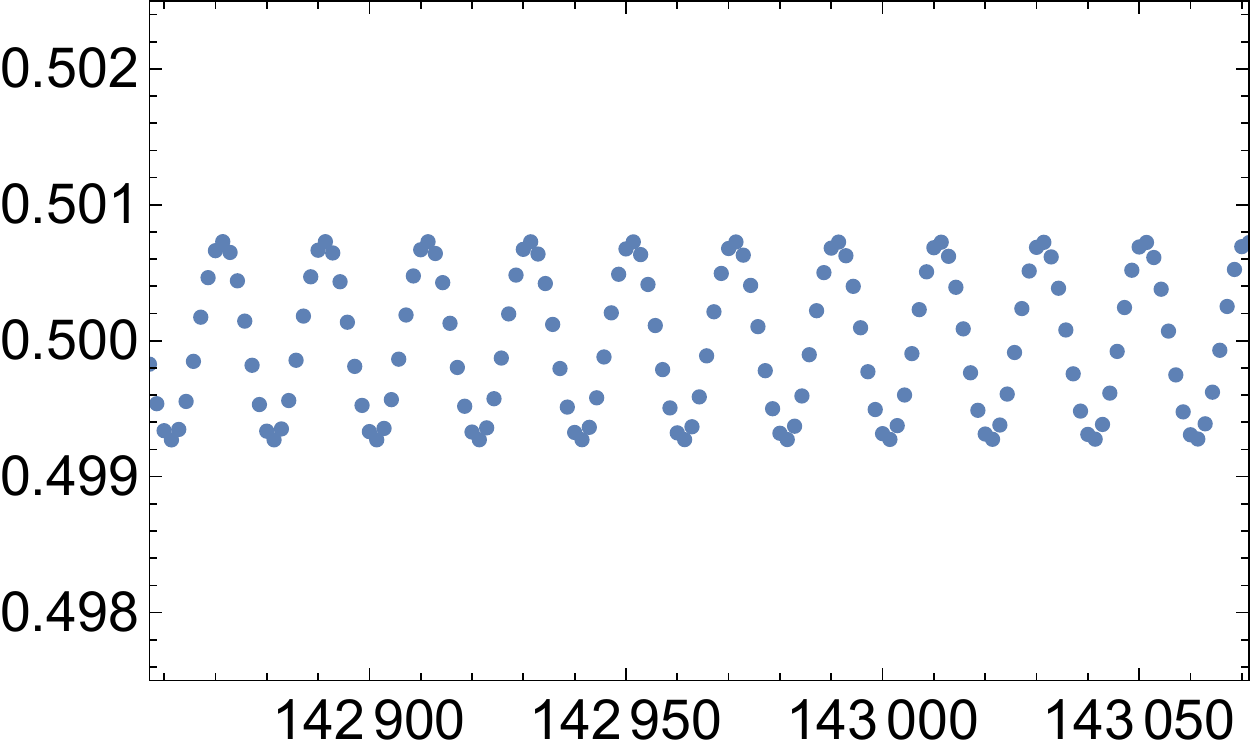}
    \put(52,-4){\large $t$}
    \put(-8,25){\rotatebox{90}{\large $a_n(t)$}}
    \end{overpic}
    \caption{\label{F:NA-disp}  Numerical asymptotics in the dispersive region: $n = \lfloor 7t/10 \rfloor$.  Such computations are accurate for arbitrarily large $t$.}
  \end{figure}

  \item {Painlev\'e region}.\newline
  To show the solution in the Painlev\'e region we let $n$ depend on $t$ through $n = \lfloor t-t^{1/3} \rfloor$.  See Figure~\ref{F:NA-PII} for a plot of the solution into the Painlev\'e region.
  \begin{figure}[h]
    \begin{overpic}[width=0.38\textwidth]{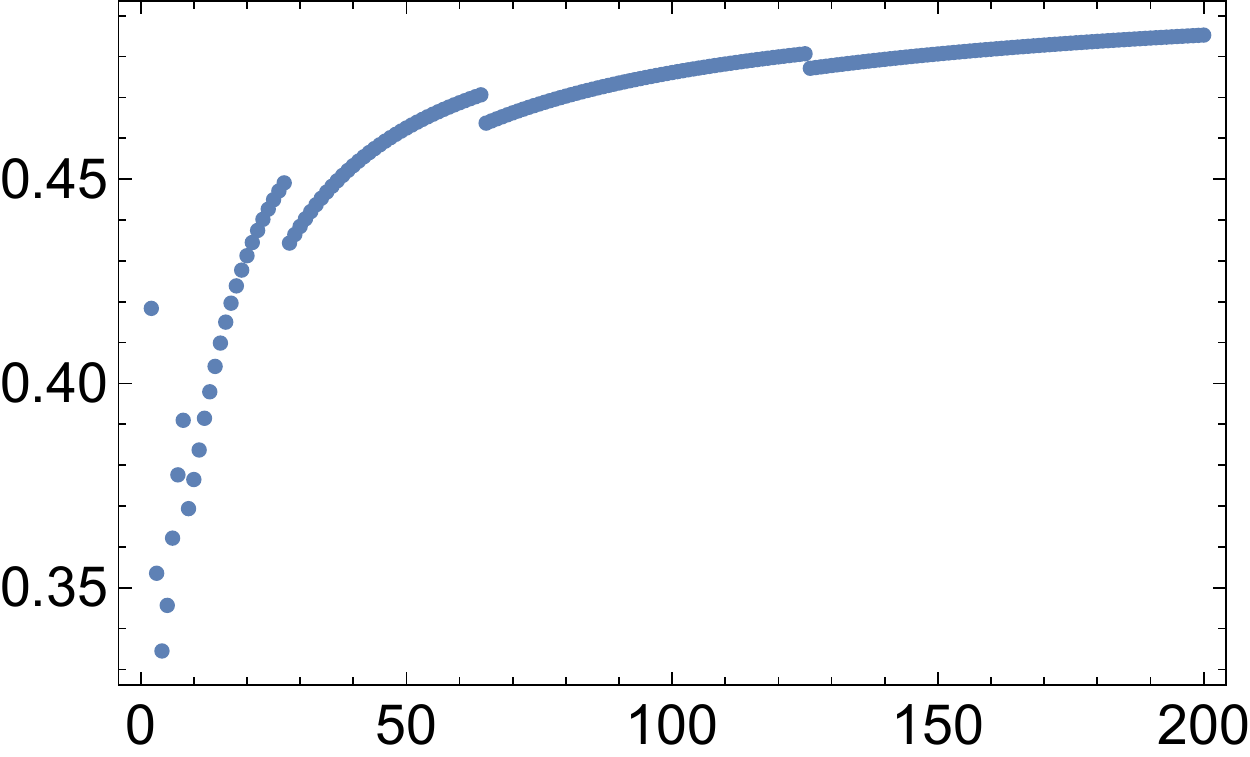}
    \put(52,-4){\large $t$}
    \put(-8,25){\rotatebox{90}{\large $a_n(t)$}}
    \end{overpic}
    \hspace{.12in}
    \begin{overpic}[width=0.42\textwidth]{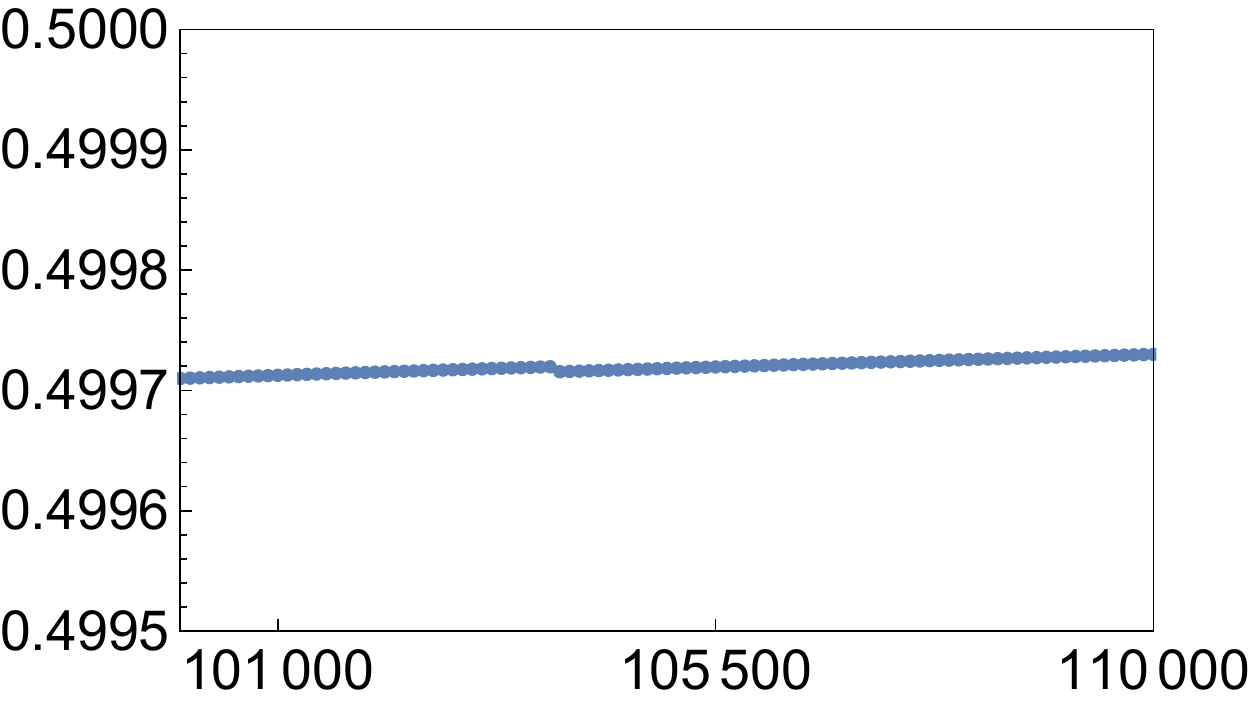}
    \put(52,-4){\large $t$}
    \put(-8,25){\rotatebox{90}{\large $a_n(t)$}}
    \end{overpic}
    \caption{\label{F:NA-PII}  Numerical asymptotics in the Painlev\'e region: $n = \lfloor t-t^{1/3} \rfloor$.  Such computations are accurate for arbitrarily large $t$}
  \end{figure}

  \item {Collisionless shock region}.\newline
  To show the solution in the collisionless shock region we let $n$ depend on $t$ through $n = \lfloor t-t^{1/3}(\log t)^{2/3} \rfloor$.  See Figure~\ref{F:NA-CS} for a plot of the solution into the collisionless shock region.
  \begin{figure}[h]
    \begin{overpic}[width=0.38\textwidth]{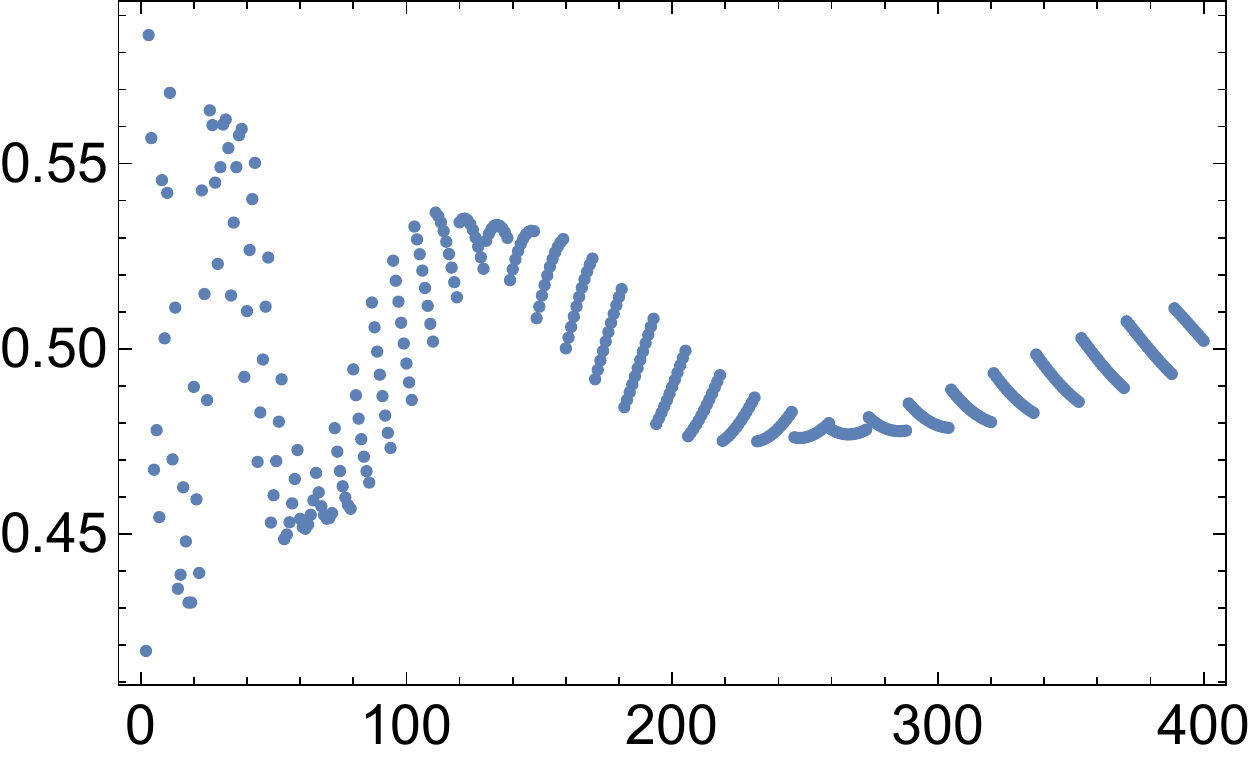}
    \put(52,-4){\large $t$}
    \put(-8,25){\rotatebox{90}{\large $a_n(t)$}}
    \end{overpic}
    \hspace{.12in}
    \begin{overpic}[width=0.42\textwidth]{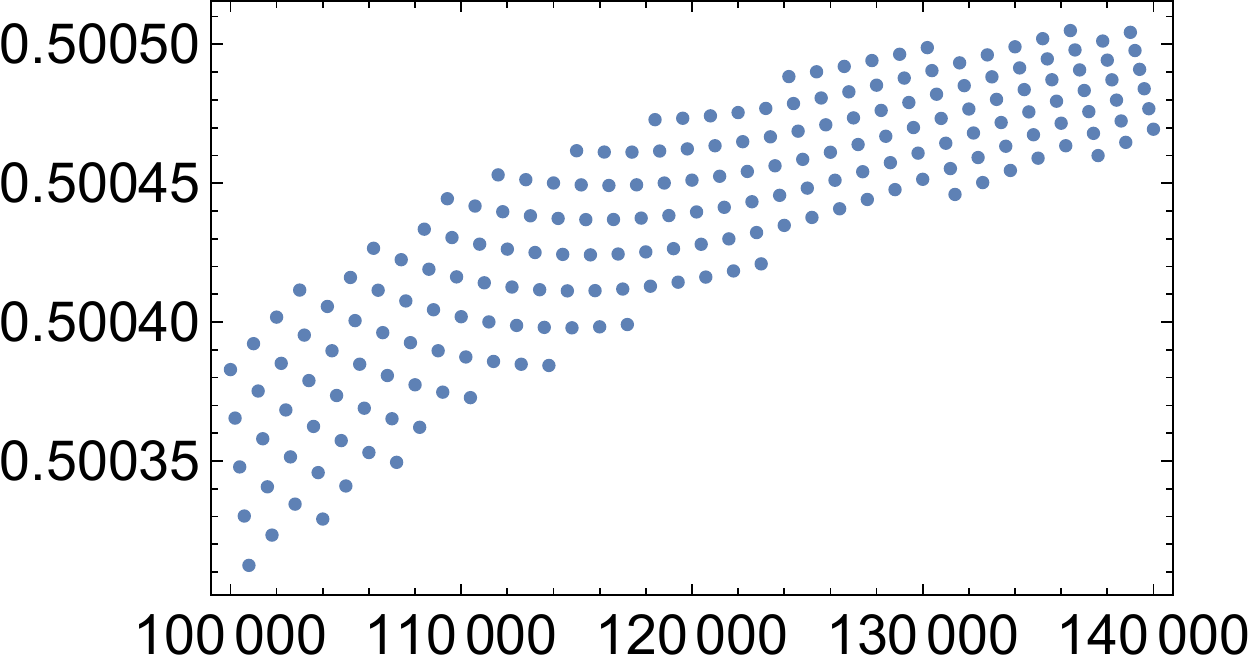}
    \put(52,-4){\large $t$}
    \put(-8,25){\rotatebox{90}{\large $a_n(t)$}}
    \end{overpic}
    \caption{\label{F:NA-CS}  Numerical asymptotics in the collisionless shock region: $n = \lfloor t-t^{1/3}(\log t)^{2/3} \rfloor$.  Such computations are accurate for arbitrarily large $t$.}
  \end{figure}

\end{itemize}

\appendix

\section{Solving the singular diagonal {\RHP}s}\label{A:sing}
There are two diagonal {\RHP}s that must be solved, and computed numerically, in our deformation procedures. The first is $\Delta(z)$ in \rhref{rhp:delta} and the second is $\psi(z)$ in~\rhref{rhp:psi}. Note that because $g^+(z)-g^-(z)$ is constant for $z \in \oarc{\alpha,\alpha^{-1}}$, these problems are both of the form:
\begin{rhp}\label{rhp:X}
\begin{align*}\begin{split}
\mathcal X^+(z) = \mathcal X^-(z) \begin{pmatrix} c \tau(z) & 0 \\ 0 & (c \tau(z))^{-1} \end{pmatrix},~~ z &\in \oarc{\gamma, \gamma^{-1}}, ~~ |\gamma| = 1, ~~\Im \gamma > 0,\\
\mathcal X(\infty) &= I,\\
\mathcal X(z)\diag\left(|z+1|^{-1},|z+1|\right) &= \mathcal O(1),~~ z \to -1, ~~ |z| < 1,\\
\mathcal X(z)\diag\left(|z+1|,|z+1|^{-1}\right) &= \mathcal O(1),~~ z \to -1, ~~ |z| > 1,
\end{split}\end{align*}
for a positive real constant $c$ such that $\mathcal X(z)$ is bounded for $z$ in a neighborhood of $\gamma$, $\gamma^{-1}$ and the boundary values $\mathcal X^\pm(z)$ are not continuous at $\gamma, \gamma^{-1}$ and $-1$.
\end{rhp}

It follows from classical theory that $\mathcal X_{12} \equiv \mathcal X_{21} \equiv 0$, $\mathcal X_{11}(z) = 1/\mathcal X_{12}(z)$. From this
\begin{align*}
\mathcal X_{11}(z) = \exp\left( \frac{1}{2 \pi i} \int_{\oarc{\gamma,\gamma^{-1}}} \frac{ \log(c\tau(s))}{s-z} ds \right).
\end{align*}
Furthermore, because $\log(c\tau(s))$ {is real valued} it follows that $\mathcal X_{11}(z)$ is bounded near $\gamma$ and $\gamma^{-1}$, see \cite{SIE}. The difficulty here is evaluating $\mathcal X_{11}(z)$ numerically because $\log \tau(s)$ is singular at $s = -1 \in \oarc{\gamma,\gamma^{-1}}$. We regularize the integrand by considering\footnote{As the amplitude of the initial data increases, $\tilde \tau(\pm 1)$ approaches zero. Round-off error that is amplified by forming the ratio may become significant and degrade the accuracy obtained in evaluating $\tilde \tau$.  In this case, once should look for a different method to compute $\mathcal X_{11}$.}
\begin{align*}
\tilde \tau(z) = -\frac{\tau(z)}{(z-z^{-1})^2} > 0, ~~ z \in \oarc{\gamma,\gamma^{-1}},
\end{align*}
so that the Cauchy integral of the smooth function $\log \tilde \tau(z)$ along $\oarc{\gamma,\gamma^{-1}}$ can be computed with the methods referenced in Section~\ref{S:numerics}. {Consider}
\begin{align*}
  \mathcal I(z;\gamma) &\defeq \frac{1}{2 \pi i} \int_{\oarc{\gamma,\gamma^{-1}}} \frac{\log(  -(s-s^{-1})^2)}{s-z} ds = \frac{1}{\pi i} \int_{\oarc{\gamma,\gamma^{-1}}} \frac{\log(s+1)}{s-z} ds+\frac{1}{\pi i} \int_{\oarc{\gamma,\gamma^{-1}}} \frac{\log (1-s^{-1})}{s-z} ds\\
  &- \frac{1}{2\pi i} \int_{\oarc{\gamma,-1}} \frac{i \pi}{s-z}ds + \frac{1}{2\pi i} \int_{\oarc{-1,\gamma^{-1}}} \frac{i \pi}{s-z}ds.
\end{align*}
    {To verify the second equality, we choose the branch cut of $\log (1-z^{-1})$ to be on $[0,1]$ with it being real-valued for $z < 0$.   We then choose the branch cut of $\log(1+z)$ to be on $(-\infty,-1]$ with it being real valued for $z > -1$.  Note that these are just the composition of the principal logarithm with $z \mapsto 1-z^{-1}$ and $z \mapsto 1+z$, respectively.     With these definitions, we must show that
    \begin{align*}
      2 \log (z+1) + 2 \log(1-z^{-1}) - \pi i \sign \Im z = \log(-(z-z^{-1})^2), \quad |z| = 1.
    \end{align*}
     For $\Im z > 0$, $\log (z+1) + \log(1-z^{-1}) = \log(z-z^{-1})$ and  $z-z^{-1}$ is purely imaginary with $\Im (z-z^{-1}) > 0$ so that $2 \log (z+1) + 2 \log(1-z^{-1}) - \pi i \sign \Im z = \log(-(z-z^{-1})^2)$.  A similar calculation follows for $\Im z < 0$.

Furthermore, $\sign \Im z$ is piecewise-smooth on $(\gamma,\gamma^{-1})$ and $\log(1-z^{-1})$ is smooth so that their Cauchy integrals can be, again, computed with the methods referenced in Section~\ref{S:numerics}. } Thus it remains to calculate by Cauchy's integral formula
\begin{align}\label{e:CI}
\frac{1}{2 \pi i} \int_{\oarc{\gamma,\gamma^{-1}}} \frac{ 2\log(s+1)}{s-z} ds = \frac{1}{2 \pi i} \int_{I_\gamma} \frac{ 2\log(s+1)}{s-z} ds + \begin{cases} 0, &\text{if } z \in D_{\gamma}^c,\\
2\log(z+1), &\text{if } z \in D_{\gamma},\end{cases}
\end{align}
where $I_\gamma$ is the vertical line connecting $\gamma$ and $\gamma^{-1}$ with downward orientation and $D_\gamma$ is the region enclosed by $I_\gamma$ and $\oarc{\gamma,\gamma^{-1}}$. It turns out that
\begin{align*}
\frac{1}{2 \pi i} \int_{I_\gamma} \frac{ 2\log(s+1)}{s-z} ds &= \frac{1}{2 \pi i} \left(\log (1 + \gamma^{-1}) \log \left( \frac{z-\gamma^{-1}}{1+z} \right)-\log (1 + \gamma) \log \left( \frac{z-\gamma}{1+z} \right)\right.\\
& \left.+ \mathrm{Li}_2\left(\frac{1+\gamma^{-1}}{1+z} \right) - \mathrm{Li}_2\left(\frac{1+\gamma}{1+z} \right)\right)
\end{align*}
where $\mathrm{Li}_2$ is the dilogarithm function, see \cite[Section~25.12]{DLMF}, with appropriately chosen branch cuts. Stock special function routines in {\tt Mathematica} allow this function to be computed accurately. Then
\begin{align*}
\mathcal X_{11}(z) = \exp \left( \frac{1}{2 \pi i} \int_{\oarc{\gamma,\gamma^{-1}}} \frac{ \log(c\tilde \tau(s))}{s-z} ds + \mathcal I(z;\gamma)\right).
\end{align*}
{Finally, we examine the singularities of $\mathcal X_{11}(z)$.  First, we note that if $f(s)$ is a continuously differentiable, real-valued function on a contour $C$ oriented from $a$ to $b$ then
\begin{align*}
  \frac{1}{2\pi i} \int_{C} \frac{f(s)}{s-z} ds =\left. \frac{1}{2\pi i} f(s) \log(s-z)\right|_{a}^b - \frac{1}{2\pi i} \int_C f'(s) \log(s-z)ds.
\end{align*}
As $z \to b$ the only unbounded term is $(2 \pi i)^{-1} f(b) \log(b-z)$ and when exponentiated,
\begin{align*}
  \exp\left[(2 \pi i)^{-1} f(b) \log(b-z)\right] = (b-z)^{f(b)/2\pi i},
\end{align*}
is bounded as $z \to b$ because $f(b)$ is real.  Therefore $\mathcal X_{11}(z)$ is bounded near $z_0$, $z_0^{-1}$.  We now must investigate the singularity at $z = -1$.   The only singularity can come from $\mathcal I(z;\gamma)$.
When exponentiated, \eqref{e:CI} contributes a second-order zero as $z \to -1$ from inside the unit circle. When exponentiated, the quantity
\begin{align*}
- \frac{1}{2\pi i} \int_{\oarc{\gamma,-1}} \frac{i \pi}{s-z}ds + \frac{1}{2\pi i} \int_{\oarc{-1,\gamma^{-1}}} \frac{i \pi}{s-z}ds
\end{align*}
produces a simple pole at $z = -1$.  Hence, we find that $\mathcal X_{11}(z)$ has a simple pole as $z \to -1$ from outside the unit circle and a simple zero as $z \to -1$ from inside the unit circle.  It is then clear that $\mathcal X(z)$ does indeed satisfy the conditions set forth in RH Problem~\ref{rhp:X}.  Now assume $\mathcal Y(z)$ is another solution. It follows that $\mathcal Y(z) \mathcal X^{-1}(z)$ has (possibly) isolated singularities at $z = -1,z_0,z_0^{-1}$ but, the singularities must be bounded and hence this product is entire.  By the asymptotic condition, $\mathcal Y(z) = \mathcal X(z)$ and $\mathcal X(z)$ is the unique solution.}

\section{On computing eigenvalues of $L$}\label{A:1}
In this section we present a case where we fail to capture all of the eigenvalues of $L$ numerically by using conventional eigenvalue algorithms on finite, $K\times K$ truncations $L_K$ of the doubly-infinite Jacobi matrix $L$. For
\begin{equation*}
\lambda^-_{1}(L) < \lambda^-_{2}(L)<\cdots <\lambda^{-}_{M^-}(L) < -1 < 1 < \lambda^{+}_{M^+}(L) < \cdots < \lambda^+_{2}(L) < \lambda^+_{1}(L)
\end{equation*}
denote the eigenvalues of $L$. Note that the integers $M^\pm$ are finite for the Jacobi matrices that appear in this text, but they can be zero. Similarly, let
\begin{equation*}
\lambda^{-}_1(L_K) < \lambda^{-}_2(L_K) < \cdots < \lambda^{-}_K(L_K)\quad\text{and}\quad \lambda^{+}_1(L_K) > \lambda^{+}_2(L_K) > \cdots > \lambda^{+}_K(L_K)
\end{equation*}
denote the real simple eigenvalues of any $K \times K$ truncation $L_K$, labeled in increasing ($\lambda^{-}_{j}$) and decreasing order ($\lambda^{+}_{j}$), respectively. Then
\begin{equation*}
\begin{aligned}
\lambda^{-}_j (L) \leq \lambda^{-}_j (L_K)\quad&\text{for } j = 1, 2, \dots, \min\{M^{-},K\},\\
\lambda^{+}_j (L) \geq \lambda^{+}_j (L_K)\quad&\text{for } j = 1, 2, \dots, \min\{M^{+},K\},
\end{aligned}
\end{equation*}
and
\begin{equation*}
\begin{aligned}
-1 \leq \lambda^{-}_j (L_K)\quad&\text{for }\min\{M^{-},K\}< j \leq K,\\
1 \geq \lambda^{+}_j (L_K)\quad&\text{for }\min\{M^{+},K\}< j \leq K \,.
\end{aligned}
\end{equation*}
In other words, pure point spectrum of $L$ shrinks around the a.c.-spectrum under truncations. In particular, an eigenvalue of $L$ that is close to the a.c.-spectrum might go inside the interval $\sigma_{\text{ac}}(L)=[-1,1]$ after applying a truncation, in which case it fails to be captured by conventional eigenvalue algorithms that are run on the finite truncations $L_K$. The example to be discussed below illustrates such a case. For a more detailed account on the spectra of finite truncations of doubly-infinite Jacobi matrices, we refer the reader to Section~4 in \cite{BN} and .

Consider the data $\{a_n^0, b_n^0\}$:
\begin{align}\label{E:small-sol}
 \begin{split}
a^0_n &= 1 - \half \frac{\sqrt{\ell_{n-1} \ell_{n+1}}}{\ell_n},\\
b^0_n & = \half (1-10^{-4}) (z^* - 1/z^*) \left( \frac{\ell_n -1}{\ell_n} - \frac{\ell_{n-1} -1}{\ell_{n-1}} \right),\\
\ell_n &= 1 + e^{- 4n/5}, \quad z^* = e^{-2/5}.
\end{split}
\end{align}
which is created by inverting pure soliton initial data.  Define the $(2K+1) \times (2K+1)$ truncation
\begin{equation*}
L_{2K+1}=
\begin{pmatrix}
b_{-K} & a_{-K}  &  	0	  		 &       & \\
a_{-K} & b_{-K+1} & a_{-K +1} 		 & \ddots		 & \\
0     & a_{-K +1} &\ddots & \ddots     & \ddots& \\
     &	\ddots	  & \ddots       & \ddots & a_{K-2}  & 0   	 \\
      &		  &		 \ddots     & a_{K-2} & b_{K-1}  & a_{K-1}    \\
     &		  &		  		 & 	0	 &a_{K-1}		 & b_K 	   \\
\end{pmatrix},
\end{equation*}
for $K= 2400$. Using standard eigenvalue algorithms for tridiagonal symmetric matrices on $L_{2K+1}$ yields no eigenvalues outside $[-1,1]$. However, the transmission coefficient has a pole outside $[-1,1]$, which can be captured by Newton iteration to find zeros of $1/R(z)$ and we find $R(z^*) = 0$ for $z^* \approx 0.99982297716$ which corresponds to an eigenvalue $\lambda \approx 1.00000001567$ of $L$.  Naturally, such an eigenvalue is difficult to capture via truncations because it does not emerge until $K$ is very large. See Figure~\ref{F:zero} for a illustration of the zero.

\begin{figure}[h]
  \begin{overpic}[width=0.4\textwidth]{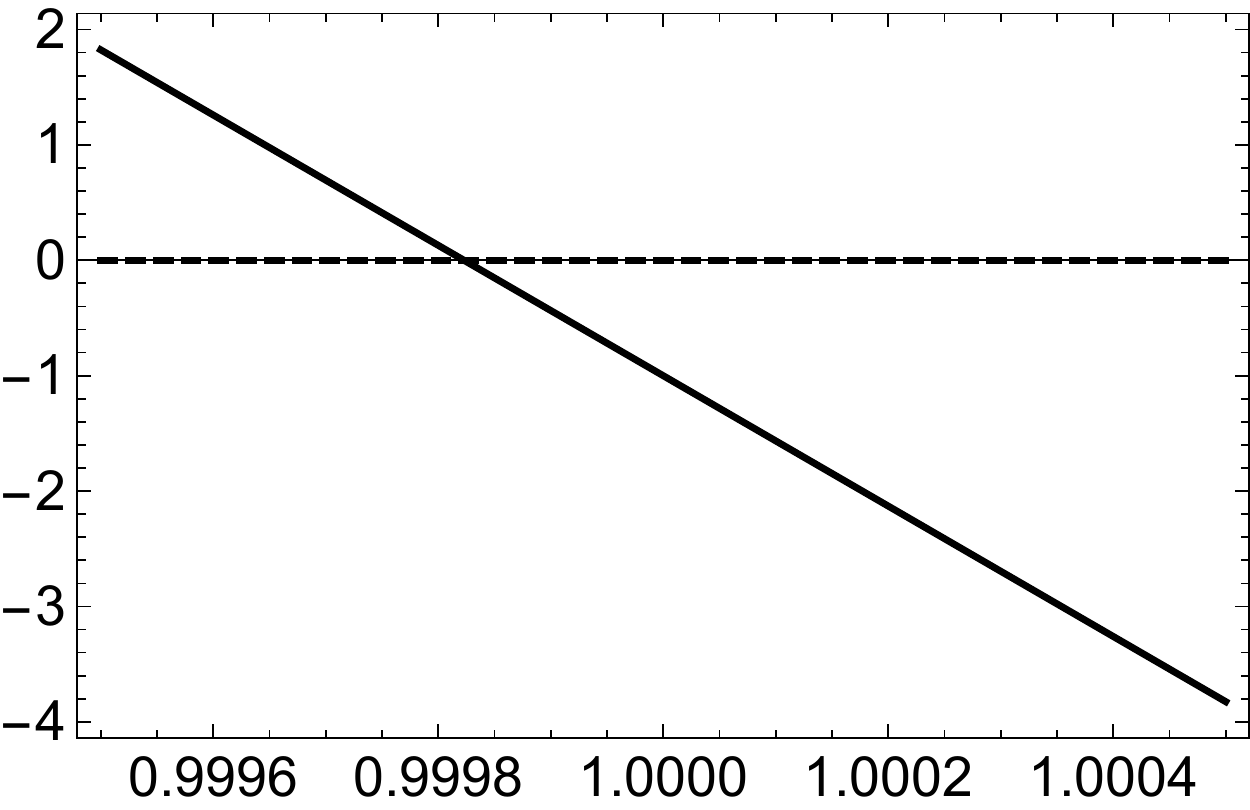}
    \put(50,-5){$z$}
    \put(-6,25){\rotatebox{90}{$1/R(z)$}}
  \end{overpic}\label{F:zero}
\caption{The reflection coefficient $R(z)$  for the initial data \eqref{E:small-sol}.  A zero for $1/R(z)$ is observed near $z = 1$ corresponding to an eigenvalue of $L$.  The zero is very close to the unit circle and therefore the corresponding eigenvalue is close to the right edge of the continuous spectrum of $L$.  This eigenvalue is difficult to detect with traditional eigenvalue techniques but it is easily captured using the reflection coefficient.}
\end{figure}

\section{Toda $g$-function}\label{A:g}
In this section we explicitly define the $g$-function computed in Section~\ref{S:col_shock}. We use this form to determine the exact asymptotic form of the collisionless shock region and then discuss its computation.

\subsection{Construction of the Toda $g$-function}\label{A:construct_g}

Let $\lambda_0 < 0$ and $\rho_0 > 0$ denote the real and imaginary parts of the stationary point, $z_0$, respectively. For $\alpha$, $\beta \in \mathbb{T}$ in the upper half-plane with $-1\leq \Re \alpha \leq \lambda_0 \leq \Re \beta \leq 1$, define
\begin{equation*}
F(\alpha,\beta) = \int_{\beta}^{\alpha}\frac{1}{p^2}\sqrt{\left(p-\alpha\right)\left(p-\alpha^{-1}\right)\left(p-\beta\right)\left(p-\beta^{-1}\right)}^+\, dp.
\end{equation*}
Here the square root is defined with branch cuts on $\carc{\beta,\alpha}$ and $\carc{\alpha^{-1},\beta^{-1}}$ and asymptotics
\begin{align*}
\sqrt{\left(p-\alpha\right)\left(p-\alpha^{-1}\right)\left(p-\beta\right)\left(p-\beta^{-1}\right)} \sim p^2, \quad p \goto \infty.
\end{align*}
\begin{lemma}
Under the additional restriction $\Re \alpha + \Re \beta = 2 \lambda_0$, $F(\alpha,\beta)\in[0,4]$. Moreover, for fixed $z_0$, $F$ is a monotone decreasing function of $\Re \alpha$ as $\Re \alpha$ increases from $-1$ to $\lambda_0$.
\end{lemma}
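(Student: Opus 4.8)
The plan is to analyze $F(\alpha,\beta)$ as a contour integral and exploit the fact that, under the constraint $\Re\alpha+\Re\beta=2\lambda_0$, the pair $(\alpha,\beta)$ is parametrized by the single real quantity $\Re\alpha\in[-1,\lambda_0]$, with $\beta$ determined by $\Re\beta=2\lambda_0-\Re\alpha$ and $|\alpha|=|\beta|=1$. I would first pin down the two extreme cases. When $\Re\alpha=\lambda_0$ we also get $\Re\beta=\lambda_0$, so $\alpha=\beta=z_0$; the integral $F(\alpha,\beta)$ is then taken over a degenerate contour and the branch cuts $\carc{\beta,\alpha}$ and $\carc{\alpha^{-1},\beta^{-1}}$ collapse to points, giving $F=0$. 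When $\Re\alpha=-1$, then $\alpha=-1$ (the only point of $\mathbb{T}$ in the upper half-plane with real part $-1$), and correspondingly $\Re\beta=2\lambda_0+1$; one evaluates $F$ in this configuration and checks it equals $4$. This second endpoint value is where I would expect to do a genuine (but short) computation: substituting $p=e^{i\phi}$ and using the explicit form of the square root, the integrand simplifies — essentially because $\tfrac{1}{p^2}\sqrt{(p-\alpha)(p-\alpha^{-1})(p-\beta)(p-\beta^{-1})}$ restricted to $\mathbb{T}$ becomes, up to the correct branch, a real expression of the form $\pm\big|2\cos\phi - (\text{const})\big|$ times an elementary factor, and the resulting real integral evaluates to $4$. (Compare the analogous KdV computation in \cite{TOD_KdV} and the construction in Appendix~\ref{A:construct_g}; the value $4$ should match the normalization $g(z)=\tfrac12 z-\lambda_0\log z+\mathcal{O}(z^{-1})$ there.)

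For the monotonicity claim I would differentiate $F$ with respect to the parameter $s\defeq\Re\alpha$. The cleanest route is to recognize $F(\alpha,\beta)$ as (a multiple of) a period of the meromorphic differential $\eta\defeq \tfrac{1}{p^2}R(p)\,dp$, where $R(p)^2=(p-\alpha)(p-\alpha^{-1})(p-\beta)(p-\beta^{-1})$, on the genus-one curve with branch points $\alpha^{\pm1},\beta^{\pm1}$. Under the constraint $s\mapsto(\alpha(s),\beta(s))$, I would compute $\partial_s F$ by differentiating under the integral sign; the derivative of the integrand with respect to the branch points is again an algebraic differential, and after using the constraint $\Re\alpha+\Re\beta=2\lambda_0$ (which couples $\dot\alpha$ and $\dot\beta$) together with $|\alpha|=|\beta|=1$ (which forces $\dot\alpha\perp\alpha$, $\dot\beta\perp\beta$ as vectors in $\mathbb{C}\cong\mathbb{R}^2$), the expression for $\partial_s F$ collapses to an integral of a function of definite sign over $\carc{\beta,\alpha}$. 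Concretely, I expect $\partial_s F$ to reduce to something proportional to $-\int_{\beta}^{\alpha}\frac{(\text{positive density})}{|R^+(p)|}\,|dp|$, the positivity coming from the fact that on the cut $\carc{\beta,\alpha}$ the boundary value $R^+(p)$ is purely imaginary and the $1/p^2$ weight is controlled in sign there. This yields $\partial_s F<0$ strictly, giving the strict monotone decrease, and combined with the two endpoint values $F=0$ at $s=\lambda_0$ and $F=4$ at $s=-1$ it also gives the range statement $F(\alpha,\beta)\in[0,4]$.

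The main obstacle I anticipate is bookkeeping of branch choices: one must be careful that "$+$" boundary value of the square root is taken consistently with the orientation conventions of $\carc{\beta,\alpha}$ and with the large-$p$ normalization $\sqrt{\cdots}\sim p^2$, since a sign slip would flip the sign of $\partial_s F$ or change the endpoint value from $4$ to $0$. I would handle this by fixing a single explicit formula for $R(p)$ valid in a neighborhood of $p=\infty$ (e.g. $R(p)=p^2\sqrt{(1-\alpha/p)(1-\alpha^{-1}/p)(1-\beta/p)(1-\beta^{-1}/p)}$ with principal branches) and propagating it down to $\mathbb{T}$ by continuation, exactly as in Appendix~\ref{A:construct_g}, and by cross-checking against the known KdV analogue in \cite{TOD_KdV}. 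A secondary, purely technical point is justifying differentiation under the integral sign uniformly up to the endpoints $s=-1$ and $s=\lambda_0$ where the contour degenerates or the cut touches $-1$; the square-root vanishing of $R$ at the branch points keeps $\eta$ integrable, so this is standard but should be noted. Once these are in place, the lemma follows.
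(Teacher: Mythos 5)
Your overall strategy (endpoint values plus monotonicity in $\Re\alpha$) matches the paper's in outline, but there are two genuine gaps.

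First, the endpoint evaluation at $\Re\alpha=-1$ is wrong for general $z_0$. With $\alpha=-1$ and $\Re\beta=2\lambda_0+1$, the value of $F$ depends on $\lambda_0$ and equals $4$ \emph{only} when $\lambda_0=0$ (i.e.\ $z_0=i$, $\beta=1$); for $\lambda_0<0$ it is strictly less than $4$. So monotonicity in $\Re\alpha$ for fixed $z_0$ only gives $F\le F^*(\lambda_0)$ where $F^*(\lambda_0)$ is the value at $\Re\alpha=-1$, and you still need a separate argument that $\sup_{\lambda_0\in[-1,0]}F^*(\lambda_0)=4$. The paper supplies exactly this missing step: it shows $\partial F^*/\partial\lambda_0>0$, so $F^*$ increases to its value $4$ at $\lambda_0=0$. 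Without that second monotonicity your route does not close the bound $F\in[0,4]$.

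Second, your monotonicity argument is a plan rather than a proof, and the route you choose (differentiating the contour integral as a period with moving branch points, coupling $\dot\alpha$ and $\dot\beta$ through the constraints) is substantially harder than what is actually needed. The paper's key device, which you do not identify, is the algebraic identity
\begin{equation*}
\left(\frac{X^+(p)}{p}\right)^2 = 4\,(z-\lambda_\alpha)(z-\lambda_\beta), \qquad z=\tfrac12\left(p+p^{-1}\right)=\Re p \ \text{ on } \mathbb T,
\end{equation*}
which, after fixing the sign of the purely imaginary boundary value (done in the paper by showing $\Im(X(p)/p)$ cannot change sign off $\mathbb R\cup\mathbb T$, a point you flag as ``branch bookkeeping'' but do not resolve), converts $F$ into the elementary real integral $2\int_{2\lambda_0-\lambda_\alpha}^{\lambda_\alpha}\frac{-1}{\sqrt{1-y^2}}\sqrt{(\lambda_\alpha-\lambda_0)^2-(y-\lambda_0)^2}\,dy$ via $y=\cos\omega$. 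Differentiation in $\lambda_\alpha$ is then one line (the endpoint terms vanish because the integrand has square-root zeros there), and the sign of $\partial F/\partial\lambda_\alpha$ is manifestly negative. Your ``I expect $\partial_s F$ to reduce to\ldots'' step is precisely the part that requires this reduction; as written it is not established, and the sign determination you appeal to (``$R^+(p)$ is purely imaginary'') does not by itself yield a sign.
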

\begin{proof}
For notational simplicity set $\lambda_\alpha = \Re \alpha$, $\omega_\alpha= \arg \alpha$, and $\lambda_\beta = \Re \beta$, $\omega_\beta = \arg \beta$. Also, set
\begin{equation*}
X(p)=\sqrt{\left(p-\alpha\right)\left(p-\alpha^{-1}\right)\left(p-\beta\right)\left(p-\beta^{-1}\right)}\,.
\end{equation*}
Then
\begin{equation*}
F(\alpha,\beta) = \int_{\beta}^{\alpha} \frac{X^+(p)}{p} \frac{dp}{p},
\end{equation*}
First, note that
\begin{equation}\label{E:Xp}
\left(\frac{X^+ (p)}{p} \right)^2 = 4 (z - \lambda_\alpha)(z-\lambda_\beta),
\end{equation}
for $z= (p + p^{-1})/2$. Now, for $p$ on the unit circle, $z = \Re p $ and the right hand side of \eqref{E:Xp} is negative in the domain of integration, vanishing only at $z=\lambda_\alpha$ and $z=\lambda_\beta$. This implies that $X^+(p)/p$ is purely imaginary.

Second, by definition $X(p)/p \sim p$ as $p\to\infty$ so that $\Im p >0$ in the upper half-plane for $|p|$ large enough. We claim that $\Im(X(p)/p)$ cannot change sign outside the unit circle off the real axis. Suppose that it did. Then $X(p_*) / p_* = c$ would be real valued for some $p_*$ with $|p_*|>1$. For $z_*  = (p_* + p_*^{-1})/2$
\begin{equation}\label{E:quadz}
4 (z_* - \lambda_\alpha)(z_* - \lambda_\beta) = c^2 \in \mathbb{R}.
\end{equation}
Solving \eqref{E:quadz} for $z_*$, we obtain
\begin{equation*}
z_*=\frac{\lambda_\alpha + \lambda_\beta \pm  \sqrt{(\lambda_\alpha - \lambda_\beta)^2 + c^2}}{2},
\end{equation*}
which is clearly real valued. Thus $p_*$ is on the real axis or the unit circle. This implies that $X^{-}(p)/p$ has a positive imaginary part on the arc $\oarc{\beta, \alpha}$, and hence $X^{+}(p)/p$ is purely imaginary with a negative imaginary part on $\oarc{\beta, \alpha}$. This, together with the fact that $dp/p = i d\omega$ in polar coordinates, $p=e^{i\omega}$, implies that
\begin{equation}\label{E:Fpolar}
F(\alpha,\beta) = 2 \int_{\omega_\beta}^{\omega_\alpha}\sqrt{-(\cos\omega -\lambda_0)^2 + (\lambda_\alpha - \lambda_0)^2 }\,d\omega \geq 0.
\end{equation}
Setting $y=\cos\omega$ in \eqref{E:Fpolar} gives us
\begin{equation}
F(\alpha,\beta) = 2 \int_{2\lambda_0 - \lambda_\alpha}^{\lambda_\alpha} \frac{-1}{\sqrt{1-y^2}}\sqrt{-(y -\lambda_0)^2 + (\lambda_\alpha - \lambda_0)^2}\, dy\,.
\end{equation}
Now, for fixed $\lambda_0$ with $\lambda_\alpha \leq \lambda_0 \leq 0$, we differentiate $F$ with respect to $\lambda_\alpha$:
\begin{equation*}
\frac{\partial F}{\partial \lambda_\alpha} = 2\left(\lambda_\alpha - \lambda_0\right) \int^{2\lambda_0 - \lambda_\alpha}_{\lambda_\alpha}\frac{dy}{\sqrt{1-y^2}\sqrt{-(y -\lambda_0)^2 + (\lambda_\alpha - \lambda_0)^2}}\,.
\end{equation*}
Here the contributions from the endpoints vanish. Since $1-y^2 > 0$ and $-(y -\lambda_0)^2 + (\lambda_\alpha - \lambda_0)^2 > 0$ for $y\in(\lambda_\alpha, 2\lambda_0 - \lambda_\alpha)$, $\frac{\partial F}{\partial \lambda_\alpha} < 0$ for fixed $\lambda_0$. This implies that $F$ is a monotone decreasing function of $\lambda_\alpha$ as $\lambda_\alpha$ increases from $-1$ to $\lambda_0$. Thus, for any given $-1\leq \lambda_0\leq 0$, $F$ attains its maximum value $F^{*}(\lambda_0)$ when $\lambda_\alpha = -1$. Moreover,
\begin{equation*}
\frac{\partial F^*}{\partial \lambda_0} = 2 \int^{2\lambda_0 +1}_{-1}\frac{1+y}{\sqrt{1-y^2}\sqrt{-(y -\lambda_0)^2 + (-1 - \lambda_0)^2}}\, dy > 0,
\end{equation*}
which implies that $F^*$ is a monotone increasing function of $\lambda_0$ as $\lambda_0$ increases from $-1$ to $0$. Setting $z_0 = i$, $\alpha=-1$, and $\beta = 1$ gives
\begin{equation*}
F(-1,1) = \int_{0}^{\pi}2 \sqrt{1 -(\cos\omega)^2 }\,\,d\omega=2 \int_{0}^{\pi} \sin\omega \,d\omega = 4,
\end{equation*}
which is the maximum value of $F$ subject to the constraint $\lambda_\alpha+\lambda_\beta = 2\lambda_0$. Clearly, the choice $\alpha=\beta=z_0$ is admissible and it minimizes $F$ at $F=0$, whence we conclude $F(\alpha,\beta)\in[0,4]$.
\end{proof}

Similar to the case for the KdV (see \cite{DVZ,TOD_KdV}), introduce the variable
\begin{equation*}
s= - \frac{\log{\rho_0^2}}{t}.
\end{equation*}
Restricting to $\Re \alpha + \Re \beta = 2 \lambda_0$, with $\Im \alpha,\,\Im \beta \geq 0$, the expression
\begin{equation*}
\int_{\beta}^{\alpha}\frac{1}{p^2}\sqrt{\left(p-\alpha\right)\left(p-\alpha^{-1}\right)\left(p-\beta\right)\left(p-\beta^{-1}\right)}^{+}\, dp = s
\end{equation*}
defines both $\alpha(s)$ and $\beta(s)$ since $s$ is a monotone function of $\Re \alpha$ (or, equivalently, of $\arg \alpha$). We define the $g$-function to be
\begin{equation}\label{E:def_g}
g(z) = \frac{1}{2} \int_{\beta(s)}^{z}\frac{\sqrt{\left(p-\alpha\right)\left(p-\alpha^{-1}\right)\left(p-\beta\right)\left(p-\beta^{-1}\right)}}{p^2}\, dp + \frac{1}{2} \int_{-1}^{\alpha(s)}\frac{\sqrt{\left(p-\alpha\right)\left(p-\alpha^{-1}\right)\left(p-\beta\right)\left(p-\beta^{-1}\right)}}{p^2}\, dp,
\end{equation}
We choose the branch cut for $\sqrt{(z-\alpha)(z-\beta)\left(z-\alpha^{-1}\right)\left( z-\beta^{-1} \right)}$ to contain the circular arcs on the unit circle from $\beta$ to $\alpha$ and $\alpha^{-1}$ to $\beta$. In order for $g(z)$ to be single-valued, it is necessary to add a branch cut on the arc connecting $\alpha$ and $\alpha^{-1}$.

\begin{lemma}
The $g$-function given by \eqref{E:def_g} satisfies:
\begin{enumerate}
\item
\begin{equation*}
g^{+}(z) - g^{-}(z) =
\begin{cases}
\displaystyle\int_{\beta(s)}^{z}\frac{1}{p^2}\sqrt{\left(p-\alpha\right)\left(p-\alpha^{-1}\right)\left(p-\beta\right)\left(p-\beta^{-1}\right)}^{+}\, dp, &\quad z\in\oarc{\beta(s), \alpha(s) }\\
 \displaystyle\int_{\beta(s)}^{\alpha(s)}\frac{1}{p^2}\sqrt{\left(p-\alpha\right)\left(p-\alpha^{-1}\right)\left(p-\beta\right)\left(p-\beta^{-1}\right)}^{+}\, dp, &\quad z\in\oarc{\alpha(s), \alpha(s)^{-1}}\\
\displaystyle\int_{\beta(s)^{-1}}^{z}\frac{1}{p^2}\sqrt{\left(p-\alpha\right)\left(p-\alpha^{-1}\right)\left(p-\beta\right)\left(p-\beta^{-1}\right)}^{+}\, dp, &\quad z\in\oarc{\alpha(s)^{-1}, \beta^{-1}(s) }\\
0,&\quad \text{otherwise.}
\end{cases}
\end{equation*}
\item
\begin{equation}\label{E:g-plus}
g^{+}(z) +g^{-}(z) =
\begin{cases}
\displaystyle\int_{-1}^{\alpha(s)}\frac{1}{p^2}\sqrt{\left(p-\alpha\right)\left(p-\alpha^{-1}\right)\left(p-\beta\right)\left(p-\beta^{-1}\right)}\, dp, &\quad z\in\oarc{\beta(s), \alpha(s)}\\
 \displaystyle\int_{-1}^{z}\frac{1}{p^2}\sqrt{\left(p-\alpha\right)\left(p-\alpha^{-1}\right)\left(p-\beta\right)\left(p-\beta^{-1}\right)}\, dp, &\quad z\in\oarc{\alpha(s), \alpha(s)^{-1} }\\
\displaystyle\int_{-1}^{\alpha(s)^{-1}}\frac{1}{p^2}\sqrt{\left(p-\alpha\right)\left(p-\alpha^{-1}\right)\left(p-\beta\right)\left(p-\beta^{-1}\right)}\, dp, &\quad z\in\oarc{\alpha(s)^{-1}, \beta^{-1}(s) }\\
2 g(z),&\quad \text{otherwise.}
\end{cases}
\end{equation}
\item
\begin{equation*}
g(z) = \frac{1}{2}z - \lambda_0 \log z + \mathcal{O}\left(z^{-1}\right),\text{ as } z\to \infty.
\end{equation*}
\item
$g(z) - \frac{1}{2t}\theta(z)$ is bounded.
\item
\begin{equation*}
\rho_0^2 e^{t\left(g^+ (z) - g^- (z) \right)} = 1, \text{ for } z\in \oarc{\alpha(s), \alpha(s)^{-1}}.
\end{equation*}
\end{enumerate}
\end{lemma}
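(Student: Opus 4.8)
The plan is to reduce all five statements to a single observation: the second integral in \eqref{E:def_g} does not depend on $z$, so $g$ is simply an antiderivative,
$g'(z)=\tfrac12 X(z)/z^2$ with $X(z)=\sqrt{(z-\alpha)(z-\alpha^{-1})(z-\beta)(z-\beta^{-1})}$, cut on $\carc{\beta,\alpha}\cup\carc{\alpha^{-1},\beta^{-1}}$ and normalized so $X(z)\sim z^2$ at $\infty$. I would first record the properties of $X$ that everything rests on: (i) $X^{+}=-X^{-}$ on $\carc{\beta,\alpha}$ and on $\carc{\alpha^{-1},\beta^{-1}}$; (ii) $X$ is analytic across $\carc{\alpha,\alpha^{-1}}$, so the extra cut that must be added to make $\int X/p^2$ single-valued is exactly the cut that carries the period of $g'$ around $\carc{\beta,\alpha}$; (iii) $X$ vanishes like a square root at $\alpha^{\pm1},\beta^{\pm1}$, so $g$ extends continuously to these points; (iv) $X(0)=1$ and $X'(0)=-2\lambda_0$, the latter via $\alpha+\alpha^{-1}+\beta+\beta^{-1}=2(\Re\alpha+\Re\beta)=4\lambda_0$; (v) near $\infty$, $X(z)/z^2=1-2\lambda_0 z^{-1}+\mathcal O(z^{-2})$, again by the constraint $\Re\alpha+\Re\beta=2\lambda_0$.

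For items (1) and (2) I would work arc by arc along $\carc{\beta,\beta^{-1}}$. On $\carc{\beta,\alpha}$, property (i) gives $(g^{+}-g^{-})'=g'^{+}-g'^{-}=X^{+}/z^2$ and $(g^{+}+g^{-})'=0$; since $g$ is continuous at $\beta$ by (iii), integrating from $\beta$ yields the first line of (1) and shows $g^{+}+g^{-}\equiv 2g(\beta)=\int_{-1}^{\alpha}X(p)/p^2\,dp$, the first line of (2); the arc $\carc{\alpha^{-1},\beta^{-1}}$ is identical starting from $\beta^{-1}$. On $\carc{\alpha,\alpha^{-1}}$, $X$ is analytic by (ii), so $g^{+}$ and $g^{-}$ there differ by the period $\oint\tfrac12 X(p)/p^2\,dp$ of $g'$ around $\carc{\beta,\alpha}$, which by (i) equals $\int_{\beta}^{\alpha}X^{+}(p)/p^2\,dp=F(\alpha,\beta)$, a constant — the second line of (1); and $(g^{+}+g^{-})'=X/z^2$, integrated from $\alpha$ using continuity with the previous arc, gives the second line of (2). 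Off $\carc{\beta,\beta^{-1}}$, $g$ is analytic and the "otherwise" cases are trivial. Item (5) then drops out: by the second line of (1), $g^{+}-g^{-}=F(\alpha,\beta)=s=-t^{-1}\log\rho_0^2$ on $\carc{\alpha,\alpha^{-1}}$ by the very equation defining $\alpha(s),\beta(s)$, so $\rho_0^2 e^{t(g^{+}-g^{-})}=\rho_0^2 e^{-\log\rho_0^2}=1$.

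Item (3) follows by integrating property (v): $\tfrac12 X(p)/p^2=\tfrac12-\lambda_0 p^{-1}+\mathcal O(p^{-2})$ gives $g(z)=\tfrac12 z-\lambda_0\log z+c_0+\mathcal O(z^{-1})$ as $z\to\infty$, and the content is that the additive constant $c_0$ vanishes; this is exactly what the choice of base point $\beta$ in the first integral together with the constant $\tfrac12\int_{-1}^{\alpha}X/p^2\,dp$ is arranged to achieve, which I would confirm by a short direct computation of the two contributions. For item (4), write $\tfrac1{2t}\theta(z)=\tfrac12(z-z^{-1})-\lambda_0\log z$ (using $\lambda_0=-n/t$). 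At $\infty$, item (3) gives $g(z)-\tfrac1{2t}\theta(z)=\mathcal O(z^{-1})$. At $z=0$, the Laurent expansion $X(p)/p^2=X(0)p^{-2}+X'(0)p^{-1}+\mathcal O(1)$ yields $g(z)=-\tfrac{X(0)}{2z}+\tfrac{X'(0)}{2}\log z+\mathcal O(1)$, so by property (iv) the $z^{-1}$ and $\log z$ singularities cancel precisely those of $\tfrac1{2t}\theta$; near $\alpha^{\pm1},\beta^{\pm1}$, $g$ is bounded by (iii) while $\theta$ is analytic, and the remaining jump of $g$ on $\carc{\alpha,\alpha^{-1}}$ is the bounded constant from item (1). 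Hence $g-\tfrac1{2t}\theta$ is bounded.

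The step I expect to be the main obstacle is the branch-cut bookkeeping underlying items (1)–(2): fixing a single sheet of $X$ so that the prescribed asymptotics, the cut placements, and the sign in $X^{+}=-X^{-}$ are mutually compatible; orienting the period computation on $\carc{\alpha,\alpha^{-1}}$ so that it comes out as $+F(\alpha,\beta)$ rather than $-F(\alpha,\beta)$; and, relatedly, pinning down the additive constant in item (3) and the sign $X(0)=+1$ needed in item (4). The analytic content is light — everything is antiderivatives and square-root jumps — so the care lies entirely in the geometry of the Riemann surface of $X$ and in checking that the endpoint values used to fix constants of integration match across adjacent arcs.
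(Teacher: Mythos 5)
Your proposal is correct and follows essentially the same route as the paper: treat $g$ as an antiderivative of $\tfrac12 X(p)/p^2$, use $X^+=-X^-$ on the cuts together with contour integration/periods for (1)--(2), the expansions of $X(p)/p^2$ at $0$ and $\infty$ for (3)--(4), and the defining relation for $\alpha(s),\beta(s)$ for (5). The one place where the paper supplies a device you leave to ``a short direct computation'' is pinning down the additive constant in (3): the paper sets $H(z)=\int_{\alpha(s)}^{z}X(p)p^{-2}\,dp$ and evaluates $\lim_{z\to\infty}\big(H(z)+H(z^{-1})\big)=2C$ using the reflection symmetry $X^+(p^{-1})=X^+(p)/p^2$, which is also the identity that resolves the branch-cut bookkeeping you flag as the main obstacle.
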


\begin{proof}
As before, set $X(p) = \sqrt{\left(p-\alpha\right)\left(p-\alpha^{-1}\right)\left(p-\beta\right)\left(p-\beta^{-1}\right)}$, and let $\lambda_\alpha$ and $\lambda_\beta$ denote the real parts of $\alpha$ and $\beta$, respectively. The properties (1) and (2) follow from contour integration and the fact that $X^+\left(p^{-1}\right)=\frac{X^+(p)}{p^2}$. To prove (3), we use
\begin{equation*}
\sqrt{1-y} = 1 - \frac{1}{2} y - \frac{1}{8}y^2 + \mathcal{O}\left(y^3\right), \text{ as } y\to 0,
\end{equation*}
and the condition that $\lambda_\alpha + \lambda_\beta = 2 \lambda_0 $, to obtain
\begin{equation*}
\frac{X(p)}{p^2} = p^{-2} - 2 \lambda_0 p^{-1} + \left( 1 - \tfrac{1}{2}(\lambda_\alpha - \lambda_\beta)^2 \right) + 4\lambda_0 \lambda_\alpha \lambda_\beta p + \mathcal{O}\left(p^2\right), \text{ as } p\to 0\,.
\end{equation*}
Also, for large $p$, we have
\begin{equation*}
\frac{X(p)}{p^2} = 1 - 2 \lambda_0 p^{-1} + \left( 1 - \tfrac{1}{2}(\lambda_\alpha - \lambda_\beta)^2 \right)p^{-2} + 4\lambda_0 \lambda_\alpha \lambda_\beta p^{-3} + \mathcal{O}\left(p^{-4}\right), \text{ as }p\to \infty.\\
\end{equation*}
Setting $H(z) = \displaystyle\int_{\alpha(s)}^{z}\frac{X(p)}{p^2}\, dp $, integration gives us
\begin{equation}\label{E:g_asymp}
\begin{aligned}
H(z) &= -z^{-1} + C - 2\lambda_0\log z + \left( 1 - \tfrac{1}{2}(\lambda_\alpha - \lambda_\beta)^2 \right)z + \mathcal{O}\left(z^2\right), \text{ as } z \to 0,\\
H(z) &= z - 2 \lambda_0 \log z + C - \left( 1 - \tfrac{1}{2}(\lambda_\alpha - \lambda_\beta)^2 \right)z^{-1} + \mathcal{O}\left(z^{-2}\right), \text{ as } z \to \infty,
\end{aligned}
\end{equation}
for some complex constant $C$. Then $\displaystyle\lim_{z\to\infty} H(z) + H\left(z^{-1}\right) = 2C$, which implies, for large $|z|$, that
\begin{equation}
\begin{aligned}
2 C &= \int_{\alpha(s)}^{z}\frac{X(p)}{p^2}\, dp + \int_{\alpha(s)}^{z^{-1}}\frac{X(p)}{p^2}\, dp = \int_{\alpha(s)}^{z}\frac{X(p)}{p^2}\, dp + \int_{z}^{\alpha(s)^{-1}}\frac{X(p)}{p^2}\, dp\\
& = - \int_{\alpha(s)^{-1}}^{\alpha(s)}\frac{X(p)}{p^2}\, dp = - 2 \int_{-1}^{\alpha(s)}\frac{X(p)}{p^2}\, dp.
\end{aligned}
\end{equation}
Therefore, $g(z) = \frac{1}{2} z -\lambda_0 \log z + \mathcal{O}\left(z^{-1}\right)$ as $z\to\infty$. (4) follows from the integral representation and the asymptotic expansion around $z=0$ in \eqref{E:g_asymp}. Finally, to prove (5), assume $z\in \oarc{\alpha(s),\alpha(s)^{-1}}$,
\begin{equation*}
\begin{aligned}
-\frac{\log{\rho_0^2} }{t} &= \int_{\beta(s)}^{\alpha(s)}\frac{X^+(p)}{p^2}\, dp\\
-\log{\rho_0^2} &= t \left(g^{+}(z) - g^{-}(z)\right)\\
e^{t \left(g^{+}(z) - g^{-}(z)\right)}\rho_0^2 &= 1.
\end{aligned}
\end{equation*}
\end{proof}

\subsection{Derivation of the collisionless shock scaling}\label{A:scaling}
We have
\begin{equation*}
\rho_0 = \sqrt{1 - \left(\frac{n}{t} \right)^2} = \sqrt{2}\sqrt{1 - \frac{n}{t}}\left(1 + \mathcal{O}\left(1 - \frac{n}{t} \right) \right),~\text{ for }~\frac{n}{t}\sim 1.
\end{equation*}
As described in Appendix~\ref{A:construct_g}, we choose $A = K(\alpha)$ and $B = K(\beta)$ (see \eqref{E:cov} for the transformation $K$) by
\begin{equation}\label{E:A-B-choice}
-\frac{\log \rho_0^{2}}{t\rho_0^3} = - i \int_{B}^{A}\frac{\sqrt{(q-A)(q-B)\left(q+\bar{A}\right)\left(q+\bar{B}\right)}}{(\lambda_0 - i\rho_0 q)^2}^+ \; dq.
\end{equation}
Note that as $\lambda_0 \to -1$, $\rho_0 \to 0$ (i.e.\ $z_0 \to -1$), the integral on the right hand side converges provided that $A$ and $B$ converge to finite values. To ensure this, we enforce that
\begin{equation*}
-\frac{\log\rho_0^2}{\rho_0^3 t} \to C,~\text{ as } t\to\infty,
\end{equation*}
for some constant $C\in\mathbb{R}$. Let $n = t - \frac{\mu(t)}{2}$, with $\frac{\mu(t)}{t}\to 0$ as $t\to\infty$, so that
\begin{equation}
\rho_0 = \sqrt{\mu(t)}\left(1 + \mathcal{O}\left(\tfrac{\mu(t)}{t}\right) \right).
\end{equation}
Then
\begin{equation*}
-\frac{\log\rho_0^2}{\rho_0^3 t} \sim \frac{-\log\left(\dfrac{\mu(t)}{t}\right)}{t\left(\frac{\mu(t)}{t}\right)^{3/2}} \to C\,,~\text{ as }~t\to\infty\,.
\end{equation*}
For this limit to exist, set $\mu(t)=C_1 t^{1/3}(\log t)^{2/3}$, $C_1>0$. Doing so yields
\begin{equation*}
-\frac{-\log\left(\frac{\mu(t)}{t}\right)}{t\left(\frac{\mu(t)}{t}\right)^{3/2}} =
\frac{\frac{2}{3}\log t - \frac{2}{3}\log(\log t) - \log C_1}{C_1^{3/2}\log t}\to \frac{2}{3C_1^{3/2}},~\text{ as }~t\to\infty\,,
\end{equation*}
as desired: In this limit, $A$ and $B$ in \eqref{E:A-B-choice}, tend to finite values. Therefore the scaling for the collisionless shock region is given by
\begin{equation*}
n = t - C_1 t^{1/3}(\log t)^{2/3}.
\end{equation*}

\subsection{Computing the Toda $g$-function}\label{g-func-comp}

To compute $g(z)$ we first compute $\mathfrak g$ and use the relation $t g(z) = t \mathfrak g(z) + \half \theta(z)$. It follows that $\mathfrak g'(z)$ solves the following {\RHP}
\begin{align}\label{frak-jump}
\mathfrak g'(z) &+ \mathfrak g'(z) = t^{-1} \theta'(z), ~~ z \in \Sigma_u \cup \Sigma_l,\\
\mathfrak g'(z) &= \mathcal O(z^{-2}), ~\text{as}~~ z \goto \infty.\label{frak-decay}
\end{align}
Furthermore, $\mathfrak g'(z)$ is a bounded function on $\mathbb C \setminus (\Sigma_u \cup \Sigma_l)$. We remark that imposing that \eqref{frak-jump} and that $\mathfrak g'(z)$ is a bounded function in the finite plane uniquely determines $\mathfrak g'(z)$ and \eqref{frak-decay} is a consequence of our choice of $\alpha$ and $\beta$.

We consider the function $\mathfrak G(k) = \mathfrak g(M(k))$, bounded in the finite plane, where $M(k) = \frac{k+i}{k-i}$ maps the real axis to the unit circle. Then $\mathfrak G'(k)$ solves
\begin{align}\label{Frak-jump}
\mathfrak G'(k) &+ \mathfrak G'(k) = t^{-1} \theta'(M(k))M'(k), ~~ z \in (-B,-A) \cup (A,B),\\
\mathfrak G'(k) &= \mathcal O((k-i)^2), ~\text{as}~~ k \goto i.\label{Frak-decay}
\end{align}
Here $A = M^{-1}(\alpha)$ and $B = M^{-1}(\beta)$. As in the case of $\mathfrak g'(z)$ the behavior at $k = i$ does not need to be imposed --- the boundedness of $\mathfrak G'$ along with the jump condition \eqref{Frak-jump} are enough to uniquely determine the function. Thus the numerical methodology in \cite{SOEquilibriumMeasure} applies directly to this situation and allows us to compute both $\mathfrak G'$ and $\mathfrak G$ to within machine precision, uniformly in the complex plane. Hence $\mathfrak g(z) = \mathfrak G\left(i\frac{z+1}{z-1} \right)$ can be computed accurately.
\section{The vanishing lemma and the unique solvability of {\RHP}s}\label{A:solve}
Let $G^*(z)$ denote the Hermitian transpose of the matrix $G(z)$.  In the following lemma, we allow a solution of a {\RHP} to fail to be continuous up the boundary but it must be uniformly bounded and satisfy the jump condition almost everywhere (a.e.).

\begin{lemma}[Vanishing lemma]\label{L:vanish}
Consider the {\RHP}
\begin{align*}
\Phi_+(z) = \Phi_-(z) G(z), \quad a.e. ~~z \in \mathbb T, \quad \Phi(\infty) = 0,
\end{align*}
where $\Phi$ is uniformly bounded on $\mathbb C\setminus \mathbb T$.  Assume $G \in L^\infty(\mathbb T)$ and $G^*(z) + G(z)$ is positive semi-definite a.e. and strictly positive definite on a set of positive measure. Then $\Phi \equiv 0$.
\end{lemma}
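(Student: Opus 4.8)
\emph{Proof proposal.} The plan is to combine the positivity of $G+G^{*}$ with Cauchy's theorem applied to a bilinear ``Schwarz--reflected'' combination of $\Phi$, in the spirit of Zhou's vanishing lemma. Since $\Phi$ is uniformly bounded and analytic on $\mathbb{C}\setminus\mathbb{T}$, its restriction $\Phi_i$ to $\{|z|<1\}$ lies in $H^{\infty}$ of the disk and its restriction $\Phi_e$ to $\{|z|>1\}$ lies in $H^{\infty}$ of the exterior disk; hence (Fatou) $\Phi$ has nontangential boundary values $\Phi_{\pm}$ a.e.\ on $\mathbb{T}$, and the jump relation is read in terms of these. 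Because $\mathbb{T}$ is counterclockwise, the $+$ side is the interior, so $\Phi_i|_{\mathbb{T}}=\Phi_+$, $\Phi_e|_{\mathbb{T}}=\Phi_-$, and $\Phi_e(z)\to 0$ as $z\to\infty$. First I would set
\begin{equation*}
W(z)\defeq\Phi_i(z)\,\Phi_e(1/\bar z)^{*}\ \ (|z|<1),\qquad \widehat W(z)\defeq\Phi_e(z)\,\Phi_i(1/\bar z)^{*}\ \ (|z|>1).
\end{equation*}
A Cauchy--Riemann computation shows $\overline{\Phi_e(1/\bar z)}$ is holomorphic in $z$, so $W$ is bounded analytic on $\{|z|<1\}$ and $\widehat W$ on $\{|z|>1\}$; the removable-singularity theorem together with $\Phi_e(\infty)=0$ gives $W(0)=0$, and likewise $\widehat W(\infty)=0$.

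Next I would take boundary values on $\mathbb{T}$, where $1/\bar z=z$: $W_{+}=\Phi_{+}\Phi_{-}^{*}$, $\widehat W_{-}=\Phi_{-}\Phi_{+}^{*}=(W_{+})^{*}$, and the jump condition $\Phi_{+}=\Phi_{-}G$ yields the key identity
\begin{equation*}
W_{+}(z)+\widehat W_{-}(z)=\Phi_{-}(z)\bigl(G(z)+G(z)^{*}\bigr)\Phi_{-}(z)^{*},\qquad z\in\mathbb{T}.
\end{equation*}
Cauchy's integral formula for $W$ at the origin gives $\frac{1}{2\pi}\int_{0}^{2\pi}W_{+}(e^{i\theta})\,d\theta=W(0)=0$, and deforming $\mathbb{T}$ outward (using $\widehat W(z)=o(1)$ at infinity) gives $\frac{1}{2\pi}\int_{0}^{2\pi}\widehat W_{-}(e^{i\theta})\,d\theta=0$. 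Adding, $\frac{1}{2\pi}\int_{0}^{2\pi}\Phi_{-}(G+G^{*})\Phi_{-}^{*}\,d\theta=0$. The integrand is Hermitian positive semi-definite a.e.\ (of the form $AMA^{*}$ with $M=G+G^{*}\succeq0$), so taking traces forces $\Phi_{-}(z)(G(z)+G(z)^{*})\Phi_{-}(z)^{*}=0$ a.e.; factoring $G+G^{*}=C^{1/2}C^{1/2}$ with $C^{1/2}\succeq0$ this says $\|C^{1/2}\Phi_{-}^{*}\|^{2}=0$, hence $C^{1/2}\Phi_{-}^{*}=0$ a.e. On the positive-measure set $E\subset\mathbb{T}$ where $G+G^{*}$ is strictly positive definite, $C^{1/2}$ is invertible, so $\Phi_{-}(z)=0$ for a.e.\ $z\in E$, and then $\Phi_{+}(z)=\Phi_{-}(z)G(z)=0$ there as well. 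Finally, $\Phi_i$ and $\Phi_e$ (the latter transplanted to a disk via $z\mapsto1/z$) are bounded analytic functions whose boundary values vanish on $E$ with $|E|>0$; the Luzin--Privalov uniqueness theorem for $H^{\infty}$ then forces $\Phi_i\equiv0$ and $\Phi_e\equiv0$, i.e.\ $\Phi\equiv0$.

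The hard part is not the algebra but the functional-analytic bookkeeping around the hypothesis that $\Phi$ is only \emph{bounded}, not continuous up to $\mathbb{T}$: one must justify that $\Phi$ lies in the relevant Hardy classes so that Fatou boundary values exist and the jump holds a.e., that Cauchy's formula applies with boundary data for the $H^{1}$ function $W$, that $\widehat W=o(1)$ at infinity legitimizes the outward contour deformation, and finally that the $H^{\infty}$ uniqueness theorem upgrades ``boundary values vanish on a set of positive measure'' to ``$\Phi\equiv0$.'' The role of $\Phi(\infty)=0$ is essential: it is used both in $W(0)=0$ and in $\widehat W(\infty)=0$, and the lemma is of course false without it.
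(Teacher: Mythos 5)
Your proposal is correct and follows essentially the same argument as the paper: form the product $\Phi(z)\Phi^*(1/\bar z)$ (bounded analytic in the disc, vanishing at the origin because $\Phi(\infty)=0$), use Cauchy's theorem to make its boundary integral vanish, add the Hermitian conjugate to produce $\int_{\mathbb T}\Phi_-\left[G+G^*\right]\Phi_-^*\,|dz|=0$, invoke positivity to kill the boundary values on a set of positive measure, and finish with the $H^\infty$ uniqueness theorem. The only differences are cosmetic — you introduce a second auxiliary function $\widehat W$ and an outward contour deformation where the paper simply takes the Hermitian conjugate of the integral identity, and you spell out the Hardy-space bookkeeping in more detail.
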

\begin{proof}
Because $\Phi(z)$ vanishes at infinity, $\Phi^*(1/\bar z)z^{-1}$ is a bounded analytic function on the unit disc. Then for $|z| =1$, $1/\bar z = z$ so that
\begin{align*}
0 = - i\int_{\mathbb T} \Phi_+(z) \Phi_-^*(1/\bar z)z^{-1} dz = \int_{\mathbb T} \Phi_+(z) \Phi_-^*(z) |d z| = \int_{\mathbb T} \Phi_+(z) G^*(z) \Phi_+^*(z) |dz|.
\end{align*}
Adding this equation to its Hermitian conjugate we find
\begin{align*}
\int_{\mathbb T} \Phi_+(z) \left[G^*(z) + G(z)\right] \Phi_+^*(z) |dz| = 0.
\end{align*}
And if $G^*(z)+G(z)$ is strictly positive definite on a set $S$ then $\Phi_+ = 0$ a.e. on $S$.   As $S$ has positive measure, it follows from classical results (see, for example, \cite{Duren}) that $\Phi_+ \equiv 0$ and therefore $\Phi \equiv 0$.
\end{proof}

\subsection{Unique solvability without solitons}
\begin{prop}
The associated matrix {\RHP} to \rhref{rhp:m} (\emph{i.e} dropping the symmetry condition and normalize to $I$ at $\infty$) is uniquely solvable in the absence of residue condition ( \emph{i.e.} if $N = 0$).
\end{prop}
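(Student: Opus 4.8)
The plan is to reduce unique solvability to the Vanishing Lemma (Lemma~\ref{L:vanish}) via the Fredholm theory of the singular integral operator $\mathcal C[J;\mathbb T]$ from Section~\ref{S:convert}. Write $J = J(z;n,t)$ for the jump matrix of \rhref{rhp:m}. The first step is a pointwise computation on $\mathbb T$. Since $z^{-1}=\bar z$ there, the phase $\theta(z;n,t)=t(z-z^{-1})+2n\log z$ is purely imaginary, so $|e^{\pm\theta(z;n,t)}|=1$ and $e^{-\theta}=\overline{e^{\theta}}$ on $\mathbb T$; combined with $\overline{R(z)}=R(z^{-1})$ this gives, with $w \defeq R(z)e^{\theta(z;n,t)}$,
\begin{align*}
J(z;n,t) = \begin{pmatrix} 1 - |R(z)|^2 & -\bar w \\ w & 1 \end{pmatrix}, \qquad |w| = |R(z)|, \qquad \det J(z;n,t) \equiv 1,
\end{align*}
and hence
\begin{align*}
J(z;n,t) + J^*(z;n,t) = \begin{pmatrix} 2\bigl(1 - |R(z)|^2\bigr) & 0 \\ 0 & 2 \end{pmatrix}.
\end{align*}
This is independent of $(n,t)$, so the argument is uniform in $(n,t)$. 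By the scattering relations \eqref{E:scatrel} one has $1-|R(z)|^2 = |T(z)|^2$, and $T(z)\neq 0$ for $z\in\mathbb T\setminus\{\pm 1\}$; thus $J+J^*$ is positive semidefinite everywhere on $\mathbb T$ and strictly positive definite on $\mathbb T\setminus\{\pm 1\}$, a set of positive (indeed full) measure. This is precisely the positivity hypothesis of Lemma~\ref{L:vanish}.

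Next I would invoke the standard Fredholm theory for $\mathcal C[J;\mathbb T]$. Because the initial data has exponential decay, $R(z)$ is analytic in a neighborhood of $\mathbb T$ (Proposition~\ref{p:m-analytic}), so $J$ is smooth on $\mathbb T$, and $\det J\equiv 1$ has zero winding; consequently $\mathcal C[J;\mathbb T]$ is Fredholm of index zero on $L^2(\mathbb T)$ (cf.\ Remark~\ref{r:singular} and \cite{ZhouIS}). Since $N=0$ there is no residue condition, so unique solvability of the associated matrix {\RHP} $[J;\mathbb T]$ is equivalent to invertibility of $\mathcal C[J;\mathbb T]$, which by index zero is equivalent to triviality of its kernel.

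It then remains to show the kernel is trivial. Suppose $u\in L^2(\mathbb T)$ solves $\mathcal C[J;\mathbb T]u = 0$ and set $\Phi \defeq \mathcal C_{\mathbb T} u$. Then $\Phi$ is analytic on $\mathbb C\setminus\mathbb T$, $\Phi(\infty)=0$, its boundary values lie in $L^2(\mathbb T)$, and $\Phi^+ = \Phi^- J$ a.e.\ on $\mathbb T$. Because $J$ is analytic in a neighborhood of $\mathbb T$ with $\det J\equiv 1$, a standard bootstrap (using the jump relation to continue $\Phi$ analytically across $\mathbb T$) shows that $\Phi$ extends analytically to a neighborhood of $\mathbb T$ from each side, hence is uniformly bounded on $\mathbb C\setminus\mathbb T$. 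Now Lemma~\ref{L:vanish} applies with $G=J$, giving $\Phi\equiv 0$, so $u = \Phi^+ - \Phi^- \equiv 0$. Therefore $\mathcal C[J;\mathbb T]$ is injective, hence bijective, and $\Psi \defeq I + \mathcal C_{\mathbb T}\bigl(\mathcal C[J;\mathbb T]^{-1}(J-I)\bigr)$ is the unique solution. I expect the main obstacle to be this last regularity step: verifying that an $L^2$ kernel element produces a genuinely uniformly bounded $\Phi$ so that Lemma~\ref{L:vanish} is applicable. This is where analyticity of $R$ near $\mathbb T$ (rather than mere continuity) enters, and where some care is needed near $z=\pm 1$, at which $J+J^*$ degenerates even though $\det J$ remains identically $1$.
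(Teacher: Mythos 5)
Your proposal is correct and follows essentially the same route as the paper: Fredholm theory with index zero from $\det J\equiv 1$, followed by the vanishing lemma applied to the diagonal matrix $J+J^*=\diag\left(2\left(1-|R|^2\right),2\right)$. The only differences are cosmetic — the paper works on $H^1(\mathbb T)$ (which sidesteps the $L^2$ regularity bootstrap you flag as the main obstacle) and leaves implicit the justification $1-|R|^2=|T|^2\geq 0$ that you spell out.
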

\begin{proof}
This associated matrix {\RHP} is uniquely solvable if the operator defined by $u \mapsto u - \mathcal C_{\mathbb T}^- u \cdot (J-I)$ is invertible on $H^1(\mathbb T)$ \cite{mybook}. From classical results, this operator is Fredholm  and from $\det J(z) = 1$ the index is zero.  Then Lemma~\ref{L:vanish} demonstrates that the kernel must be trivial and the operator is invertible because $R(z^{-1}) = \overline{R(z)}$ and hence $J(z) + J^*(z)$ is diagonal with non-negative diagonal entries.
\end{proof}

We now introduce a function $\Delta_\mathrm{s}(z) = \Delta_\mathrm{s}(z;z_0)$ that satisfies \rhref{rhp:delta}, with the symmetry condition
\begin{align}\label{E:deltas-sym}
\Delta_\mathrm{s}(z) = \begin{pmatrix}
0 & 1 \\ 1 & 0
\end{pmatrix} \Delta_\mathrm{s}\big(z^{-1}\big) \begin{pmatrix}
0 & 1 \\ 1 & 0
\end{pmatrix}, \quad |z| > 1,
\end{align}
with $\Delta_\mathrm{s}(\infty) =  C$ and generally $C \neq I$.  $\Delta_\mathrm{s}(z)$ is called the partial transmission coefficient in \cite{KT_rev} (when $N = 0$) and is given by
\begin{align*}
\Delta_\mathrm{s}(z) &= \diag\left(\delta_\mathrm{s}(z), \delta_\mathrm{s}^{-1}(z)\right),\\
\delta_\mathrm{s}(z) &= \exp \left( \frac{1}{2 \pi i} \int_{\oarc{z_0,z_0^{-1}}} \log \tau(s) \frac{s+z}{s-z} \frac{ds}{2s} \right), ~~~ \tau(z) = 1 - R(z) R(z^{-1}).
\end{align*}
We now verify that $\Delta_\mathrm{s}$ satisfies the jump condition and singularity conditions of \rhref{rhp:delta} along with the symmetry condition \eqref{E:deltas-sym}.  We begin with the jump condition.  For $z \in \oarc{z_0,z_0^{-1}}$, $z \neq -1$ using the Sokhotski--Plemelj lemma \cite[p.~42]{SIE}
\begin{align*}
\log(\delta_\mathrm{s}^+(z)/\delta_\mathrm{s}^-(z)) =  \log \tau(z).
\end{align*}
For $|z| > 1$,
\begin{align*}
\delta_\mathrm{s}(1/z) = \exp \left( \frac{1}{2 \pi i} \int_{\oarc{z_0,z_0^{-1}}} \log \tau(s) \frac{s+z^{-1}}{s-z^{-1}} \frac{ds}{2s} \right).
\end{align*}
Then sending $s \mapsto s^{-1}$, using that $\tau(s^{-1}) = \tau(s)$ we have
\begin{align*}
\delta_\mathrm{s}(1/z) = 1/\delta_\mathrm{s}(z),
\end{align*}
showing that $\Delta_\mathrm{s}$ satisfies \eqref{E:deltas-sym}. We then find
\begin{align*}
	\delta_\mathrm{s}(z)/\delta(z) &= \exp\left( \frac{1}{2 \pi i} \int_{\oarc{z_0,z_0^{-1}}} \log \tau(s) \left[ \frac{s+z}{2s} - 1 \right]  \frac{ds}{s-z} \right)\\
    &=\exp\left( \frac{1}{2 \pi i} \int_{\oarc{z_0,z_0^{-1}}} \log \tau(s) \frac{ds}{2s} \right).
\end{align*}
Here $\delta(z) = \Delta_{11}(z)$ where $\Delta$ is the solution of \rhref{rhp:delta}. So $\delta_\mathrm{s}(z) = c \delta(z)$ for some $c \in \mathbb C$, $c \neq 0$.

Consider a new {\RHP} constructed using the same jump matrix $J$ from \rhref{rhp:hm}.  Assume $N =0$ (without solitons) and let $C_{z_0} = \{z : |z-z_0| = \epsilon\}$ with counter-clockwise orientation.  Let $w_0$ and $v_0$ be the intersection points of $\oarc{z_0,z_0^{-1}}$ and $C_{z_0}$ and $\mathbb T \setminus \oarc{z_0,z_0^{-1}}$ and $C_{z_0}$, respectively. Define the jump matrix $\hat J: \hat \Gamma \to \mathbb C^{2\times 2}$, $\hat\Gamma = \oarc{w_0,w_0^{-1}} \cup C_{z_0} \cup C_{z_0^{-1}} \cup (\mathbb T\setminus \oarc{v_0,v_0^{-1}})$:
\begin{align*}
\hat J(z) = \begin{cases}
\Delta_\mathrm{s}^-(z) J(z) \Delta_\mathrm{s}^+(z)^{-1}, & z \in \oarc{w_0,w_0^{-1}},\\
\Delta_\mathrm{s}(z) J(z), & z \in C_{z_0} \cup C_{z_0^{-1}} \text{ and } |z| > 1,\\
\Delta_\mathrm{s}(z), & z \in C_{z_0} \cup C_{z_0^{-1}} \text{ and } |z| < 1,\\
\Delta_\mathrm{s}(z) J(z) \Delta_\mathrm{s}(z)^{-1}, & z \in \mathbb T\setminus \oarc{v_0,v_0^{-1}}.
\end{cases}
\end{align*}
It follows that $\hat J$ satisfies the product condition \cite[Def.~2.55]{mybook} and therefore $u \mapsto u - C_{\hat \Gamma}^- u \cdot(\hat J-I)$ is Fredholm on $H_z^1(\hat \Gamma)$ (see \cite[Def.~2.48]{mybook}) and $\det \hat J = 1$ implies the index is zero.  Assume $u$ is in the kernel of this operator, so that $\Psi(z) := \mathcal C_{\hat \Gamma}u$ is a solution that is continuous up to $\hat \Gamma$ that satisfies $\Psi(\infty) = 0$.  Define for $z \in \mathbb C\setminus(\mathbb T \cup \hat \Gamma)$
\begin{align*}
\hat \Psi(z) = \Psi(z) \begin{cases}
\Delta_\mathrm{s}^{-1}(z), & |z-z_0| < \epsilon \text{ and } |z| < 1,\\
\Delta_\mathrm{s}^{-1}(z), & |z-z^{-1}_0| < \epsilon \text{ and } |z| < 1,\\
J^{-1}(z)\Delta_\mathrm{s}^{-1}(z), & |z-z_0| < \epsilon \text{ and } |z| > 1,\\
J^{-1}(z)\Delta_\mathrm{s}^{-1}(z), & |z-z^{-1}_0| < \epsilon \text{ and } |z| > 1,\\
I, & \text{otherwise}.
\end{cases}
\end{align*}
It follows that $\hat \Psi$ has a continuation that is analytic in $\mathbb C \setminus \mathbb T$ with the jump
\begin{align*}
\hat \Psi^+(z) &= \hat \Psi^-(z) G(z),\\
G(z) := \Delta^-_s(z)J(z) \Delta^+_s(z)^{-1} &= \begin{pmatrix}
[1-R(z)R(z^{-1})]\delta_\mathrm{s}^-(z)/\delta_\mathrm{s}^+(z) & - \overline{R(z)}\delta_\mathrm{s}^-(z) \delta_\mathrm{s}^+(z) e^{-\theta(z;n,t)}\\
\frac{R(z)}{\delta_\mathrm{s}^-(z) \delta_\mathrm{s}^+(z)} e^{\theta(z;n,t)} & \delta_\mathrm{s}^+(z)/\delta_\mathrm{s}^-(z)
\end{pmatrix}.
\end{align*}
This follows because $\hat \Psi^+(z) = \hat \Psi^-(z)$ on $C_{z_0}$ and $C_{z_0^{-1}}$.  Note that by the singularity conditions in \rhref{rhp:delta}, $G(z)$ is continuous near $z = -1$.  It follows that $\delta_\mathrm{s}(z)$ also satisfies $\overline{\delta^+_s(z)} = 1/\delta^-_s(z)$ so that
\begin{align*}
G^*(z) = \begin{pmatrix}
\big[1-R(z)R(z^{-1})\big]|\delta_\mathrm{s}^-(z)|^2&  \overline{R(z)}\delta_\mathrm{s}^-(z) \delta_\mathrm{s}^+(z) e^{-\theta(z;n,t)}\\
-\frac{R(z)}{\delta_\mathrm{s}^-(z) \delta_\mathrm{s}^+(z)} e^{\theta(z;n,t)} & |\delta_\mathrm{s}^+(z)|^2
\end{pmatrix}
\end{align*}
and
\begin{align*}
G^*(z) + G(z)  = \begin{pmatrix}
2\big[1-R(z)R(z^{-1})\big]|\delta_\mathrm{s}^-(z)|^2&  0\\
0  & 2|\delta_\mathrm{s}^+(z)|^2
\end{pmatrix}.
\end{align*}
Applying Lemma~\ref{L:vanish}, $\hat \Psi \equiv 0$ and hence $\Psi \equiv 0$.  This gives the following proposition:

\begin{prop}
\label{P:solve}
The matrix {\RHP} with jump matrix $\hat J$ is uniquely solvable.
\end{prop}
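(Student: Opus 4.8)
The plan is to reduce the statement, exactly as in Section~\ref{S:convert}, to the invertibility of the singular integral operator $\mathcal{C}[\hat J;\hat\Gamma]\colon u \mapsto u - \mathcal{C}^{-}_{\hat\Gamma}u\cdot(\hat J - I)$ on the space $H_z^1(\hat\Gamma)$: unique solvability of the matrix {\RHP} with jump $\hat J$ is equivalent to $\mathcal{C}[\hat J;\hat\Gamma]$ being boundedly invertible. Two facts are already in hand from the discussion preceding the statement: $\hat J$ satisfies the product condition, so $\mathcal{C}[\hat J;\hat\Gamma]$ is Fredholm, and $\det\hat J \equiv 1$, so its index is zero. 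Hence it suffices to prove that the kernel of $\mathcal{C}[\hat J;\hat\Gamma]$ is trivial.

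To that end I would take an arbitrary $u$ in the kernel and set $\Psi = \mathcal{C}_{\hat\Gamma}u$; this is a uniformly bounded, sectionally analytic function, continuous up to $\hat\Gamma$, that satisfies the homogeneous jump relation $\Psi^{+} = \Psi^{-}\hat J$ and vanishes at $\infty$. Using the explicit partial transmission coefficient $\Delta_\mathrm{s}(z)$ together with $J$, I would build $\hat\Psi$ by conjugating $\Psi$ near $z_0$ and $z_0^{-1}$ exactly as in the construction above, so that $\hat\Psi$ extends analytically across the circles $C_{z_0}$, $C_{z_0^{-1}}$ and across the portions of $\hat\Gamma$ lying inside them, leaving a single jump $\hat\Psi^{+} = \hat\Psi^{-}G$ on $\mathbb{T}$ with $G = \Delta_\mathrm{s}^{-}J(\Delta_\mathrm{s}^{+})^{-1}$. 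The singularity conditions in \rhref{rhp:delta} make $G$ continuous at $z=-1$, so $G\in L^\infty(\mathbb{T})$, and $\hat\Psi$ remains uniformly bounded with $\hat\Psi(\infty)=0$.

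The core of the argument is then a positivity computation: using $\overline{R(z)} = R(z^{-1})$ on $\mathbb{T}$ and the relation $\overline{\delta_\mathrm{s}^{+}(z)} = 1/\delta_\mathrm{s}^{-}(z)$, one finds that $G^{*}(z)+G(z)$ equals $\diag\big(2[1-R(z)R(z^{-1})]\,|\delta_\mathrm{s}^{-}(z)|^{2},\,2|\delta_\mathrm{s}^{+}(z)|^{2}\big)$, which is positive semidefinite a.e.\ on $\mathbb{T}$ and strictly positive definite wherever $|R(z)|<1$, a set of full measure. Lemma~\ref{L:vanish} then forces $\hat\Psi\equiv 0$, hence $\Psi\equiv 0$ and $u\equiv 0$. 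Therefore the kernel is trivial, $\mathcal{C}[\hat J;\hat\Gamma]$ is invertible, and the matrix {\RHP} with jump $\hat J$ is uniquely solvable. I expect the main obstacle to be the bookkeeping in the conjugation step --- confirming that $\hat\Psi$ genuinely has no jump across $C_{z_0}$, $C_{z_0^{-1}}$ and the enclosed arcs and that the resulting $G$ is bounded at $z=-1$ --- but this has essentially been worked out in the paragraphs preceding the statement, so what remains is to assemble those pieces with the Fredholm/index observation and the vanishing lemma.
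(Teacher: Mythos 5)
Your proposal is correct and follows essentially the same route as the paper: the paper's argument is exactly the Fredholm-plus-index-zero reduction, the conjugation of a putative kernel element by $\Delta_\mathrm{s}$ (and $J$ inside the circles) to obtain a single jump $G$ on $\mathbb{T}$, the computation that $G^{*}+G$ is diagonal with nonnegative entries, and an appeal to the vanishing lemma. The only cosmetic difference is that the paper phrases the positivity as "strictly positive definite on a set of positive measure" rather than "full measure," which is all Lemma~\ref{L:vanish} requires.
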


\subsection{The addition of solitons} \label{S:addsolitons}

Our main approach to adding solitons is to just include the jumps on the contours $D_j^\pm$ in \rhref{rhp:disp}.  Because the matrix {\RHP} without these contours is uniquely solvable it is reasonable to expect that the addition of these jumps will not completely destroy the unique solvability of the problem.  As stated in Remark~\ref{r:singular}, for each fixed $n$, the problem is either solvable for no $t$ or solvable on the compliment of a discrete set of $t$ values.  One should expect the latter and this is indeed the case: The matrix {\RHP}s we consider are uniquely solvable for sufficiently large $t$.

We now present an approach that incorporate the symmetry condition (see \eqref{E:symm}) for the vector {\RHP}s to give equations that are uniquely solvable for every $n$ and $t$ value.  We emphasize that this approach is not necessary to compute the solution and the fact that the numerical method presented here is robust despite ignoring symmetry is important, as evidenced in Figure~\ref{f:singular}.  Ignoring symmetry is also often more convenient for implementation as well.  Nonetheless, inspired by \cite[(4.7)]{KT_rev} and by the deformation for $\Phi_{1,\mathrm{d}}$ in \rhref{rhp:disp}, we define
\begin{align*}
Q_\mathrm{s}(z) = \diag( q_\mathrm{s}(z), 1/q_\mathrm{s}(z) ), \quad q_\mathrm{s}(z) = \prod_{j \in K_{n,t}} \frac{z - \zeta_j}{z \zeta_j -1}.
\end{align*}
We use this here because
\begin{align*}
Q_\mathrm{s}(z) = \begin{pmatrix}
0 & 1 \\ 1 & 0
\end{pmatrix} Q_\mathrm{s}(z^{-1}) \begin{pmatrix}
0 & 1 \\ 1 & 0
\end{pmatrix}
\end{align*}
and if $m(z)$ (from \rhref{rhp:m}) satisfies the symmetry condition \eqref{E:symm} then so does $m^\star(z):=m(z)Q_\mathrm{s}(z)$.  Define
\begin{align*}
h^+_j(n,t): = h_j^+ = \begin{cases}
\lim_{z \to \zeta_j} (z-\zeta_j)^2 q_\mathrm{s}(z)^{-2}, & j \in K_{n,t},\\
q_\mathrm{s}(\zeta_j)^{-2}, & j \not\in K_{n,t},
\end{cases} \quad
h^-_j(n,t): = h_j^- = \begin{cases}
\lim_{z \to \zeta_j^{-1}} (z-\zeta^{-1}_j)^2 q_\mathrm{s}(z)^2, & j \in K_{n,t},\\
q_\mathrm{s}(\zeta_j)^{2}, & j \not\in K_{n,t}.
\end{cases}
\end{align*}
It follows that $m^\star(z)$ satisfies the following conditions.
\begin{rhp}
Find the function $m^\star \colon \mathbb C \setminus \mathbb{T} \to \mathbb{C}^{1\times 2}$ that is sectionally meromorphic, continuous up to $\mathbb T$, with simple poles at $\zeta_j^{\pm 1}$, $j=1\dots,N$, and satisfies:
\begin{itemize}
\item \emph{the jump condition:}
\begin{equation*}
m^{\star}_+(z;n,t) = m^{\star}_-(z;n,t) Q_\mathrm{s}^{-1}(z)J(z;n,t)Q_\mathrm{s}(z),\phantom{x} z\in\mathbb{T},
\end{equation*}
\item \emph{the residue conditions:} For $j = 1,2,\ldots,N$,
\begin{equation*}
\begin{aligned}
\underset{z=\zeta_j}{\text{Res}}\,m^\star(z;n,t) &= \lim_{z\to\zeta_j} m^\star(z;n,t)\begin{pmatrix}
0 & h_j^+\zeta^{-1}_j {\gamma_j^{-1} e^{-\theta(\zeta_j;n,t)}} \\
0 & 0
\end{pmatrix},\quad j\in K_{n,t},\\
\underset{z=\zeta_j^{-1}}{\text{Res}}\,m^\star(z;n,t) &= \lim_{z\to\zeta_j^{-1}} m^\star(z;n,t)\begin{pmatrix}
0 & 0 \\
-\zeta_j  h_j^-\gamma_j^{-1} e^{-\theta(\zeta_j;n,t)}  & 0
\end{pmatrix}, \quad j\in K_{n,t},\\
\underset{z=\zeta_j}{\text{Res}}\,m^\star(z;n,t) &= \lim_{z\to\zeta_j} m^\star(z;n,t)\begin{pmatrix}
0 & 0\\
-\zeta_j {\gamma_j e^{\theta(\zeta_j;n,t)}}/h_j^+  & 0
\end{pmatrix},\quad j\not\in K_{n,t},\\
\underset{z=\zeta_j^{-1}}{\text{Res}}\,m^\star(z;n,t) &= \lim_{z\to\zeta_j^{-1}} m^\star(z;n,t)\begin{pmatrix}
0 & \zeta^{-1}_j \gamma_j e^{\theta(\zeta_j;n,t)}/h_j^- \\
0 & 0
\end{pmatrix}, \quad j\in K_{n,t},
\end{aligned}
\end{equation*}
\item \emph{the symmetry condition:}
\begin{equation*}
m^\star\left(z^{-1}; n, t\right) = m^\star(z;n,t)\begin{pmatrix} 0& 1 \\ 1 & 0 \end{pmatrix},
\end{equation*}
\item \emph{the normalization condition:}
\begin{equation*}
\lim_{z\to 0} m^\star(z;n,t) = \begin{pmatrix} m^\star_1 & m^\star_2 \end{pmatrix},\text{ with } m^\star_1 \cdot m^\star_2 = 1 \text{ and } \left(\prod_{j\in K_{n,t}}  \zeta_j\right)m^\star_1 > 0\,.
\end{equation*}
\end{itemize}
\end{rhp}
Now, let $K(z) = K(z;n,t)$ be the solution of the following {\RHP}:
\begin{rhp}\label{rhp:K}
Find the function $K \colon \mathbb C \setminus \mathbb{T} \to \mathbb{C}^{2\times 2}$ that is sectionally analytic, continuous up to $\mathbb T$, and satisfies
\begin{itemize}
\item \emph{the jump condition:}
\begin{equation*}
K_+(z) = K_-(z) Q_\mathrm{s}^{-1}(z)J(z;n,t)Q_\mathrm{s}(z),\phantom{x} z\in\mathbb{T},
\end{equation*}
\item \emph{the symmetry condition:}
\begin{equation}\label{E:symm-K}
K\left(z^{-1}\right) = \begin{pmatrix} 0& 1 \\ 1 & 0 \end{pmatrix}K(z)\begin{pmatrix} 0& 1 \\ 1 & 0 \end{pmatrix},
\end{equation}
\item \emph{the normalization condition:}
\begin{equation*}
K(\infty) = I.
\end{equation*}
\end{itemize}
\end{rhp}
\rhref{rhp:K} has a unique solution.  Indeed, because $\overline{q_\mathrm{s}(z)} = 1/q_\mathrm{s}(z)$ for $z \in \mathbb T$, the jump matrix and contour satisfy the hypotheses of the vanishing lemma, Lemma~\ref{L:vanish}.  This implies that the solution without the symmetry condition exists and is unique.  Then from the symmetries of the jump matrix
\begin{align*}
\hat K(z) := \begin{pmatrix} 0& 1 \\ 1 & 0 \end{pmatrix}K(z^{-1})\begin{pmatrix} 0& 1 \\ 1 & 0 \end{pmatrix},
\end{align*}
is also a solution and therefore \eqref{E:symm-K} follows.  Furthermore, $\det K(z) = 1$. Define $E_{j,\pm}$ by
\begin{equation*}
\begin{aligned}
E_{j,+}&:=\begin{cases}
K(\zeta_j)\begin{pmatrix}
0 & h_j^+\zeta^{-1}_j {\gamma_j^{-1} e^{-\theta(\zeta_j;n,t)}} \\
0 & 0
\end{pmatrix}K^{-1}(\zeta_j),\quad j\in K_{n,t},\\
K(\zeta_j) \begin{pmatrix}
0 & 0\\
-\zeta_j {\gamma_j e^{\theta(\zeta_j;n,t)}}/h_j^+  & 0
\end{pmatrix} K^{-1}(\zeta_j),\quad j\not\in K_{n,t},
\end{cases}\\
E_{j,-}&:=\begin{cases}
K(\zeta_{j}^{-1}) \begin{pmatrix}
0 & 0 \\
-\zeta_j  h_j^-\gamma_j^{-1} e^{-\theta(\zeta_j;n,t)}  & 0
\end{pmatrix}K^{-1}(\zeta_{j}^{-1}), \quad j\in K_{n,t},\\
K(\zeta_j^{-1})\begin{pmatrix}
0 & \zeta^{-1}_j \gamma_j e^{\theta(\zeta_j;n,t)}/h_j^- \\
0 & 0
\end{pmatrix}K^{-1}(\zeta_j^{-1}), \quad j\not\in K_{n,t}.
\end{cases}\\
\end{aligned}
\end{equation*}
Now consider $v^\star(z) : = m^\star(z) K^{-1}(z)$ which has an analytic continuation across $\mathbb T$.  Thus $v^\star(z)$ satisfies the following discrete {\RHP}:
\begin{rhp}
Find the function $v^\star \colon \mathbb C \to \mathbb{C}^{1\times 2}$ that is sectionally meromorphic with simple poles at $\zeta_j^{\pm 1}$, $j=1\dots,N$, and satisfies:
\begin{itemize}
\item \emph{the residue conditions:} For $j = 1,2,\ldots,N$,
\begin{equation*}
\begin{aligned}
\underset{z=\zeta_j}{\text{Res}}\,v^\star(z) &= \lim_{z\to\zeta_j} v^\star(z)E_{j,+},\quad j\in K_{n,t},\\
\underset{z=\zeta_j^{-1}}{\text{Res}}\,v^\star(z) &= \lim_{z\to\zeta_j^{-1}} v^\star(z) E_{j,-}, \quad j\in K_{n,t},\\
\underset{z=\zeta_j}{\text{Res}}\,v^\star(z) &= \lim_{z\to\zeta_j} v^\star(z) E_{j,+},\quad j\not\in K_{n,t},\\
\underset{z=\zeta_j^{-1}}{\text{Res}}\,v^\star(z) &= \lim_{z\to\zeta_j^{-1}} v^\star(z)E_{j,-}, \quad j\not\in K_{n,t},
\end{aligned}
\end{equation*}
\item \emph{the symmetry condition:}
\begin{equation}\label{E:symm-v-star}
v^\star\left(z^{-1}\right) = v^\star(z)\begin{pmatrix} 0& 1 \\ 1 & 0 \end{pmatrix},
\end{equation}
\item \emph{the normalization condition:}
\begin{equation}\label{E:norm-v-star}
\lim_{z\to 0} v^\star(z) = \begin{pmatrix} v^\star_1 & v^\star_2 \end{pmatrix},\text{ with } v^\star_1 \cdot v^\star_2 = 1 \text{ and } \left(\prod_{j\in K_{n,t}}  \zeta_j\right)v^\star_1 > 0\,.
\end{equation}
\end{itemize}
\end{rhp}

We assume we can solve for $K(z)$ and therefore compute each of $E_{j,\pm}$, $j=1,2,\ldots,N$.  It follows that
\begin{align*}
v^\star(z) = \begin{pmatrix}
c + \sum_{j=1}^N a_{j,+} \frac{z \zeta_j -1}{z-\zeta_j} + \sum_{j=1}^N a_{j,-} \frac{z - \zeta_j }{z\zeta_j-1} & d + \sum_{j=1}^N b_{j,+} \frac{z \zeta_j -1}{z-\zeta_j} + \sum_{j=1}^N b_{j,-} \frac{z - \zeta_j }{z\zeta_j-1}
\end{pmatrix},
\end{align*}
for some choice of constants $a_{j,\pm}$, $b_{j,\pm}$.  From the symmetry condition \eqref{E:symm-v-star} it follows that $a_{j,\pm} = b_{j,\mp}$, $c = d$ so that
\begin{align*}
v^\star(z) = \begin{pmatrix}
c + \sum_{j=1}^N a_{j,+} \frac{z \zeta_j -1}{z-\zeta_j} + \sum_{j=1}^N a_{j,-} \frac{z - \zeta_j }{z\zeta_j-1} & c + \sum_{j=1}^N a_{j,-} \frac{z \zeta_j -1}{z-\zeta_j} + \sum_{j=1}^N a_{j,+} \frac{z - \zeta_j }{z\zeta_j-1}
\end{pmatrix}.
\end{align*}
We now obtain a linear system for these constants under the assumption that $c$ is known.  For $k = 1,2,\ldots, N$
\begin{align}
&\underset{{z = \zeta_k}}{\text{Res}}\, v^\star(z) =\begin{pmatrix}
a_{k,+}(\zeta_k^2 -1) & a_{k,-}(\zeta_k^2-1)
\end{pmatrix}\notag \\\label{E:sys1}
&= \begin{pmatrix}
c + \sum_{j \neq k} a_{j,+} \frac{\zeta_k \zeta_j -1}{\zeta_k-\zeta_j} + \sum_{j=1}^N a_{j,-} \frac{\zeta_k - \zeta_j }{\zeta_k\zeta_j-1} & c + \sum_{j \neq k} a_{j,-} \frac{\zeta_k \zeta_j -1}{\zeta_k-\zeta_j} + \sum_{j=1}^N a_{j,+} \frac{\zeta_k - \zeta_j }{\zeta_k\zeta_j-1}
\end{pmatrix} E_{k,+},\\
&\underset{{z = \zeta_k^{-1}}}{\text{Res}} v^\star(z) =\begin{pmatrix}
a_{k,-}(\zeta_k^{-2} -1) & a_{k,+}(\zeta_k^{-2}-1)
\end{pmatrix}\notag \\
&= \begin{pmatrix}
c + \sum_{j =1}^N a_{j,+} \frac{\zeta^{-1}_k \zeta_j -1}{\zeta^{-1}_k-\zeta_j} + \sum_{j\neq k} a_{j,-} \frac{\zeta^{-1}_k - \zeta_j }{\zeta^{-1}_k\zeta_j-1} & c + \sum_{j =1}^N a_{j,-} \frac{\zeta^{-1}_k \zeta_j -1}{\zeta^{-1}_k-\zeta_j} + \sum_{j\neq k} a_{j,+} \frac{\zeta^{-1}_k - \zeta_j }{\zeta^{-1}_k\zeta_j-1}
\end{pmatrix} E_{k,-}.\label{E:sys2}
\end{align}
This is a system of $4N$ equations for $2N$ unknowns so it must have redundancy.  We compute the symmetries of $E_{j,\pm}$ from the symmetry of $v^\star(z)$:
\begin{align*}
\underset{{z = \zeta_j}}{\text{Res}}\, v^\star(z) &= \lim_{z \to \zeta_j} v^\star(z) E_{j,+},\\
\lim_{z \to \zeta_j} (z-\zeta_j) v^\star(z) &= \lim_{z \to \zeta_j} v^\star(z) E_{j,+},\\
\lim_{z \to \zeta_j} (z-\zeta_j) v^\star(z^{-1}) &= \lim_{z \to \zeta_j} v^\star(z^{-1}) \begin{pmatrix}
0 & 1 \\ 1 & 0
\end{pmatrix} E_{j,+}\begin{pmatrix}
0 & 1 \\ 1 & 0
\end{pmatrix} = \lim_{z \to \zeta_j^{-1}} v^\star(z) \begin{pmatrix}
0 & 1 \\ 1 & 0
\end{pmatrix} E_{j,+}\begin{pmatrix}
0 & 1 \\ 1 & 0
\end{pmatrix},\\
\lim_{z \to \zeta^{-1}_j} (z^{-1}-\zeta_j) v^\star(z) &= \lim_{z \to \zeta_j^{-1}} v^\star(z) \begin{pmatrix}
0 & 1 \\ 1 & 0
\end{pmatrix} E_{j,+}\begin{pmatrix}
0 & 1 \\ 1 & 0
\end{pmatrix},\\
-\lim_{z \to \zeta^{-1}_j} z^{-1}\zeta_j(z-\zeta_j^{-1}) v^\star(z) &= \lim_{z \to \zeta_j^{-1}} v^\star(z) \begin{pmatrix}
0 & 1 \\ 1 & 0
\end{pmatrix} E_{j,+}\begin{pmatrix}
0 & 1 \\ 1 & 0
\end{pmatrix},\\
\underset{{z = \zeta^{-1}_j}}{\text{Res}}\,v^\star(z) &= - \zeta_j^{-2}\lim_{z \to \zeta_j^{-1}} v^\star(z) \begin{pmatrix}
0 & 1 \\ 1 & 0
\end{pmatrix} E_{j,+}\begin{pmatrix}
0 & 1 \\ 1 & 0
\end{pmatrix}.
\end{align*}
This leads to the conclusion that \eqref{E:sys1} is equivalent to \eqref{E:sys2}.  And since a unique solution of \eqref{E:sys1} exists for the correct $c$, a solution must exist for every $c$.  Due to linearity, the coefficients $a_{j,\pm} = a_{j,\pm}(c)$ have simple dependence on $c$.  Namely, $a_{j,\pm}(c) = a_{j,\pm}(1)\cdot c$.  To find $c$, we set $c = 1$ and solve for $a_{j,\pm}(1)$.  Then for $v^\star(z) = v^\star(z;c)$ we have
\begin{align*}
v^\star(0;c) = c v^\star(0;1).
\end{align*}
And $c$ is then found by enforcing \eqref{E:norm-v-star}.
\section{A proof that generically $R(-1) = -1$}\label{A:genericity}
{In this appendix we prove a theorem to establish the genericity of $R(\pm 1) = -1$ for the Jacobi matrices used in this work. Consider the weighted $\ell^1$-space of doubly infinite sequences, $\ell^{1}_w(\mathbb{Z})$, with the weight function given by $n\mapsto 1+|n|$, and define the Banach space $\mathcal{L}^1_{w}(\mathbb{Z}) = \ell^1_w(\mathbb{Z}) \oplus \ell^1_w(\mathbb{Z})$ equipped with the norm
\begin{equation*}
\| (x,y) \|_{\mathcal{L}^{1}_w} = \sum_{n\in\mathbb{Z}}(1+|n|)\big(|x_n| + |y_n| \big).
\end{equation*}

We let $\mathcal{M}$ denote the \emph{Marchenko} class of doubly infinite Jacobi matrices $J(a,b)$ whose coefficients have the property that
\begin{equation*}
\left( \left\lbrace a_n -\tfrac{1}{2} \right\rbrace_{n\in\mathbb{Z}}, \lbrace b_n \rbrace_{n\in\mathbb{Z}} \right) \in {\mathcal{L}^{1}_w}(\mathbb{Z}).
\end{equation*}
For ease of notation, we set $\alpha_n = a_n - \tfrac{1}{2}$ for each $n\in\mathbb{Z}$ and define $J_o(\cdot,\cdot)$ by  $J_o(a-1/2,b) \defeq J(a,b)$ for all of the Jacobi matrices mentioned throughout this appendix.  Note that $J_o(\alpha,b) = J(a,b)$.
\begin{theorem}
The set of doubly infinite Jacobi matrices $J_o(\alpha, b)$ in $\mathcal{M}$ with the associated reflection coefficient satisfying $R(\pm 1) = -1$ is an open and dense subset of $\mathcal{M}$ in the topology induced by the norm
\begin{equation*}
\| J_o(\alpha,b) \|_{\mathcal{M}} = \sum_{n\in\mathbb{Z}}{(1+|n|)\left(\left|\alpha_n\right | + |b_n| \right)}.
\end{equation*}
\end{theorem}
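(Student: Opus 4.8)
<br>

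The plan is to exploit the well-known integral representations (Jost solutions via Volterra/Marchenko equations) for $R(z)$ as an analytic function of $z$ near $z = \pm 1$ that depends continuously — in fact real-analytically — on the data $(\alpha,b) \in \mathcal L^1_w(\mathbb Z)$. The key structural fact, which follows from the scattering theory for Jacobi matrices in the Marchenko class (see \cite{Tes}), is that $1 + R(\pm 1)$ is a continuous functional of $J_o(\alpha,b) \in \mathcal M$; indeed one has an explicit formula of the form $1 + R(\pm 1) = c_\pm \, W_\pm(\alpha,b)$ where $W_\pm$ is built from the values at $z = \pm 1$ of the Jost functions $\varphi_\pm$, and $W_\pm$ depends continuously on the data. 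Openness is then essentially immediate: the set $\{ J_o(\alpha,b) : R(+1) = -1 \text{ and } R(-1) = -1\}$ is the preimage of $\{0\}$ under a continuous map, but one must be slightly careful — openness is not automatic for a preimage of a point. So instead I would argue openness directly: if $R(\pm1) = -1$ at some $J_o(\alpha^0,b^0)$, then the generic (non-exceptional) behavior is that $R(\pm 1) = -1$ persists, because $R(\pm 1) = -1$ is exactly the condition that $z = \pm 1$ is \emph{not} a resonance, i.e. that the Jost solutions $\varphi_+(\pm 1;\cdot)$ and $\varphi_-(\pm 1; \cdot)$ are linearly independent; non-resonance at an edge is an open condition because the Wronskian $W(\varphi_+(\pm1;\cdot),\varphi_-(\pm1;\cdot))$ is a continuous function of the data that is nonzero precisely in the non-resonant case, and $R(\pm 1) = -1 \iff$ this Wronskian is nonzero. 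This is the cleanest route: reduce ``$R(\pm1) = -1$'' to ``$\pm1$ is not a resonance'' to ``a certain Wronskian is nonzero'', and openness of a nonvanishing-continuous-functional set is trivial.

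For density, I would perturb an arbitrary $J_o(\alpha^0, b^0) \in \mathcal M$ and show that an arbitrarily small perturbation makes both edge Wronskians nonzero. The natural mechanism: introduce a one- or two-parameter family of perturbations $J_o(\alpha^0 + s_1 u^{(1)} + s_2 u^{(2)}, b^0 + s_1 v^{(1)} + s_2 v^{(2)})$ with compactly supported, or rapidly decaying, directions $(u^{(k)}, v^{(k)})$, and show that the two edge resonance functionals $s \mapsto W_+(s)$, $W_-(s)$ are real-analytic in $(s_1,s_2)$ and not identically zero. Real-analyticity follows from the analytic dependence of the Jost solutions on the data (the Volterra iteration converges locally uniformly in the data and each term is polynomial in the data). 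Non-triviality — showing $W_\pm$ does not vanish identically on the perturbation family — is where the real work lies: one computes the first variation $\partial_{s_k} W_\pm|_{s=0}$ explicitly in terms of $\varphi_\pm^0(\pm1;\cdot)$ and the perturbation directions (a discrete analogue of the standard formula $\delta W = \sum_n \varphi_+(n)\varphi_-(n)\,\delta(\text{potential})_n$), and then chooses the directions $(u^{(k)}, v^{(k)})$ so that at least one of these derivatives is nonzero. Since $\varphi_+^0(\pm1;\cdot)$ and $\varphi_-^0(\pm1;\cdot)$ are solutions of a second order difference equation and hence cannot both vanish on two consecutive sites, there is always a finitely supported direction making the first variation nonzero; handling both edges simultaneously just requires picking a two-dimensional family and noting that the (at most two) bad affine subspaces where $W_+ = 0$ or $W_- = 0$ cannot exhaust a neighborhood — equivalently, $W_+ W_-$ is a nonzero real-analytic function of $s$ and its zero set has empty interior. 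This gives a point arbitrarily close to $J_o(\alpha^0,b^0)$ in $\mathcal M$ with $R(\pm1) = -1$.

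The main obstacle I anticipate is the non-triviality/first-variation step: one needs a clean formula for $\partial_s$ of the edge Wronskian (or equivalently of $1 + R(\pm1)$) with respect to a perturbation of $(\alpha, b)$, valid in the Marchenko class, and one needs to verify it is genuinely nonzero for a well-chosen perturbation direction. This requires care because $z = \pm1$ is the branch point of the Joukowski map and the Jost solutions $\varphi_\pm(z;\cdot)$ degenerate there (they are related to each other and grow/decay only polynomially rather than exponentially), so the usual variational identities must be set up at $z = \pm 1$ directly rather than obtained by taking limits. A secondary technical point is making sure all the Volterra-iteration estimates are uniform enough to justify the continuity and analyticity claims in the $\|\cdot\|_{\mathcal M}$ norm with the weight $1 + |n|$ — this is where the Marchenko-class weight is exactly what is needed, and I would cite \cite{Tes} for the requisite estimates rather than reprove them.
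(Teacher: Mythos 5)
Your openness argument is essentially the one in the paper: both reduce $R(\pm 1) = -1$ to the non-vanishing of the edge Wronskian of the left- and right-normalized Jost solutions at $z = \pm 1$, and both obtain continuity of that Wronskian in the $\mathcal M$-norm from the Volterra summation equations (the paper invokes \cite[Lemma 7.8]{Tes} for exactly the estimates you propose to cite). For density, however, you take a genuinely different route. The paper first truncates the data to a finite window, at a cost of $\varepsilon/3$ in $\|\cdot\|_{\mathcal M}$; for the truncated operator the Jost solutions are computed exactly by products of transfer matrices, and the obstruction to $R(\pm 1) = -1$ becomes the vanishing of an explicit determinant whose numerator $y$ is a polynomial in the $2N+1$ surviving entries. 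Since $y(0,\dots,0) = \tfrac14$ (the free matrix), $y$ is not the zero polynomial, and a nonzero polynomial cannot vanish on an open subset of $\mathbb R^{2N+1}$, so a second small perturbation lands in $\{y \neq 0\}$. This completely sidesteps the step you yourself flag as the main obstacle: no first-variation formula for the Wronskian is ever computed and no real-analyticity in the data is needed --- only polynomial dependence on finitely many variables after truncation, plus one global evaluation at the free operator. Your perturbative scheme (first variation $\delta W = c\sum_n \varphi^{\mathrm r}(n)^2\,\delta b_n + \cdots$ at a resonance, where $\varphi^{\mathrm r}(\pm 1;\cdot)$ is real-valued and cannot vanish at two consecutive sites, followed by the empty-interior property of the zero set of the nonzero real-analytic function $W_+W_-$ on a finite-dimensional family) should also work and is more local in flavor, but it obliges you to establish the discrete variational identity at the band edge and the analyticity of the Jost solutions along the family, which is more machinery than the paper's truncate-and-factor argument buys itself. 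One small caution for your version: the two-parameter family must be chosen rich enough that neither $W_+$ nor $W_-$ vanishes identically on it (for instance by including, for each edge separately, a single-site $\delta b$ perturbation at a site where the corresponding edge Jost solution is nonzero); with that in place the argument closes.
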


\begin{proof}
We first show that the subset of Jacobi matrices in $\mathcal{M}$ with $R(\pm 1) = -1$ is dense in $\mathcal{M}$. Suppose that $L = J_o(\alpha,b)$ is in $\mathcal{M}$ with the reflection coefficient satisfying $R_L(\pm 1) \neq -1$, and let $\varepsilon>0$ be given. Also, let $L_0$ denote the free Jacobi matrix with coefficients $a_n=\tfrac{1}{2}$ (\emph{i.e.}$~\alpha_n = 0$) and $b_n=0$ for all $n\in\mathbb{Z}$ -- namely, the discrete Schr\"{o}dinger operator with the zero potential. As $J_o(\alpha,b) \in \mathcal{M}$, there exists some $N=N(\varepsilon)\in\mathbb{N}$ such that
\begin{equation}
\sum_{|n|\geq N} (1+|n|)(|\alpha_n| + |b_n|) < \frac{\varepsilon}{3}.
\label{eq:epsilon1}
\end{equation}
Now consider $\tilde L = J_o\left(\tilde{\alpha},\tilde{b}\right)$ which is the Jacobi matrix that is a finite-rank perturbation of $L_0$, with the property that
\begin{equation*}
\tilde{\alpha}_n=\begin{cases}\alpha_n,\quad&\text{if~} -N\leq n \leq N-1,\\ 0,\quad&\text{otherwise,}\end{cases}~\text{and}\quad
\tilde{b}_n=\begin{cases}b_n,\quad&\text{if~} -N\leq n \leq N,\\ 0,\quad&\text{otherwise.}\end{cases}
\end{equation*}
Define the \emph{Jost} solutions $\varphi_{\pm}^\mathrm{r}$ and $\varphi_{\pm}^\mathrm{l}$ of the problem
\begin{equation}
(\tilde L \varphi)(n) = \frac{z+z^{-1}}{2} \varphi(n),\quad\text{for all~}n\in\mathbb{Z},
\label{eq:Jproblem}
\end{equation}
by their asymptotic behaviors:
\begin{equation}
\begin{aligned}
\lim_{n\to +\infty} \varphi_{+}^\mathrm{r}(z;n) z^{-n} =1\quad&\text{and}\quad\lim_{n\to +\infty} \varphi_{-}^\mathrm{r}(z;n) z^{n} =1,\\
\lim_{n\to -\infty} \varphi_{+}^\mathrm{l}(z;n) z^{-n} =1\quad&\text{and}\quad\lim_{n\to -\infty} \varphi_{-}^\mathrm{l}(z;n) z^{n} =1.
\end{aligned}
\label{eq:Jost}
\end{equation}
First recall that the Wronskian of any two solutions $\varphi$ and $\psi$ to the problem \eqref{eq:Jproblem} is given by
\begin{equation*}
W\left(\varphi(z;\cdot),\psi(z;\cdot)\right) = \tilde{a}_n \left[\varphi(z;n)\psi(z;n+1) - \varphi(z;n+1)\psi(z;n) \right] = - \det \begin{bmatrix}\varphi(z;n+1) & \psi(z;n+1) \\ a_n \varphi(z;n) & a_n \psi(z;n) \end{bmatrix},
\end{equation*}
and that it is independent of $n$ as long as $f$ and $g$ solve \eqref{eq:Jproblem} with the same value of $\tfrac{1}{2}\left(z+z^{-1}\right)$. Recall also that the reflection coefficient $R_{\tilde{L}}(z)$ associated with $\tilde{L}$ (or for any Jacobi matrix $L$ in $\mathcal{M}$) can be obtained by ratio of two Wronskians of certain Jost solutions that correspond to the same value of $\tfrac{1}{2}\left(z+z^{-1}\right)$:
\begin{equation}
R_{\tilde{L}}(z) = \frac{W\left(\varphi_{-}^\mathrm{l}(z;\cdot), \varphi_{-}^\mathrm{r}(z;\cdot)\right)}{W\left(\varphi_+^\mathrm{r}(z;\cdot), \varphi_{-}^\mathrm{l}(z;\cdot)\right)}.
\label{eq:Refco}
\end{equation}
For any solution $\varphi$ of \eqref{eq:Jproblem}, observe that
\begin{equation*}
\begin{bmatrix}
\varphi(z;n+1)\\
a_n \varphi(z;n)
\end{bmatrix}
=
\Lambda_n(z)
\begin{bmatrix}
\varphi(z;n)\\a_{n-1}\varphi(z;n-1)
\end{bmatrix}
\quad\text{and}\quad
\begin{bmatrix}
\varphi(z;n)\\
a_{n-1} \varphi(z;n-1)
\end{bmatrix}
=
\Lambda_n(z)^{-1}
\begin{bmatrix}
\varphi(z;n+1)\\a_{n}\varphi(z;n)
\end{bmatrix},
\end{equation*}
where $\Lambda_n(z)^{\pm 1}$ are the transfer matrices defined by
\begin{equation*}
\Lambda_n(z) =\frac{1}{a_n}\begin{bmatrix} \frac{z+z^{-1}}{2} - b_n & -1\\ a_n^2 & 0 \end{bmatrix}
\quad\text{and}\quad
\Lambda_n(z)^{-1} =\frac{1}{a_n}\begin{bmatrix} 0 & 1\\ -a_n^2 & \frac{z+z^{-1}}{2} - b_n \end{bmatrix}.
\end{equation*}
Now, since $n\mapsto z^n$ and $n\mapsto z^{-n}$ solve \eqref{eq:Jproblem} with $\tilde{L}$ replaced with the free matrix $L_0$, as long as the Jost solutions of \eqref{eq:Jproblem} are outside the support of $L-L_0$, they coincide with the solutions of the problem for the free matrix $L_0$. In particular:
\begin{equation}
\begin{aligned}
\varphi^\mathrm{r}_{+}(z;N+1)&= z^{N+1},\quad
\varphi^\mathrm{r}_{+}(z;N)= z^{N},\quad
\varphi^\mathrm{r}_{-}(z;N+1)= z^{-N-1},\quad
\varphi^\mathrm{r}_{-}(z;N)= z^{-N},\\
\varphi^\mathrm{l}_{+}(z;-N-1)&= z^{-N-1},\quad
\varphi^\mathrm{l}_{+}(z;-N)= z^{-N},\quad
\varphi^\mathrm{l}_{-}(z;-N-1)= z^{N+1},\quad
\varphi^\mathrm{l}_{-}(z;-N)= z^{N}.
\end{aligned}
\label{eq:f-edge}
\end{equation}
To compute the reflection coefficient, we evaluate the Wronskians in \eqref{eq:Refco} at $n=0$. Then we have
\begin{equation}
R_{\tilde{L}}(z) = \frac{W\left(\varphi_{-}^\mathrm{l}(z;\cdot), \varphi_{-}^\mathrm{r}(z;\cdot)\right)}{W\left(\varphi_+^\mathrm{r}(z;\cdot), \varphi_{-}^\mathrm{l}(z;\cdot)\right)}=\frac{
\det\begin{bmatrix}
\varphi^\mathrm{l}_{-}(z;1) & \varphi^\mathrm{r}_{-}(z;1)\\  a_0 \varphi^\mathrm{l}_{-}(z;0) & a_0 \varphi^\mathrm{r}_{-}(z;0)
\end{bmatrix}
}{\det\begin{bmatrix}
\varphi^\mathrm{r}_{+}(z;1) &  \varphi^\mathrm{l}_{-}(z;1) \\  a_0 \varphi^\mathrm{r}_{+}(z;0)& a_0 \varphi^\mathrm{l}_{-}(z;0)
\end{bmatrix}
}.
\label{eq:Refco2}
\end{equation}
Using the transfer matrices, this could be expressed as:
\begin{equation*}
R_{\tilde{L}}(z)=\frac{
\det\begin{bmatrix}
\Lambda_0(z) \Lambda_{-1}(z)\cdots\Lambda_{-N}(z)
\begin{bmatrix}
z^{N}\\ \tfrac{1}{2} z^{N+1}
\end{bmatrix};
\Lambda_1(z)^{-1} \Lambda_{2}(z)^{-1}\cdots\Lambda_{N}(z)^{-1}
\begin{bmatrix}
z^{-(N+1)}\\ \tfrac{1}{2} z^{-N}
\end{bmatrix}
\end{bmatrix}
}{
\det\begin{bmatrix}
\Lambda_1(z)^{-1} \Lambda_{2}(z)^{-1}\cdots\Lambda_{N}(z)^{-1}
\begin{bmatrix}
z^{N+1}\\ \tfrac{1}{2} z^{N}
\end{bmatrix};
\Lambda_0(z) \Lambda_{-1}(z)\cdots\Lambda_{-N}(z)
\begin{bmatrix}
z^{N}\\ \tfrac{1}{2} z^{N+1}
\end{bmatrix}
\end{bmatrix}
}.
\end{equation*}
For ease of notation, label the products of transfer matrices that appear above:
\begin{equation*}
\begin{aligned}
\mathbf{A}_N(z)&:=\Lambda_0(z) \Lambda_{-1}(z)\cdots\Lambda_{-N}(z),\\
\mathbf{B}_N(z)&:=\Lambda_1(z)^{-1} \Lambda_{2}(z)^{-1}\cdots\Lambda_{N}(z)^{-1},
\end{aligned}
\end{equation*}
and rewrite
\begin{equation*}
R_{\tilde{L}}(z)=\frac{
\det\begin{bmatrix}
\mathbf{A}_N(z)
\begin{bmatrix}
z^{N}\\ \tfrac{1}{2} z^{N+1}
\end{bmatrix};
\mathbf{B}_N(z)
\begin{bmatrix}
z^{-(N+1)}\\ \tfrac{1}{2} z^{-N}
\end{bmatrix}
\end{bmatrix}
}{
\det\begin{bmatrix}
\mathbf{B}_N(z)
\begin{bmatrix}
z^{N+1}\\ \tfrac{1}{2} z^{N}
\end{bmatrix};
\mathbf{A}_N(z)
\begin{bmatrix}
z^{N}\\ \tfrac{1}{2} z^{N+1}
\end{bmatrix}
\end{bmatrix}
}.
\end{equation*}
A direct calculation shows that
\begin{equation*}
R_{\tilde{L}}(\pm 1) = -1
\end{equation*}
holds unless
\begin{equation*}
\det\begin{bmatrix}
\mathbf{B}_N(\pm 1)
\begin{bmatrix}
1\\ \tfrac{1}{2}
\end{bmatrix};
\mathbf{A}_N(\pm 1)
\begin{bmatrix}
1\\ \tfrac{1}{2}
\end{bmatrix}
\end{bmatrix} = 0.
\label{eq:determinant}
\end{equation*}
In case the determinant in \eqref{eq:determinant} is nonzero, since \eqref{eq:epsilon1} implies that $\| L - \tilde{L} \|_{\mathcal{M}} < \tfrac{\varepsilon}{3}$, we have proven that the subset of Jacobi matrices in $\mathcal{M}$ whose reflection coefficients satisfies $R(\pm 1)=-1$ is dense in $\mathcal{M}$. The case where the determinant in \eqref{eq:determinant} vanishes requires more work and will be treated below. For notational brevity, we present the argument for $z=-1$ only. The argument for the case $z=1$ is identical.

Suppose that \eqref{eq:determinant} holds, and note that the determinant on the left hand side of \eqref{eq:determinant} is a rational function in $2N+1$ real variables: ${a}_{-N},\dots,{a}_{N-1}, {b}_{-N},\dots,{b}_N$. More precisely, it is of the form:
\begin{equation*}
 \det\begin{bmatrix}
\mathbf{B}_N(-1)
\begin{bmatrix}
1\\ \tfrac{1}{2}
\end{bmatrix};
\mathbf{A}_N(-1)
\begin{bmatrix}
1\\ \tfrac{1}{2}
\end{bmatrix}
\end{bmatrix}=
\frac{y\left({a}_{-N},\dots,{a}_{N-1}, {b}_{-N},\dots,{b}_N\right)}{a_{-N} \cdots a_0 \cdots a_{N-1} },
\end{equation*}
where the denominator is nonzero, and $y$ is a polynomial in $2N+1$ real variables. Therefore determinant vanishes at a point in $\mathbb{R}^{2N+1}$ if and only if $y$ vanishes at that point. Note that $y$ is not the zero polynomial since $y(0,0,\dots,0) =  \tfrac{1}{4}$.
Now suppose that $y$ vanishes at a point $(\tilde{a}_{-N},\dots,\tilde{a}_{N-1}, \tilde{b}_{-N},\dots,\tilde{b}_N)$. Any polynomial that vanishes on an open set is identically zero, and since $y$ is not identically zero, there must be a point where $y$ is non-zero in every neighborhood of where it vanishes. In particular, there is a point in any neighborhood of $(\tilde{a}_{-N},\dots,\tilde{a}_{N-1}, \tilde{b}_{-N},\dots,\tilde{b}_N)$ where $y$ does not vanish.
Take the open ball $D_{\varepsilon,N}$ (in $\mathbb{R}^{2N+1}$) centered at $(\tilde{a}_{-N},\dots,\tilde{a}_{N-1}, \tilde{b}_{-N},\dots,\tilde{b}_N)$ with radius $\frac{\varepsilon}{6(1+N)(1+2N)}$.
There exists a point $(a'_{-N},\dots,a'_{N-1}, b'_{-N},\dots, b'_N)\in D_{\varepsilon,N}$ where
\begin{equation*}
y\left(a'_{-N},\dots,a'_{N-1}, b'_{-N},\dots, b'_N\right)\neq 0.
\end{equation*}
This immediately implies that the reflection coefficient $R_{L^*}$ for the Jacobi matrix $L^* = J_o\left(\alpha^*,b^*\right)$ with the coefficients
\begin{equation*}
\alpha^*_n=\begin{cases}\alpha'_n,\quad&\text{if~} -N\leq n \leq N-1,\\ 0,\quad&\text{otherwise,}\end{cases}~\text{and}\quad
b^*_n=\begin{cases}b'_n,\quad&\text{if~} -N\leq n \leq N,\\ 0,\quad&\text{otherwise.}\end{cases}
\end{equation*}
satisfies
\begin{equation*}
R_{L^{*}}(- 1) = -1.
\end{equation*}
Moreover,
\begin{equation*}
\left\|L^{*}- \tilde{L} \right\|_{\mathcal{M}} = \sum_{n\in\mathbb{Z}}(1+|n|)\left(\left|\tilde{\alpha}_n - \alpha^*_n\right| + \left|\tilde{b}_n - b^*_n\right|\right) \leq \frac{\varepsilon}{6(1+N)(2N+1)} 2(1+N)(2N+1) =\frac{\varepsilon}{3}.
\label{eq:epsilon2}
\end{equation*}
It follows from \eqref{eq:epsilon1} and $\eqref{eq:epsilon2}$ that $\left \| L - L^* \right\|_{\mathcal{M}} <\varepsilon$. This shows that the subset of Jacobi matrices in $\mathcal{M}$ whose reflection coefficients have the property that $R(\pm 1)=-1$ is dense in $\mathcal{M}$.

We now prove that such matrices form an open set in $\mathcal{M}$.  When $z = \pm 1$ we note that $\varphi_\pm^{\mathrm{l}} = \varphi_\pm^{\mathrm{r}}$ and when considering \eqref{eq:Refco} we see that $R(\pm 1) \neq -1$ only when $W(\varphi_-^{\mathrm{l}}(\pm 1,\cdot), \varphi_{-}^{\mathrm{r}}(\pm 1,\cdot))  = 0$.  So, we take $L = J_o(\alpha, b)$ such that $W(\varphi_{-}^{\mathrm{l}}(\pm 1,\cdot), \varphi_{-}^{\mathrm{r}}(\pm 1,\cdot)) \neq 0$ and it is enough to show that this Wronskian is continuous with respect to the topology of $\mathcal M$. We will show that the following mappings

\begin{align*}
  T_{\mathrm{r}}: \mathcal M &\to \ell^\infty(\mathbb Z^+), ~~\quad\quad\quad T_{\mathrm{r}}(L) = \varphi_{-}^{\mathrm{r}}(\pm 1,\cdot),\\
  T_{\mathrm{l}}: \mathcal M &\to \ell^\infty(\mathbb Z^- \cup \{1\}), \quad T_{\mathrm{l}}(L) = \varphi_{-}^{\mathrm{l}}(\pm 1,\cdot),
\end{align*}
are well-defined and continuous which imply the continuity of the Wronskian at $n = 0$. It follows that
\begin{align*}
  \varphi_{-}^{\mathrm{r}}(\pm 1,n) &= (\pm 1)^n -  \sum_{m=n+1}^\infty (\pm 1)^{m-n}(m-n) h^\mathrm{r}_m,\\
  \varphi_{-}^{\mathrm{l}}(\pm 1,n) &= (\pm 1)^n -  \sum_{m=-\infty}^{n-1} (\pm 1)^{m-n}(m-n) h^\mathrm{l}_m,\\
  h^\mathrm{r}_m &= \alpha_{m-1}\varphi_{-}^{\mathrm{r}}(\pm 1,m-1) + b_m \varphi_{-}^{\mathrm{r}}(\pm 1,m) + \alpha_m\varphi_{-}^{\mathrm{r}}(\pm 1,m+1),\\
  h^\mathrm{l}_m &= \alpha_{m-1}\varphi_{-}^{\mathrm{l}}(\pm 1,m-1) + b_m \varphi_{-}^{\mathrm{l}}(\pm 1,m) + \alpha_m\varphi_{-}^{\mathrm{l}}(\pm 1,m+1).
\end{align*}
We then write
\begin{align}
  \varphi_{-}^{\mathrm{r}}(\pm 1,n) + \sum_{m=n+1}^\infty K^\mathrm{r}_{n,m}(\alpha,b)\varphi_{-}^{\mathrm{r}}(\pm 1,m) &= (\pm 1)^n,\label{e:volt}\\
  \varphi_{-}^{\mathrm{l}}(\pm 1,n) + \sum_{m=-\infty}^{n-1} K^\mathrm{l}_{n,m}(\alpha,b)\varphi_{-}^{\mathrm{l}}(\pm 1,m) &= (\pm 1)^n.\notag
\end{align}
We do not construct the kernels in the sums explicitly, but rather, obtain bounds on them.  In both cases $|n| \leq |m|$ so that
\begin{align*}
  \sup_{n \geq 0} \sum_{m=n+1}^\infty K^{\mathrm{r}}_{n,m}(\alpha,b) &\leq 8 \|J_o(\alpha, b)\|_{\mathcal M},\\
  \sup_{n \leq 1} \sum_{m=-\infty}^{n-1} K^{\mathrm{l}}_{n,m}(\alpha,b) &\leq 8 \|J_o(\alpha, b)\|_{\mathcal M}.
\end{align*}
Then it follows from \cite[Lemma 7.8]{Tes} that $\varphi^\mathrm{r}_-$ and $\varphi^\mathrm{l}_-$ are the unique solutions of these Volterra summation equations.  Now, let $f$ be the solution of \eqref{e:volt} with a different Jacobi matrix $J_o(\check \alpha, \check b)$.  Then $u := \varphi_-^\mathrm{r} - f$ solves
\begin{align*}
  u(n) + \sum_{m=n+1}^\infty K^\mathrm{r}_{n,m}(\check\alpha,\check b)u(m) &= \sum_{m=n+1}^\infty (K^\mathrm{r}_{n,m}(\alpha - \check\alpha, b- \check b)) \varphi_-^\mathrm{r}(\pm 1,m).
\end{align*}
Then by \cite[Lemma 7.8]{Tes}
\begin{align*}
\|u\|_{\ell^\infty(\mathbb Z^+)} \leq  8\|\varphi_-^\mathrm{r}\|_{\ell^\infty(\mathbb Z^+)} \|J_o(\alpha-\check \alpha,b -\check b)\|_{\mathcal M} \exp\left(8 \|J_o(\check \alpha,\check b)\|_{\mathcal M} \right)
\end{align*}
and hence if $J_o(\check \alpha, \check b) \to J_o(\alpha, b)$ in the topology of $\mathcal M$ then $f \to \varphi_-^\mathrm{r}$ uniformly.  Similar arguments follow for the continuity of $\varphi_-^\mathrm{l}$ in this topology.  This completes the proof.
\end{proof}

\subsection{A simple example.}
Suppose that $b_n =0$ for all $n\neq 0$, $b_0 = \beta$, and we have $a_n = \tfrac{1}{2}$ for all $n\in\mathbb{Z}$, then
\begin{equation}
\begin{aligned}
\varphi^\mathrm{r}_{+}(z;1)&= z,\quad
\varphi^\mathrm{r}_{+}(z;0)= 1,\quad
\varphi^\mathrm{r}_{-}(z;1)= z^{-1},\quad
\varphi^\mathrm{r}_{-}(z;0)= 1\\
\varphi^\mathrm{l}_{+}(z;1&)= z-2\beta,\quad
\varphi^\mathrm{l}_{+}(z;0)= 1,\quad
\varphi^\mathrm{l}_{-}(z;1)= z^{-1}-2\beta,\quad
\varphi^\mathrm{l}_{-}(z;0)= 1.
\end{aligned}
\end{equation}
Using these gives
\begin{equation}
R(z) = \frac{
\det\begin{bmatrix}
\varphi^\mathrm{l}_{-}(z;1) & \varphi^\mathrm{r}_{-}(z;1)\\  \frac{1}{2} \varphi^\mathrm{l}_{-}(z;0) & \frac{1}{2} \varphi^\mathrm{r}_{-}(z;0)
\end{bmatrix}
}{\det\begin{bmatrix}
\varphi^\mathrm{r}_{+}(z;1) &  \varphi^\mathrm{l}_{-}(z;1) \\  \frac{1}{2} \varphi^\mathrm{r}_{+}(z;0)& \frac{1}{2} \varphi^\mathrm{l}_{-}(z;0)
\end{bmatrix}}=\frac{2\beta}{z^{-1} - z - 2\beta}.
\end{equation}
If $\beta \neq 0$, then $R(\pm 1)=-1$. If $\beta=0$, then we have the free matrix $L_0$, which has $R(z)=0$ for all $z$ on the unit circle.
}


\end{document}